\documentclass[11pt]{article}
\usepackage[margin=1in]{geometry}
\usepackage{fullpage}
\usepackage{tgtermes}
\usepackage[T1]{fontenc}
\usepackage[hyphens]{url}
\usepackage[colorlinks,citecolor=blue,linkcolor=blue,urlcolor=red,pagebackref]{hyperref}
\usepackage{graphicx}
\usepackage{amsfonts,amsmath,amsthm,amssymb,dsfont,mathtools}
\usepackage{xfrac,nicefrac}
\usepackage{mathdots}
\usepackage{bm}
\usepackage{paralist}
\usepackage{enumerate}
\usepackage[normalem]{ulem}
\usepackage{xspace}
\xspaceaddexceptions{]\}}
\usepackage[capitalise]{cleveref}
\usepackage{comment}
\usepackage{tabu}
\usepackage{enumerate}
\usepackage{framed}
\usepackage{float,wrapfig}
\usepackage{subfigure}
\usepackage{tikz}
\usepackage{algpseudocode} 
\usepackage[shortlabels]{enumitem}
\usepackage[usenames,dvipsnames]{pstricks}
\usepackage[linesnumbered,boxed,ruled,vlined]{algorithm2e}
\usepackage{thm-restate}

\theoremstyle{plain}

\newtheorem{theorem}{Theorem}[section]
\newtheorem{lemma}[theorem]{Lemma}

\newtheorem{hypo}[theorem]{Hypothesis}
\newtheorem{claim}[theorem]{Claim}
\newtheorem{obs}[theorem]{Observation}

\theoremstyle{definition}
\newtheorem{definition}[theorem]{Definition}

\newtheorem{property}[theorem]{Property}

\renewcommand{\epsilon}{\varepsilon}
\newcommand{\eps}{\varepsilon}
\renewcommand{\Pr}{\operatorname*{\mathbf{Pr}}}
\newcommand{\Ex}{\operatorname*{\mathbf{E}}}

\newcommand{\poly}{\operatorname{\mathrm{poly}}}
\newcommand{\tr}{\operatorname{\mathrm{tr}}}
\newcommand{\polylog}{\poly\log}
\newcommand{\F}{\mathbb{F}}
\newcommand{\R}{\mathbb{R}}

\newcommand{\C}{\mathbb{C}}

\newcommand{\Z}{\mathbb{Z}}
\renewcommand{\tilde}{\widetilde}
\newcommand{\tO}{\tilde{O}}

\newcommand{\caA}{\mathcal{A}}

\newcommand{\caC}{\mathcal{C}}

\newcommand{\caH}{\mathcal{H}}

\DeclarePairedDelimiterX\braket[2]{\langle}{\rangle}{#1\,\delimsize\vert\,\mathopen{}#2}

\DeclareMathOperator*{\argmax}{arg\,max}

\title{Removing Additive Structure in $3$SUM-Based Reductions}
\author{
Ce Jin\thanks{cejin@mit.edu}\\MIT  \and Yinzhan Xu\thanks{xyzhan@mit.edu}\\MIT
}

\begin{document}

	\setcounter{page}{0} \clearpage
	\maketitle
	\thispagestyle{empty}

	\begin{abstract}

Our work explores the hardness of $3$SUM instances without certain additive structures, and its applications. As our main technical result, we show that solving $3$SUM on a size-$n$ integer set that avoids solutions to $a+b=c+d$ for $\{a, b\} \ne \{c, d\}$ still requires $n^{2-o(1)}$ time, under the $3$SUM hypothesis. Such sets are called Sidon sets and are well-studied in the field of additive combinatorics.
	
\begin{itemize}
\item 
Combined with previous reductions, this implies that the All-Edges Sparse Triangle problem on $n$-vertex graphs with maximum degree $\sqrt{n}$ and at most $n^{k/2}$ $k$-cycles for every $k \ge 3$ requires $n^{2-o(1)}$ time, under the $3$SUM hypothesis.  This can be used to strengthen the previous conditional lower bounds by Abboud, Bringmann, Khoury, and Zamir [STOC'22] of $4$-Cycle Enumeration, Offline Approximate Distance Oracle and Approximate Dynamic Shortest Path. In particular, we show that no algorithm for the $4$-Cycle Enumeration problem  on $n$-vertex $m$-edge graphs with $n^{o(1)}$ delays has $O(n^{2-\varepsilon})$ or $O(m^{4/3-\varepsilon})$ pre-processing time for $\varepsilon >0$. We also present a matching upper bound via simple modifications of the known algorithms for $4$-Cycle Detection. 

\item 
A slight generalization of the main result also extends the result of Dudek, Gawrychowski, and Starikovskaya [STOC'20] on the $3$SUM hardness of nontrivial 3-Variate Linear Degeneracy Testing (3-LDTs): we show $3$SUM hardness for all nontrivial 4-LDTs.
\end{itemize}
	
The proof of our main technical result combines a wide range of tools: Balog-Szemer{\'e}di-Gowers theorem, sparse convolution algorithm, and a new almost-linear hash function with almost $3$-universal guarantee for integers that do not have small-coefficient linear relations.
	
\end{abstract}
\newpage

	\setcounter{page}{0} \clearpage
\tableofcontents{}
	\thispagestyle{empty}
\newpage

\section{Introduction}
Fine-grained complexity theory provides conditional lower bounds for a wide range of problems, by designing fine-grained reductions from a few central problems that are hypothesized to be hard (see e.g.~\cite{virgisurvey});
specifically, a fine-grained reduction from a central problem $A$ to some problem $B$ of interest would establish a conditional lower bound for $B$ based on the hardness of $A$.
Sometimes, certain structured classes of inputs already capture the full hardness of problem $A$, formally shown by a fine-grained reduction from arbitrary instances of $A$ to structured instances of $A$.
Such kind of results would be extremely productive for proving conditional lower bounds: having structures in $A$ makes it much easier to design reductions from $A$ to $B$. 

A famous example is the equivalence between $3$SUM and the (seemingly easier) $3$SUM Convolution problem \cite{Patrascu10, ChanH20}.
In the $3$SUM problem, we need to determine if a set of $n$ integers contains three integers that sum up to $0$.\footnote{Another equivalent variant of $3$SUM is its tripartite version, in which we are given three sets and need to determine if there are three numbers, one from each set, that sum up to $0$. In this work, we use $3$SUM to refer to the one set version by default.} 
The $3$SUM Convolution problem  essentially can be thought of as $3$SUM with the additional property that the $n$ input integers have distinct remainders modulo $n$.\footnote{In a more popular definition of $3$SUM Convolution, we are given an integer array $A$ indexed by, say, $\{0 ,\ldots, n-1\}$, and the goal is to decide whether there exist $i, j$ such that $A[i]+A[j]+A[i+j]=0$. Clearly, it is equivalent to $3$SUM on the set $\{2n \cdot A[i] + i\}_{i=0}^{n-1}$, and this set has the aforementioned property.} 
The results in \cite{Patrascu10, ChanH20} 
established that $3$SUM Convolution requires essentially quadratic time, under the hypothesis that $3$SUM requires essentially quadratic time (which is a central hypothesis in fine-grained complexity called the $3$SUM hypothesis).
It turned out that this extra structure makes it easier to design fine-grained reductions from $3$SUM Convolution, leading to 
tight conditional lower bounds for problems such as Triangle Listing~\cite{Patrascu10} and Exact Triangle~\cite{williams2013finding}, under the $3$SUM hypothesis.

Dudek, Gawrychowski and  Starikovskaya \cite{ldt} showed that $3$SUM is subquadratically equivalent to all nontrivial 3-Variate Linear Degeneracy Testing (3-LDT). In particular, they showed that $3$SUM is equivalent to the AVERAGE problem, in which one needs to determine whether a given set of $n$ integers contains a $3$-term arithmetic progression involving distinct numbers, i.e., three distinct numbers $a, b, c$ where $a-2b+c = 0$.
In more details, the reduction from AVERAGE to $3$SUM was known earlier \cite{Erickson99}. The reduction from $3$SUM to AVERAGE first goes through a structured version of tripartite AVERAGE, in which there are three given arrays $A, B, C$, and we must have $a \in A, b \in B, c \in C$; their structure is that all of $A, B, C$ are $3$-AP free (i.e., do not contain $3$-term arithmetic progressions involving distinct numbers). It is simple to show that this problem is equivalent to the tripartite version of $3$SUM where each array is $3$-AP free, by scaling appropriately. This is thus another example of (tripartite) $3$SUM that is hard on structured inputs, and it (implicitly) helps showing the equivalence between $3$SUM and AVERAGE.

In light of generalizing this result, it is natural to ask whether $3$SUM is still hard on inputs without a certain equation involving $4$ numbers. One particularly interesting equation involving $4$ numbers is $a+b-c-d = 0$, and a set without nontrivial solutions to $a+b-c-d = 0$ is called a Sidon set (also known as Golomb ruler). Here, a solution is nontrivial if $\{a, b\} \ne \{c, d\}$. Sidon sets are extensively studied in the field of additive combinatorics (e.g. see the survey \cite{o2012complete}) and are also mentioned explicitly in the conference talk \cite{ldt-talk} as a barrier for generalizing \cite{ldt}'s results.

As mentioned, \cite{ldt}'s reduction from $3$SUM to AVERAGE  goes through a version of tripartite AVERAGE in which all three arrays are $3$-AP free. They achieve this by partitioning each input array of an unstructured  tripartite AVERAGE to a subpolynomial number of $3$-AP free sub-arrays, and then  solve AVERAGE on each triple of these $3$-AP free sub-arrays. Such a partitioning is only possible because of the existence of $3$-AP free subsets of $[U]$ of sizes $U^{1-o(1)}$ \cite{behrend}. On the contrary, all Sidon subsets of $[U]$ have sizes at most $\sqrt{U} + O(U^{1/4})$ \cite{ErdosT41}, which is too small to apply \cite{ldt}'s technique. 

Thus, previously there was no answer for the following natural question: 

\begin{quote}
\centering
\em Question 1: Does $3$SUM on Sidon sets require $n^{2-o(1)}$ time under the $3$SUM hypothesis?
\end{quote} 

Recently, a work by Abboud, Bringmann, Khoury,  and Zamir \cite{AbboudBKZ22} shows another example of fine-grained hardness on structured problems. Using their ``short cycle removal technique'', they were able to show hardness of certain Triangle Detection problems in graphs with few $k$-cycles. In particular, they showed that detecting whether an $m$-edge $4$-cycle free graph has a triangle requires $\Omega(m^{1.1194})$ time, assuming Triangle Detection on $n$-vertex graphs with  maximum degree at most $\sqrt{n}$ requires $n^{2-o(1)}$ time. Here, the structure of the input is $4$-cycle freeness. Their technique is also able to provide conditional lower bounds under more standard hypotheses. In particular, they showed that detecting whether each edge is in a triangle (All-Edges Sparse Triangle) on a graph with maximum degree $\sqrt{n}$ and $O(n^{2.344})$ $4$-cycles (or more precisely, $O(n^{\frac{\omega+7}{4} + \eps})$ $4$-cycles for any $\eps > 0$, where $\omega < 2.37286$ \cite{alman2021refined} is the square matrix multiplication exponent)  requires $n^{2-o(1)}$ time, assuming that All-Edges Sparse Triangle on a graph with maximum degree $\sqrt{n}$ requires $n^{2-o(1)}$ time. As the assumption is known to hold under either the $3$SUM hypothesis or the APSP hypothesis~\cite{Patrascu10, williams2020monochromatic},  the lower bound holds under these two central hypotheses in fine-grained complexity as well. This lower bound (and its more general version for $k$-cycle) has a variety of applications, including the hardness for Approximate Offline Distance Oracles,  Approximate Dynamic Shortest Path, and $k$-Cycle Enumeration. 

However, it is hard to imagine that $n^{2.344}$ $4$-cycles are indeed the smallest amount of $4$-cycles on a graph with maximum degree $\sqrt{n}$, so that All-Edges Sparse Triangle still requires $n^{2-o(1)}$ time. For instance, consider random graphs with maximum degree $\sqrt{n}$. It is unclear how the current best $O(m^{2\omega/(\omega+1)})$ time algorithm  for All-Edges Sparse Triangle \cite{AlonYZ97}, or the brute-force $O(n^2)$ time algorithm that enumerates all pairs of neighbors of each vertex, 
can exploit the randomness of the graph. To the best of our knowledge, $O(n^2)$ time is still the best running time
for such random graphs, even if perfect matrix multiplication exists (i.e. $\omega = 2$). However, in such random graphs, the expected number of $4$-cycles is only $O(n^2) \ll n^{2.344}$. 
It is thus natural to ask whether All-Edges Sparse Triangle on graphs with maximum degree $\sqrt{n}$ and fewer than $n^{2.344}$ $4$-cycles is hard. 

\begin{quote}
\centering
\em Question 2: Does All-Edges Sparse Triangle on a graph with maximum degree $\sqrt{n}$ and fewer $4$-cycles still require $n^{2-o(1)}$ time?
\end{quote} 

We affirmatively answer both Question 1 and Question 2.
Though not obvious, Question 1 and Question 2 are actually strongly related. As we will show later, an affirmative answer to Question 1 actually implies an affirmative answer to Question 2. Our work thus also connects the previous two seemingly unrelated directions of research \cite{ldt, AbboudBKZ22}.

\subsection{Our Results}
\paragraph{$3$SUM on Sidon Set.} 
As our main result, we show that $3$SUM on Sidon sets is indeed hard, resolving Question 1. 

\begin{restatable}{theorem}{ThreeSUMSidon}
\label{thm:main}
Under the $3$SUM hypothesis, for all constants $\delta>0$, $3$SUM on size-$n$ Sidon sets of integers bounded by $[-n^{3+\delta},  n^{3+\delta}]$ requires $n^{2-o(1)}$ time. 
\end{restatable}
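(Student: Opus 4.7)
The plan is to construct a fine-grained reduction from standard 3SUM on an arbitrary size-$n$ integer set $A$ to 3SUM on a Sidon set of size $\Theta(n)$ within $[-n^{3+\delta}, n^{3+\delta}]$; a subquadratic algorithm for the Sidon instance would then yield one for arbitrary 3SUM, contradicting the 3SUM hypothesis.

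The naive attempt is to apply a random almost-linear hash $h(x)=(\alpha x)\bmod p$ and hope the image is Sidon. This succeeds when $A$ has only $O(n^2)$ additive quadruples (since almost-linear hashes approximately preserve additive statistics) but fails on adversarial inputs where $A$ has $\Theta(n^3)$ quadruples, because that structure is inherited by the image. To remove the structured part of $A$ before hashing, I would invoke the Balog--Szemer\'edi--Gowers (BSG) theorem iteratively: whenever the current set still has at least $n^3/K$ additive quadruples, BSG extracts a subset $A^{*}$ of size $\Omega(n/\poly(K))$ with small doubling $|A^{*}+A^{*}|\leq \poly(K)\cdot|A^{*}|$. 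Such an $A^{*}$ admits a sparse convolution of $A^{*}+A^{*}$ in nearly linear time, so all 3SUM queries with two coordinates in $A^{*}$ can be resolved directly in $\tilde{O}(n)$ time, and cross-queries with a single coordinate in $A^{*}$ reduce to $2$SUM scans. After $O(\log n)$ peeling rounds, the residual set $A'$ has $o(n^3)$ quadruples and, critically, is free of small-coefficient linear relations.

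The final step is to convert $A'$ into a genuine Sidon instance. A standard almost-linear hash is only almost $2$-universal, and full almost $3$-universality is incompatible with linearity because the triple $(x,y,-x-y)$ is always hashed degenerately. The key technical tool, matching the abstract, is to construct an almost-linear hash that is almost $3$-universal \emph{restricted to} integer sets avoiding small-coefficient linear relations---exactly the regime BSG peeling produces. Applying this hash to $A'$ yields an image with only $O(n^2)$ non-trivial additive quadruples; a small set of image elements covering all such quadruples can then be removed and handled by brute force, leaving a genuine Sidon set that is 3SUM-equivalent to the original $A$.

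The main obstacle I foresee is the design of this specialized hash: it must be almost-linear (to preserve the relation $a+b+c=0$), almost $3$-universal on generic triples (to suppress spurious quadruples in the image), map into $[-n^{3+\delta}, n^{3+\delta}]$, and interact cleanly with the BSG decomposition so that the image is truly Sidon after a small amount of post-processing. Verifying that these properties compose correctly---so that the output size, output range, and 3SUM-equivalence guarantees are all maintained simultaneously---is where I would expect the bulk of the technical work to concentrate.
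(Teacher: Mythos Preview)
Your high-level architecture matches the paper's: BSG peeling to reduce additive energy, followed by an almost-linear hash with strengthened universality. But there is a real gap in the second phase.

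You assert that after BSG peeling ``the residual set $A'$ has $o(n^3)$ quadruples and, critically, is free of small-coefficient linear relations.'' The first clause is essentially correct (the paper gets $E(A')\le |A'|^{3-\eps}$ for a fixed $\eps>0$), but the second clause is false. BSG peeling only reduces additive energy; it says nothing about $3$-term relations like $a-2b+c=0$. A moderate-energy set can still contain arbitrarily many $3$-term arithmetic progressions or other small-coefficient $3$-term relations. Since your hash is only almost $3$-universal on triples \emph{without} such relations, you cannot apply it to $A'$ as-is and conclude that the image has only $O(n^2)$ nontrivial quadruples.

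The paper handles exactly this obstruction with an additional ingredient you do not mention: a Behrend-style partition (their Lemma~4.8) of $A'$ into $\exp(O(\log U)^{2/3})$ pieces, each of which avoids nontrivial $3$-term $\ell$-relations. The self-reduction then runs over \emph{triples} of Behrend pieces combined with hash buckets, so that within each generated sub-instance the bad $3$-term relations are absent by construction, and the $3$-universality of the hash kicks in. Only then can one bound the total number of residual $4$-term relations by $E(A')\cdot n^{o(1)}/m$ and afford to brute-force them. Your proposal collapses this structure into a single hash of the whole set $A'$, which does not work.

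A secondary point: your treatment of the small-doubling $3$SUM step is also too coarse. Sparse convolution of $A^*+A^*$ handles solutions with two coordinates in $A^*$, but ``cross-queries with a single coordinate in $A^*$ reduce to $2$SUM scans'' costs $\tilde O(|A^*|\cdot n)$, which is only barely subquadratic and does not balance against the number of peeling rounds. The paper's Lemma~3.3 uses a more careful decomposition via random shifts to get $\tilde O(n|A^*+A^*|/\sqrt{|A^*|})$, which is what makes the overall time budget work.
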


Our techniques differ from \cite{ldt}'s techniques in significant ways. As a high level overview, our reduction combines the celebrated Balog-Szemer{\'e}di-Gowers Theorem \cite{balog1994statistical,gowers2001new} and efficient sparse convolution algorithm~\cite{ColeH02,ArnoldR15,ChanL15,sparse4,sparse3,Nakos20,nfold,sparse2,BringmannFN22} to solve $3$SUM instances on sets with very high additive energy (many tuples $(a, b, c, d)$ with $a + b = c + d$) in truly subquadratic time. On the other hand, for $3$SUM instances on sets with moderately low additive energy, we modify known self-reductions of $3$SUM \cite{BaranDP08} by designing hash functions with better universality guarantee, to self-reduce such $3$SUM instances to $3$SUM instances on Sidon sets. See Section~\ref{sec:overview} for a more detailed overview. 

Given Theorem~\ref{thm:main}, it is not difficult to obtain the following corollary using techniques in \cite{ldt}. 
\begin{restatable}{cor}{}
Under the $3$SUM hypothesis, for all $\delta>0$, determining whether a given set of $n$ integers bounded by $[-n^{3+\delta},  n^{3+\delta}]$  is a Sidon set requires $n^{2-o(1)}$ time. 
\end{restatable}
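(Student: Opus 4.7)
The plan is to reduce $3$SUM on Sidon sets---proven hard by \cref{thm:main}---to Sidon-set detection, in the spirit of the ``$3$-AP-free detection'' corollary of \cite{ldt}. Fix any $\delta>0$ and set $\delta_0 := \delta/2$. Let $S \subseteq [-n^{3+\delta_0}, n^{3+\delta_0}]$ be a size-$n$ Sidon set on which $3$SUM is hard by \cref{thm:main}. Using a random $3$-coloring $\chi\colon S \to \{1,2,3\}$, any fixed $3$SUM triple $(a,b,c)$ of distinct elements becomes ``rainbow'' with probability $\tfrac{2}{9}$, so solving $3$SUM on $S$ reduces, with $O(1)$ randomized overhead (or via a deterministic splitter), to the tripartite variant on pairwise disjoint Sidon subsets $A, B, C \subseteq S$: find $(a,b,c) \in A \times B \times C$ with $a+b+c=0$.

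\paragraph{Construction.} Given such a tripartite instance, I would pick $M := 4 n^{3+\delta_0}$ and form
\[
T \;=\; \{3M + a : a \in A\} \;\cup\; \{5M + b : b \in B\} \;\cup\; \{8M - c : c \in C\} \;\cup\; \{0\}.
\]
Then $|T| = n+1$ and the elements of $T$ lie in $[-n^{3+\delta}, n^{3+\delta}]$ for large $n$. The claim is that $T$ is a Sidon set iff the tripartite instance has no solution. The easy direction is the intended collision: if $a+b+c=0$, then $(3M+a)+(5M+b) \;=\; 8M+(a+b) \;=\; (8M-c)+0$, which is a non-trivial Sidon violation since the four elements lie in four pairwise disjoint intervals centered at $3M, 5M, 8M, 0$.

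\paragraph{Excluding spurious violations.} The main obstacle is to verify that no other non-trivial Sidon violation occurs in $T$. Each unordered pair of elements of $T$ has an ``$M$-scale coefficient'' determined by which of the four components ($T_A, T_B, T_D, \{0\}$) its two elements lie in; direct enumeration shows the coefficients form the multiset $\{3,5,6,8,8,10,11,13,16\}$, with only the value $8$ repeating, at precisely the pair-types $(T_A, T_B)$ and $(T_D, \{0\})$---that is, the intended collision. Within-type collisions are ruled out by Sidon-ness of $S$ combined with disjointness of $A, B, C$: for instance, an equality $(3M+a_1)+(5M+b_1) = (3M+a_2)+(5M+b_2)$ forces $a_1+b_1 = a_2+b_2$, and Sidon-ness of $S \supseteq A \cup B$ gives $\{a_1, b_1\} = \{a_2, b_2\}$; the ``swap'' case $a_1 = b_2$ would require $a_1 \in A \cap B = \emptyset$ and is therefore impossible. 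The analogous argument handles $(T_A, T_D)$ and $(T_B, T_D)$ using $A \cap C = B \cap C = \emptyset$, and pair-types involving $\{0\}$ are trivial since $0$ occurs only once. Thus a subquadratic algorithm for Sidon-set detection on $T$ would decide the tripartite $3$SUM instance and, via the coloring step, solve $3$SUM on the Sidon set $S$ in subquadratic time, contradicting \cref{thm:main}. The main technical nuisance is the pair-type case analysis, while the coloring reduction, the choice of the coefficients $(3,5,8)$, and the range bound are routine.
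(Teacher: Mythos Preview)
Your proof is correct and follows precisely the ``techniques in \cite{ldt}'' that the paper invokes (the paper states the corollary without proof, deferring to those techniques; the more elaborate proof of \cref{thm:4ldt} in the paper is the same color-code-then-shift strategy carried out for arbitrary coefficients $\beta_i$, and your construction is the clean specialization to $\beta=(1,1,-1,-1)$). Two minor points worth patching: first, before color-coding you should dispose of $3$SUM solutions with a repeated element (check $0\in S$ and $-2a\in S$ for each $a$) in $\tilde O(n)$ time, since your rainbow argument only covers triples of distinct elements; second, your choice $M=4n^{3+\delta_0}$ makes the adjacent coefficient pairs $(5,6)$ and $(10,11)$ just barely separated---the argument still goes through using $A\cap B=A\cap C=B\cap C=\emptyset$ at the boundary, but taking $M=5n^{3+\delta_0}$ (still well within the $n^{3+\delta}$ budget) removes the need for that extra case check.
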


More generally, we are able to show that all nontrivial $4$-LDTs are $3$SUM-hard, generalizing \cite{ldt}'s result on $3$-LDTs. A $4$-LDT is parameterized by integers $\beta_1, \beta_2, \beta_3, \beta_4, t$, and asks to determine whether a given integer set $A$ contains a  solution to $\sum_{i=1}^4 \beta_i a_i = 0$ for distinct $a_i \in A$. Following \cite{ldt}'s notation, a $4$-LDT parameterized by $\beta_1, \beta_2, \beta_3, \beta_4, t$ is called trivial if either 
\begin{enumerate}
    \item Any of $\beta_1, \beta_2, \beta_3, \beta_4$ is $0$, or
    \item $t \ne 0$ and $\gcd(\beta_1, \beta_2, \beta_3, \beta_4) \nmid t$. 
\end{enumerate}
For the first case, the $4$-LDT is degenerated to a $3$-LDT; for the second case, the answer is always NO. All other $4$-LDTs are called nontrivial.
Using their techniques, it is easy to show that a nontrivial $4$-LDT parameterized by $\beta_1, \beta_2, \beta_3, \beta_4, t$ requires $n^{2-o(1)}$ time under the $4$SUM hypothesis (which is a weaker hypothesis than the $3$SUM hypothesis) if $t \ne 0$ or $\beta_1 + \beta_2 + \beta_3 + \beta_4 \ne 0$. Therefore, we will focus on the remaining cases, i.e., $t = 0$ and $\beta_1 + \beta_2 + \beta_3 + \beta_4 = 0$.

\begin{restatable}{theorem}{FourLDT}
\label{thm:4ldt}
Fix any non-zero  integers $\beta_1, \beta_2, \beta_3, \beta_4$ where $\sum_{i=1}^4\beta_i=0$ and any real number $\delta > 0$. Determining whether a size-$n$  set of integers bounded by  $[-n^{3+\delta},  n^{3+\delta}]$ avoids  solutions $\sum_{i=1}^4 \beta_i a_i = 0$ for distinct $a_i$ in the set requires $n^{2-o(1)}$ time, assuming the $3$SUM hypothesis. 
\end{restatable}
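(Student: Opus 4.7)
The plan mirrors~\cite{ldt}'s two-step strategy for $3$-LDT. First, I would strengthen Theorem~\ref{thm:main}: show that $3$SUM remains $n^{2-o(1)}$-hard on integer sets that avoid nontrivial solutions to the specific $4$-variable relation $\sum_{i=1}^4 \beta_i a_i = 0$; call such a set $\beta$-avoiding. Second, I would reduce $3$SUM on $\beta$-avoiding sets to $4$-LDT for $\beta$ via a scaling-and-offset gadget. The Sidon case $\beta=(1,1,-1,-1)$ is already Theorem~\ref{thm:main}; the other $\beta$ with $\sum_i \beta_i = 0$ are handled uniformly by the same pipeline.

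For step one, the machinery behind Theorem~\ref{thm:main} --- Balog--Szemer\'edi--Gowers, sparse convolution, and the new almost-$3$-universal hash family for integers without small-coefficient linear relations --- is advertised in the introduction as being built for arbitrary small-coefficient linear relations, not exclusively the Sidon one. I would rerun that pipeline with the $\beta$-avoidance predicate substituted for the Sidon predicate, redoing the hash-function collision analysis for the equation $\sum \beta_i a_i = 0$ and checking that coincidences among the $\beta_i$ or a nontrivial $\gcd$ do not degrade the bound. For convenience I would arrange for the output set to be simultaneously $\beta$-avoiding \emph{and} Sidon, since the same hash framework can forbid two linear relations at once.

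For step two, given a tripartite $3$SUM instance on disjoint parts $X, Y, Z$ of a Sidon and $\beta$-avoiding set $S$, set $K = \beta_1\beta_2\beta_3$ and $\alpha_i = K/\beta_i$ for $i \in \{1,2,3\}$, and choose integer offsets $c_1, c_2, c_3, w$ of magnitude larger than $K \cdot \max_{s \in S} |s|$ satisfying $\beta_1 c_1 + \beta_2 c_2 + \beta_3 c_3 + \beta_4 w = 0$, with the additional property that $\sum_k \beta_k c_{\phi(k)} \ne 0$ for every assignment $\phi\colon \{1,2,3,4\} \to \{1,2,3,4\}$ (setting $c_4 := w$) that is not a $\beta$-symmetry of the identity (a generic choice in the three-dimensional kernel of the linear constraint suffices). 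Let
\[
S' \;=\; (\alpha_1 X + c_1) \;\cup\; (\alpha_2 Y + c_2) \;\cup\; (\alpha_3 Z + c_3) \;\cup\; \{w\}.
\]
The canonical assignment ($a_i \in \alpha_i S_i + c_i$ for $i \le 3$, $a_4 = w$) turns the $4$-LDT equation into $K(x + y + z) = 0$, exactly the tripartite $3$SUM equation. For every other $\phi$, split $\sum_k \beta_k a_k$ into its offset part $\sum_k \beta_k c_{\phi(k)}$ and its bounded scaling part; the scale separation forces the equation to require the offset part to vanish. Using $\sum_i \beta_i = 0$ and $\beta_i \ne 0$, the only genuinely non-canonical $\phi$ whose offset part vanishes automatically is the ``all four in one part'' configuration, where the equation collapses to $\sum_k \beta_k s_k = 0$ for distinct $s_k$ in a single $S_i \subseteq S$ --- ruled out by $\beta$-avoidance. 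The remaining $2+2$ configurations in which the two $\beta_k$ placed in each part happen to sum to zero produce, thanks to $\alpha_i \beta_i = K$ being constant across $i$, a Sidon equation across two disjoint parts --- ruled out by the Sidon strengthening together with $X, Y, Z$ being pairwise disjoint.

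The main obstacle is step one: generalizing the hash-function analysis from the symmetric Sidon relation $a+b=c+d$ to the arbitrary $4$-variable relation $\sum \beta_i a_i = 0$. Balog--Szemer\'edi--Gowers and sparse convolution are essentially black-box, but the hash bound hinges on the specific forbidden equation, and confirming that it degrades gracefully under repeated coefficients, nontrivial $\gcd$, or unbalanced sign patterns (e.g., $\beta = (1,1,1,-3)$) is where the real work lies. Step two is a routine, if multi-case, scaling argument once step one is in place.
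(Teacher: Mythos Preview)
Your two-step plan matches the paper's approach almost exactly, and step one is precisely what the paper proves as its Theorem~\ref{thm:$3$SUM-sidon}: the pipeline behind Theorem~\ref{thm:main} already outputs sets avoiding \emph{all} nontrivial $4$-term $\ell$-relations for any fixed constant $\ell$, not just the Sidon one. So your intuition that the hash machinery handles arbitrary small-coefficient relations is correct.

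The gap is in your step-two analysis of the $2+2$ configurations. You claim that when $\beta_{k_1}+\beta_{k_2}=0=\beta_{k_3}+\beta_{k_4}$ and the pairs land in parts $i$ and $j$, the surviving equation is a Sidon equation. It is not: after the offsets cancel you get $\beta_{k_1}\alpha_i(x_1-x_2)+\beta_{k_3}\alpha_j(x_3-x_4)=0$, and $\beta_{k_1}\alpha_i$ need not equal $\pm\beta_{k_3}\alpha_j$. Concretely, for $\beta=(3,-3,1,-1)$ with $\phi(1)=\phi(2)=3$ and $\phi(3)=\phi(4)=1$, the equation becomes $9(x_1-x_2)+(x_3-x_4)=0$, and the Sidon set $\{0,1,3,12\}$ (which also happens to avoid the $(3,-3,1,-1)$ relation on distinct elements) satisfies it via $(1,0,3,12)$. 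So ``Sidon $+$ $\beta$-avoiding'' is not enough. The paper's fix is exactly the strengthening you almost wrote down: take $\ell=(\max_i|\beta_i|)^2$ in step one, so the starting set avoids \emph{every} $4$-term relation with coefficients bounded by $\ell$. Then in the $2+2$ case one clears denominators to obtain an $\ell$-relation on distinct $x_k$'s, which is forbidden. With that single change your argument goes through.
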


We remark that this definition of $4$-LDT does not perfectly fit the definition of Sidon sets. When $\beta_1 = \beta_2 = 1$ and $\beta_3 = \beta_4 = -1$, a solution $2a = b+c$ is not allowed in Sidon sets as $\{a, a\} \ne \{b, c\}$, but is allowed in the definition in \cref{thm:4ldt}, as $a, a, b, c$ are not all distinct. However, it would still make sense to count $2a = b+c$ as a solution to the $4$-LDT, as such solutions do not trivially exist in all sets. 
Thus, we define a slight variant of $4$-LDT, in which we need to determine whether a size-$n$  set $A$ of integers avoids \textit{nontrivial} solutions to $\sum_{i=1}^4 \beta_i a_i = 0$ for  $a_i \in A$. Here, a solution is trivial if for every $a\in \{a_1,\dots,a_4\}$ it holds that $\sum_{i: a_i=a}\beta_i=0$. We will show that \cref{thm:4ldt} still works under this alternative definition (and all nontrivial $4$-LDTs with $t \ne 0$ or $\beta_1 + \beta_2 + \beta_3 + \beta_4 \ne 0$ are still $4$SUM hard under this definition, following previous techniques).

\paragraph{Quasirandom graph.} 
Question 2 concerns graphs with few, say $n^2$, $4$-cycles. As mentioned, in a random $n$-vertex graph with maximum degree at most $\sqrt{n}$, we expect to see $\Theta(n^2)$ $4$-cycles. Graphs in which the numbers of $4$-cycles are close to those of random graphs with the same edge density are actually well-studied in additive combinatorics, and such graphs are called pseudorandom graphs (see e.g. \cite{zhao2019graph}). In additive combinatorics, a sequence of graphs $(G_n)$ with $G_n$ having $n$ vertices and $(p+o(1))\binom{n}{2}$ edges are called (sparse) pseudorandom graphs if the number of labeled $4$-cycles\footnote{The number of labeled subgraph $H$ in a graph $G$ is the number of injective graph homomorphisms from $H$ to $G$.} in $G_n$ is at most $(1+o(1))p^4 n^4$. 
We adapt this terminology as follows:

\begin{restatable}[Quasirandom Graph]{definition}{Quasirandom}
\label{def:quasirandom}
An undirected unweighted $n$-vertex graph  is called a quasirandom graph if it has maximum degree at most $\sqrt{n}$ and has at most $n^{2}$ $4$-cycles.  
\end{restatable}

As an application of \cref{thm:main}, we show that All-Edges Sparse Triangle is still hard even on quasirandom graphs, answering Question 2 affirmatively. 

\begin{restatable}{theorem}{AllEdgesLowerBound}
\label{thm:ae-lb}
Under the $3$SUM hypothesis, All-Edges Sparse Triangle on $n$-vertex quasirandom graphs requires $n^{2-o(1)}$ time.
\end{restatable}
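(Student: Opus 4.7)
The plan is to reduce 3SUM on a Sidon set (\cref{thm:main}) to All-Edges Sparse Triangle on a quasirandom graph, via the classical P{\u a}tra{\c s}cu-style reduction, and to prove that the Sidon property of the input is exactly what is needed to make the output graph quasirandom.

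Starting from a size-$n$ Sidon set $A \subseteq [-n^{3+\delta},\,n^{3+\delta}]$, I would apply an almost-linear hash $h : \mathbb{Z} \to [R]$ with $R = \Theta(\sqrt{n})$ satisfying $h(x+y) \in \{h(x)+h(y),\,h(x)+h(y)+1\} \pmod R$ to partition $A$ into $R$ buckets of size $O(\sqrt{n})$. P{\u a}tra{\c s}cu's reduction then constructs a tripartite graph $G = X \sqcup Y \sqcup Z$ on $\Theta(n)$ vertices with maximum degree $O(\sqrt{n})$, whose triangles correspond to 3SUM solutions in $A$. Standard bucket-balancing (at worst a $\poly\log n$ blow-up) ensures the degree bound. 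The heart of the argument is bounding the number of $4$-cycles of $G$ by $n^2$. Every $4$-cycle lies in a bipartite piece of $G$, say $X \times Y$, and its four edges each carry a 3SUM witness $(a,b,c) \in A^3$ with $a+b+c=0$ and $a,b$ in the prescribed buckets, subject to $h(a)+h(b)+h(c) \in \{0,\pm 1\} \pmod R$. Tracing these witnesses around the cycle yields an additive quadruple relation of the form $a_1 + a_3 = a_2 + a_4 + \xi$ in $A$, with a bounded carry $\xi$ coming from the $\pm 1$ slack in the almost-linear hash. For each fixed $\xi$, the Sidon property forces $\{a_1, a_3\}$ and $\{a_2+\xi', a_4+\xi''\}$ to coincide, leaving only $O(n^2)$ trivial quadruples. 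Summing over the $O(1)$ carry patterns and the three bipartite pieces of $G$ gives an $O(n^2)$ total, which after rescaling $n$ falls under the quasirandom threshold.

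The main obstacle is extracting a clean Sidon-type equation from a graph $4$-cycle. In the bucket-level version of P{\u a}tra{\c s}cu's reduction, each edge is witnessed by \emph{many} pairs of elements from the same pair of buckets, so a priori a $4$-cycle does not pin down a single additive quadruple. To get around this, I would instantiate the element-level variant in which vertices are (bucket, element) pairs so that each bipartite edge is identified with one specific pair $(a,b) \in A \times A$; then the cycle-closure condition produces a concrete equation $a_1 + b_2 = a_2 + b_1$ (up to the carry $\xi$) that the Sidon hypothesis resolves directly. The remaining steps --- bucket balancing, a union bound over the $O(1)$ carry patterns, and repeating the count for the other two bipartite pieces --- are standard and together cost only an $n^{o(1)}$ factor, which is absorbed into the final $n^{2-o(1)}$ lower bound.
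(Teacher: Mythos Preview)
Your plan correctly identifies that the Sidon property of the $3$SUM input should translate into a bound on $4$-cycles in the All-Edges Sparse Triangle instance, but the mechanism you describe is not the one that actually makes the argument go through, and the direct approach you sketch has a real gap.

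The paper does not go directly from a Sidon $3$SUM instance to an unweighted sparse graph. It passes through a \emph{weighted} Exact Triangle instance (via $3$SUM Convolution$'$), where each edge carries a single element of the Sidon set as its weight. In that weighted graph the Sidon property bounds the number of \emph{zero-weight} $4$-cycles: a zero-weight $4$-cycle yields an exact equation $d_1+d_2=d_3+d_4$ among four edge weights in $A$, and Sidon kills all but $O(n^3)$ of them (\cref{property:weighted-graphs}, \cref{lem:3sumconv2zwt}). The passage to an unweighted quasirandom graph is then accomplished by the random weight-shifting technique of Vassilevska Williams and Xu (\cref{lem:zwt2ae}): each edge is kept only if its randomly shifted weight lands in a prescribed interval of width $\approx p/\sqrt{n}$. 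The crucial point is that a $4$-cycle in the resulting unweighted graph need \emph{not} satisfy any additive relation among the underlying elements; it is simply a $4$-cycle in the weighted graph whose four shifted weights happened to land in the right intervals. The $O(n^2)$ bound on $4$-cycles therefore comes from a probabilistic argument --- nonzero-weight $4$-cycles survive with probability roughly $n^{-2}$, while zero-weight $4$-cycles (few, by Sidon) survive with probability roughly $n^{-1.5}$ --- not from a deterministic Sidon-type equation.

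Your claim that tracing a $4$-cycle yields $a_1 + a_3 = a_2 + a_4 + \xi$ with a bounded carry $\xi$ is exactly the step that fails: in the unweighted graph produced by any of the standard reductions, the four edge ``witnesses'' of a $4$-cycle are under no such constraint. The almost-linear hash you invoke controls only the \emph{bucket indices}, not the actual values in $A$, and your proposed ``element-level variant'' with vertices as (bucket, element) pairs is not specified precisely enough to verify either the vertex/degree parameters or the claimed cycle--quadruple correspondence. To make your outline into a proof you would have to reintroduce precisely the weighted intermediate step and the randomized weight-bucketing, at which point you recover the paper's route through \cref{lem:3sumsidon23sumconv}, \cref{lem:3sumconv2zwt}, and \cref{lem:zwt2ae}.
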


\cref{thm:ae-lb} actually implies hardness of All-Edges Sparse Triangle on certain graphs with few  $k$-cycles for any $k\ge 3$. 

\begin{restatable}{cor}{AllEdgesLowerBoundCor}
\label{cor:ae-lb}
Under the $3$SUM hypothesis, All-Edges Sparse Triangle on $n$-vertex graphs which has maximum degree at most $\sqrt{n}$ and has at most $n^{k/2}$ $k$-cycles for every $k \ge 3$ requires $n^{2-o(1)}$ time.
\end{restatable}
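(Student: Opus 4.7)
The plan is to derive Corollary~\ref{cor:ae-lb} from Theorem~\ref{thm:ae-lb} via a short-cycle-removal reduction, following the framework of Abboud, Bringmann, Khoury, and Zamir~\cite{AbboudBKZ22}. Suppose for contradiction that APST can be solved in $O(n^{2-\varepsilon})$ time on $n$-vertex graphs with max degree at most $\sqrt n$ and $\#C_k \le n^{k/2}$ for every $k \ge 3$; I would use this assumed algorithm $\mathcal{A}$ to solve APST on a quasirandom input graph $G$ (max degree $\le \sqrt n$, $\#C_4(G)\le n^2$) in time $n^{2-\Omega(\varepsilon)}$, contradicting Theorem~\ref{thm:ae-lb}.

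The first step is to convert $G$ into a graph $G'$ satisfying the strengthened all-$k$ cycle condition. My approach is greedy pruning: as long as, for some $k$, there exists an edge $e$ participating in more than $n^{k/2}/m$ many $k$-cycles of $G$, delete $e$ and record it in a side-list $D$. By averaging, such a heavy edge must exist whenever $\#C_k > n^{k/2}$, so upon termination the pruned graph $G'$ meets the hypothesis of $\mathcal{A}$. Cycle lengths $k$ beyond a threshold $K$, chosen so that the trivial degree count $\#C_k \le n\cdot(\sqrt n)^{k-1}/(2k)$ already beats $n^{k/2}$, require no pruning. After pruning, I would invoke $\mathcal{A}$ on $G'$, and then handle the deleted edges: for each $e=(u,v)\in D$, brute-force $N_G(u)\cap N_G(v)$ in $O(\sqrt n)$ time to determine triangle-membership of $e$ in $G$, and for every triangle of $G$ crossing a deleted edge, mark the two other incident edges as triangle-containing by enumerating $(e,w)$ with $w$ a common neighbor.

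The dominant obstacle is controlling $|D|$ so that the bookkeeping cost $|D|\cdot\sqrt n$ stays strictly subquadratic. Each deletion at cycle length $k$ destroys $\Omega(n^{k/2}/m)$ of those cycles, so prunings at $k\ge 5$ amortize cheaply, but naively summing across all intermediate $k$ may balloon $|D|$ beyond $m$. The remedy is two-fold: (i) the starting graph already satisfies the $k=4$ bound for free, eliminating the most dangerous pruning case; (ii) for $k$ in the intermediate range one bootstraps the bound $\#C_k \le n^{k/2 + o(1)}$ from $\#C_4 \le n^2$ and the max-degree condition via a chain of Cauchy--Schwarz inequalities on path-count sums, keeping the extra slack absorbed into an $n^{o(1)}$ rescaling of $n$. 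The $k=3$ case, concerning ``triangle-heavy'' edges, is the sharpest and would be handled by diverting to an output-sensitive triangle-listing subroutine whose running time charges directly against triangles enumerated, rather than against $|D|$ times $\sqrt n$. This combined accounting---rather than any single slick observation---is the delicate part of the proof and mirrors the analogous step in~\cite{AbboudBKZ22}.
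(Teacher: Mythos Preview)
Your pruning framework is over-engineered and misses the structural observation that makes the corollary nearly immediate for $k\ge 4$. The paper does not delete any edges; instead it applies the elementary spectral inequality (\cref{lem:spectral}): in any graph with maximum degree $d$, the number $C_k$ of closed $k$-walks satisfies $C_k\le C_4\cdot d^{k-4}$, since $\tr(A^k)=\sum_i\lambda_i^k\le(\max_i|\lambda_i|)^{k-4}\sum_i\lambda_i^4\le d^{k-4}\tr(A^4)$. A quasirandom graph has $C_4=O(n^2)$ and $d\le\sqrt n$, so $C_k=O(n^{k/2})$ for \emph{every} $k\ge 4$ simultaneously; padding $O(n)$ isolated vertices absorbs the constant. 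Your ``chain of Cauchy--Schwarz inequalities on path-count sums'' is reaching for precisely this fact, but you do not draw the conclusion that it obviates the entire pruning scaffold.

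Separately, your large-$k$ cutoff does not work: the trivial bound $n\cdot(\sqrt n)^{k-1}/(2k)=n^{(k+1)/2}/(2k)$ falls below $n^{k/2}$ only once $k\gtrsim\sqrt n$, so your threshold $K$ is not a constant and you cannot afford to loop over all intermediate lengths, let alone detect heavy edges for each. For $k=3$ the paper also avoids edge pruning: it reduces triangle count via the standard witness-listing trick (color-code into tripartite $A\cup B\cup C$, subsample $C$ at rates $2^{-i}$ for $i=\log n,\dots,0$, and remove $A$--$B$ edges once certified, so that each surviving edge lies in $O(\log n)$ triangles with high probability), giving $O(n^{1.5}\log n)$ triangles, again absorbed by padding. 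Your output-sensitive listing idea for $k=3$ could plausibly be made to work, but the rest of the machinery should be replaced by the one-line spectral bound.
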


\paragraph{$4$-Cycle Enumeration.} 
As \cref{thm:ae-lb} improves a result of \cite{AbboudBKZ22}, we naturally obtain improved conditional lower bounds for several problems they consider. In particular, we achieve tight conditional lower bound for the $4$-Cycle Enumeration problem.

In the $4$-Cycle Enumeration problem, we need to first pre-process a given simple graph, and then enumerate all the $4$-cycles in this graph with subpolynomial time delay for every $4$-cycle enumerated. This problem was first studied by \cite{AbboudBKZ22}, inspired by both  the classic $4$-Cycle Detection problem \cite{yuster1997finding,AlonYZ97} and the recent trend of enumeration algorithms \cite{segoufin2015constant, florenzano2018constant, carmeli2020enumeration, carmeli2021enumeration}. 

 \cite{AbboudBKZ22} showed an $m^{1+\frac{3-\omega}{2(4-\omega)} - o(1)}$ pre-processing time lower bound for $4$-Cycle Enumeration on $m$-edge graphs, under either the $3$SUM hypothesis or the APSP hypothesis. This lower bound is only $m^{5/4-o(1)}$ even if $\omega = 2$. Using \cref{thm:ae-lb}, we show the following improved lower bound. 

\begin{restatable}[$4$-Cycle Enumeration, lower bound]{theorem}{FourCycleLower}
\label{thm:4cycle_lower}
Assuming the $3$SUM hypothesis,  there is no algorithm with $O(n^{2-\eps})$ pre-processing time and $n^{o(1)}$ delay that solves $4$-Cycle Enumeration on $n$-node graphs with $m = \lfloor 0.49n^{1.5} \rfloor$ edges, for any constant $\eps>0$.
\end{restatable}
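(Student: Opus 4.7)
The plan is to derive \cref{thm:4cycle_lower} by combining our strengthened All-Edges Sparse Triangle hardness of \cref{cor:ae-lb} with the black-box reduction from All-Edges Sparse Triangle to $4$-Cycle Enumeration of Abboud, Bringmann, Khoury and Zamir~\cite{AbboudBKZ22}. I would start from an $N$-vertex graph $G$ given by \cref{cor:ae-lb}: maximum degree at most $\sqrt{N}$ and at most $N^{k/2}$ $k$-cycles for every $k \ge 3$, for which All-Edges Sparse Triangle requires $N^{2-o(1)}$ time under the $3$SUM hypothesis. A standard color-coding step tripartitions $G$ into parts $A, B, C$ of size $\Theta(N)$ so that every triangle has one vertex in each part.

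Next I would apply the reduction of~\cite{AbboudBKZ22} to convert the All-Edges Sparse Triangle instance on $G$ into a $4$-Cycle Enumeration instance $G'$ on $n = \Theta(N)$ vertices and $m \le \lfloor 0.49\,n^{1.5}\rfloor$ edges, in which each triangle $a\text{--}b\text{--}c$ of $G$ with $(a,c)\in E_{AC}$ corresponds to a $4$-cycle in $G'$ through a dedicated gadget attached to $(a,c)$; after running the enumerator one can then read off, for every $AC$-edge of $G$, whether it lies on a triangle. The analysis of~\cite{AbboudBKZ22} shows that this reduction only loses an $n^{o(1)}$ factor in running time, provided the source's cycle counts are sufficiently small --- exactly the bounds that \cref{cor:ae-lb} supplies for every $k \ge 3$. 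Consequently, a hypothetical $4$-Cycle Enumeration algorithm with $O(n^{2-\eps})$ pre-processing and $n^{o(1)}$ delay on $G'$ would yield an $O(N^{2-\eps+o(1)})$-time algorithm for All-Edges Sparse Triangle on $G$, contradicting \cref{cor:ae-lb} for every constant $\eps > 0$ (since $n = \Theta(N)$).

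The main technical obstacle I anticipate is verifying that the black-box reduction of~\cite{AbboudBKZ22}, originally stated for an All-Edges Sparse Triangle source with up to $\tO(n^{(\omega+7)/4})$ $4$-cycles (and yielding only an $m^{1+(3-\omega)/(2(4-\omega))-o(1)}$ pre-processing lower bound), still produces a $4$-Cycle Enumeration instance with the desired parameters $n = \Theta(N)$ and $m = \Theta(N^{1.5})$ when fed the stronger source from \cref{cor:ae-lb}. In particular one must re-examine two points: (i) that the gadget does not inflate the vertex count beyond $\Theta(N)$ (a naive ``one subdivision vertex per $AC$-edge'' construction would inflate $n$ to $\Theta(N^{1.5})$ and only yield an $n^{4/3-\eps}$ lower bound), and (ii) that the total number of $4$-cycles in $G'$ stays small enough that the enumeration phase is dominated by the pre-processing budget --- this is where the stronger $k$-cycle bounds of \cref{cor:ae-lb} (inherited ultimately from \cref{thm:main}) are essential. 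The key conceptual gain over~\cite{AbboudBKZ22} is not in the reduction itself but in the strengthened input hardness; composing the same reduction with the stronger input from \cref{cor:ae-lb} closes the gap between the $m^{1+\gamma}$ lower bound of~\cite{AbboudBKZ22} and the tight $n^{2-o(1)} = m^{4/3-o(1)}$ bound claimed here.
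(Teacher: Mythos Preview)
Your proposal correctly identifies the overall strategy and the two key obstacles, but it leaves both obstacles unresolved, and resolving them is essentially the entire content of the proof. Treating the \cite{AbboudBKZ22} reduction as a black box that will ``just work'' with the stronger source from \cref{cor:ae-lb} does not go through, for the two reasons you yourself flag.

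First, applying any triangle-to-$4$-cycle gadget directly to the \emph{single} graph $G$ from \cref{cor:ae-lb} cannot yield a contradiction: $G$ already has up to $N^2$ $4$-cycles, so the enumeration phase alone costs $N^{2+o(1)}$ time regardless of how cheap pre-processing is. The paper's fix is to first invoke \cref{lem:application-lemma} with $k=4$ and a small $\sigma>0$, splitting $G$ into $n^{3\sigma}$ subinstances of size $\Theta(n^{1-\sigma})$ so that the \emph{total} number of $3$- and $4$-cycles across all subinstances is $O(n^{2-\sigma})$, truly subquadratic. This splitting step is absent from your plan, and without it the enumeration cost swamps the budget.

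Second, the \cite{AbboudBKZ22} subdivision gadget does blow up the vertex count (as you note), and the paper explicitly \emph{does not} use it. Instead, for each tripartite subinstance $A\cup B\cup C$ it adds a single copy $C'$ of $C$ joined to $C$ by a perfect matching, so triangles in $G_3$ become $4$-cycles in $G_4$ while the vertex count stays $\Theta(n^{1-\sigma})$. One then pads with a dense $4$-cycle-free graph to hit the target density $m=\lfloor 0.49\,n^{1.5}\rfloor$. With both fixes in place, the total pre-processing cost is $n^{3\sigma}\cdot O((n^{1-\sigma})^{2-\eps})$ and the total enumeration cost is $O(n^{2-\sigma})\cdot n^{o(1)}$; choosing $\sigma=\eps/4$ makes both subquadratic, contradicting \cref{lem:application-lemma}.
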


In terms of $m$, this lower bound is $m^{4/3-o(1)}$. By the same reasoning as \cite{AbboudBKZ22}, \cref{thm:4cycle_lower} and the known $3$-SUM hardness of Triangle Listing \cite{Patrascu10}
imply an $m^{4/3-o(1)}$ pre-processing lower bound for $k$-Cycle Enumeration for any $k \ge 3$, under the $3$SUM hypothesis.

Note that $m=n^{1.5\pm o(1)}$ is the hardest density for $4$-Cycle Enumeration. See \cref{sec:4-cycle-enum-ub} for more details.

It is known how to solve classic $4$-Cycle Detection in $O(\min\{n^2, m^{4/3}\})$ time \cite{yuster1997finding,AlonYZ97}. We show, by simple modifications of the existing $4$-Cycle Detection algorithms, that $4$-Cycle Enumeration can also be solved in $O(\min\{n^2, m^{4/3}\})$ pre-processing time. Thus, the conditional lower bound in \cref{thm:4cycle_lower} is indeed tight. 

\begin{restatable}[$4$-Cycle Enumeration, upper bound]{theorem}{FourCycleUpper}
\label{thm:4cycle_upper}
Given an $n$-vertex $m$-edge undirected graph,  we can enumerate $4$-cycles in $O(1)$ delay after an $O(\min\{n^2, m^{4/3}\})$ time pre-processing. The algorithm is deterministic.
\end{restatable}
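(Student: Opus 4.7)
The plan is to modify the classical detection algorithms \cite{yuster1997finding, AlonYZ97} so that they continue past the first $4$-cycle and emit every cycle. The central observation is an \emph{output-sensitive} bound on the detection algorithms' work: for any vertex $v$, the length-$2$ paths from $v$ inspected by the algorithm split into those landing on a fresh endpoint (at most $n-1$ such) and those colliding with a previously seen endpoint, and every collision yields a new $4$-cycle. Hence the per-$v$ path work is at most $(n-1) + T_v$, where $T_v$ counts the $4$-cycles in which $v$ plays a fixed canonical role.

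For the $O(n^2)$ bound, I use the canonical form ``$v_1 = \min$, middles satisfy $u' < u$'' for a $4$-cycle $(v_1, u', v_3, u)$. Iterate $v_1 = 1, \ldots, n$; for each $v_1$, maintain an initially empty array $L$. For each $u \in N(v_1)$ with $u > v_1$ taken in increasing order, and each $v_3 \in N(u)$ with $v_3 > v_1$ and $v_3 \neq u$, first emit $(v_1, u', v_3, u)$ for every $u' \in L[v_3]$ (automatically $u' < u$ by iteration order) and then append $u$ to $L[v_3]$; after $v_1$ finishes, clear $L$ by walking only the touched entries. Each $4$-cycle is emitted exactly once, and summing $P_{v_1} \leq (n-1) + T_{v_1}$ across $v_1$ gives total time $O(n^2 + T)$.

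For the $O(m^{4/3})$ bound, I apply the same modification to the standard degree-threshold detection algorithm at $\Delta = m^{1/3}$: run its path-scanning to completion and emit a cycle on each collision in the endpoint/pair data structure. The same output-sensitive accounting absorbs all extra work beyond the detection budget into the produced cycles, giving $O(m^{4/3})$ preprocessing plus $O(1)$ per emitted cycle. A canonical labeling by the degree class of each vertex in the cycle ensures that nothing is emitted twice across the degree-threshold sub-cases.

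The main obstacle is upgrading the ``amortized $O(1)$'' delay to worst-case $O(1)$: a vertex $v_1$ with $P_{v_1} = \Theta(n)$ but $T_{v_1} = 0$ would otherwise spend $\Theta(n)$ scanning without emitting anything. I handle this by folding all output-free scanning into the preprocessing phase. A capped scan (each counter stops once it reaches $2$) identifies in $O(n^2)$ (resp.\ $O(m^{4/3})$) time exactly the set of vertices $v_1$ with $T_{v_1} \geq 1$; the enumeration phase then revisits only those ``productive'' vertices, using a one-step lookahead buffer so that every output is paired with the preparation of the next, keeping the gap between consecutive outputs genuinely $O(1)$.
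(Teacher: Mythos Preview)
Your $O(n^2)$ algorithm and its per-vertex bound $P_{v_1}\le (n-1)+T_{v_1}$ are correct and give a clean alternative to the paper's argument (the paper instead iterates over the \emph{middle} vertex and uses convexity of $\binom{\cdot}{2}$ to show that after $x$ wedges one has found at least $\tfrac{x}{4}(x/n^2-1)$ cycles). Two parts of the proposal, however, do not go through.

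\textbf{Worst-case delay.} Restricting the enumeration phase to productive $v_1$ and using a one-step lookahead does not yield $O(1)$ worst-case delay. Even when $T_{v_1}\ge 1$, the $\le n-1$ fresh-endpoint steps can be arbitrarily interleaved with the collisions, so two consecutive cycle outputs within a single $v_1$'s scan can be $\Theta(n)$ apart. The right fix---which your own inequality already supports---is the standard unbounded buffer: your bound gives the prefix guarantee that after $w$ units of work at least $(w-n(n-1))/2$ cycles have been found, so after $O(n^2)$ preprocessing one outputs from the buffer at a constant rate and it never empties. This is exactly what the paper does.

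\textbf{The $m^{4/3}$ bound.} The output-sensitive accounting does not transfer. Your fresh/collision inequality bounds the number of wedges by (number of distinct endpoint pairs)${}+O(T)$, and the first term is only bounded by $\binom{n}{2}$, not by $m^{4/3}$. More concretely, running the degree-threshold scheme to completion: wedges with both endpoints high-degree cost $\le |H|^2+O(T)=O(m^{4/3}+T)$, and wedges with a low-degree middle cost $\le m\Delta=O(m^{4/3})$; but these two sub-cases together miss every $4$-cycle $(h_1,h_2,\ell_1,\ell_2)$ in which the two high-degree vertices are adjacent and the two low-degree vertices are adjacent, since then \emph{both} opposite pairs are mixed. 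Covering that case forces wedges with a high-degree middle and a mixed endpoint pair, for which no $O(m^{4/3}+T)$ bound is available. The paper takes a genuinely different route: it peels minimum-degree vertices to get a degeneracy-type ordering, bounds the wedge work after adding $v_i$ by $\tfrac32 m_i\, d_{i+1}^+$, and lower-bounds the $4$-cycles already present via the supersaturation estimate $c(d_{i+1}^+)^4-m_i^{4/3}$; an AM-GM step then shows the cycle count is at least a constant fraction of the work minus $O(m^{4/3})$, which is what drives the buffer trick.
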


\paragraph{Offline Approximate Distance Oracle and Dynamic Approximate Shortest Paths.} Another result obtained by \cite{AbboudBKZ22} is the hardness of Offline Approximate Distance Oracle. A Distance Oracle needs to pre-process a given graph, and then answer (approximate) distance between two query vertices. There exist Distance Oracles that can pre-process an  $n$-vertex $m$-edge undirected weighted graph in $\tO(mn^{1/k})$\footnote{We use $\tilde O(\cdot )$  hide poly-logarithmic factors in input size.} time and then answer $(2k-1)$-approximate distance queries  in $O(1)$ time, for any constant integer $k \ge 1$ \cite{thorup2005approximate, roditty2005deterministic, chechik2014approximate}, and it remains the current best trade-off between pre-processing time and approximate factor. It is thus a natural question to ask whether near-linear  pre-processing time is possible for some constant approximation factors.

P{\u{a}}tra{\c{s}}cu, Roditty, and Thorup \cite{patrascu2012new} showed that all Distance Oracles with $(3-\eps)$-approximation factor and constant query time must use $\Omega(m^{1.5})$ space, under a set intersection conjecture. This  implies that near-linear pre-processing time is impossible for $(3-\eps)$ approximation. \cite{AbboudBKZ22} ruled out this for all $k \ge 4$ as well. More specifically, they showed that, for any $k \ge 4, \delta > 0$, any algorithm for returning a $(k-\delta)$-approximation of the  distances between $m$ pairs of vertices (given at once) in an $m$-edge undirected unweighted graph requires $m^{1+\frac{3-\omega}{4k-(2\omega-2)}-o(1)}$ time, under either the $3$SUM hypothesis or the APSP hypothesis. When $\omega = 2$, the lower bound becomes 
$m^{1+\frac{1}{4k-2}-o(1)}$. 
This essentially establishes that $m^{1+\frac{1}{\Theta(k)}}$ pre-processing time is the correct answer, for achieving $k$-approximation and near-constant query time. However, the constant factor hidden in $\Theta(k)$ still does not match.

Combining Theorem~\ref{thm:ae-lb} and \cite{AbboudBKZ22}'s technique, we make progress towards closing this gap. 
\begin{restatable}[Offline Distance Oracles, I]{theorem}{DOLower}
\label{thm:DO}
Assuming the $3$SUM hypothesis, 
    for any constant integer $k \ge 2$ and $\eps, \delta>0$, there is no $O(m^{1+\frac{1}{2k-1} - \eps})$ time algorithm that can $(k - \delta)$-approximate the distances between $m$ given pairs of vertices in a given $n$-vertex $m$-edge undirected unweighted graph, where $m = \Theta(n^{1+\frac{1}{2k-2}})$. 
\end{restatable}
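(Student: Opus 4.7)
The plan is to invoke \cref{cor:ae-lb} as the starting hardness assumption and then apply the All-Edges Sparse Triangle to Offline Approximate Distance reduction of Abboud, Bringmann, Khoury, and Zamir~\cite{AbboudBKZ22} essentially verbatim, simply with the improved cycle bound plugged in. Their reduction turns an $n$-vertex graph $H$ of maximum degree $\sqrt n$, on which All-Edges Sparse Triangle is hard, into a gadget graph $G$ with $M$ edges on $N$ vertices together with a list of $M$ query pairs, such that a $(k-\delta)$-approximation to the required distances detects, for every edge $e \in E(H)$, whether $e$ lies in a triangle of $H$. The sizes $N, M$ and the achievable approximation gap are controlled by the number of short cycles (up to length roughly $2k$) in $H$, because each such cycle produces a spurious short path in $G$ that the oracle can exploit; thus a tighter cycle count in $H$ directly yields a tighter reduction.

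Concretely, I would proceed in three steps. First, use \cref{cor:ae-lb} to obtain a hard instance $H$ with at most $n^{\ell/2}$ copies of $C_\ell$ for every $\ell \ge 3$, matching the random-graph expectation. Second, feed $H$ into~\cite{AbboudBKZ22}'s gadget construction unchanged. Third, re-optimize the internal parameters of the reduction (the number of copies per gadget, the gadget path length, and the target edge density) using the smaller cycle counts. In~\cite{AbboudBKZ22}'s analysis the bottleneck is a bound of order $n^{(\omega+7)/4+\eps}$ on the number of $4$-cycles and analogous bounds at longer length; replacing each such bound by $n^{\ell/2}$ exactly doubles the ``improvement exponent'' they obtain over trivial, producing the sharper conditional lower bound $m^{1+1/(2k-1)-\eps}$ at the balanced density $m = \Theta(n^{1+1/(2k-2)})$. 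For $k=2$ this recovers the $m^{4/3-\eps}$ bound at density $n^{3/2}$ that is consistent with \cref{thm:4cycle_lower}, which is a useful sanity check.

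The main obstacle is purely bookkeeping rather than conceptual: one has to verify that~\cite{AbboudBKZ22}'s gadget only ever charges cycle counts of $H$ (as opposed to more delicate additive or matching-count structure in $H$), and then re-solve their parameter-balancing program with the new cycle bounds in place to confirm that the hardest density indeed shifts to $m = \Theta(n^{1+1/(2k-2)})$ and that the exponent on $m$ becomes $1 + 1/(2k-1)$. Beyond this careful reparametrization, no new ideas beyond \cref{cor:ae-lb} and the reduction framework of~\cite{AbboudBKZ22} seem to be required.
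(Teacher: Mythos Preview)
Your proposal is correct and matches the paper's approach: the paper likewise derives \cref{thm:DO} by plugging the sharper cycle bounds from \cref{cor:ae-lb} into the reduction framework of \cite{AbboudBKZ22} and re-optimizing the parameters (this is packaged as \cref{lem:application-lemma} and \cref{thm:approx-DO-lb-m}). The only cosmetic difference is that the paper makes the random vertex-partitioning step explicit as a standalone lemma rather than citing it as part of the \cite{AbboudBKZ22} black box, and it does not actually use edge subdivision or a ``gadget path length'' parameter in this particular reduction.
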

\begin{restatable}[Offline Distance Oracles, II]{theorem}{thmdo}
\label{thm:do2}
Assuming the $3$SUM hypothesis, 
    for any constant integer $k \ge 3$ and $\eps, \delta>0$, there is no
    $(k - \delta)$-approximate distance oracle with 
    $O(n^{1+\frac{1}{k - 1} - \eps})$ pre-processing time  and $n^{o(1)}$ query time for an $n$-vertex $O(n)$-edge undirected unweighted graph.
\end{restatable}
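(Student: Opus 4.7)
The plan is to reduce All-Edges Sparse Triangle on an $N$-vertex quasirandom graph to the $(k-\delta)$-approximate distance-oracle problem, following the framework of \cite{AbboudBKZ22} for distance-oracle lower bounds and substituting our stronger source hardness from \cref{cor:ae-lb} in place of their matrix-multiplication-based bound. Fix such a graph $G$: it has $N$ vertices, maximum degree $\sqrt N$, at most $N^{\ell/2}$ $\ell$-cycles for every $\ell\ge 3$, and All-Edges Sparse Triangle on $G$ requires $N^{2-o(1)}$ time under the $3$SUM hypothesis.

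The core of the reduction is to construct a single host graph $H$ on $n = \Theta(N^{2(k-1)/k})$ vertices and $O(n)$ edges, together with $q = \Theta(N^{3/2})$ query pairs $\{(s_e, t_e)\}_{e\in E(G)}$, one per edge of $G$, satisfying: if $e$ lies on a triangle of $G$ then $d_H(s_e, t_e) \le D$ for some target value $D$, while if $e$ is in no triangle then $d_H(s_e, t_e) \ge (k-\delta)\,D$. The graph $H$ would be built as a chained subdivision of $G$: take several copies $V_0,\dots,V_{r}$ of $V(G)$ (with $r$ on the order of $k-2$), join consecutive copies $V_i,V_{i+1}$ by edges corresponding to $E(G)$, subdivide every such edge by enough extra vertices to hit the target sizes $|V(H)|=n$ and $|E(H)|=O(n)$, and anchor the pair $(s_e, t_e)$ at designated endpoints of the chain so that a triangle through $e$ produces a length-$D$ walk, whereas any alternative walk must trace out a short cycle of $G$ passing through $e$, which by the quasirandom bound of $N^{\ell/2}$ on $\ell$-cycles (for the relevant $\ell$) cannot be shorter than $(k-\delta)\,D$.

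Given such an $H$, preprocess it once with the assumed oracle and issue the $q=\Theta(N^{3/2})$ queries; the $(k-\delta)$-approximate answer distinguishes the two cases per edge, solving All-Edges Sparse Triangle on $G$ in time
\[
O\!\left(n^{1+1/(k-1)-\eps}\right) + q\cdot n^{o(1)}
= O\!\left(N^{2 - 2(k-1)\eps/k}\right) + N^{3/2+o(1)}
= o(N^{2}),
\]
contradicting the $N^{2-o(1)}$ hardness from \cref{cor:ae-lb}. The calibration $n = N^{2(k-1)/k}$ is determined precisely so that the pre-processing bound $n^{1+1/(k-1)}$ matches $N^{2}$, while the query count $q=N^{3/2}$ is small enough that queries never dominate.

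The main obstacle is simultaneously realizing the $(k-\delta)$-factor distance gap and the $|E(H)|=O(n)$ budget. For small $k$ (in particular $k=3$, where $n=N^{4/3}$) the allowed number of edges of $H$ is even smaller than $|E(G)|=N^{3/2}$, so many query pairs $(s_e,t_e)$ must share vertices and edges of $H$. Preventing this sharing from introducing any unintended short $s_e$-$t_e$ path that would collapse the gap is exactly where the quasirandom structure of $G$ is essential: the bound of $N^{\ell/2}$ $\ell$-cycles for all $\ell$ up to a constant depending on $k$ rules out the ``accidental'' short cycles through $e$ that could otherwise provide a short detour. Tuning the subdivision length, the number of layers, and the placement of the anchors so that the amplified ratio is exactly $k-\delta$ (and not merely some smaller constant factor) is the delicate balancing step I expect to be the technical heart of the argument.
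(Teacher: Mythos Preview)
Your proposal is not a proof: you describe a single-host-graph subdivision construction but explicitly leave the ``technical heart'' unfinished, and the obstacle you identify is real and unresolved. For $k=3$ you need $|E(H)|=O(n)=O(N^{4/3})$ while $G$ already has $\Theta(N^{3/2})$ edges, so $H$ cannot even contain one subdivided copy of $E(G)$; your suggestion that query pairs ``share'' edges of $H$ has no mechanism behind it, and sharing generically destroys the distance gap. Moreover, the $(k-\delta)$-factor gap you want does not arise naturally from subdivision: if $(a,c)$ lies on a $4$-cycle but no triangle, the alternative path is only one edge longer, giving a multiplicative gap of $3/2$ regardless of how much you subdivide. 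No amount of layering fixes this ratio.

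The paper's argument is structurally different. It does \emph{not} build a single host graph and does \emph{not} rely on a distance gap. Instead it uses the unbalanced splitting lemma (\cref{lem:application-lemma-unbalan}) with parameters $\sigma_a<\sigma_b$ to break the quasirandom All-Edges Sparse Triangle instance into $n^{2\sigma_a+\sigma_b}$ tripartite subinstances $G_i=(A_i\cup B_i\cup C_i,E_i)$, each with $O(n^{1.5-\sigma_a-\sigma_b})$ edges. After padding each $G_i$ (with the $A_i\times C_i$ edges removed) with isolated vertices so that its vertex count equals its edge count, the oracle's $O(n)$-edge hypothesis applies. For every removed edge $(a,c)$ one queries $d_{G'_i}(a,c)$; if the $(k-\delta)$-approximate answer is at most $2k-1$, then by bipartiteness the true distance is at most $2k-2$, so $(a,c)$ lies on a short cycle of $G_i$ using exactly one $A_i\times C_i$ edge. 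The total number of such edges across all instances is bounded by the short-cycle count from \cref{lem:application-lemma-unbalan}, and each is resolved by a brute-force scan over the $O(n^{0.5-\sigma_b})$ neighbors in $B_i$. The unbalancing $\sigma_b\approx 1/2$, $\sigma_a\approx \tfrac{k-5}{2(k-3)}$ (with $k$ here replaced by $2k-1$ to get the statement of \cref{thm:do2}) is what simultaneously makes edges linear in vertices, keeps per-instance query cost subquadratic, and keeps the brute-force cost subquadratic. This many-instances-with-padding idea is the missing ingredient in your approach.
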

Note that \cref{thm:do2} has a higher lower bound (in terms of the number of edges) for pre-processing time, while \cref{thm:DO} applies to distance oracles with possibly slower query time $m^{1/(2k-1)-\eps}$. Compared to the Thorup-Zwick distance oracle \cite{thorup2005approximate} with $(2k-1)$ approximation,  constant query time and $O(n^{1+1/k})$ pre-processing time on $O(n)$-edge graphs, our lower bound in \cref{thm:do2} loses a factor of $2$ on the exponent for large constant $k$, while the previous bound by \cite{AbboudBKZ22} loses a factor of 8 (when $\omega = 2$).

Dalirrooyfard, Jin, Vassilevska Williams and Wein \cite{ansc} studied a similar question called $n$-Pair Shortest Paths, which is the Offline Distance Oracle problem with $n$ queries, and obtained close-to-optimal combinatorial lower bounds for algorithms achieving $(1+1/k)$-approximation. 

Another problem similar in nature to Distance Oracle is Dynamic Shortest Paths. Here, the difference is that we also need to support updates that can insert or delete an edge in the given graph. For its decremental version where only edge deletions are allowed, data structures with $\tO(n^{1/k})$ amortized update time and $\tO(1)$ query time for $O(k)$ approximation are known for weighted undirected graphs \cite{chechik2018near}. For the fully dynamic version, the best known data structure with $\tO(n^{1/k})$ amortized update time and $\tO(1)$ query time provides $O(\log n)^{O(k)}$-approximation \cite{forster2021dynamic}. 

For $(k - \delta)$-approximations, \cite{AbboudBKZ22} shows   that (when $\omega = 2$) no algorithms for Decremental Dynamic Approximate Shortest Path on undirected unweighted graphs can have both $O(m^{1+\frac{1}{4k-2}-\eps})$ total update time and $O(m^{\frac{1}{4k-2}-\eps})$ query time for positive $\eps$, under either the $3$SUM hypothesis or the APSP hypothesis. As mentioned in \cite{AbboudBKZ22}, this bound follows immediately from their lower bound for Offline Distance Oracle. Thus, our \cref{thm:DO} implies improved lower bounds under the $3$SUM hypothesis: no algorithms can have both $O(m^{1+\frac{1}{2k}-\eps})$ total update time and $O(m^{\frac{1}{2k}-\eps})$ query time. 

They also provided a conditional lower bound for Fully Dynamic Approximate Shortest Path. More specifically, they showed that (when $\omega = 2$) no algorithm can pre-process  an $n$-vertex undirected unweighted graphs in $O(n^3)$ time and supports fully dynamic updates and queries in $O(m^{\frac{1}{4k-2}-\eps})$ time, where the queries need to be approximated  within $(k - \delta)$ factor, for $\delta,\eps>0$, under either the $3$SUM hypothesis or the APSP hypothesis. Combining their approach with \cref{thm:ae-lb}, we also obtain improved lower bounds. 

\begin{restatable}[Dynamic Approximate Shortest Path]{theorem}{DynamicSP}
\label{thm:DynamicSP}
Assuming the $3$SUM hypothesis,  for any constant integer $k \ge 3$ and $\eps, \delta>0$, no algorithm can support insertion and deletion of edges and support querying $(k - \delta)$-approximate distance between two vertices in $O(m^{\frac{1}{2k-1} - \eps})$ time per update and query, after an $O(n^3)$ time pre-processing, in $n$-vertex $m$-edge undirected unweighted graphs, where $m = \Theta(n^{1+\frac{1}{2k-2}})$.  
\end{restatable}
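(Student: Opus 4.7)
The plan is to follow the reduction framework of Abboud, Bringmann, Khoury, and Zamir~\cite{AbboudBKZ22} from All-Edges Sparse Triangle to Fully Dynamic Approximate Shortest Paths, but starting from our strengthened source problem in Corollary~\ref{cor:ae-lb}. In~\cite{AbboudBKZ22}, the quantitative loss of their reduction is determined by the available upper bounds on the number of $k'$-cycles in the source graph, both because of the $\omega$-dependence and because they only control $4$-cycles. Since Corollary~\ref{cor:ae-lb} supplies the essentially optimal count $N^{k'/2}$ for every $k' \ge 3$ (matching what one expects from a random graph of the prescribed degree), the same reduction applied to our source problem should directly yield the sharper exponent $\frac{1}{2k-1}$ in place of (essentially) $\frac{1}{4k-2}$.

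Concretely, I would (i) invoke Corollary~\ref{cor:ae-lb} on an $N$-vertex graph $H$ with maximum degree $\sqrt{N}$ and at most $N^{k'/2}$ $k'$-cycles for every $3 \le k' \le 2k-1$, so that All-Edges Sparse Triangle on $H$ requires $N^{2-o(1)}$ time under the $3$SUM hypothesis; (ii) build the layered host graph $G$ on $n$ vertices and $m = \Theta(n^{1+1/(2k-2)})$ edges with depth $2k-1$ as in~\cite{AbboudBKZ22}, choosing $n$ as a polynomial in $N$ that balances the final running times; (iii) pay the $O(n^3)$ pre-processing cost once, then for each edge $(u,v)\in E(H)$ perform a constant number of insertions simulating $(u,v)$, a distance query between the associated source--sink pair, and the matching deletions returning $G$ to its initial state; (iv) use the approximate distance returned by the query to decide whether $(u,v)$ lies in a triangle of $H$. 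The total time is $O(n^3) + |E(H)|\cdot T(m)$ with $T(m) = O(m^{1/(2k-1)-\eps})$ and $|E(H)| = O(N^{3/2})$; plugging in the density $m = \Theta(n^{1+1/(2k-2)})$ and balancing against the $O(n^3)$ budget yields a total of $O(N^{2-\Omega(\eps)})$, contradicting Corollary~\ref{cor:ae-lb}.

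The main obstacle is the approximation-gap argument: a $(k-\delta)$-approximate oracle is only useful if, for every non-triangle edge of $H$, the source--sink distance in $G$ strictly exceeds $(k-\delta)$ times the distance produced by a triangle edge. Spurious short source--sink paths in $G$ are in bijection with certain short closed walks in $H$ passing through the queried edge, and the number of such walks of length at most $2k-1$ is precisely what the improved $k'$-cycle counts of Corollary~\ref{cor:ae-lb} control. Once the counting argument of~\cite{AbboudBKZ22} is re-run with our cycle bounds in place of theirs, the remaining parameter balancing and the density choice $m=\Theta(n^{1+1/(2k-2)})$ proceed routinely.
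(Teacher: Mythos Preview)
Your high-level intuition is right---the improved cycle counts from \cref{cor:ae-lb} are precisely what sharpens the exponent---but the proposal has a concrete gap and also differs from the paper's argument in a way worth noting.

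\textbf{The gap.} You write ``choosing $n$ as a polynomial in $N$ that balances the final running times,'' and your time bound requires $n^3 \le N^{2-\Omega(\eps)}$, i.e., $n \ll N^{2/3}$. But a direct layered host graph built from $H$ (which has $N$ vertices) must have $n \ge \Omega(N)$ vertices---there is no way to encode all of $H$ in a graph with $N^{2/3}$ vertices. The paper handles this by first applying a \emph{splitting step} (\cref{lem:application-lemma}): randomly partition the vertex parts of $H$ into $n^{\sigma}$ groups each, producing $n^{3\sigma}$ small All-Edges Sparse Triangle instances of size $O(n^{1-\sigma})$ with a controlled total cycle count. One then preprocesses a single graph of size $O(n^{1-\sigma})$, and uses edge insertions/deletions to \emph{transition between the small instances}, not to test individual edges of $H$. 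This is exactly how the $O(n^3)$ preprocessing budget becomes affordable, and it is absent from your proposal.

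\textbf{The difference in handling short paths.} You describe the ``approximation-gap argument'' as the main obstacle, aiming to ensure that non-triangle edges always yield strictly larger distances. The paper does not do this. Instead, for each instance it removes the $A$--$C$ edges, queries the $(a,c)$ distance in the resulting bipartite graph, and whenever the reported distance is at most $2\lceil k/2\rceil -1$ (which, by bipartiteness, forces an actual path of length at most $k-1$), it brute-force checks whether $(a,c)$ is in a triangle, in time proportional to the degree. The cycle bounds from \cref{lem:application-lemma} are used to bound the \emph{total number of such brute-force checks} across all instances, not to rule out spurious short paths. This is both simpler and what makes the parameter balancing go through cleanly with $\sigma = \frac{k-3}{2(k-2)} + \frac{\eps}{4}$.
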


\paragraph{All-Nodes Shortest Cycles.}

We also explore the conditional lower bound of the All-Nodes Shortest Cycles problem, which was not considered by~\cite{AbboudBKZ22}. In this problem, we are given a graph and are required to compute the length of the shortest cycle through every vertex. This problem was first considered by Yuster~\cite{YusterANSC}, who gave an $\tO(n^{(3+\omega)/2})$ time algorithm for unweighted undirected graph. It was later improved by Agarwal and Ramachandran \cite{agarwal2018fine} to $\tO(n^\omega)$ time. 
Sankowski and W{\k{e}}grzycki \cite{sankowski2019improved} showed the same $\tO(n^\omega)$ time bound for unweighted directed graphs. 

The study of the All-Nodes Shortest Cycles problem in the approximate setting was initiated by Dalirrooyfard, Jin, Vassilevska Williams and Wein \cite{ansc}, who gave various algorithms and conditional lower bounds for approximate All-Nodes Shortest Cycles. In particular, they showed an $\tO(mn^{1/k})$ time algorithm for $(k+\eps)$-approximate All-Nodes Shortest Cycles in undirected unweighted graphs, for arbitrary $\eps > 0$.

Using \cref{thm:ae-lb}, we show the following conditional lower bound, suggesting that $m^{1+\frac{1}{\Theta(k)}}$ is likely the correct running time for $k$-approximate All-Nodes Shortest Cycles.

\begin{restatable}[All-Nodes Shortest Cycles]{theorem}{ANSC}
\label{thm:intro:ansc}
Fix any integer $k\ge 4$. Assuming the $3$SUM hypothesis, 
    for any $\eps, \delta>0$, there is no $O(m^{1+1/k-\eps})$ time algorithm  that can solve the All-Nodes Shortest Cycles problem within $(k/3 - \delta)$ approximation factors on $n$-vertex $m$-edge graphs with $m=
    \Theta(n^{1+\frac{1}{k-1}})$.
\end{restatable}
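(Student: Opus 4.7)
I reduce All-Edges Sparse Triangle on a quasirandom graph $H$ (which is $3$SUM-hard by \cref{thm:ae-lb}) to $(k/3-\delta)$-approximate All-Nodes Shortest Cycles on a graph $G$. Starting from $H$ with $n_H$ vertices and $m_H = \Theta(n_H^{1.5})$ edges, I build $G$ on $n = \Theta(n_H^{2(k-1)/(k+1)})$ vertices with $m = \Theta(n^{1+1/(k-1)}) = \Theta(n_H^{2k/(k+1)})$ edges. A hypothetical $O(m^{1+1/k-\eps})$-time $(k/3-\delta)$-approximate ANSC algorithm would then run on $G$ in time $O(n_H^{(k+1)/(k-1)\cdot 2(k-1)/(k+1) - \Omega(\eps)}) = O(n_H^{2-\Omega(\eps)})$, solving All-Edges Sparse Triangle on $H$ and contradicting \cref{thm:ae-lb}.

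\textbf{Reduction.} The construction of $G$ follows the layered girth-amplification framework used by \cite{AbboudBKZ22} for the Offline Distance Oracle reduction, adapted to shortest-cycle length rather than pairwise distance. We stack $\Theta(k)$ cyclically-connected copies of $V(H)$, replace each edge of $H$ by a subdivided path routed through the layers, and attach to each edge-gadget a dedicated marker vertex $m_e$. The parameters are chosen so that, for every edge $e=(u,v)\in E(H)$: (i) if $e$ lies in some triangle of $H$, the shortest cycle through $m_e$ in $G$ has length at most $L$ for a parameter $L=\Theta(k)$ depending only on $k$; and (ii) if $e$ lies in no triangle of $H$, the shortest cycle through $m_e$ has length at least $(k/3)\, L$. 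Invoking the ANSC oracle once and applying the $(k/3-\delta)$-approximation guarantee at each marker $m_e$ then distinguishes cases~(i) and~(ii) simultaneously for every edge, recovering the All-Edges Sparse Triangle answer on $H$.

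\textbf{Main obstacle.} The non-trivial direction is~(ii). A quasirandom $H$ can still contain many $j$-cycles through any given non-triangle edge for $4 \le j < k$, so plain subdivision alone cannot force the shortest cycle through $m_e$ past $(k/3)\,L$. The layered, cyclically-stacked gadget must \emph{amplify} every non-triangle closed walk through $e$ into a cycle of length at least $(k/3)\,L$ in $G$, using the cyclic layer structure together with the global bound on $j$-cycle counts supplied by \cref{cor:ae-lb} to rule out spurious short cycles through $m_e$. Calibrating the number of layers, the per-edge gadget size, and the resulting edge count so that $m=\Theta(n^{1+1/(k-1)})$ lines up with the target exponent is the main technical calculation; everything else (the reduction of ANSC to detecting each edge's triangle status, and the translation of time bounds back to $n_H$) is routine parameter bookkeeping.
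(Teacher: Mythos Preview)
Your proposal has a genuine gap at exactly the point you flag as the ``main obstacle.'' You need that if $e$ lies in no triangle of $H$ then the shortest cycle through $m_e$ in $G$ has length at least $(k/3)L$, while if $e$ lies in a triangle the length is at most $L$. But a quasirandom $H$ can contain many $4$-cycles through a non-triangle edge $e$, and no amount of uniform subdivision or layering can stretch those $4$-cycles by a factor larger than $4/3$ relative to the triangles: subdividing every edge into $p$ pieces turns a triangle into a $3p$-cycle and a $4$-cycle into a $4p$-cycle, so the ratio is stuck at $4/3 < k/3$ for $k>4$. The \cite{AbboudBKZ22} distance-oracle gadget you allude to controls \emph{distances} in a bipartite layered graph (where non-existence of a $2$-path forces the distance to jump discretely), not \emph{cycle lengths} in a graph that already has short cycles. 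Your sentence ``the layered, cyclically-stacked gadget must amplify every non-triangle closed walk\dots using the global bound on $j$-cycle counts'' is not a construction; it is a restatement of what you need.

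The paper's proof avoids this obstacle entirely by not attempting gap amplification. It applies \cref{lem:application-lemma} with parameter $k-1$ to split the quasirandom instance into $n^{3\sigma}$ small tripartite instances, each with $O(n^{1-\sigma})$ vertices, and runs the assumed ANSC algorithm \emph{directly} on each small instance (after padding to the target density). The approximation guarantee is used only in one direction: if a vertex $a\in A$ is in a triangle, the reported shortest cycle through $a$ is at most $\lfloor 3(k/3-\delta)\rfloor \le k-1$. Whenever this happens, the proof brute-forces all neighbor pairs of $a$ in $O((n^{0.5-\sigma})^2)$ time. The point is that the total number of vertices for which this trigger fires is bounded by the total number of cycles of length $\le k-1$ across all instances, which \cref{lem:application-lemma} caps at $O(n^{(k-1)/2-(k-4)\sigma})$. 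Choosing $\sigma = \frac{k-3}{2(k-2)}+\frac{\eps}{4}$ makes both the ANSC calls and the brute-force step truly subquadratic. So the short cycles through non-triangle edges are not filtered out by the gadget---they are simply paid for.
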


\paragraph{Triangle Detection. }

As mentioned, \cite{AbboudBKZ22}  showed a conditional lower bound for Triangle Detection on a $4$-cycle free graph, assuming Triangle Detection on $n$-vertex graphs with  maximum degree at most $\sqrt{n}$ requires $n^{2-o(1)}$ time. 

Even though All-Edges Sparse Triangle on $n$-vertex graphs with  maximum degree at most $\sqrt{n}$ requires $n^{2-o(1)}$ time under either the $3$SUM hypothesis or the APSP hypothesis \cite{Patrascu10, williams2020monochromatic}, the same is not known for Triangle Detection. In fact, it is an open problem to base the hardness of Triangle Detection on some central hypotheses in fine-grained complexity, explicitly asked by \cite{AbboudBKZ22}. 

Towards resolving this open problem, \cite{AbboudBKZ22} proposes a new hypothesis, which they call the Strong Zero-Triangle conjecture. It states that detecting a zero-weight triangle in a edge-weighted tripartite graph with vertex parts of sizes $A, B, C$ and with integer weights in $\{-W, \ldots, W\}$ requires $(\min\{W(AB+BC+CA), ABC\})^{1-o(1)}$ time. Under this hypothesis, they showed that Triangle Detection on $n$-vertex graphs with  maximum degree at most $\sqrt{n}$ requires $n^{2-o(1)}$ time. 

As a side result, we make another progress towards this open problem. Our hardness result is based on a more well-known hypothesis called the Strong $3$SUM hypothesis, which was first proposed by Amir, Chan, Lewenstein and Lewenstein \cite{chanstrong3sum}, and later used by, e.g., \cite{Abboudstrong3sum,hsu2017multidimensional,mucha2019subquadratic,abboud2020impossibility,BringmannW21}. We remark that there is no known direct relations between the Strong Zero-Triangle conjecture and the Strong $3$SUM hypothesis. As far as we know, neither, either, or both of them could be true.

\begin{hypo}[Strong $3$SUM hypothesis]
    In the Word-RAM model with $O(\log n)$-bit words, $3$SUM on size-$n$ set of integers from $[-n^2, n^2]$ cannot be solved in $O(n^{2-\eps})$ time, for any positive constant $\eps>0$.
\end{hypo}

We show the following:
\begin{restatable}{theorem}{TriangleDet}
\label{thm:triangledet}
Under the Strong $3$SUM hypothesis, Triangle Detection on $n$-node graphs with maximum degree $O(n^{1/4})$ requires $n^{1.5-o(1)}$ time. 
\end{restatable}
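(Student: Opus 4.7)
The plan is to prove \cref{thm:triangledet} via a fine-grained reduction from Strong $3$SUM on $N$ numbers in $[-N^2, N^2]$ to Triangle Detection on $n$-vertex graphs of maximum degree $O(n^{1/4})$, calibrated with $n = \Theta(N^{4/3})$ so that the $n^{1.5}$ brute-force bound matches $N^2$. A faster-than-$n^{1.5 - o(1)}$ triangle-detection algorithm would then yield a faster-than-$N^{2-o(1)}$ algorithm for Strong $3$SUM, contradicting the hypothesis.

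The construction follows the classical P{\u{a}}tra{\c{s}}cu-style bucketing template, but is tightened to exploit the small Strong $3$SUM universe. First I would convert the input to tripartite form $A, B, C \subset [-N^2, N^2]$, each of size $\Theta(N)$, via standard color coding, so that the task becomes deciding whether $a+b+c=0$ for some $(a,b,c) \in A \times B \times C$. Then I apply an almost-linear multiplicative hash $h(x) = (sx \bmod p) \bmod R$ with random $s$, prime $p \ge 4N^2$, and bucket count $R = \Theta(N^{2/3})$; standard Dietzfelbinger-style analysis gives bucket sizes $O(N/R) = O(N^{1/3})$ with high probability, and the almost-linear property ensures that $h(a) + h(b) + h(c) \bmod R$ takes only $O(1)$ values whenever $a + b + c = 0$. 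The target graph $G$ is tripartite with each part of size $R^2 = \Theta(N^{4/3}) = n$, where vertices in $V_1$ are labeled by pairs $(i,j)$ of bucket indices for $A$ and $B$, and vertices in $V_2, V_3$ are labeled analogously by $(B,C)$- and $(C,A)$-bucket pairs. We add an edge between $(i,j) \in V_1$ and $(j,k) \in V_2$ exactly when the bucket triple $(A_i, B_j, C_k)$ admits a $3$SUM solution, and similarly for the other edge types. Triangles of the form $((i,j),(j,k),(k,i))$ then correspond bijectively to bucket triples containing a $3$SUM solution, which in turn is equivalent to the original instance having one.

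The main obstacle is bounding the maximum degree of $G$ by $O(n^{1/4}) = O(N^{1/3})$ rather than the naive $O(R) = O(N^{2/3})$: in a worst-case instance a single vertex $(i,j) \in V_1$ could witness $3$SUM solutions in up to $R$ distinct $C$-buckets. My plan to handle this is a heavy/light dichotomy: call a bucket pair $(A_i, B_j)$ heavy if its sumset intersects $-C$ in more than $N^{1/3}$ distinct $C$-buckets, and light otherwise. Any heavy pair already exhibits a $3$SUM solution and certifies a YES answer, so it suffices to detect the existence of a heavy pair in subquadratic time; I plan to do this by combining the almost-linear hash with the sparse-convolution routines used elsewhere in the paper, leveraging the smallness of the universe $[-N^2, N^2]$ so that relevant sumsets can be manipulated in $\tO(N^{2-\Omega(1)})$ time. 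Once the heavy pairs are eliminated (or we have already output YES), the residual graph has max degree $O(N^{1/3})$, the triangle/$3$SUM correspondence is preserved, and a sub-$n^{1.5-\eps}$-time triangle-detection algorithm combined with the subquadratic reduction yields an $O(N^{2-\Omega(\eps)})$-time algorithm for Strong $3$SUM, as required.
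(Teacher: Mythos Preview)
Your proposal has a fundamental gap in the edge definition. You write that there is an edge between $(i,j)\in V_1$ and $(j,k)\in V_2$ ``exactly when the bucket triple $(A_i,B_j,C_k)$ admits a $3$SUM solution.'' But then a \emph{single edge} in $G$ already certifies that the original $3$SUM instance has a solution, so Triangle Detection on $G$ is vacuous: the answer is YES iff $G$ has any edge at all. Your heavy/light dichotomy does not escape this, because a ``light'' pair with even one incident edge is already a YES certificate. Moreover, even constructing $G$ requires deciding $3$SUM on each of the $\Theta(R^2)$ relevant bucket triples of size $\Theta(N^{1/3})$, which is $\Theta(R^2)\cdot \Theta(N^{2/3}) = \Theta(N^2)$ time, so the reduction is circular. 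The P\u{a}tra\c{s}cu template you are gesturing at encodes \emph{elements} on the edges (so that triangles, not edges, correspond to solutions); your version collapses that encoding and loses the reduction.

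The paper's proof takes a completely different route that avoids bucketing entirely. It works directly in $\Z_{pqr}$ for three random primes $p,q,r\approx N^{2/3}$ (so $pqr > 3U$ prevents wraparound), takes the three vertex parts to be the residues $\equiv 0$ modulo $p$, $q$, $r$ respectively (each of size $\approx N^{4/3}$), and puts an edge $(u,v)$ whenever $(v-u)\bmod pqr \in A$. Triangles then correspond exactly to $3$SUM solutions by telescoping. The degree bound comes from the randomness of the primes: the neighbors of a fixed $u\in U$ inside $V$ are determined by the elements $a\in A$ with $a\equiv -u\pmod q$, and a random prime $q\approx N^{2/3}$ divides $a-x$ (for a fixed target $x$) with probability $\tilde O(N^{-2/3})$, giving expected degree $O(N^{1/3})$. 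One then deletes vertices of degree above $CN^{1/3}$ and argues by Markov that any fixed witnessing triangle survives with constant probability. There is no heavy/light step and no sparse convolution; the small universe is exploited through the choice $pqr=\tilde\Theta(N^2)$ rather than through bucketing.
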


Our lower bound is arguably lower than that in \cite{AbboudBKZ22}: in terms of $m$, their lower bound is $m^{4/3-o(1)}$, while ours is $m^{6/5-o(1)}$. 
Nevertheless, basing the hardness of Triangle Detection on a more popular hypothesis gives  more confidence that it requires super-linear time. 

Combining the techniques of \cite{AbboudBKZ22} with \cref{thm:triangledet}, one can obtain an $m^{1+\Omega(1)}$-time lower bound for $4$-cycle detection in $m$-edge graphs, assuming the Strong $3$SUM hypothesis. Of course, the exponent here can only be lower than what \cite{AbboudBKZ22} obtained from their Triangle Detection hypothesis.

\subsection{Technical Overview}
\label{sec:overview}

In this section, we will describe the high-level ideas of our reductions from $3$SUM to $3$SUM on Sidon sets, and subsequently to All-Edges Sparse Triangle on quasirandom graphs. 

The additive energy of a set $A \subset \Z$, which we call $E(A)$, is defined as the number of tuples $(a, b, c, d) \in A$ such that $a + b = c + d$. The first component of our reduction is an efficient algorithm for $3$SUM on sets with very large ($\ge n^{3} / K$ for some small $K$) additive energy. For sets $A$with  $E(A) < n^{3} / K$ (moderate energy), we use self-reduction of $3$SUM to reduce the instance to a number of smaller instances, so that the total additive energy of the smaller instances is small. This means that there are very few tuples $(a, b, c, d)$ from the same instance such that $a + b = c + d$ and $\{a, b\} \ne \{c, d\}$, so it suffices to remove these numbers from the instances that contain them. In the following, we describe each of these steps in more details. 

\paragraph{Efficient algorithm for sets with high additive energy. }

First, suppose we need to solve $3$SUM on a size-$n$ set $A$ where $E(A) \ge n^3 / K$ for some small $K$. By the Balog-Szemer{\'e}di-Gowers Theorem, such a set $A$ contains a large subset $A' \subseteq A$ of size $|A'| \ge K^{-O(1)} n$, and with small doubling, $|A' + A'| \le K^{O(1)} |A'|$. Furthermore, such a subset can be found in subquadratic time, by an adaptation of an algorithmic version of the Balog-Szemer{\'e}di-Gowers Theorem given by Chan and Lewenstein \cite{ChanL15}.  If we are able to solve the tripartite version of $3$SUM on sets $A', A, A$, then we can remove $A'$ from $A$ afterwards. We repeat this procedure until either the size of $A$ becomes truly sublinear, when we can use brute-force to solve $3$SUM on $A$, or the energy of $A$ becomes small, when we can apply the reduction for the moderate energy case. Either way, as the size of $A'$ is large, we only need to repeat $K^{O(1)}$ times. 

Therefore, it suffices to give an efficient algorithm for solving tripartite $3$SUM on size-$n$ sets $A, B, C$, where $A$ has small doubling, i.e. $|A+A|$ is small. We will show an algorithm that runs in $\tO(\sqrt{n} |A+A|)$ time, which is truly subquadratic when $|A+A|$ is sufficiently small. The algorithm roughly works as follows. First, we show an efficient algorithm that partitions $B$ into subsets $B_1, B_2, \ldots, B_m$, so that each $B_i$ is a subset of a shift of $A$, i.e., there exists $s_i$ such that $B_i \subseteq A + s_i$. Furthermore,  our algorithm also finds $C_i$, which is a superset of $C \cap -(A + B_i)$, so that $\sum_i |C_i|$ is bounded by $\tO(|A + A|)$ (we cannot afford to simply set $C_i$ to be $C \cap -(A + B_i)$, as finding $C \cap -(A + B_i)$ is not simpler than solving $3$SUM on $A, B_i, C$). Clearly, for some $i$ and $b \in B_i$, the only possible numbers in $C$ that can form a $3$SUM solution with $b$ are from $C_i$. If the size of $C_i$ is smaller than some parameter $t$, we enumerate all pairs $b \in B_i$ and $c \in C_i$, and check whether they are in a $3$SUM solution in $\tO(1)$ time; over all such $C_i$, it takes $\tO(nt)$ time. Otherwise, we use sparse convolution \cite{BringmannFN22} to compute $A + B_i$ and test whether $C_i \cap -(A + B_i)$ is empty in $\tO(|C_i| + |A + B_i|)$ time. Note that sparse convolution runs in $\tO(|A+B_i|) \le \tO(|A + (A + s_i)|) = \tO(|A + A|)$ time, so over all such $C_i$, it takes $\tO(\sum_i |C_i| + \frac{\sum_i |C_i|}{t} \cdot |A + A|) \le \tO(|A+A| + \frac{|A+A|^2}{t})$. Setting $t = \frac{|A+A|}{\sqrt{n}}$ gives the desired $\tO(\sqrt{n} |A+A|)$ time. 

To give some intuition why it is possible to find $B_1, \ldots, B_m$ and $C_1, \ldots, C_m$, we describe the following (inefficient) algorithm that is analogous to our efficient algorithm. Suppose for each $z \in \Z$, we add it to a set $S$ with probability $O(\frac{\log n}{n})$. For every $b \in B$, there exists $s \in S$ such that $b \in A + s$ if and only if $S \cap (b - A) \ne \emptyset$, which happens with high probability. Thus, for every $b$, we can arbitrarily assign it one of the $s \in S$ such that $b \in A + s$. Grouping $b \in B$ assigned with the same $s$ to the same group forms the partition $B_1, \ldots, B_m$. Then let $C_i = C \cap -(s_i + A + A)$. It is a superset of $C \cap -(A + B_i)$ since $B_i \subseteq s_i + A$. Also, each $c \in C$ is in $C_i$ if and only if $s_i \in -(A + A + c)$. As each number in $-(A + A + c)$ is added to $S$ with probability $O(\frac{\log n}{n})$, the expected number of such $s_i$ is $\tO(\frac{|A+A|}{n})$, i.e., $c$ appears in $\tO(\frac{|A+A|}{n})$ many $C_i$ in expectation. Summing over all $c \in C$ gives the desired $\sum_i |C_i| \le \tO(|A+A|)$ bound.

\paragraph*{Self-reduction for sets with moderate additive energy. }

It is well-known that the $3$SUM problem has an efficient self-reduction \cite{BaranDP08} through almost-linear hash functions, such as modulo a random prime, or Dietzfelbinger's hash function (see e.g., \cite{Dietzfelbinger96,Dietzfelbinger18,ChanH20}). Say the input range of a $3$SUM instance is $[-U, U]$ and consider an almost-linear hash  family $\caH$ mapping from $[-U, U]$ to $[m]$. The almost-linear property states that, for any $H \in \caH$ and every $a, b \in [-U, U]$, $H(a) + H(b) + H(-a - b)$ can only have $U^{o(1)}$ possible values. Let us first review the high level ideas of the self-reduction of $3$SUM. 
The self-reduction first samples $H \sim \caH$, and creates a bucket $G_x$ for every $x \in [m]$ that contains every number $a \in A$ where $H(a) = x$. Then we enumerate triples of buckets, and solve a tripartite $3$SUM on numbers from these three buckets. By the almost-linear property, we only need to enumerate $m^2 U^{o(1)}$ triples of buckets. Suppose each bucket has $O(n/m)$ numbers, we get $m^2 U^{o(1)}$ small instances of $3$SUM on sets of sizes $O(n/m)$. 

Suppose the input set $A$ has moderate additive energy $E(A) < n^{3} / K$, i.e., there are only $O(n^{3} / K)$ tuples $(a, b, c, d) \in A$ with $a+b=c+d$ and $\{a, b\} \ne \{c, d\}$. For simplicity, let us focus on the tuples with distinct $a, b, c, d$. Suppose that we can bound the probability of $a, b, c, d$ being in the same bucket by $p$. By the almost-linear property, this bucket appears in $m U^{o(1)}$ small instances. Over all small instances, the expected number of tuples $(a, b, c, d)$ in an instance  with $a+b=c+d$ and $\{a, b\} \ne \{c, d\}$ contributed this way is roughly $\frac{n^3}{K} \cdot p \cdot m U^{o(1)}$. Such tuples can appear in a small instance via other possibilities, such as the case where $a, c$ are in a bucket while $b, d$ are in another bucket, and these two buckets belong to some small instance. However, we ignore such cases in this overview for simplicity. For each such tuple, we remove all its $4$ numbers from the small instance they belong to, but we need to pay $O(n/m)$ time for each number in order to check $3$SUM solutions involving them in a brute-force way. Therefore, the overall running time becomes $\frac{n^4}{K} \cdot p \cdot U^{o(1)}$ and the small instances are on Sidon sets (after we convert the tripartite instance to one-set version in a standard way). In order for this running time to be truly subquadratic, we need $p$ to be close to $1/n^2$. This is possible when $m$ is close to $n$ and the hash family has almost $3$-wise independence guarantees. 

Unfortunately, given the almost-linearity requirement, it seems difficult to achieve full $3$-wise independence: for three integers $x,y,z$ with $x+y+z=0$, the hash value of $z$ is almost determined (up to $U^{o(1)}$ possibilities) by the hash values of $x$ and $y$. We mitigate this by only requiring $3$-universality on triples $(x, y, z)$ with certain properties. More specifically, our hash functions satisfy that
$\Pr[H(x) = H(y) = H(z)]$ is roughly $\frac{1}{m^2}$ on integers $x, y, z$ such that there does not exist integers $\alpha, \beta, \gamma$ with small absolute values such that $\alpha + \beta + \gamma = 0$, $\alpha,\beta,\gamma$ are not all zeros and $\alpha x + \beta y + \gamma z = 0$. Furthermore, we borrow the proof idea from \cite{ldt} that uses Behrend's set \cite{behrend} to split each bucket to multiple sub buckets so that triples of integers with such relations do no appear in the same sub bucket.

\paragraph{Reduction to pseudorandom graphs}

Next, we show hardness of All-Edges Sparse Triangle on pseudorandom graphs by reducing from $3$SUM on Sidon sets. The reduction follows a previous line of reduction from $3$SUM to All-Edges Sparse Triangle via $3$SUM Convolution and Exact Triangle \cite{Patrascu10, williams2013finding, williams2020monochromatic}. As we will show in our reductions, the number of tuples $(a, b, c, d)$ with $a + b = c + d$ in the $3$SUM instance relates to the number of $4$-cycles in the All-Edges Sparse Triangle instance, so starting from a $3$SUM instance on Sidon set helps reducing the number of $4$-cycles in the All-Edges Sparse Triangle instance. Along the way of the reduction, we also achieve a hardness result for Exact Triangle on graphs with certain properties (see \cref{property:weighted-graphs}).  

\paragraph{Comparison with \cite{AbboudBKZ22}.}

The short cycle removal technique of \cite{AbboudBKZ22} can be seen as removing short cycles directly in the input graph of a Triangle Detection or All-Edges Sparse Triangle instance, which incurs some overhead in time complexity. In comparison, our approach removes $4$-cycles in a more indirect way: we trace the hardness of All-Edges Sparse Triangle back to $3$SUM, and remove tuples $(a, b, c, d)$ with $a + b = c + d$ and $\{a, b\} \ne \{c, d\}$ (or, ``arithmetic $4$-cycles'') in the $3$SUM instance, which translates to removing $4$-cycles in the All-Edges Sparse Triangle instance. 
The benefit of our approach is that we can exploit the additive structure of $3$SUM and apply various tools from additive combinatorics and additive algorithms, so that our reduction removes cycles more efficiently.
However, one advantage of their results is that their lower bounds hold under APSP hypothesis as well as $3$SUM hypothesis.

\subsection{Independent works}
Concurrently and independently, Abboud, Bringmann, and Fischer \cite{abf} proved similar fine-grained lower bounds for the 4-Cycle Enumeration problem (or, 4-Cycle Listing) (\cref{thm:4cycle_lower}), and for the Approximate Distance Oracle problem (\cref{thm:4-cycle-enum-sparse}). 
Additionally, they obtained fine-grained lower bounds for Approximate Distance Oracle with stretch $2\le \alpha < 3$ \cite[Theorem 1.3]{abf}.
Similar to our proof, \cite{abf} also uses an energy reduction framework, with some technical parts implemented differently. One noticeable difference is that they reduce the additive energy of 3SUM instance $A$ to $O(|A|^{2+\delta})$ (for arbitrary constant $\delta >0$), while we perform more steps to reduce it all the way down to $2|A|^2 - |A|$ (i.e., $A$ is a Sidon set).
The former bound  already suffices for the application to 4-Cycle Listing and Approximate Distance Oracles in \cite{abf}, whereas the latter bound allows us to show further results such as the 3SUM-hardness of 4-LDT (\cref{thm:4ldt}).

Concurrently and independently, Abboud, Khoury, Leibowitz, and Safier \cite{akls} also gave an algorithm for listing all $t$ 4-cycles in an $n$-node $m$-edge undirected graph in $\tilde O(\min \{n^{2},m^{4/3}\} + t)$ time. 

\subsection{Further Related Works}

Chan and Lewenstein \cite{ChanL15} designed subquadratic-time algorithms for a certain clustered version of $3$SUM. One of their key components is an algorithmic version of the 
Balog-Szemer{\'e}di-Gowers Theorem in a very generalized scenario that requires one to keep track of the uncovered edges in a dense bipartite graph, which needs quadratic time in total (see \cite[Theorem 2.3]{ChanL15}).\footnote{To get subquadratic overall time, they need to apply this lemma on compressed instances.} 
Our proof borrows one of their subroutines, but does not require the full generality of their algorithm. This then allows us to use Ruzsa triangle inequality to avoid spending quadratic time, which is crucial to our proof.

Many problems studied in fine-grained complexity or parameterized complexity have a monochromatic version and a colorful version. The monochromatic version cannot be harder than the colorful version, by a simple reduction using color coding \cite{alon1995color}. 
For several important problems, such as OV and $3$SUM, the reverse direction is also true, proved by simple gadget reductions. However, for some other problems, the reduction is extremely nontrivial. Examples include Euclidean Closest Pairs \cite{bcp1,bcp2,bcp3,SM20}, Bichromatic Graph Diameter \cite{diameter0,diameter1,diameter2,diameter3}, $4$-Cycle Detection \cite{lincoln2018tight,AbboudBKZ22}, $3$-LDTs \cite{ldt}.
Our work provides another example of this phenomenon: $4$-partite version of $4$-LDTs can be easily shown to be $4$SUM hard, but for $1$-partite version of some $4$-LDTs it takes a lot of effort to just prove $3$SUM-hardness.
On the other hand, for some problems, the monochromatic property can be nontrivially used in designing algorithms that run faster than the colorful version, e.g., Element Distinctness in the low-space setting (\cite{BCM13,BansalGN018,ChenJWW22,xinlyu}).

\begin{figure}[ht]
    \centering
    \scalebox{0.8}{
    \begin{tikzpicture}
    
        \pgfmathsetmacro{\R}{4}
        \node at(0, 0)  [anchor=center, text width=3.7cm,align=center] (aetriangle){AE Sparse Tri.\ on quasirandom graphs};
        \node at({-\R + \R * cos(288)}, {\R * sin(288)})  [anchor=center, align=center] (3sum){$3$SUM};
        \node at({-\R + \R * cos(216)}, {\R * sin(216)})  [anchor=center, text width=2.5cm,align=center] (moderate3sum){Moderate-Energy $3$SUM};
        \node at({-\R + \R * cos(144)}, {\R * sin(144)})  [anchor=center, text width=2cm,align=center] (sidon3sum){$3$SUM on Sidon sets};        
        \node at({-\R + \R * cos(72)}, {\R * sin(72)})  [anchor=center, text width=3.5cm, align=center] (exacttri){Exact Tri.\ on graphs with Property~\ref{property:weighted-graphs}};        
		
		\draw[opacity=0.4, dashed, rounded corners=3] (3sum.north east) -- (3sum.north west) -- (3sum.south west) -- (3sum.south east) -- cycle;
		\draw[opacity=0.4, dashed, rounded corners=3] (moderate3sum.north east) -- (moderate3sum.north west) -- (moderate3sum.south west) -- (moderate3sum.south east) -- cycle;
		\draw[opacity=0.4, dashed, rounded corners=3] (sidon3sum.north east) -- (sidon3sum.north west) -- (sidon3sum.south west) -- (sidon3sum.south east) -- cycle;
		\draw[opacity=0.4, dashed, rounded corners=3] (exacttri.north east) -- (exacttri.north west) -- (exacttri.south west) -- (exacttri.south east) -- cycle;
		\draw[opacity=0.4, dashed, rounded corners=3] (aetriangle.north east) -- (aetriangle.north west) -- (aetriangle.south west) -- (aetriangle.south east) -- cycle;

		\draw[->,line width=1pt, bend left=29] (3sum) to[]  node[below] {Thm.~\ref{thm:moderate-energy-reduction}} (moderate3sum);
		\draw[->,line width=1pt, bend left=30] (moderate3sum) to[]  node[left] {Thm.~\ref{thm:main-largeinput}} (sidon3sum);
		\draw[->,line width=1pt, bend left=19] (sidon3sum.50) to[]  node[left] {Lem.~\ref{lem:3sumsidon23sumconv},~\ref{lem:3sumconv2zwt}} (exacttri);
		\draw[->,line width=1pt, bend left=27] (exacttri) to[]  node[right] {Lem.~\ref{lem:zwt2ae}} (aetriangle);

		\node at(8, 3)  [anchor=center, text width=3.7cm,align=center] (4cycle){$4$-Cycle Enumeration};
		\node at(8, 1)  [anchor=center, text width=3.7cm,align=center] (DO){Offline Approximate Distance Oracle};
		\node at(8, -1)  [anchor=center, text width=3.7cm,align=center] (approxSP){Dynamic Approximate Shortest Paths};
		\node at(8, -3)  [anchor=center, text width=3.7cm,align=center] (ANSC){All-Nodes Shortest Cycles};
		
		\draw[opacity=0.4, dashed, rounded corners=3] (4cycle.north east) -- (4cycle.north west) -- (4cycle.south west) -- (4cycle.south east) -- cycle;
		\draw[opacity=0.4, dashed, rounded corners=3] (DO.north east) -- (DO.north west) -- (DO.south west) -- (DO.south east) -- cycle;
		\draw[opacity=0.4, dashed, rounded corners=3] (approxSP.north east) -- (approxSP.north west) -- (approxSP.south west) -- (approxSP.south east) -- cycle;
		\draw[opacity=0.4, dashed, rounded corners=3] (ANSC.north east) -- (ANSC.north west) -- (ANSC.south west) -- (ANSC.south east) -- cycle;

		\draw[->,line width=1pt] (aetriangle.15) to[]  node[above, yshift=-10] {\rotatebox{31}{Thm.~\ref{thm:4cycle_lower}}} (4cycle.180);
		\draw[->,line width=1pt] (aetriangle.5) to[]  node[above,yshift=-9] {\rotatebox{11}{Thm.~\ref{thm:DO},\ref{thm:do2} }} (DO.180);
		\draw[->,line width=1pt] (aetriangle.-5) to[]  node[above,yshift=-5] {\rotatebox{-10}{Thm.~\ref{thm:DynamicSP}}} (approxSP.180);
		\draw[->,line width=1pt] (aetriangle.-15) to[]  node[above, yshift=-12] {\rotatebox{-31}{Thm.~\ref{thm:intro:ansc}}} (ANSC.180);
    \end{tikzpicture}}
    \caption{This figure depicts the main reductions in this paper. The reductions from All-Edges Sparse Triangle on quasirandom graphs to $4$-Cycle Enumeration, Offline Approximate Distance Oracle, Dynamic Approximate Shortest Paths and All-Nodes Shortest Cycles also use \cref{cor:ae-lb} and \cref{lem:application-lemma} as intermediate steps. Specially, the reduction to Offline Approximate Distance Oracle in \cref{thm:do2} uses a variant of \cref{lem:application-lemma} (\cref{lem:application-lemma-unbalan}).}
\end{figure}
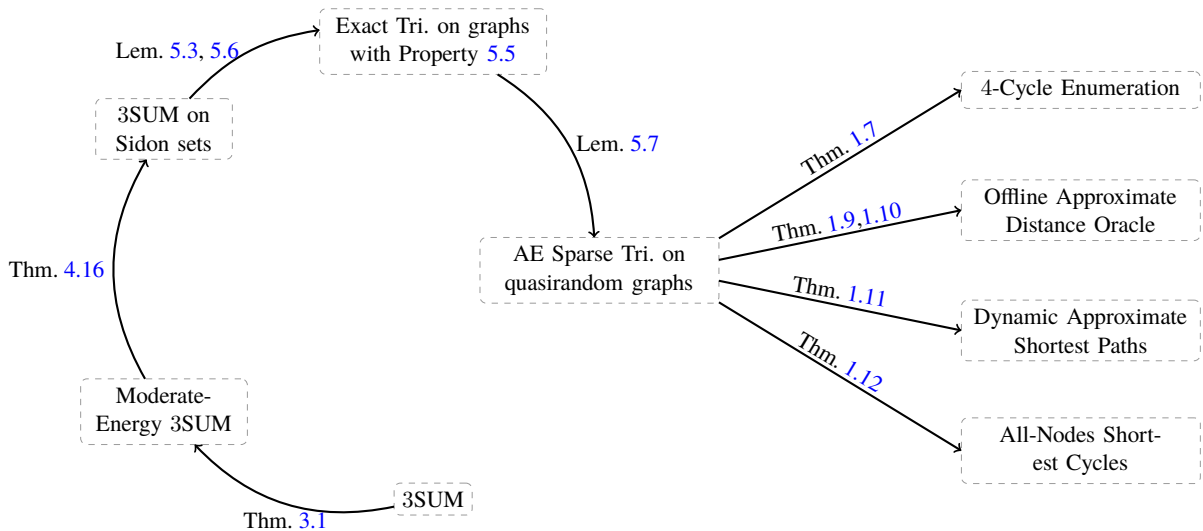

\subsection{Paper Organization}
We give necessary definitions and backgrounds in \cref{sec:prelim}. Next, we reduce $3$SUM to $3$SUM on sets with moderate additive energy in \cref{sec:moderate} and then further reduce to $3$SUM on Sidon sets in \cref{sec:sidon}. This result is used in \cref{sec:quasirandom} to show the $3$SUM-hardness of All-Edges Sparse Triangle on quasirandom graphs, which is then applied to show conditional lower bounds of $4$-Cycle Enumeration, Distance Oracles, Dynamic Shortest Paths and All-Nodes Shortest Cycles in \cref{sec:graph}. In \cref{sec:4-cycle-enum-ub}, we show algorithms that match the above-mentioned conditional lower bounds of $4$-Cycle Enumeration. In \cref{sec:strong-3sum}, we show a lower bound on Triangle Detection under the Strong $3$SUM hypothesis. Finally, we conclude with several open problems in \cref{sec:open}. 
\section{Preliminaries}
\label{sec:prelim}
Denote $[n] = \{1,2,\dots,n\}$.

We use the convention that $(a\bmod p) \in \{0,1,\dots,p-1\}\subset \Z$ regardless of the sign of $a \in \Z$. For a set $A$, denote $A\bmod p = \{a\bmod p: a\in A\}$.

\subsection{Problem Definitions}

The $3$SUM problem is defined as follows.
\begin{definition}[$3$SUM]
 Given an integer set $A \subseteq \Z \cap [-U,U]$ of size $|A|=n$, decide if there exist $a,b,c\in A$ such that $a+b+c=0$.
\end{definition}

\begin{hypo}[$3$SUM hypothesis]
\label{hypo3sum}
    In the Word-RAM model with $O(\log n)$-bit words, $3$SUM with input range $U=n^3$ cannot be solved in $O(n^{2-\eps})$ time, for any positive constant $\eps>0$.
\end{hypo}

There are several variants of the $3$SUM problem studied in the literature. To name a few, one could require integers $a,b,c$ to be distinct, or could ask for $a+b+c=t$ for a given target $t$ not necessarily zero. 
Another variant is the $3$-partite version (sometimes called colorful $3$SUM), where the input contains three sets $A,B,C$ instead of a single set, and $a\in A,b \in B,c \in C$ is required.
It is a standard exercise to show the equivalence of these variants of $3$SUM (see also \cite{ldt}).

Sidon sets are well-studied objects in additive combinatorics. We consider the computational problem of deciding whether a set is a Sidon set.
\begin{definition}[Sidon Set Verification]
  Set $A\subset \Z$ is called a \emph{Sidon set} if $A$ contains no solutions to $a+b=c+d$ except when $\{a,b\}=\{c,d\}$.

 In the Sidon Set Verification problem, we are given an integer set $A \subseteq \Z \cap [-U,U]$ of size $|A|=n$, and need to decide whether $A$ is a Sidon set.
\end{definition}

The $3$SUM problem and the Sidon Set Verification problem are special cases of the more general  \emph{(homogeneous) $k$-Variate Linear Degeneracy Testing ($k$-LDT)} problem \cite{ldt}: for a fixed homogeneous linear equation $\sum_{i=1}^k \beta_i a_i=0$ with non-zero integer coefficients $\beta_i$, given an input integer set $A$, find a \emph{good solution} $(a_1,\dots,a_k)\in A^k$ that satisfies the equation. We consider two variants for the  definition of good solutions:
\begin{enumerate}
    \item The solution contains distinct $a_i$. This is the definition used in \cite{ldt}.
    \item The solution is \emph{nontrivial}, as defined below.
    \end{enumerate}
\begin{definition}[Nontrivial solutions to a linear equation]
\label{defn:nontri}
 A solution $(a_1,\dots,a_k)$ to the equation $\sum_{i=1}^k\beta_i a_i=0$ is called \emph{trivial}, if for every $a\in \{a_1,\dots,a_k\}$ it holds that $\sum_{i: a_i=a}\beta_i=0$. All other solutions are called nontrivial.
\end{definition}
The distinct $a_i$ definition is more restrictive than the nontrivial definition. Note that nontrivial solutions can only exist when $\sum_{i=1}^k \beta_i = 0$.
  For example: 
  \begin{itemize}
      \item In the AVERAGE problem (equation $a_1-2a_2+a_3=0$),  the trivial solutions are $(a_1,a_2,a_3)=(a,a,a)$ for all $a\in A$. In this case, the definition coincides with the distinct $a_i$ definition.
      \item  In the Sidon Set Verification problem (equation $a_1+a_2-a_3-a_4=0$), the trivial solutions are $(a_1,a_2,a_3,a_4)=(a,b,a,b)$ or $(a,b,b,a)$  for all $a,b\in A$. For $w\neq 0$, $(a,a+2w,a+w,a+w)$ is a nontrivial solution, but does not have distinct $a_i$.
  \end{itemize} 
In this paper, we only consider (homogeneous) $3$-LDTs and $4$-LDTs. Moreover, we only study the hardness of (homogeneous) $4$-LDTs with zero coefficient sum $\sum_{i=1}^4 \beta_i=0$, because $k$-LDTs with non-homogeneous equations or nonzero coefficient sum are known to be either trivial or $k$SUM-hard \cite{ldt}.\footnote{Dudek, Gawrychowski, and Starikovskaya~\cite{ldt} only proved  this fact for $k=3$, but their argument easily generalizes to larger $k$.}

\subsection{Additive Combinatorics}
Denote $a + B := \{a + b: b\in B\}$, and $a \cdot  B := \{ab: b\in B\}$.
\begin{definition}[Sumset]
   For sets $A,B\subseteq \Z$, define their \emph{sumset} as
   \[ A+ B := \{ a+b: a\in A, b\in B\}.\]
\end{definition}
For finite $A\subset \Z$, $|A+A|/|A|$ is called the doubling constant of $A$.

We use the sparse convolution algorithm to compute sumsets \cite{ColeH02,ArnoldR15,ChanL15,sparse4,sparse3,Nakos20,nfold,sparse2,BringmannFN22}.
\begin{theorem}[Sparse convolution, \cite{BringmannFN22}]
	\label{thm:sparseconv}
    Given two integer sets $A,B \subseteq [U]$, there is a deterministic algorithm that computes their sumset $A+B$ with output-sensitive time complexity $O(|A+B|\cdot \polylog U)$.
\end{theorem}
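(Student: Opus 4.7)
The plan is to follow a hashing-then-FFT strategy, together with a doubling-search to make the algorithm output-sensitive. Set $t := |A+B|$. I would run the algorithm in phases, doubling a guess $t_0, 2t_0, 4t_0, \dots$ until the output size matches the guess; verifying a guess takes no more than the cost of the current phase, so the total running time is dominated by the final phase, which I can charge to the true value $t$.

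First, within a phase with guess $\tau$, I would reduce the universe $[U]$ down to a smaller universe $[m]$ with $m = O(\tau)$, using an almost-linear hash function of the form $h(x) = \lfloor (sx \bmod p) \cdot m / p \rfloor$ for a prime $p \gg U$ and a well-chosen multiplier $s$. The almost-linear property, $h(a+b) \in \{h(a)+h(b), h(a)+h(b)+1\} \pmod{m}$, means that if $h$ restricted to $A+B$ has few collisions, then I can read off the image of $A+B$ under $h$ from a single polynomial multiplication of the indicator polynomials of $h(A)$ and $h(B)$ over the cyclic group $\mathbb{Z}/m\mathbb{Z}$, using FFT in $O(m\log m) = O(\tau \polylog U)$ time.

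Second, given the compressed output, I would need to recover the actual element $a+b \in [U]$ from its hash image. The standard trick is to repeat the whole construction for $O(\log U)$ additional auxiliary hash functions, each ``locating'' one bit or one digit of the preimage; combined appropriately, this reconstructs every sumset element with only a $\polylog U$ multiplicative overhead. This yields the claimed $O(|A+B|\cdot \polylog U)$ time, provided the hashes are well-behaved.

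The main obstacle is finding, \emph{deterministically}, a multiplier $s$ whose hash $h$ has few collisions on the (unknown) sumset $A+B$. A uniformly random $s$ works with constant probability, but derandomizing requires either searching over a carefully constructed family of multipliers of size $\polylog U$ and testing each candidate via the output-verification step, or using a pseudorandom family of almost-linear hash functions whose good-multiplier density can be certified by a checkable condition on $A$ and $B$ alone. This is the technically delicate part; I would appeal directly to the explicit deterministic construction in \cite{BringmannFN22}, which packages exactly this deterministic hash-selection step together with the FFT-based sumset recovery to yield the stated output-sensitive bound.
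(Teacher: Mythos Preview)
The paper does not prove this theorem: it is quoted as a black-box tool from \cite{BringmannFN22} and used without any argument. So there is nothing in the paper to compare your proposal against.

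Your sketch is a reasonable high-level outline of how output-sensitive sparse convolution algorithms are built (doubling search on the output size, almost-linear hashing to a small universe, FFT on the hashed instance, then preimage recovery), and you correctly identify the derandomization of the hash selection as the crux. But note that at the end you explicitly ``appeal directly to the explicit deterministic construction in \cite{BringmannFN22}'', which is circular: that is exactly the theorem being stated. If you actually want to give a self-contained proof, you would need to supply the deterministic hash-selection argument yourself, which is the main technical contribution of \cite{BringmannFN22} and is considerably more involved than the rest of your outline. For the purposes of this paper, though, no proof is expected; the theorem is simply cited.
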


\begin{definition}[Additive energy]
	Let $A\subset \Z$ be a finite set. The \emph{additive energy} of $A$ is defined as
	\begin{equation}
        \label{eqn:addenergy}
         E(A):= |\{(a,b,c,d)\in A\times A\times A\times A:  a+b=c+d\}|.
    \end{equation}
\end{definition}
Let 
\begin{equation}
    r_A(x):= |\{(a,b)\in A\times A: a+b= x\}|.
\end{equation}
Then 
\begin{equation}
    E(A) = \sum_x r_A(x)^2.
\end{equation}

It holds that \[2|A|^2-|A|\le E(A) \le |A|^3,\] where the lower bound comes from the trivial solutions with $\{a,b\}=\{c,d\}$. This lower bound is achieved if and only if $A$ is a Sidon set.

We use the following standard tools in additive combinatorics (see e.g., \cite{tao_vu_2006,ruzsa2009sumsets,zhao2019graph}).
\begin{lemma}[Ruzsa sum triangle inequality]
	\label{lem:tria}
For finite integer sets $A,B,C$, 
\[   |A+B||C| \le |A+C||B+C| .\]
\end{lemma}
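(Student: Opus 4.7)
The plan is to deduce this lemma as an instance of the classical Plünnecke--Ruzsa inequality, which states that for finite subsets $X, Y, Z$ of an abelian group,
\[ |Y + Z| \cdot |X| \;\le\; |X + Y| \cdot |X + Z|. \]
Substituting $X = C$, $Y = A$, $Z = B$ yields $|A + B| \cdot |C| \le |C + A| \cdot |C + B|$, which is exactly the claim upon using commutativity of addition. Since Plünnecke--Ruzsa is standard, I would present this lemma as a one-line corollary with a citation to \cite{tao_vu_2006,ruzsa2009sumsets,zhao2019graph} in the final write-up.

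For self-containedness one can also sketch Petridis's short proof. Choose a nonempty subset $C^{*} \subseteq C$ minimizing the magnification ratio $K := |C^{*} + A|/|C^{*}|$; by minimality (with $C$ itself as a candidate), $K \le |A + C|/|C|$. The crux is Petridis's lemma: for every finite set $B'$,
\[ |C^{*} + A + B'| \;\le\; K \cdot |C^{*} + B'|, \]
proved by induction on $|B'|$, where the inductive step exploits the fact that by minimality of $K$ at $C^{*}$, every nonempty $C^{**} \subseteq C^{*}$ satisfies $|C^{**} + A| \ge K|C^{**}|$. Granting this with $B' = B$, and using $|A + B| \le |C^{*} + A + B|$ (since $c + A + B \subseteq C^{*} + A + B$ for any $c \in C^{*}$), one chains
\[ |A + B| \;\le\; |C^{*} + A + B| \;\le\; K \cdot |C^{*} + B| \;\le\; K \cdot |C + B| \;\le\; \tfrac{|A+C|}{|C|} \cdot |B + C|, \]
which rearranges to the desired inequality.

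The main obstacle, and the reason one needs this Plünnecke-type machinery rather than a pigeonhole argument, is that the ``natural'' injection $\phi(s, c) := (a(s) + c, b(s) + c)$ --- which, for a fixed representation $s = a(s) + b(s)$ of each $s \in A + B$, proves the Ruzsa triangle inequality for \emph{differences} $|A - B| \cdot |C| \le |A - C| \cdot |B - C|$ verbatim --- fails to be injective in the sum setting. Indeed, from $\phi(s, c)$ one can recover only $a(s) - b(s)$ and $s + 2c$, which does not in general determine $(s, c)$ uniquely (e.g.\ $A = B = \{0, 1, 10\}$, $C = \{0, 1\}$ admits a collision $\phi(0,1) = \phi(2,0) = (1,1)$ for the representations $a(0)=b(0)=0$ and $a(2)=b(2)=1$). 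This is precisely why the sum version of the triangle inequality requires the more delicate Plünnecke--Ruzsa/Petridis argument.
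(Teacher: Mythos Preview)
Your proposal is correct, and in fact the paper does not prove this lemma at all: it is stated as one of the ``standard tools in additive combinatorics'' with a citation to \cite{tao_vu_2006,ruzsa2009sumsets,zhao2019graph}, exactly as you suggest doing in your first paragraph. Your additional Petridis sketch is a correct self-contained argument, and your diagnosis of why the elementary injection proof (which works for the difference form) fails here is accurate and worth keeping if you want the write-up to be self-contained.
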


\begin{definition}[Fourier Transform]
   For $f\colon \Z \to \C$ with finite support, its Fourier transform $\widehat f: \R/ \Z \to \C$ is defined as
   \[ \widehat f(\theta) := \sum_{x\in \Z} f(x)e(-x\theta),\]
   where $e(t):= \exp(2\pi it)$.
\end{definition}
\begin{lemma}[Counting solutions to a linear equation]
    \label{lem:fourier}
   For finite set $A\subset \Z$ and coefficients $\beta_1,\dots,\beta_k\in \Z$, 
   \[ \big \lvert \big \{ (a_1,\dots,a_k)\in A^k : \beta_1 a_1+\dots +\beta_ka_k = 0\big \} \big \rvert  = \int_0^1 \widehat{1_A}(\beta_1 t)\widehat{1_A}(\beta_2 t)\cdots \widehat{1_A}(\beta_k t) dt,\]
   where $1_A(\cdot)$ is the indicator function of set $A$. This identity holds even when $A$ is a multiset, and $1_A(a)$ is the multiplicity of $a$ in $A$.
\end{lemma}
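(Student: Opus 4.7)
The plan is to prove this by direct expansion, reducing the identity to the elementary orthogonality relation $\int_0^1 e(mt)\,dt = \mathbf{1}[m=0]$ for integers $m$. This is the standard Fourier-analytic technique for counting solutions to linear equations, so I expect no real obstacles; the proof is essentially a bookkeeping exercise.

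First I would unfold each factor on the right-hand side using the definition of the Fourier transform, writing
\[
\widehat{1_A}(\beta_i t) \;=\; \sum_{x \in \Z} 1_A(x)\, e(-\beta_i x t) \;=\; \sum_{a_i \in A} 1_A(a_i)\, e(-\beta_i a_i t),
\]
where I treat $1_A(a)$ as the multiplicity when $A$ is a multiset (otherwise $1_A(a)=1$). Then I would multiply the $k$ factors together and swap the product and the sums, obtaining
\[
\prod_{i=1}^k \widehat{1_A}(\beta_i t) \;=\; \sum_{(a_1,\dots,a_k)\in A^k} \Big(\prod_{i=1}^k 1_A(a_i)\Big)\, e\!\Big(-t\sum_{i=1}^k \beta_i a_i\Big).
\]

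Next I would integrate both sides over $t \in [0,1]$ and interchange the finite sum with the integral. For each fixed tuple $(a_1,\dots,a_k)$ the integer $m := \sum_{i=1}^k \beta_i a_i$ is constant, so
\[
\int_0^1 e(-mt)\,dt \;=\; \begin{cases} 1 & \text{if } m = 0, \\ 0 & \text{otherwise},\end{cases}
\]
which kills every tuple that does not satisfy $\sum_i \beta_i a_i = 0$. What remains is precisely
\[
\sum_{\substack{(a_1,\dots,a_k)\in A^k \\ \beta_1 a_1 + \cdots + \beta_k a_k = 0}} \prod_{i=1}^k 1_A(a_i),
\]
which equals the number of solutions (counted with multiplicities in the multiset case, and the plain cardinality when $A$ is a set). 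This matches the left-hand side of the claimed identity.

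The only step that might warrant a brief remark is the justification of swapping $\int_0^1$ with the finite sum over $A^k$; since $A$ is finite this is trivial, and no dominated-convergence machinery is needed. The same argument goes through verbatim for multisets because the only change is that $1_A(a)\in \Z_{\ge 0}$ rather than $\{0,1\}$, which is already accounted for in the expansion.
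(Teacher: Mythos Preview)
Your proof is correct and is exactly the standard derivation via orthogonality of characters. The paper does not actually supply a proof of this lemma; it is stated in the preliminaries as a standard tool from additive combinatorics and used without proof, so there is nothing to compare against beyond noting that your argument is the expected one.
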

Using \cref{lem:fourier}, the additive energy $E(A)$ in \eqref{eqn:addenergy} can be written as
\begin{equation}
   E(A) =  \int_0^1 \big \lvert\widehat{1_A}(\theta)\big \rvert^4 d\theta.
\end{equation}

\begin{lemma}
	\label{lem:count4}
   For finite set $A\subset \Z$ and non-zero coefficients $\beta_1,\beta_2,\beta_3,\beta_4\in \Z \setminus \{0\}$, 
\[ \big \lvert \big \{ (a_1,a_2,a_3,a_4)\in A^4 : \beta_1a_1+\beta_2a_2+\beta_3a_3+\beta_4a_4= 0\big \} \big \rvert  \le E(A).\]
\end{lemma}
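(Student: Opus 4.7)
The plan is a standard Fourier-analytic argument: start from the integral representation of the count, then apply the Cauchy--Schwarz inequality twice to reduce the mixed integral to four separate quartic integrals, each of which evaluates to $E(A)$.

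First, I would apply \cref{lem:fourier} to rewrite the count as
\[
N := \int_0^1 \widehat{1_A}(\beta_1 t)\widehat{1_A}(\beta_2 t)\widehat{1_A}(\beta_3 t)\widehat{1_A}(\beta_4 t)\, dt.
\]
Since $N$ is a non-negative integer, $N \le \int_0^1 \bigl\lvert \widehat{1_A}(\beta_1 t)\widehat{1_A}(\beta_2 t)\widehat{1_A}(\beta_3 t)\widehat{1_A}(\beta_4 t) \bigr\rvert\, dt$. I would then pair the factors and apply Cauchy--Schwarz in the form
\[
\int_0^1 \bigl\lvert \widehat{1_A}(\beta_1 t)\widehat{1_A}(\beta_3 t) \bigr\rvert \cdot \bigl\lvert \widehat{1_A}(\beta_2 t)\widehat{1_A}(\beta_4 t) \bigr\rvert\, dt \le \left(\int_0^1 \bigl\lvert \widehat{1_A}(\beta_1 t) \bigr\rvert^2 \bigl\lvert \widehat{1_A}(\beta_3 t) \bigr\rvert^2 dt\right)^{1/2}\left(\int_0^1 \bigl\lvert \widehat{1_A}(\beta_2 t) \bigr\rvert^2 \bigl\lvert \widehat{1_A}(\beta_4 t) \bigr\rvert^2 dt\right)^{1/2},
\]
and then, inside each factor, apply Cauchy--Schwarz once more to separate the two surviving Fourier transforms, arriving at a product of four integrals of the form $\bigl(\int_0^1 |\widehat{1_A}(\beta_i t)|^4 dt\bigr)^{1/4}$.

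The last ingredient is the identity $\int_0^1 |\widehat{1_A}(\beta t)|^4 dt = E(A)$ for every nonzero integer $\beta$. This follows from the change of variables $s = \beta t \pmod 1$: since $|\widehat{1_A}|^4$ is $1$-periodic and $\beta$ is a nonzero integer, averaging over $t\in[0,1]$ is the same as averaging over $s\in[0,1]$, and by \cref{lem:fourier} applied to the equation $a_1+a_2-a_3-a_4=0$ this equals $E(A)$. Combining with the two Cauchy--Schwarz steps gives $N \le (E(A)^{1/4})^4 = E(A)$.

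There is no real obstacle here; the only point that needs a brief justification is the periodicity/change-of-variables step that makes $\int_0^1|\widehat{1_A}(\beta t)|^4 dt$ independent of the nonzero integer $\beta$ (which is exactly where the hypothesis $\beta_i \neq 0$ is used). Everything else is routine Cauchy--Schwarz, and the conclusion is sharp in the sense that equality is attained when all $\beta_i$ have matching signs and $|\beta_i|$'s make the equation reduce to $a_1+a_2=a_3+a_4$.
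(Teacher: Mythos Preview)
Your proposal is correct and follows essentially the same route as the paper: express the count via \cref{lem:fourier}, apply Cauchy--Schwarz twice to reduce to $\prod_i\bigl(\int_0^1|\widehat{1_A}(\beta_i t)|^4\,dt\bigr)^{1/4}$, and use periodicity (with $\beta_i\neq 0$) to identify each factor with $E(A)$. The only cosmetic difference is that the paper pairs $(\beta_1,\beta_2)$ with $(\beta_3,\beta_4)$ in the first Cauchy--Schwarz step whereas you pair $(\beta_1,\beta_3)$ with $(\beta_2,\beta_4)$, which is immaterial.
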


\begin{proof}
 By \cref{lem:fourier} and Cauchy-Schwarz inequality, the number of such 4-tuples is 
    \begin{align*}
    &\int_{0}^1 \widehat{1_A}(\beta_1\theta)\widehat{1_A}(\beta_2\theta)\widehat{1_A}(\beta_3\theta)\widehat{1_A}(\beta_4\theta) d\theta\\
    \le\,  &\Big (\int_{0}^1 \big \lvert \widehat{1_A}(\beta_1\theta)\widehat{1_A}(\beta_2\theta)\big \rvert^2 d\theta
    \cdot 
    \int_0^1
    \big \lvert\widehat{1_A}(\beta_3\theta)\widehat{1_A}(\beta_4\theta)\big \rvert^2 d\theta\Big )^{1/2}\\
     \le\, & \Bigg (\Big ( \int_{0}^1 \big \lvert \widehat{1_A}(\beta_1\theta)\big \rvert^4 d\theta\cdot \int_{0}^1 \big \lvert\widehat{1_A}(\beta_2\theta)\big \rvert^4 d\theta\Big )^{1/2}\Big (\int_{0}^1 \big \lvert\widehat{1_A}(\beta_3\theta)\big \rvert^4 d\theta\cdot \int_{0}^1 \big \lvert\widehat{1_A}(\beta_4\theta)\big \rvert^4 d\theta\Big )^{1/2}\Bigg)^{1/2}\\
    = \, &\Bigg(\Big(\int_0^1 \big \lvert\widehat{1_A}(\theta)\big \rvert^4 d\theta\Big)^4\Bigg )^{1/4} \tag{since $\beta_i\in \Z\setminus\{ 0\}$}\\
    = \, & E(A). \qedhere
    \end{align*}
\end{proof}

\section{Reduction to Moderate-Energy \texorpdfstring{$3$}{3}SUM}
\label{sec:moderate}
In general, a $3$SUM instance $A$ of size $|A|=n$ can have additive energy up to $n^3$ asymptotically. In this section, we provide a reduction from an arbitrary $3$SUM instance $A$ to another $3$SUM instance $\hat A$ with moderate additive energy, $E(\hat A) < O(|\hat A|^{3-\eps})$, for some positive constant $\eps>0$. 
The reduction is formally summarized in the following theorem.
\begin{theorem}[Reduction to moderate-energy $3$SUM]
    \label{thm:moderate-energy-reduction}
   There exist universal constants $\delta>0$ and $\eps>0$ such that the following holds.
   Given an integer set $A\subseteq \Z \cap [-U,U]$ of size $|A|=n$, there is a randomized algorithm in $O(n^{2-\delta}\polylog U)$ time that, with probability $1$, either
   \begin{enumerate}[label=(\alph*)]
    \item finds a $3$SUM solution $a,b,c\in A, a+b+c=0$, or \label{item:reduction-found}
    \item returns a subset $\hat A\subseteq A$, such that $A$ has a $3$SUM solution if and only if $\hat A$ has one. \label{item:case2}
   \end{enumerate}
   Moreover, the probability that Case \ref{item:case2} occurs and $E(\hat A)>|\hat{A}|^{3-\eps}$ is at most $1/3$.  
\end{theorem}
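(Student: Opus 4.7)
The plan is to build $\hat A$ iteratively by peeling off high-energy structured pieces of $A$ until either what remains has moderate energy or a $3$SUM solution has been exhibited during the peeling. The main loop maintains a current set $A^{(t)}\subseteq A$, starting with $A^{(0)}=A$, and at each step performs two tasks: (i) test whether $E(A^{(t)}) \le |A^{(t)}|^{3-\eps}$, and (ii) if not, locate a structured subset $A'\subseteq A^{(t)}$ that carries much of the energy and process all $3$SUM triples meeting $A'$.

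For (i), I would use random sampling to estimate the energy in $\tilde O(|A^{(t)}|)$ time: sample $\tilde O(1)$ pairs $(a,b)\in A^{(t)}\times A^{(t)}$, for each estimate $r_{A^{(t)}}(a+b)$ by a second round of sampling, and use $E(A^{(t)})=\sum_x r_{A^{(t)}}(x)^2 = \Ex_{a,b}[r_{A^{(t)}}(a+b)]\cdot |A^{(t)}|^2$. The estimator is calibrated so that if the test declares ``small,'' we have $E(A^{(t)})\le |A^{(t)}|^{3-\eps}$ except with probability $\le 1/3$, and if the true energy is at least twice the threshold we almost surely proceed to (ii). This is the only source of randomness relevant to the final failure bound. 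For (ii), with $E(A^{(t)}) \ge |A^{(t)}|^3/K$ for $K := |A^{(t)}|^{\eps}$, an algorithmic Balog-Szemerédi-Gowers step (adapted from Chan-Lewenstein~\cite{ChanL15}) yields, in truly subquadratic time, a set $A'\subseteq A^{(t)}$ with $|A'|\ge |A^{(t)}|/K^{O(1)}$ and $|A'+A'|\le K^{O(1)}|A'|$.

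Given such a small-doubling $A'$, I would solve the tripartite $3$SUM instance $(A',A^{(t)},A^{(t)})$ in time $\tilde O(\sqrt{|A^{(t)}|}\cdot |A'+A'|)$ using the covering-plus-sparse-convolution scheme sketched in \cref{sec:overview}: partition $A^{(t)}$ into $O(\log n)$ pieces $B_i\subseteq A'+s_i$ using a Ruzsa-type covering argument, form supersets $C_i\supseteq A^{(t)}\cap -(A'+B_i)$ with $\sum_i|C_i|=\tilde O(|A'+A'|)$, and for each $i$ either brute-force (when $|C_i|$ is small) or invoke \cref{thm:sparseconv} to compute $A'+B_i$ (when $|C_i|$ is large), balancing at the threshold $t=|A'+A'|/\sqrt{|A^{(t)}|}$. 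If a solution is found, return it (Case~\ref{item:reduction-found}); otherwise set $A^{(t+1)} := A^{(t)}\setminus A'$. Because every $3$SUM triple in $A^{(t)}$ meeting $A'$ has been checked, at termination with $\hat A:=A^{(t)}$ every $3$SUM triple of $A$ is either entirely in $\hat A$ or has been checked in some earlier iteration, giving the ``if and only if'' of Case~\ref{item:case2}.

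Since each iteration removes a $1/K^{O(1)}$ fraction of the current set, the loop terminates after at most $K^{O(1)}=n^{O(\eps)}$ iterations. Each iteration costs $\tilde O(\sqrt{n}\cdot K^{O(1)}\cdot n)=\tilde O(n^{1.5+O(\eps)})$, so the overall time is $\tilde O(n^{1.5+O(\eps)})\le O(n^{2-\delta}\polylog U)$ for $\eps$ sufficiently small and a matching $\delta>0$. The main obstacle I expect is calibrating the three exponents consistently: the moderate-energy threshold $\eps$, the doubling blowup $K^{O(1)}$ from BSG, and the time exponent $\delta$ must be balanced so that the per-iteration subquadratic gain survives the $K^{O(1)}$ iteration count. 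A secondary technical obstacle is making the covering step producing $B_i$ and $C_i$ algorithmic with the required $\sum_i|C_i|=\tilde O(|A'+A'|)$ bound; the Ruzsa sum-triangle inequality (\cref{lem:tria}) controls $|A'+A^{(t)}|$ in terms of $|A'+A'|$ and is the key additive-combinatorial input that keeps the cost truly subquadratic.
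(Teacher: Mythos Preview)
Your proposal is correct and follows essentially the same plan as the paper: iteratively apply an algorithmic BSG lemma to extract a large small-doubling subset $A'$, solve tripartite $3$SUM on $(A',\hat A,\hat A)$ in $\tilde O(n^{1.5}K^{O(1)})$ time via the covering-plus-sparse-convolution scheme, and peel off $A'$. The paper's one real simplification is to drop your separate energy-estimation step (i): it runs BSG directly at every iteration and returns $\hat A$ whenever BSG reports failure, which is sound because the BSG subroutine is guaranteed to fail with probability at most $1/n^2$ when $E(\hat A)>|\hat A|^3/K$, so a separate energy test is redundant. Two minor imprecisions in your write-up (neither fatal): the number of covering pieces $B_i$ is not $O(\log n)$ (only $\sum_i|C_i|$ is controlled, which is all the analysis needs), and the Ruzsa triangle inequality does not bound $|A'+A^{(t)}|$ (which need not be small at all) --- the bound you actually use is $|A'+B_i|\le |A'+A'|$, which comes directly from $B_i\subseteq s_i+A'$; Ruzsa is instead used inside the BSG proof to pass from $|A'+B'|$ to $|A'+A'|$.
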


\subsection{Overview}
The proof of \cref{thm:moderate-energy-reduction} has two ingredients: an algorithmic version of the celebrated Balog-Szemer{\'e}di-Gowers Theorem \cite{balog1994statistical,gowers2001new}, and a subquadratic-time algorithm for 3-partite $3$SUM when one of the input sets has small doubling. 

The BSG theorem states that any set $A\subset \Z$ with high additive energy $E(A)\ge  |A|^3/K$ must have a subset $A'\subseteq A$ that has large size $|A'|\ge K^{-O(1)}|A|$, and small doubling, $|A'+A'|\le K^{O(1)}|A'|$.
 The following lemma gives a subquadratic-time randomized algorithm for finding such subset $A'$.
\begin{lemma}[BSG lemma]
	\label{lem:findAbsg}
   There exist universal constants $\alpha>0,\beta>0$ such that the following holds.
	 Given an integer set $A\subseteq \Z \cap [-U,U]$ of size $|A|=n$, and a parameter $1\le K\le n$, there is a randomized algorithm in $O(n K^\alpha \polylog U)$ time that, with probability $1$, either
	\begin{enumerate}[label=(\roman*)]
	\item returns a subset $A'\subseteq A$ such that $|A'| \ge n/K^\beta$ and $|A'+A'|\le nK^{\alpha}$, or \label{item:bsg-success}
	\item outputs ``failure''.\label{item:bsg-low}
	\end{enumerate}
Moreover, if $E(A)>n^3/K$, then the failure probability is at most $1/n^2$.
\end{lemma}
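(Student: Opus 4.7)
The plan is to give an algorithmic version of the Balog--Szemer{\'e}di--Gowers (BSG) theorem, following the spirit of Chan and Lewenstein \cite[Theorem 2.3]{ChanL15} but specialized to our simpler setting: we only need to extract a single subset $A'$ with small doubling, rather than iteratively cover a dense bipartite graph. The output $A'$ will be realized as the neighborhood of a randomly chosen vertex in an implicit graph of popular sums, and then verified \emph{ex post} by running sparse convolution on $A'+A'$.

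More concretely, let $G$ be the bipartite graph on $A\sqcup A$ whose edges are the pairs $(a,b)$ with $r_A(a+b)\ge n/(cK)$ for a sufficiently large constant $c$. An elementary counting argument shows that if $E(A)>n^3/K$ then $G$ has at least $n^2/(2K)$ edges, since non-popular sums contribute total energy at most $n\cdot n^2/(cK)<n^3/K$ for $c$ large. The classical BSG argument then says: picking $a_0\in A$ uniformly at random, with probability at least $K^{-O(1)}$ the neighborhood $A':=N_G(a_0)$ satisfies $|A'|\ge n/K^{O(1)}$, and for most pairs $(x,y)\in A'\times A'$ there are $\Omega((n/K)^3)$ length-$3$ paths $x\to u\to v\to y$ in $G$. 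Each such path witnesses a representation $x+y=(x+u)-(u+v)+(v+y)$ whose three summands are all popular elements of $A+A$. Combining this with Ruzsa's sum triangle inequality (\cref{lem:tria}) bounds $|A'+A'|\le nK^{O(1)}$, which fixes the exponent $\alpha$.

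The algorithmic core is that we cannot afford to compute $r_A(a_0+b)$ exactly for all $b\in A$, as this would cost $\Omega(n^2)$. Instead I would approximate $r_A(\cdot)$ by random sampling: draw $t=nK^{O(1)}$ i.i.d.\ uniform random pairs $(x,y)\in A\times A$, and hash their sums into a table. A Chernoff bound plus a union bound over the candidate sums above the popularity threshold gives a constant-factor approximation of $r_A(\cdot)$ there with probability $1-1/n^{\Omega(1)}$. Repeating the random choice of $a_0$ across $O(K^{O(1)}\log n)$ trials boosts the overall success probability to at least $1-1/n^2$. For each candidate $A'$, verify $|A'+A'|\le nK^\alpha$ by running sparse convolution (\cref{thm:sparseconv}) with early termination as soon as the partial output exceeds the threshold; return the first $A'$ that passes verification, or ``failure'' if none does. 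This verification makes correctness of a returned $A'$ deterministic, as required by the ``with probability $1$'' part of the statement.

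The main obstacle is controlling the slack introduced by the sampling-based estimates of $r_A$, which perturbs the edge set of $G$ by a constant multiplicative factor; the constant $c$ and the polynomial exponents $\alpha,\beta$ must be chosen so that the BSG path-counting analysis survives this slack. A secondary technical point is the budget of the verification step: since $|A'+A'|$ can in principle be large, we rely on the output-sensitive guarantee of \cref{thm:sparseconv} together with early abortion once the output exceeds $nK^\alpha$, keeping the total work within $O(nK^\alpha\polylog U)$. Using Ruzsa's triangle inequality to bound $|A'+A'|$ directly lets us avoid the recursive/compressed analysis of \cite{ChanL15}, which is what keeps the overall time nearly linear in $n$.
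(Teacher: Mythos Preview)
Your proposal has a genuine gap in the core BSG step. You take $A' = N_G(a_0)$ for a random vertex $a_0$ and claim that ``for most pairs $(x,y)\in A'\times A'$ there are $\Omega((n/K)^3)$ length-$3$ paths,'' and that this combined with Ruzsa's triangle inequality bounds $|A'+A'|$. But ``most pairs'' only shows that the \emph{restricted} sumset $\{x+y : (x,y)\text{ is a good pair}\}$ lies inside a set of size $nK^{O(1)}$; it does not control the full sumset $A'+A'$, and Ruzsa's inequality alone does not bridge this gap. Your verification-by-sparse-convolution step would then simply reject such an $A'$, and you have not argued that \emph{any} choice of $a_0$ yields a neighborhood with genuinely small doubling---BSG guarantees a good subset exists, not that it arises as a single vertex neighborhood. (A symptom of the same confusion: if $A'=N_G(a_0)$ lies on one side of the bipartite graph, length-$3$ paths between $x,y\in A'$ do not even exist.)

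The paper closes this gap by invoking Chan and Lewenstein's graph-theoretic lemma (their Lemmas~7.1--7.2), which in $\tilde O(n\cdot K^{O(1)})$ time returns \emph{two} subsets $A'\subseteq A_1$, $B'\subseteq A_2$ such that \emph{every} pair $(a',b')\in A'\times B'$ has $\Omega(\alpha^5 n^2)$ length-$3$ walks. This ``every pair'' guarantee is what bounds the full sumset $|A'+B'|$ via the path-representation argument $s=(a'+b'')-(b''+a'')+(a''+b')\in S-S+S$; only then is Ruzsa's triangle inequality applied, in the form $|A'+A'|\le |A'+B'|^2/|B'|$. Your one-step random-neighborhood approach is essentially the first half of Gowers' refinement; to complete it you need either a second refinement pass (to upgrade ``most pairs'' to ``all pairs'' on a further subset) or to invoke the Chan--Lewenstein lemma directly, as the paper does.
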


The second ingredient we need is a specialized algorithm for the 3-partite $3$SUM problem, where we are given three integer sets $A,B,C\subseteq \Z \cap  [-U,U]$, and need to find $a\in A, b\in B, c\in C$ such that $a+b+c = 0$. This algorithm has sub-quadratic running time provided $|A+A|$ is small.
\begin{lemma}[$3$SUM with small doubling]
	\label{lem:3sum-small-doub}
    Given input sets $A,B,C\subseteq \Z \cap  [-U,U]$ with $\max\{|A|,|B|,|C|\}\le n$, the 3-partite $3$SUM problem can be solved by a Las Vegas randomized algorithm with time complexity 
    \[  O\left ( \frac{n\,  |A+A|}{\sqrt{|A|}}\cdot \polylog U\right ).\]
\end{lemma}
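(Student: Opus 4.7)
I would follow the randomized partition-and-convolve strategy sketched in the overview: cover $B$ by shifts of $A$, yielding groups $B_s := B\cap (A+s)$ with $|A+B_s|\le |A+A|$ by small doubling, and reduce $3$-partite $3$SUM on $(A,B,C)$ to many cheaper subproblems $(A,B_s,C_s)$, where $C_s\supseteq C\cap -(A+B_s)$.

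\textbf{Step 1 (sampling shifts).} I would sample a shift set $S$ by including each element of $B-A$ independently with probability $p=\Theta(\log n/|A|)$. For every $b\in B$, the valid shifts $b-A\subseteq B-A$ form a set of size $|A|$, so $\Pr[S\cap (b-A)=\emptyset]\le (1-p)^{|A|}\le n^{-c}$, and a union bound ensures every $b$ has some shift in $S$ with high probability. The event is easy to verify, and on failure the algorithm resamples, yielding the required Las Vegas guarantee. For each surviving $s\in S$, set $B_s := B\cap (A+s)$ and $C_s := C\cap -(A+A+s)$, which is a superset of $C\cap -(A+B_s)$ since $B_s\subseteq A+s$.

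\textbf{Step 2 (two-regime solver).} I would fix a threshold $t := |A+A|/\sqrt{|A|}$. For each $s$ with $|C_s|<t$, iterate over all pairs $(b,c)\in B_s\times C_s$ and hash-lookup $-b-c$ in $A$; since $\sum_s|B_s|=|B|\le n$, the total cost is $O(nt)$. For each $s$ with $|C_s|\ge t$, apply \cref{thm:sparseconv} to compute $A+B_s\subseteq (A+A)+s$ in $\tilde O(|A+A|)$ time and hash-intersect with $-C_s$. By linearity of expectation,
\[ \Ex\Big[\sum_{s\in S}|C_s|\Big] \;=\; p\sum_{c\in C}\big|(-c-A-A)\cap (B-A)\big| \;\le\; p\,|C|\,|A+A| \;=\; \tilde O(n|A+A|/|A|), \]
so the number of large-regime groups is at most $\sum_s|C_s|/t$, and this regime contributes $\tilde O((\sum_s|C_s|/t)\cdot |A+A|) = \tilde O(n|A+A|/\sqrt{|A|})$, matching the small-regime bound $O(nt) = \tilde O(n|A+A|/\sqrt{|A|})$.

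\textbf{Main obstacle.} The hard part will be constructing the family $\{(B_s,C_s)\}_s$ (together with $S$) within the same budget, since a direct enumeration over $(s,c)\in S\times C$ that tests $-s-c\in A+A$ via a hashtable of the precomputed $A+A$ costs $|S|\cdot |C|$, and $|S| = p|B-A|$ is only easily bounded by $\tilde O(n)$ via the trivial $|B-A|\le |B|\cdot |A|$. To overcome this, I would restrict the sampling universe of $S$ to the intersection of $B-A$ with $-(C+A+A)$ --- the ``useful'' shifts that can actually contribute a nonempty $C_s$ --- which can be enumerated using \cref{thm:sparseconv} to compute $A+A$ and hash lookups against $B$ and $C$. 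Combined with the Ruzsa triangle inequality (\cref{lem:tria}) to control $|B-A|$ in terms of $|A+A|$ and related quantities, the construction cost becomes $\tilde O(|A+A| + \sum_s|C_s|) = \tilde O(n|A+A|/|A|)$, comfortably inside the $\tilde O(n|A+A|/\sqrt{|A|})$ budget, giving the claimed running time.
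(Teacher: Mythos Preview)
Your high-level plan and Step~2 (the two-regime solver with threshold $t=|A+A|/\sqrt{|A|}$) are correct and match the paper exactly, as does the target bound $\Ex\bigl[\sum_s|C_s|\bigr]=\tilde O(n|A+A|/|A|)$.

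The gap is in Step~1, and your ``main obstacle'' paragraph does not close it. With $S$ sampled i.i.d.\ from $B-A$ there is no efficient way to \emph{construct} the families $\{B_s\},\{C_s\}$ within budget: listing all $B_s$ by scanning $b-A$ for each $b\in B$, or listing all $C_s$ by testing each pair in $S\times C$ (or each pair in $S\times (A+A)$) against a hash table, costs $\tilde\Theta(n|A|)$ or $\tilde\Theta(n|A+A|)$, a factor $\sqrt{|A|}$ above the target. Your proposed fix---restrict to shifts in $(B-A)\cap -(C+A+A)$ and invoke Ruzsa---does not work: enumerating that intersection already requires computing $B-A$ or $C+(A+A)$, both of which can have size $\Theta(n|A|)$ or $\Theta(n|A+A|)$; and \cref{lem:tria} cannot bound $|B-A|$ from $|A+A|$ alone, since $B$ carries no assumed additive structure.

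The paper's solution is to replace i.i.d.\ sampling by a \emph{structured} shift family: take $S=\bigcup_{j=1}^k (p_j\Z+r_j)$ for $k=\tilde O(1)$ independent random primes $p_j\in\Theta(|A|\log U)$ and random residues $r_j\in\F_{p_j}$. The payoff is that for shifts confined to a single coset $p\Z+r$, one can list all nonempty $B_s=B\cap(A+s)$ in output-sensitive time: for each $b\in B$, the $a\in A$ with $b-a\in p\Z+r$ are exactly those with $a\equiv b-r\pmod p$, so after bucketing $A$ by residue mod $p$ this is one lookup per output pair. Applying the same trick with $A+A$ in place of $A$ gives all $C_s$ in time $\tilde O(\sum_s|C_s|)$. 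Coverage of $B$ holds because for a random prime $p\approx |A|\log U$ one has $\Ex|A\bmod p|\ge |A|/2$, so each $b$ lands in some $A+s$ with probability $\Omega(1/\log U)$ per coset, and $k=\Theta(\log U\log n)$ cosets suffice. This structured-sampling device is the missing ingredient in your plan.
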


Now we prove \cref{thm:moderate-energy-reduction} using \cref{lem:findAbsg} and \cref{lem:3sum-small-doub}.
\begin{proof}[Proof of \cref{thm:moderate-energy-reduction}]
Let $A\subseteq \Z\cap [-U,U]$ be the input $3$SUM instance of size $|A|=n$. 
Let $\eps>0$ be a small constant to be determined.

The reduction is described in Algorithm~\ref{alg:reduction}. 
It maintains a subset $\hat A \subseteq A$ initialized to $A$, and repeatedly uses the BSG lemma (\cref{lem:findAbsg}) to peel off a large subset $A'\subseteq \hat A$ with small doubling, and then uses \cref{lem:3sum-small-doub} to find $3$SUM solutions involving $A'$, i.e., $(a,b,c)\in A'\times \hat A\times \hat A$ with $a+b+c=0$. 
The algorithm is terminated whenever a $3$SUM solution is found (Case~\ref{item:reduction-found} of \cref{thm:moderate-energy-reduction}).
 We return this subset $\hat A$ (Case~\ref{item:case2} of \cref{thm:moderate-energy-reduction}) once the BSG lemma reports failure (Line~\ref{line:break}). 
 If $|\hat A|$ becomes smaller than $n^{1-\varepsilon}$ (Line~\ref{line:brute}), we can afford to solve $3$SUM on $|\hat A|$ by brute-force in $O(|\hat A|^2) \le O(n^{2-2\eps})$ time, and return the found $3$SUM solution, or return an empty set $\hat A:=\emptyset$ if no solution is found.
    \begin{algorithm}
\caption{Reduction to moderate-energy $3$SUM}
\label{alg:reduction}
\DontPrintSemicolon
Initialize $\hat A \gets A$\\
\While{$|\hat A| \ge n^{1-\eps}$\label{line:small}}{
    Apply BSG lemma (\cref{lem:findAbsg}) to $\hat A$ with $K:= |\hat A|^\eps$\\
    \If{BSG lemma successfully returned a subset $A'\subseteq \hat A$}{
     Solve 3-partite $3$SUM on $A', \hat A, \hat A$ using \cref{lem:3sum-small-doub}, and terminate if a solution is found\\
     $\hat A \gets \hat A \setminus A'$ \label{line:return}
    }\lElse{\textbf{return} $\hat A$ \label{line:break} } 
}
Solve $3$SUM on $\hat A$ by brute-force in $O(|\hat A|^2)$ time\label{line:brute}
\end{algorithm}

Algorithm~\ref{alg:reduction} maintains the invariant that $A$ has a $3$SUM solution if and only if $\hat A$ has one. Indeed, if 3-partite $3$SUM on $A',\hat A,\hat A$ has no solution, then any $3$SUM solution of $\hat A$ must be contained in $\hat A \setminus A'$, so we can remove $A'$ from consideration.
This shows that with probability $1$ either Case~\ref{item:reduction-found} or Case~\ref{item:case2} in the theorem statement holds.
   
Now we analyze the time complexity of Algorithm~\ref{alg:reduction}.
Starting from $\hat A= A$, each iteration of the \textbf{while} loop removes from $\hat A$ a subset $A'$ of size \[|A'| \ge |\hat A|/K^\beta = |\hat A|^{1-\eps \beta} \ge n^{(1-\eps)(1-\eps \beta)},\]
so the total number of iterations is at most \[|A|/n^{(1-\eps)(1-\eps \beta)} = n^{\eps(\beta +1-\eps \beta)}.\]
In each iteration, ignoring poly-logarithmic factors, \cref{lem:findAbsg} has time complexity
\[ |\hat A|\cdot K^{\alpha}  =|\hat A|\cdot |\hat A|^{\eps\alpha} \le n^{1+\eps\alpha},  \]
and \cref{lem:3sum-small-doub} has time complexity
\[ \frac{|\hat A| \cdot |A'+A'|}{\sqrt{|A'|}} \le \frac{|\hat A|\cdot |\hat A|K^{\alpha}}{\sqrt{|\hat A|/K^\beta}} = |\hat A|^{1.5}K^{\alpha+\beta/2} \le n^{1.5+\eps(\alpha+\beta/2)}.\]
Summing over all iterations, the total time complexity of Algorithm~\ref{alg:reduction} is at most
\begin{equation}
 n^{\eps(\beta + 1 - \eps \beta)} \cdot O(n^{1.5+\eps(\alpha+\beta/2)} \polylog U) \le O(n^{1.6}\polylog U), \label{eqn:time}
\end{equation}
where we set $\eps = 0.1/(\alpha+1.5\beta+1)$.

In each iteration, the probability that $E(\hat A)> |\hat A|^{3}/K = |\hat A|^{3-\eps}$ yet the BSG lemma outputs failure is at most $1/|\hat{A}|^2 \le 1/n^{2-2\eps}$. 
By a union bound over at most $n^{\eps(\beta + 1-\eps \beta)} < n$ iterations, with at most $1/n^{1-2\eps}$ probability we eventually return $\hat A$ with too large additive energy $E(\hat A) >  |\hat A|^{3-\eps}$.

A minor issue is that the time bound in \eqref{eqn:time} is in expectation rather than worst-case, since the 3-partite $3$SUM algorithm in \cref{lem:3sum-small-doub} is Las Vegas randomized.
To fix this, we terminate Algorithm~\ref{alg:reduction} (and default to return $\hat A\gets A$) after executing longer than 10 times the expected time bound, which additionally incurs $1/10$ failure probability by Markov's inequality.
\end{proof}

The next two sections prove \cref{lem:findAbsg} and \cref{lem:3sum-small-doub} respectively.

\subsection{Balog-Szemer{\'e}di-Gowers Theorem}
One crucial ingredient in the proof of the BSG theorem is a graph-theoretic lemma, which was implicit in Gowers' proof \cite{gowers2001new} and appeared in the presentation of Sudakov, Szemer{\'e}di, and Vu \cite{sudakov2005question}. 
Here we use an algorithmic version of this graph-theoretic lemma given by Chan and Lewenstein \cite{ChanL15}, which achieved better running time using random sampling.

\begin{lemma}[{\cite[Lemma 7.1 and 7.2]{ChanL15}}]
	\label{lem:chanbsg}
Given a bipartite graph $G = (A \cup B, E)$, with $|E| \ge  \alpha|A||B|$, there exist $A' \subseteq A$ and $B' \subseteq B$ such that
\begin{itemize}
	\item  for every $a' \in A' , b' \in B'$, there are $\Omega(\alpha^5 |A||B|)$ length-3 walks from $a'$ to $b'$, and
\item  $|A'| \ge \alpha |A|/8$, $|B'|\ge \alpha |B|/8$.
\end{itemize}
Moreover, $A'$ and $B'$ can be computed by a randomized Monte Carlo algorithm in $\tilde O\big ((|A|+|B|)\cdot (1/\alpha)^5\big )$ time that succeeds with high probability, given access to the adjacency matrix of $G$.
\end{lemma}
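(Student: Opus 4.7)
The plan is to follow the path-counting argument underlying the classical Balog--Szemer{\'e}di--Gowers theorem, augmented with random sampling for efficiency. The key observation is that for any $a' \in A$ and $b' \in B$, the number of length-$3$ walks from $a'$ to $b'$ equals $e(N(a'), N(b'))$, the number of edges between their neighborhoods. So the task reduces to finding large subsets $A' \subseteq A, B' \subseteq B$ such that $e(N(a'),N(b')) = \Omega(\alpha^5|A||B|)$ holds for every pair $(a',b') \in A'\times B'$.

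First I would identify candidate sets by sampling a uniformly random edge $(u,v)\in E$ with $u\in A, v\in B$, and setting $A' := N(v)$ and $B' := N(u)$. Since a uniformly random edge is biased toward high-degree endpoints, a first-moment calculation shows $|A'|\ge \alpha|A|/8$ and $|B'|\ge \alpha|B|/8$ with constant probability.

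Second, I would control the ``bad'' pairs $(a',b')\in A'\times B'$ whose length-$3$ walk count is less than $c\alpha^5|A||B|$ for a suitable constant $c>0$. For any fixed bad pair, the probability that both endpoints fall into $A'\times B'$ when $(u,v)$ is a random edge is at most $d(a')d(b')/|E|$; summing this over bad pairs together with a Cauchy--Schwarz bound on the degree sequences of $A$ and $B$ bounds the expected number of bad pairs surviving into $A'\times B'$. Markov's inequality then implies that with constant probability only a tiny fraction of vertices on each side are incident to many bad pairs, and deleting these vertices kills every bad pair while shrinking each side only by a constant factor (absorbed into the $1/8$ constants).

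Algorithmically, materializing the $|A|\times|B|$ matrix of pair-counts would already be too expensive, so the main obstacle is emulating the deletion step much faster than $|A||B|$. Following Chan--Lewenstein, I would sample a random set $S\subseteq A\cup B$ of size $\tilde{\Theta}((1/\alpha)^5)$ and estimate $e(N(a'), N(b'))$ for each candidate pair using only walks passing through $S$; a Chernoff bound plus a union bound over candidate pairs make these estimates simultaneously accurate with high probability, so the deletion can be carried out in $\tilde O((|A|+|B|)\cdot (1/\alpha)^5)$ time using only adjacency-matrix lookups. The exponent $5$ on $1/\alpha$ is precisely what balances two competing forces: the sampling rate needed to capture enough length-$3$ walks to certify the $\Omega(\alpha^5|A||B|)$ guarantee, and the concentration needed for a union bound over all surviving pairs; sharpening either side simultaneously would break the other, which is the delicate part of the argument.
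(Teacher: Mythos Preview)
The paper does not prove this lemma; it is quoted directly from \cite[Lemmas~7.1 and~7.2]{ChanL15}, so there is no in-paper proof to compare against. Your outline is in the spirit of the Chan--Lewenstein argument (random-edge selection of candidate neighborhoods, then sampling to estimate walk counts without materializing the full $|A|\times|B|$ table).

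However, your step~2 as written has a gap. You bound the probability that a fixed bad pair $(a',b')$ lands in $A'\times B'$ by $d(a')d(b')/|E|$ and then invoke ``a Cauchy--Schwarz bound on the degree sequences.'' That upper bound is correct but too loose to close the argument: summing $d(a')d(b')/|E|$ over all pairs already gives $|E|$, which can be as large as $|A||B|$, and no Cauchy--Schwarz on the degree sequences alone salvages a useful bound when restricted to bad pairs. The observation you are missing is that the survival probability of $(a',b')$ is \emph{exactly} $e(N(a'),N(b'))/|E|$: for a uniformly random edge $(u,v)$ one has $a'\in N(v)$ and $b'\in N(u)$ iff $u\in N(b')$ and $v\in N(a')$, and the number of such edges is precisely $e(N(a'),N(b'))$. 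For a bad pair this quantity is $<c\alpha^5|A||B|$ by definition, so the survival probability is $<c\alpha^4$ and the expected number of surviving bad pairs is $<c\alpha^4|A||B|$, comfortably below $|A'||B'|\approx\alpha^2|A||B|$. This self-referential step---the survival probability of a pair equals, up to the factor $1/|E|$, the very walk count that defines ``bad''---is the heart of the BSG graph lemma; once you have it, the Markov and deletion steps go through as you describe.
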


  The following lemma uses a simple sampling to find all the popular sums in $A+A$, i.e., elements $x$ with large $r_A(x)$ (recall $r_A(x) = |\{(a,b)\in A\times A: a+b= x\}|$).
\begin{lemma}[Find popular sums]
	\label{lem:findpopular}
	 Given a set $A\subset \Z \cap [-U, U]$ of size $|A|=n$ and a parameter $1\le K\le n$, there is a randomized algorithm that computes a set $S\subseteq A+A$ in $\tilde O(Kn)$ time, such that with at least $1-1/\poly(n)$ probability,
	\begin{itemize}
		\item for all $x\in A+A$ with $r_A(x)\ge 0.5n/K$, it holds that $x\in S$, and
		\item for all $x\in S$, $r_A(x) \ge 0.2n/K$.
	\end{itemize}
\end{lemma}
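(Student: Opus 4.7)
The plan is to use a simple random sampling trick. Let $C$ be a sufficiently large constant to be chosen, and form a random subset $A'\subseteq A$ by including each element of $A$ independently with probability $p := C K(\log n)/n$. I would then compute the multiset of sums $M := \{a'+a : a'\in A',\, a\in A\}$, sort $M$ (or bucket its elements with hashing) to obtain the multiplicity
\[ X_x := \bigl|\{(a',a)\in A'\times A: a'+a=x\}\bigr|\]
of each value $x$ appearing in $M$, and declare $S$ to be the set of $x$'s whose multiplicity exceeds a threshold $\tau := 0.35\, C \log n$.

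For the running time, a standard Chernoff bound shows $|A'|\le 2pn = O(K\log n)$ with probability $1-1/\mathrm{poly}(n)$. In that event the multiset $M$ has size $|A'|\cdot n = \tilde O(Kn)$, and it can be enumerated, sorted, and scanned within $\tilde O(Kn)$ time (since each element fits in $O(\log U)$ bits).

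For correctness, observe that for a fixed $x\in A+A$, the indicator whether $(a_1,a_2)\in A\times A$ with $a_1+a_2=x$ contributes to $X_x$ depends only on whether $a_1\in A'$, and these $r_A(x)$ indicators are independent. Hence $X_x$ is a sum of $r_A(x)$ independent $\mathrm{Bernoulli}(p)$ random variables, so $\Ex[X_x] = p\, r_A(x)$. By multiplicative Chernoff:
\begin{itemize}
\item If $r_A(x)\ge 0.5 n/K$, then $\mu:=\Ex[X_x]\ge 0.5 C\log n$, and
$\Pr[X_x < \tau] = \Pr[X_x < (1-0.3)\cdot 0.5 C \log n] \le \exp(-\Omega(C\log n))$, which is $\le n^{-10}$ for $C$ large enough.
\item If $r_A(x)\le 0.2 n/K$, then $\mu\le 0.2 C\log n$, and
$\Pr[X_x\ge \tau] = \Pr[X_x\ge (1+0.75)\cdot 0.2 C\log n]\le \exp(-\Omega(C\log n)) \le n^{-10}$.
\end{itemize}
Since the number of candidate values $x\in A+A$ is at most $n^2$, taking a union bound over all such $x$ and choosing $C$ large gives the two desired guarantees simultaneously with probability $1-1/\mathrm{poly}(n)$.

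I do not foresee a major obstacle: the only mild issue is deciding the sampling rate and threshold so that the gap between $0.5 n/K$ and $0.2 n/K$ translates into a clean Chernoff separation, which the constants above achieve. (If one prefers to avoid sorting, one can instead use any hash table keyed on the $O(\log U)$-bit sum values to count multiplicities in expected $\tilde O(Kn)$ time.)
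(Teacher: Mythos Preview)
Your proposal is correct and follows essentially the same approach as the paper: sample, count how often each sum is hit, and threshold, with correctness by a Chernoff bound and a union bound over at most $n^2$ candidate sums. The only cosmetic difference is that the paper samples $m=O(Kn\log n)$ i.i.d.\ pairs from $A\times A$ and thresholds on the empirical frequency of each sum, whereas you sample a subset $A'\subseteq A$ of expected size $O(K\log n)$ and enumerate all of $A'\times A$; your observation that the $r_A(x)$ contributing indicators depend on distinct first coordinates (hence are independent) is exactly what makes your variant go through.
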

\begin{proof}
Uniformly and independently sample $m = O(Kn \log n)$ random pairs $(a_1,b_1),\dots,(a_m,b_m)\in A\times A$, and compute the sums $s_i=a_i+b_i$ ($1\le i\le m$). Include $x \in S$ if and only if $x$ appears at least $m\cdot \frac{0.3n/K}{|A\times A|} = \frac{0.3m}{Kn}$ times in $s_1,\dots,s_m$. This succeeds with high probability by Chernoff bound.
\end{proof}

Now we prove the algorithmic BSG lemma by directly plugging \cref{lem:chanbsg} and \cref{lem:findpopular} into the proof of BSG theorem \cite{gowers2001new,sudakov2005question}.
\begin{proof}[Proof of \cref{lem:findAbsg}]
First apply \cref{lem:findpopular} on $A$, and obtain set $S\subseteq A+A$  in $\tilde O(Kn)$ time. Assume $S$ is correct, which happens with high probability.

If $|S|\le 0.5n/K$, then
\begin{align*}
	 E(A) &= \sum_{x\in S}r_A(x)^2 + \sum_{x\notin S} r_A(x)^2 \\
	 & \le \sum_{x\in S}r_A(x)^2 + \sum_{x\notin S} r_A(x) \cdot (0.5n/K) \tag{by \cref{lem:findpopular}}\\
	 & \le \sum_{x\in S} |A|^2 + |A|^2 \cdot (0.5n/K) \\
	 & \le n^3/K, \tag{by $|S|\le 0.5n/K$}
\end{align*}
and we can return failure due to small additive energy. 

Otherwise, $|S|>0.5n/K$. Define a bipartite graph $G =(A_1 \cup A_2, E)$ where $A_1,A_2$ are copies of $A$ and $ E=\{(a,b)\in A\times A: a+b \in S\}$, which has size 
\[|E| = \sum_{x\in S}r_A(x) \ge \sum_{x\in S} 0.2n/K \ge 0.1 n^2/K^2.\]  Apply \cref{lem:chanbsg} on $G$ with  $\alpha = 0.1/K^2$, and in $\tilde O(n\cdot \poly(K))$ time obtain $A',B'\subseteq A$ with \[|A'|,|B'|\ge \alpha n/8 =   \Omega(n/K^2),\] such that for every $a'\in A',b'\in B'$ there are $\Omega(\alpha^5n^2)$ length-3 walks from $a'$ to $b'$ (if \cref{lem:chanbsg} is successful). 

 For every $s\in A'+B'$, arbitrarily pick $(a',b')\in A'\times B'$ with $a'+b'=s$. Then the $\Omega(\alpha^5 n^2)$ length-3 walks $(a', b'',a'',b')$ in $G$ yield \emph{distinct} representations of $s = x-y+z$ with $(x,y,z)\in S\times S\times S$, by setting $x = a'+b'',y=b''+a'',z=a''+b'$.
Hence,
\begin{equation}
|S\times S \times S| \ge |A'+B'|\cdot \Omega(\alpha^5n^2). \label{eqn:bsg1}
\end{equation}

 Note that 
 \begin{equation}
|S| \le  | \{ x\in A+A: r_A(x)\ge 0.2n/K\} | \le \frac{\sum_{x \in A+A}r_A(x) }{ 0.2n/K} = 5nK.\label{eqn:bsg2}
 \end{equation}
Combining \eqref{eqn:bsg1} and \eqref{eqn:bsg2} gives $|A'+B'| \le |S|^3/\Omega(\alpha^5 n^2) \le O(n K^{13})$.
Then, by Ruzsa sum triangle inequality (\cref{lem:tria}), 
 \[|A'+A'|\le \frac{|A'+B'|^2}{|B'|} \le O\left (\frac{n^2K^{26}} {n/K^2}\right ) \le  O(n K^{28}),\]
 so we can return $A'$ as the desired subset.

 Note that the desired properties $|A'|\ge \Omega(n/K^2)$ and $|A'+A'|\le O(nK^{28})$ can be deterministically verified efficiently using sparse convolution (\cref{thm:sparseconv}), and the algorithm can return failure if the verification fails. Thus, the algorithm errs only when it returns failure (Case~\ref{item:bsg-low} of \cref{lem:findAbsg}), which happens with at most $1/\poly(n)$ error probability as guaranteed by \cref{lem:chanbsg} and \cref{lem:findpopular}. 
\end{proof}

\subsection{Solving \texorpdfstring{$3$}{3}SUM on Sets with Small Doubling}
Recall that in the 3-partite $3$SUM problem, we are given three integer sets $A,B,C\subseteq \Z \cap  [-U,U]$ with size \[\max\{|A|,|B|,|C|\}\le n,\] and need to find $a\in A, b\in B, c\in C$ such that $a+b+c = 0$. 
We are interested in the case where the doubling constant  
\begin{equation}
\label{eqn:doubleA}
K:=\frac{|A+A|}{|A|}
\end{equation} is small.
Readers are encouraged to think of the special case where $A$ is contained in an interval of length $O(|A|)$, and hence has doubling constant $K= O(1)$. Designing an $\tilde O(n^{1.5})$ algorithm for this special case is a standard exercise hinted by Chan and Lewenstein \cite[Section 4]{ChanL15}. 
However, generalizing this to arbitrary $A$ of small doubling requires some amount of effort.

The following lemma exploits the small doubling of $A$ to decompose the problem $(A,B,C)$ into several subproblems $\{(A,B_i,C_i)\}_{i}$ of small total size. 
\begin{lemma}
	\label{lem:compute-decomp}
There is a Las Vegas randomized algorithm with time complexity $\tilde O(nK)$ that computes subsets $B_1,B_2,\dots, B_m \subseteq B$ and $C_1,C_2,\dots,C_m \subseteq C$, with the following properties:
\begin{enumerate}[label=(\roman*)]
	\item $B_1,B_2,\dots,B_m$ form a partition of $B$. \label{item:partition}
	\item For all $1\le i\le m$, $ C \cap -(A+B_i)\subseteq C_i$. \label{item:coverc}
	\item For all $1\le i\le m$, $|A + B_i| \le |A+A|.$ \label{item:absmall}
	\item $\sum_{i=1}^m |C_i| \le  O( nK\log n )$.\label{item:csmall}
\end{enumerate}
\end{lemma}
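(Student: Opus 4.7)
The plan is to implement the sampling scheme sketched in the overview using a random almost-linear hash function as a proxy for uniformly sampling each integer into $S$ with probability $\Theta(\log n/|A|)$. Pick a random prime $p$ of size $\Theta(|A|/\log n)$, set $h(x):=x\bmod p$, and define $S:=h^{-1}(0)=p\Z$; a standard prime-number-theorem counting argument gives $\Pr[z\in S]=\tilde O(1/|A|)$ for every fixed nonzero integer $z$ in the input range.

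I would execute three steps. First, compute $D:=A+A$ in $\tilde O(|A+A|)=\tilde O(nK)$ time using sparse convolution (\cref{thm:sparseconv}), bucket $D$ by hash value as $D^{(v)}:=\{d\in D:h(d)=v\}$, and for each $v\in h(A)$ store a canonical representative $a^\star_v\in A$ with $h(a^\star_v)=v$. Second, for every $b\in B$ set $s_b:=b-a^\star_{h(b)}$: by linearity of $h$ we have $h(s_b)=0$, so $s_b\in S$, and $b=a^\star_{h(b)}+s_b\in A+s_b$. Grouping the $b$'s by the value of $s_b$ yields the partition $B_1,\ldots,B_m$ with shifts $s_1,\ldots,s_m\in S$; any $b$ whose $h(b)\notin h(A)$ (an event of probability at most $(1-\Omega(1/p))^{|A|}\le n^{-c}$ for large $c$ by tuning the constant in $M$) is placed into its own singleton group with an arbitrary valid shift, which incurs only lower-order cost. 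Third, hash $\{s_1,\ldots,s_m\}$ into a lookup table and, for each $c\in C$, enumerate the single bucket $D^{(-h(c)\bmod p)}$; for each $d$ in that bucket, check whether $s:=-c-d$ equals some $s_i$, and if so append $c$ to $C_i$. This enumeration captures every $(c,i)$ with $c\in C_i=C\cap -(s_i+D)$, because $s_i\in S\cap -(c+D)$ forces $s_i=-c-d$ for some $d\in D$ with $h(d)\equiv -h(c)\pmod p$.

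Properties (i)--(iii) hold by construction: (i) is clear; (ii) follows because $B_i\subseteq A+s_i$ implies $A+B_i\subseteq (A+A)+s_i$, hence $C\cap -(A+B_i)\subseteq C\cap -(s_i+A+A)=C_i$; and (iii) follows from $|A+B_i|\le |A+(A+s_i)|=|A+A|$. For (iv), every pair $(c,i)$ added to the output has $s_i\in S\cap -(c+D)$, so by linearity of expectation
\[
\Ex\Big[\sum_i|C_i|\Big]\;\le\;\sum_{c\in C}\Ex\big[|S\cap -(c+D)|\big]\;\le\;\tilde O\Big(\frac{|C|\cdot |A+A|}{|A|}\Big)\;=\;\tilde O(nK).
\]
The total running time of Step~3 is $\sum_{c\in C}|D^{(-h(c))}|$, whose expectation obeys the same bound. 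For the Las Vegas guarantee I would verify post hoc that $\sum_i|C_i|$ and the total runtime each lie within a constant factor of their expectations, and restart with fresh randomness otherwise; by Markov only $O(1)$ restarts are expected.

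The main obstacle is computing all of the $C_i$'s within the $\tilde O(nK)$ budget. A naive per-group intersection of $C$ with $-(s_i+D)$ costs $\Omega(|D|)$ per group and up to $\Omega(n\cdot |A+A|)=\Omega(n^2K)$ in total (since $m$ may be as large as $|B|$), which is far over budget. The hash-bucket lookup in Step~3 is precisely what rescues us: restricting to $s\in S$ forces $h(d)\equiv -h(c)\pmod p$, cutting the per-$c$ work from $|D|$ down to an expected bucket size of $\tilde O(|D|/|A|)=\tilde O(K)$.
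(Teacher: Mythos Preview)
There is a real gap in the coverage step. Your bound $\Pr[h(b)\notin h(A)]\le (1-\Omega(1/p))^{|A|}$ treats the events $\{h(a)\neq h(b)\}_{a\in A}$ as independent, but the only randomness is the single prime $p$, so they are not. Moreover, for a fixed $a$ the quantity $\Pr_p[p\mid (b-a)]$ is at best $\tilde O(1/|A|)$ as an \emph{upper} bound, not the lower bound $\Omega(1/p)$ you need. A concrete counterexample: let $A$ consist of $n$ primes all larger than $2\hat p$ (these fit in $[-U,U]$ for polynomial $U$) and let $0\in B$. For every prime $p\in[\hat p,2\hat p]$ and every $a\in A$ we have $p\nmid a$, so $h(a)\neq 0=h(0)$; hence $b=0$ is uncovered with probability $1$, not $n^{-c}$. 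Without a valid per-$b$ bound, the ``lower-order'' fallback also collapses: each uncovered $b$ placed in a singleton forces you to build $C_i\supseteq C\cap -(A+b)$, and you have no control over how many such $b$ there are or what this contributes to $\sum_i|C_i|$.

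The paper sidesteps this entirely by two changes you are missing. First, it adds a uniformly random shift $r\in\F_p$, taking the shift set to be $p\Z+r$ instead of $p\Z$; then for \emph{fixed} $p$ one gets the clean bound $\Pr_r[b\in A+(p\Z+r)]=|A\bmod p|/p$. Second, it takes $p$ of size $\Theta(|A|\log U)$ (larger than $|A|$, not smaller), which makes the collision-count estimate $\Ex_p[|A\bmod p|]\ge |A|/2$ nonvacuous and yields $\Pr[b\text{ covered}]\ge \Omega(1/\log U)$. Then $\Theta(\log U\log n)$ independent $(p_j,r_j)$ pairs cover every $b$ with high probability. The rest of your bucket-lookup scheme for building the $C_i$ is essentially the same idea as the paper's \cref{claim1} and is fine; the defect is solely in establishing coverage with a single unshifted prime.
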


We first describe the algorithm for 3-partite $3$SUM assuming \cref{lem:compute-decomp}.
\begin{proof}[Proof of \cref{lem:3sum-small-doub} using \cref{lem:compute-decomp}]
Run the algorithm of \cref{lem:compute-decomp} in $\tilde O(nK)$ time, and obtain subsets
$B_1,B_2,\dots, B_m \subseteq B$ and $C_1,C_2,\dots,C_m \subseteq C$. By properties \ref{item:partition} and \ref{item:coverc}, any $3$SUM solution $(a,b,c)\in A\times B\times C$ ($a+b+c=0$) must be included in $A\times B_i \times C_i$ for some $1\le i\le m$, and vice versa.

Let $t\ge 1$ be some parameter to be determined.
For each $1\le i\le m$, to solve 3-partite $3$SUM on $(A,B_i,C_i)$, there are two cases:
\begin{enumerate}[label=(\alph*)]
	\item If $|C_i|\ge t$, we use sparse convolution (\cref{thm:sparseconv}) to compute $A + B_i$ in $O(|A+B_i|\polylog U) \le O(|A+A|\polylog U)$ time (due to property \ref{item:absmall}), and then compute $(A+B_i) \cap -C_i$ which corresponds to the $3$SUM solutions on $(A,B_i,C_i)$.  \label{item:caselarge}
	\item Otherwise, $|C_i|< t$. We iterate over all $(b,c) \in B_i\times C_i$ and check if $-b-c\in A$, in $\tilde O(|B_i||C_i|) $ time.\label{item:casesmall}
\end{enumerate}

The total time complexity for Case~\ref{item:caselarge} is at most
\begin{equation}
	\frac{\sum_{i=1}^m|C_i|}{t}\cdot O(|A+A|\polylog U)
\le O\left (\frac{n|A|K^2 }{t}\polylog U\right ) \label{eqn:time1}
\end{equation}
by \cref{eqn:doubleA} and property~\ref{item:csmall}.
The total time complexity for Case~\ref{item:casesmall} is at most
\begin{equation}
 \sum_{i: |C_i|<t} \tilde O(|B_i||C_i|) \le \sum_{i:|C_i|<t} \tilde O(|B_i|t)\le \tilde O(nt)\label{eqn:time2}
\end{equation}
by property~\ref{item:partition}.
Balance \eqref{eqn:time1} and \eqref{eqn:time2} by choosing $t = K\sqrt{|A|} = |A+A|/\sqrt{|A|}$,  and the total time complexity is \[O\left (\frac{n|A+A|}{\sqrt{|A|}} \cdot \polylog U\right ).\qedhere\]
\end{proof}

Now we describe the algorithm claimed in \cref{lem:compute-decomp} for computing the subsets $B_1,B_2,\dots, B_m \subseteq B$ and $C_1,C_2,\dots,C_m \subseteq C$. 
The overall strategy here is to use randomly shifted copies of set $A$ to cover all the elements in $B$. However, the family of random shifts needs to have certain structure that helps us find the covered elements efficiently.

\begin{proof}[Proof of \cref{lem:compute-decomp}]

We start with the following claim. Denote $p\Z+r := \{pz+r: z\in \Z\}$.
\begin{claim}
	\label{claim1}
	Given integer sets $A',B'\subset \Z$ and prime $p$, let $r\in \F_p$ be uniformly chosen at random. 
	For each $s\in p\Z+r$, define subset $B'_s := B'\cap (s + A')$. 
	Then the expected total size of these subsets is
	\[ \Ex_{r\in \F_p} \left[\sum_{s\in p\Z + r}|B'_s| \right] = \frac{|A'||B'|}{p}.\]
	Moreover, we can compute all non-empty subsets $B_s'$ (where $s\in p\Z+r$) in near-linear $\tilde O( |A'|+|B'|+ \sum_{s\in p\Z+r}|B_s'|)$ time.
\end{claim}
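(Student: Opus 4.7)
The plan is to prove both the expectation bound and the algorithmic statement by the same simple pairing argument.

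For the expectation, the strategy is to rewrite $\sum_{s\in p\Z+r}|B'_s|$ as a count of ordered pairs. Observe that $b\in B'_s$ precisely when $b\in B'$, $a:=b-s\in A'$, and $s\in p\Z+r$. Since $s$ is determined by $(a,b)$ via $s=b-a$, we get
\[
\sum_{s\in p\Z+r}|B'_s| \;=\; \bigl|\{(a,b)\in A'\times B' : b-a\equiv r\pmod{p}\}\bigr|.
\]
For any fixed $(a,b)$, the probability over a uniform $r\in \F_p$ that $b-a\equiv r\pmod{p}$ is exactly $1/p$. Linearity of expectation then yields $\Ex_r\bigl[\sum_{s\in p\Z+r}|B'_s|\bigr] = |A'||B'|/p$, which is the first claim.

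For the algorithmic part, the idea is to enumerate exactly those $(a,b)$ pairs counted above. First, bucket the elements of $A'$ by their residue modulo $p$ in $O(|A'|)$ time, so that for any target residue we can list the elements in time proportional to the bucket size. Next, for each $b\in B'$, compute the residue $(b-r)\bmod p$ and iterate over all $a\in A'$ in that bucket; each such $a$ satisfies $b-a\in p\Z+r$, and we emit the pair $(b, s)$ with $s=b-a$. Finally, insert $b$ into the set $B'_s$, stored inside a hash table keyed by $s$. The $O(|B'|)$ bucket lookups contribute $O(|B'|)$, the $(a,b)$ enumeration takes time proportional to the number of pairs emitted, which is exactly $\sum_{s\in p\Z+r}|B'_s|$, and hash-table insertions add polylog factors (or can be replaced by radix-sorting the emitted $(s,b)$ list at the end). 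Summing gives the claimed $\tilde O(|A'|+|B'|+\sum_{s\in p\Z+r}|B'_s|)$ bound.

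The only subtle point is ensuring that only \emph{non-empty} buckets $B'_s$ are materialized: this is automatic with the hash-table approach, since a bucket is created only when an element is inserted, and each bucket receives at least one insertion by construction. There is no real obstacle here; the main care is in the hashing so that both the time and space remain output-sensitive. The expectation bound is a one-line double-counting, and the algorithm is a direct realization of the same double-counting using residue classes mod $p$, so the proof should be quite short.
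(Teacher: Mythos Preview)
Your proposal is correct and follows essentially the same approach as the paper: both rewrite $\sum_{s\in p\Z+r}|B'_s|$ as a count of pairs $(a,b)\in A'\times B'$ with $b-a\equiv r\pmod p$ (the paper phrases this as $\sum_{b\in B'}|A'\cap(p\Z+b-r)|$), apply linearity of expectation, and implement the algorithm by bucketing $A'$ by residue modulo $p$ and scanning $B'$.
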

\begin{proof}[Proof of \cref{claim1}]
	To find all subsets $B_s' $, it suffices to iterate over every $b\in B'$ and find all $s\in p\Z+r$ such that $b\in s+A'$, or equivalently, $b= s+ a$ for some $a\in A'$, which necessarily implies $a \equiv b-r \pmod{p}$.
	Conversely, every $a\in A' \cap (p\Z+b-r)$ determines an $s=b-a\in p\Z+r$ such that $b\in B_s'$. This means 
	\begin{align*}
		 \sum_{s\in p\Z+r} |B'_s| &=\sum_{b\in B'} |s\in p\Z+r:  b\in B'_s|\\
		 &= \sum_{b\in B'} |A' \cap (p\Z+b-r)|.
	\end{align*}
 After grouping elements of $A'$ based on their remainders modulo $p$, one can then output all subsets $B'_s$  in near-linear time.

When $r\in \F_p$ is randomly chosen, by linearity of expectation, the expected total size of $B_s'$ is
\begin{align*}
	\Ex_{r\in \F_p}\left[\sum_{s\in p\Z+r} |B'_s|\right] &=
	\Ex_{r\in \F_p}\left[\sum_{b\in B'} |A' \cap (p\Z+b-r)|\right]\\
	&=\sum_{b\in B'} \Ex_{r\in \F_p}\left[|A' \cap (p\Z+b-r)|\right]\\
	& = \sum_{b\in B'} \frac{|A'|}{p}\\
	& = \frac{|A'||B'|}{p}. \qedhere
\end{align*}
\end{proof}

We randomly generate a set of shifts $S\subseteq \Z$ as follows.	Let $k = \Theta(\log U\log n)$ and $\hat p = \Theta(|A|\log U)$. For each $1\le j\le k$, uniformly independently sample a random prime $p_j \in [\hat p, 2\hat p]$ and a random field element $r_j\in \F_{p_j}$.
Let $T_j :=  p_j\Z+r_j$, and define the set of shifts as $S:= T_1 \cup \dots \cup T_k$.

For every shift $s\in S$, define $B_s := B\cap (s + A)$.  We can compute $B_s$ for all $s\in S$ by applying \cref{claim1} to $A':=A,B':=B$ for every $p_j\Z+r_j$ ($1\le j\le k$). Similarly, define $C_s := C\cap -(s + A + A)$, and compute $C_s$ for all $s\in S$ by applying \cref{claim1} to $A':=A+A,B':=-C$
(note that $A+A$ can be computed in $ O(|A+A|\polylog U)$ time by \cref{thm:sparseconv}). Summing over $1\le j\le k$, the bound in \cref{claim1} implies
\[ \Ex\left[\sum_{s\in S}|B_s|\right]\le k\cdot \frac{|A||B|}{\hat p} = O(n\log n), \]
and 
\begin{equation}
	\label{eqn:csmall}
	\Ex\left[\sum_{s\in S}|C_s|\right]\le k\cdot \frac{|A+A||C|}{\hat p} = O\left(\frac{|A+A| \, n\log n}{|A|}\right) =O(Kn\log n). 
\end{equation}
The sets $B_s,C_s$  can be computed  in near-linear time for all $s\in S$, and we can assume their total size does not exceed a large constant times the expectation, by Markov's inequality.

It remains to check that $\bigcup_{s\in S} B_s = B\cap \bigcup_{s\in S} (s+A)$ covers the entire $B$ with good probability.
For every $b\in B$, note that $b\in A + T_j$ if and only if there exists $a\in A$ such that $b\equiv r_j+a \pmod{p_j}$.  Hence, for fixed $p_j$, we have 
\begin{equation}
\Pr_{r_j\in \F_{p_j}}[b\in A+T_j] = \frac{| A \bmod p_j| }{p_j}, \label{eqn:amod}
\end{equation}
where $|A\bmod p_j|$ stands for the number of distinct remainders of elements in $A$ modulo $p_j$, and can be bounded as 
\begin{equation}
	\label{eqn:Aminus}
	|A \bmod p_j| \ge |A| \,- \sum_{a,a'\in A,a<a'} [a'-a\equiv 0 \pmod{p_j}].
\end{equation}
  Fixing $a<a'$, for a random prime $p\in [ \hat p, 2\hat p]$, the prime number theorem implies
\[\Pr_{p_j\in [\hat p,2\hat p]}[a'-a\equiv 0 \pmod{p_j}] \le O\left(\frac{\log_{\hat p} (a'-a)}{\hat p/\log \hat p}\right) \le O\left(\frac{\log U /\log \hat p}{ \hat p/\log \hat p}\right) \le  \frac{1}{|A|} ,\]
where the last step follows by setting the constant factor hidden in $\hat p=\Theta(|A|\log U)$ large enough.
 Combining with \eqref{eqn:Aminus}, this implies 
 $\Ex_{p_j\in [\hat p,2\hat p]} |A\bmod p_j| \ge |A|/2$.
 Then from \eqref{eqn:amod} we get \[\Pr_{p_j,r_j}[b\in A+T_j] \ge \Omega(|A|/\hat p) \ge  \Omega(1/\log U). \]
Recall $S = T_1\cup\dots \cup T_k$ for some $k=\Theta(\log U \log n)$, where $T_j$ are sampled independently from each other. So
with high probability $b$ is contained in $A+T_j$ for some $1\le j\le k$.
Then, by a union bound over all $b\in B$,  we have $B \subseteq \bigcup_{s\in S} (s+A)$ with high probability.

Finally, we return the subsets $\{B_s\}_{s\in S}$ and $\{C_s\}_{s\in S}$, except that we first remove the duplicates among the sets $B_s$ to ensure that they form a partition of $B$ (property~\ref{item:partition}).
  By definition, $B_s \subseteq s+A$, and $|B_s+A|\le |s+A+A|=|A+A|$, which proves property~\ref{item:absmall}. By definition of  $C_s$, we have
  $C_s = C \cap -(s+A+A) \supseteq C \cap -(B_s+A)$, which proves  property~\ref{item:coverc}. Finally, property~\ref{item:csmall} follows from \eqref{eqn:csmall} and Markov's inequality. 
  Note that removing duplicates from $B_s$ does not hurt properties \ref{item:coverc}, \ref{item:absmall}, \ref{item:csmall}.

  Since properties \ref{item:partition} and \ref{item:csmall} can be deterministically verified,
  and properties \ref{item:coverc} and \ref{item:absmall} are guaranteed to hold,
  the algorithm above can be made Las Vegas.
\end{proof}
\section{Reduction to \texorpdfstring{$3$}{3}SUM on Sidon Sets}
\label{sec:sidon}

In this section, we further reduce a moderate-energy $3$SUM instance to a $3$SUM instance on Sidon sets. 
In fact, the produced instance avoids not only Sidon $4$-tuples, but all small-coefficient $4$-term linear relations as well. 
To state our formal result, we make the following technical definition, which is also used crucially in our proof.

\begin{definition}[$k$-term $\ell$-relation]
We say $k$ integers $a_1,\dots,a_k$ have an \emph{$\ell$-relation}, if there exist integer coefficients $\beta_1,\dots,\beta_k\in [-\ell,\ell]$ that have sum  $\sum_{i=1}^k\beta_i = 0$ and are not all zero, such that $\sum_{i=1}^k \beta_i a_i = 0$. 
Moreover, we say $\sum_{i=1}^k \beta_i a_i = 0$ is a \emph{nontrivial} $\ell$-relation, if $(a_1,\dots,a_k)$ is a nontrivial solution to the equation $\sum_{i=1}^k \beta_i a_i=0$ (see \cref{defn:nontri}).
\end{definition}
We only consider $3$-term and $4$-term relations. 
For example, a nontrivial $3$-term arithmetic progression $(a,b,c)$ form a nontrivial $3$-term $2$-relation $a-2b+c=0$
(where the coefficients $1,(-2),1$ have zero sum and maximum magnitude $2$), and a Sidon $4$-tuple $(a,b,c,d)$  (where $\{a,b\}\neq \{c,d\}$) form a nontrivial $4$-term $1$-relation $a+b-c-d=0$.
Here are more examples: integers $101,103,109$ have a nontrivial $3$-term $4$-relation $3\cdot 101 - 4\cdot 103 + 1\cdot 109=0$, but do not have any $3$-term $3$-relations.
Integers $999,101,103,109$ have a nontrivial $4$-term $4$-relation $0\cdot 999+3\cdot 101-4\cdot 103 +1\cdot 109 = 0$. 
Integers $101,103,103,109$ have a nontrivial $4$-term $3$-relation $3\cdot 101 - 2\cdot 103 - 2\cdot 103 + 2\cdot 109=0$, and also a \emph{trivial} $4$-term $1$-relation $0\cdot 101 + 1\cdot 103  - 1\cdot 103+ 0\cdot 109=0$.

We prove the following theorem.
\begin{theorem}[Generalized version of \cref{thm:main}]
\label{thm:$3$SUM-sidon}
For any constants $\ell\ge 1$ and $\delta\in (0,1)$, solving $3$SUM on size-$n$ sets of integers bounded by $[-n^{3+\delta},n^{3+\delta}]$ avoiding nontrivial $4$-term $\ell$-relations requires $n^{2-o(1)}$ time, assuming the $3$SUM hypothesis. 
\end{theorem}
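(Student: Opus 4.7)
The plan is to carry out the self-reduction sketched in the technical overview, with \cref{thm:moderate-energy-reduction} as the first step. By that theorem, it suffices to show how to solve, in truly subquadratic time, a $3$SUM instance $\hat A\subseteq [-U,U]$ of size $\hat n$ having moderate additive energy $E(\hat A)\le \hat n^{3-\eps_0}$, given an assumed $s^{2-\eps'}$-time algorithm for $3$SUM on size-$s$ sets free of nontrivial $4$-term $\ell$-relations (with $\eps',\eps_0>0$ constants and integers in $[-s^{3+\delta},s^{3+\delta}]$).

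The core construction is an almost-linear hash family $\caH\colon [-U,U]\to [m]$ with $m = \hat n^{1-\beta}$ for a small $\beta\in(0,\eps_0/2)$, satisfying \emph{almost $3$-universality on generic triples}: for every $x,y,z$ that do not admit a nontrivial $3$-term $\ell'$-relation (with $\ell'=O(\ell)$), one has $\Pr_H[H(x)=H(y)=H(z)]\le U^{o(1)}/m^2$. A natural candidate combines reduction modulo a random prime $p\approx m$ with a Dietzfelbinger-type multiplicative rehashing; almost-linearity is standard, while the $3$-universal bound reduces to showing that simultaneous collisions among $x,y,z$ force some bounded-coefficient integer relation to vanish modulo $p$, and---absent a true $\ell'$-relation over $\Z$---the prime number theorem limits this to probability $O(1/p)$. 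Because this guarantee is conditional on no small-coefficient relation existing, I would also pre-partition $\hat A$, as in \cite{ldt}, into $U^{o(1)}$ classes drawn from Behrend-type sets that avoid every nontrivial $3$-term $\ell'$-relation; intersecting with hash buckets yields sub-buckets within which every triple is generic, so the $3$-universal bound applies.

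After this hashing-and-refinement step, almost-linearity ensures that any $3$SUM solution lies in one of $m^2\cdot U^{o(1)}$ candidate sub-bucket triples, each of typical size $\hat n/m$. For each such triple I would identify every element participating in a nontrivial $4$-term $\ell$-relation with elements of the triple, remove it from the sub-instance, and separately brute-force $3$SUM solutions through it against $\hat A$ in $O(\hat n/m)$ time. By \cref{lem:count4} the total number of nontrivial $4$-term $\ell$-relations in $\hat A$ is at most $O(E(\hat A))\le \hat n^{3-\eps_0}$; generic $3$-universality bounds the probability that all four members co-bucket by $U^{o(1)}/m^2$; each co-bucketing puts the quadruple in $m\cdot U^{o(1)}$ sub-bucket triples, so the aggregate brute-force cost is $\frac{\hat n^{3-\eps_0}}{m^2}\cdot m\cdot \frac{\hat n}{m}\cdot U^{o(1)} = \hat n^{2-\eps_0+2\beta}\cdot U^{o(1)}$, which is $\hat n^{2-\Omega(1)}$ for $\beta<\eps_0/2$. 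The cleaned sub-instances then satisfy the structural hypothesis; collapsing each tripartite instance to a single-set instance and compressing the integer range to $[-\hat n^{3+\delta},\hat n^{3+\delta}]$ by a final almost-linear randomized rehashing, we invoke the assumed $s^{2-\eps'}$-time algorithm for a total cost of $m^2\cdot (\hat n/m)^{2-\eps'}\cdot U^{o(1)} = \hat n^{2-\beta\eps'}\cdot U^{o(1)}$, again truly subquadratic, contradicting the $3$SUM hypothesis.

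The main obstacle is the hash family together with the quadruple-collision analysis. In particular, a nontrivial $4$-term $\ell$-relation $\beta_1 a_1 + \cdots + \beta_4 a_4=0$ may distribute its four variables across two or three of the buckets in a sub-instance rather than all in a single bucket (the case emphasized in the overview); each such split pattern defines a distinct collision event that must be bounded using the appropriate instance of generic $3$-universality (or $2$-universality), and the Behrend-type refinement must rule out all of the small-coefficient $3$-term relations that arise from these splits. Secondary challenges---the near-linear-time computation of the Behrend partition, the verification of sub-instance structure via sparse convolution (\cref{thm:sparseconv}), and the final range compression to $[-\hat n^{3+\delta},\hat n^{3+\delta}]$---are routine given the toolkit already assembled in \cite{ldt} and in \cref{sec:moderate}.
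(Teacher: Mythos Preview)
Your proposal is correct and matches the paper's proof: moderate-energy reduction (\cref{thm:moderate-energy-reduction}), then self-reduction via an almost-linear hash family with conditional $3$-universality (\cref{lem:hash}) together with a Behrend refinement (\cref{lem:behrend}), followed by listing and stripping the residual nontrivial $4$-term $\ell$-relations and brute-forcing through the removed elements. The obstacle you flag---cross-bucket split patterns---is resolved in the paper by merging the three parts with specific shifts $q_X=10\ell,\ q_Y=100\ell^2,\ q_Z=-q_X-q_Y$; an elementary magnitude bound (\cref{claim:bound}) then forces every surviving nontrivial $4$-term $\ell$-relation in the merged instance into exactly one of two shapes (all four in one part, or a $2{+}2$ split with $\beta_1+\beta_2=\beta_3+\beta_4=0$, see \cref{lem:type-relation}), after which your $3$-universal and $2$-universal estimates respectively yield the $E(\hat A)\cdot n^{o(1)}/m$ total count of \cref{lem:boundnumber}.
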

In particular, solving $3$SUM on Sidon sets is $3$SUM-hard, proving \cref{thm:main}. Note that if a set avoids nontrivial $4$-term $\ell$-relations, it does not contain four distinct numbers that have an $\ell$-relation either. Thus, avoiding nontrivial $4$-term $\ell$-relations is a stronger condition and \cref{thm:$3$SUM-sidon} also holds if we replace ``nontrivial $4$-term $\ell$-relations'' with ``$4$-term $\ell$-relations involving $4$ distinct numbers''.

In comparison, the main technique of \cite{ldt} can establish a special case of \cref{thm:$3$SUM-sidon}, the $3$SUM-hardness of $3$SUM on sets avoiding nontrivial $3$-term $\ell$-relations. Although a weaker aspect of our result is that our reduction is Las Vegas randomized, while their reduction is deterministic.

Before proving \cref{thm:$3$SUM-sidon}, we first show that it implies the $3$SUM-hardness of detecting  solutions to any nontrivial $4$-LDT, proving \cref{thm:4ldt}, which we recall here:
\FourLDT*

\begin{proof}
In the proof, we will also comment on what need to change if the condition ``avoids  solutions $\sum_{i=1}^4 \beta_i a_i = 0$ for distinct $a_i$'' is replaced with ``avoids nontrivial solutions $\sum_{i=1}^4 \beta_i a_i = 0$ for $a_i$'' (for short, the ``distinct'' condition is replaced with the ``nontrivial'' condition). 

Let $B$ be the input set and let $\ell:= (\max_i{\beta_i})^2$.
    Let $A$ be a $3$SUM instance without nontrivial 4-term $\ell$-relations. By Theorem~\ref{thm:$3$SUM-sidon}, solving $A$ requires $n^{2-o(1)}$ time under the $3$SUM hypothesis. 
    
    We first perform random color-coding \cite{alon1995color} to partition $A$ to $A_1, A_2, A_3$. If $A$ has a $3$SUM solution, then with constant probability, there exist $a_1 \in A_1, a_2 \in A_2, a_3 \in A_3$ such that $a_1+a_2+a_3 = 0$. Let $M := 20(\max_{i} |\beta_i|)^4$, and $U := \max_{a \in A} |a| + 1$. We then create the following sets of rational numbers
    \begin{align*}
        B_1 &= (A_1 + M U) / \beta_1,\\
        B_2 &= (A_2 + M^2 U) / \beta_2, \\
        B_3 &= (A_3  + M^3 U) / \beta_3,\\
        B_4 &= \left\{-((M + M^2 + M^3) U) / \beta_4\right\}.
    \end{align*}
    It suffices to show that there exist $a_1 \in A_1, a_2 \in A_2, a_3 \in A_3$ such that $a_1+a_2+a_3 = 0$ if and only if $B_1 \cup B_2 \cup B_3 \cup B_4$ has a solution to $\sum_{i=1}^4 \beta_i b_i = 0$ for distinct $b_i \in B_1 \cup B_2 \cup B_3 \cup B_4$ for $i \in [4]$ (A caveat is that $B_1, B_2, B_3, B_4$ are sets of rational numbers, but we can easily change them to integers by multiplying every number by $\beta_1 \beta_2 \beta_3 \beta_4$. All integers are within $[-O(n^{\delta}), O(n^\delta)]$). The forward direction is clear: if there exist $a_1 \in A_1, a_2 \in A_2, a_3 \in A_3$ such that $a_1+a_2+a_3 = 0$, then we can find $b_1 = (a_1 + MU) / \beta_1 \in B_1, b_2 = (a_2 + M^2 U) / \beta_2 \in B_2, b_3 = (a_3 + M^3 U) / \beta_3$ and $b_4 = (-(M + M^2 + M^3) U) / \beta_4$ so that clearly $\sum_{i=1}^4 \beta_i b_i = 0$. We then consider the backward direction. 
    
    Suppose there is a solution to $\sum_{i=1}^4 \beta_i b_i = 0$. For $i \in [4]$, let $t_i$ be such that $b_i \in B_{t_i}$. Also, let $b_i = (x_i + y_{t_i}^{(1)} MU + y_{t_i}^{(2)} M^2U + y_{t_i}^{(3)} M^3U) / \beta_{t_i}$, where $x_i \in A \cup \{0\}$ (we can WLOG assume $0 \not \in A$ since if $0 \in A$, we can test whether $0$ is in a $3$SUM solution in $\tO(n)$ time, and then remove $0$) and $y^{(j)}_{t_i}$ is the coefficient in front of $M^j U$ in the definition for $B_{t_i}$, i.e. $(y^{(1)}_1, y^{(1)}_2, y^{(1)}_3, y^{(1)}_4) = (1, 0, 0, -1), (y^{(2)}_1, y^{(2)}_2, y^{(2)}_3, y^{(2)}_4) = (0, 1, 0, -1)$ and $(y^{(3)}_1, y^{(3)}_2, y^{(3)}_3, y^{(3)}_4) = (0, 0, 1, -1)$. 
    
    \begin{claim}
    \label{cl:4ldt-claim1}
    $\sum_{i=1}^4 \frac{\beta_i}{\beta_{t_i}} x_i = 0$ and for every $j \in [3]$, $\sum_{i=1}^4 \frac{\beta_i}{\beta_{t_i}} y^{(j)}_{t_i} = 0$.
    \end{claim}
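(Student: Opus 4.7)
The plan is to prove the claim by a base-$M$ magnitude separation argument, exploiting that $M$ was chosen with enough slack compared to $\max_i|\beta_i|$. Substituting the formula for each $b_i$ into the hypothesis $\sum_{i=1}^4 \beta_i b_i = 0$ and regrouping by powers of $MU$ yields
\[
S_0 + S_1\cdot MU + S_2\cdot M^2 U + S_3\cdot M^3 U \;=\; 0,
\]
where $S_0 := \sum_{i=1}^4 \frac{\beta_i}{\beta_{t_i}} x_i$ and $S_j := \sum_{i=1}^4 \frac{\beta_i}{\beta_{t_i}} y^{(j)}_{t_i}$ for $j\in\{1,2,3\}$. Showing $S_0=S_1=S_2=S_3=0$ is exactly the content of the claim, so the whole problem reduces to a size comparison.

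Next I would set $\Gamma := \max_i |\beta_i|$ and clear denominators by multiplying the identity through by $D := \prod_{j=1}^4 \beta_j$, so that the new coefficients $C_i := D\beta_i/\beta_{t_i}$ are integers of absolute value at most $\Gamma^4$. Writing $T_j := D\cdot S_j$, we still have $T_0 + T_1\cdot MU + T_2\cdot M^2 U + T_3\cdot M^3 U = 0$, and it suffices to show the $T_j$ all vanish. Since $|x_i| \le U$ (as $x_i \in A\cup \{0\}$ and elements of $A$ have magnitude less than $U$) and $|y^{(j)}_{t_i}|\le 1$, these integers satisfy the size bounds $|T_0|\le 4\Gamma^4 U$ and $|T_j|\le 4\Gamma^4$ for $j\in\{1,2,3\}$. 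The choice $M = 20\Gamma^4$ is precisely what is needed to make $|T_0| < MU$ and $|T_j| < M$ for each $j\ge 1$.

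Then I would peel the coefficients off one power of $M$ at a time. Reducing the equation modulo $MU$ gives $T_0 \equiv 0 \pmod{MU}$, and combined with $|T_0| < MU$ this forces $T_0=0$. Dividing the remaining relation by $MU$ gives $T_1 + T_2 M + T_3 M^2 = 0$; reducing modulo $M$ and using $|T_1| < M$ forces $T_1 = 0$, and two more iterations give $T_2 = T_3 = 0$. Finally, dividing by the nonzero constant $D$ recovers the rational identities $S_0 = 0$ and $S_j = 0$ for $j\in\{1,2,3\}$ asserted by the claim.

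The only thing to be careful about is that the ratios $\beta_i/\beta_{t_i}$ are a priori rational, not integer, so one must either clear denominators (as above) or argue directly with rational magnitudes; either way the computation is routine because $\ell$ and the $\beta_i$ are constants and $M$ is polynomial in $\Gamma$ rather than merely linear. I do not anticipate any real obstacle: the substantive work in the surrounding reduction lies in invoking \cref{thm:$3$SUM-sidon} with the right parameter $\ell = (\max_i \beta_i)^2$, whereas this claim is a pure digit-extraction lemma once the sets $B_1,B_2,B_3,B_4$ are defined with the chosen values of $M$ and $U$.
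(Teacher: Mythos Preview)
Your proposal is correct and takes essentially the same approach as the paper: both clear denominators by multiplying through by $\prod_j\beta_j$ to obtain an integer relation, then use the gap $M=20(\max_i|\beta_i|)^4$ to peel off the coefficients one ``digit'' at a time. The only cosmetic difference is that you peel from the low-order term upward (reduce mod $MU$ first), whereas the paper peels from the top down (bound everything below the $M^3U$ term first); the arithmetic is identical.
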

    \begin{proof}
    We first show $\sum_{i=1}^4 \frac{\beta_i}{\beta_{t_i}} y^{(3)}_{t_i} = 0$. We know that 
    $$\sum_{i=1}^4 \frac{\beta_i}{\beta_{t_i}} (x_i + y_{t_i}^{(1)} MU + y_{t_i}^{(2)} M^2U + y_{t_i}^{(3)} M^3U) = 0.$$
    Multiplying both sides by $\alpha := \beta_1 \beta_2 \beta_3 \beta_4$ gives us an integer equation
    $$\sum_{i=1}^4 \frac{\alpha \beta_i}{\beta_{t_i}} (x_i + y_{t_i}^{(1)} MU + y_{t_i}^{(2)} M^2U + y_{t_i}^{(3)} M^3U) = 0.$$
    We can bound the absolute value of all terms other than $y_{t_i}^{(3)} M^3U$ as follows:
    \begin{align*}
        \left| \sum_{i=1}^4 \frac{\alpha \beta_i}{\beta_{t_i}} (x_i + y_{t_i}^{(1)} MU + y_{t_i}^{(2)} M^2U)\right| & \le (\max_{i} |\beta_i|)^4 \cdot \sum_{i=1}^4 (U + MU + M^2U) \\
        & \le 12 (\max_{i} |\beta_i|)^4 M^2U < M^3U.
    \end{align*}
    Therefore, we must simultaneously have 
    $$
    \sum_{i=1}^4 \frac{\alpha \beta_i}{\beta_{t_i}} (x_i + y_{t_i}^{(1)} MU + y_{t_i}^{(2)} M^2U) = 0 \text{   and   } \sum_{i=1}^4 \frac{\alpha \beta_i}{\beta_{t_i}} (y_{t_i}^{(3)} M^3U) = 0.
    $$
    The second condition implies $\sum_{i=1}^4 \frac{\beta_i}{\beta_{t_i}} y^{(3)}_{t_i} = 0$ and we can use the first condition to show the remaining equations by the same method (and we omit the details). 
    \end{proof}
    
    \begin{claim}
    \label{cl:4ldt-not-all-equal}
    It is not possible $t_1 = t_2 = t_3 = t_4$.  
    \end{claim}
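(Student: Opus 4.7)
The plan is to split into two cases based on the common value $t := t_1=t_2=t_3=t_4$, and in each case derive a contradiction with either the distinctness requirement on the $b_i$ or, in the variant statement, with the fact that $A$ avoids nontrivial $4$-term $\ell$-relations (which we will invoke with coefficients $\beta_1,\beta_2,\beta_3,\beta_4$, using that $|\beta_i|\le \sqrt{\ell}\le \ell$ by the choice $\ell=(\max_i \beta_i)^2$).

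First I will dispose of the case $t=4$. Since $B_4$ is a singleton, the four $b_i$ must all equal the unique element of $B_4$, so they are not distinct, ruling out the ``distinct $b_i$'' variant. For the ``nontrivial solutions'' variant, this single repeated value $b$ satisfies $\sum_{i:b_i=b}\beta_i=\sum_{i=1}^4\beta_i=0$ by hypothesis, so the solution is trivial by Definition~\ref{defn:nontri}, again a contradiction.

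Next I will handle $t\in\{1,2,3\}$. Every $b_i\in B_t$ can be written as $b_i=(a_i+M^tU)/\beta_t$ for some $a_i\in A_t\subseteq A$, and this correspondence $a_i\leftrightarrow b_i$ is injective. Plugging into $\sum_{i=1}^4\beta_i b_i=0$ and multiplying by $\beta_t$ gives $\sum_{i=1}^4\beta_i a_i + M^tU\sum_{i=1}^4\beta_i=0$, which collapses to $\sum_{i=1}^4\beta_i a_i=0$ by the assumption $\sum_i \beta_i=0$. Since $\sum_i\beta_i=0$ and the coefficients are bounded by $\sqrt{\ell}\le \ell$, this is a $4$-term $\ell$-relation on the four elements $a_1,\ldots,a_4\in A$. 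By the injectivity of the map $a\mapsto (a+M^tU)/\beta_t$, the $a_i$ are distinct (resp.\ form a nontrivial solution in $A$) whenever the $b_i$ are distinct (resp.\ form a nontrivial solution in $B_1\cup\cdots\cup B_4$), since the groupings of equal values are preserved. In either case we obtain a solution of the forbidden type in $A$, contradicting \cref{thm:$3$SUM-sidon} (or its distinct-numbers variant noted in the paragraph following that theorem).

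The only mildly subtle point—and what I expect to state most carefully—is the equivalence between the ``distinct / nontrivial'' status on the $b$-side and on the $a$-side, which is what makes the contradiction go through in both variants of the $4$-LDT theorem. Everything else is routine algebra.
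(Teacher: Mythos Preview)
Your proposal is correct and follows essentially the same approach as the paper: split on whether the common value $t$ equals $4$ (where $B_4$ being a singleton forces all $b_i$ equal, hence not distinct / a trivial solution) or $t\in\{1,2,3\}$ (where substituting $b_i=(a_i+M^tU)/\beta_t$ and using $\sum_i\beta_i=0$ yields $\sum_i\beta_i a_i=0$, a forbidden $4$-term $\ell$-relation in $A$). The paper phrases the second case via the already-proved \cref{cl:4ldt-claim1} rather than redoing the algebra, and your explicit remark that the bijection $a\mapsto(a+M^tU)/\beta_t$ transports the distinct/nontrivial status from the $b_i$ to the $a_i$ is exactly the point the paper leaves implicit; one small wording fix is that the contradiction is with the assumed property of $A$ (no nontrivial $4$-term $\ell$-relations), not with \cref{thm:$3$SUM-sidon} itself.
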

    \begin{proof}
    First, if $t_1, t_2, t_3, t_4$ are equal to $4$, then $b_1, b_2, b_3, b_4$ are all equal, contradicting to the condition that they are distinct (if the ``distinct'' condition is replaced with the ``nontrivial'' condition, it is also a contradiction as $b_1, b_2, b_3, b_4$ form a trivial solution). 
    
    Otherwise, by \cref{cl:4ldt-claim1}, we must have $\sum_{i=1}^4 \frac{\beta_i}{\beta_{t_i}} x_i = 0$, where $x_i \in A$ for every $i$. It implies that $\sum_{i=1}^t \beta_i x_i = 0$ for distinct $x_i$. As $A$ has no nontrivial $4$-term $\ell$-relations, this is impossible (same reasoning applies if the ``distinct'' condition is replaced with the ``nontrivial'' condition). 
    \end{proof}
    
    \begin{claim}
    $t_1, t_2, t_3, t_4$ must all be distinct.  
    \end{claim}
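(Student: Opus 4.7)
The plan is contradiction: suppose two of $t_1,\dots,t_4$ coincide, and combine the three ``$y$-equations'' of \cref{cl:4ldt-claim1}, the color-coding $A=A_1\sqcup A_2\sqcup A_3$, and distinctness of the $b_i$'s. Writing $S_k:=\sum_{i:\,t_i=k}\beta_i$, the $y$-equations of \cref{cl:4ldt-claim1} collapse to $S_j/\beta_j=S_4/\beta_4=:c$ for every $j\in\{1,2,3\}$, and the bulk of the argument is a case split on $|\{i:\,t_i=4\}|$.

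If two or more $t_i$'s equal $4$, the corresponding $b_i$'s collide at the unique element of $B_4$, contradicting distinctness. If exactly one $t_i=4$, then $S_4$ is a single nonzero $\beta_i$ and hence $c\ne 0$; the other three $t_i$'s lie in $\{1,2,3\}$ and must contain a repeat (else $t_1,\dots,t_4$ would already be distinct), so some $j^*\in\{1,2,3\}$ is unused and $S_{j^*}=0\ne c\beta_{j^*}$, a contradiction. If no $t_i$ equals $4$, then $c=0$ and $S_j=0$ for every $j$; by \cref{cl:4ldt-not-all-equal} the multiplicity profile of the $t_i$'s over $\{1,2,3\}$ is not $(4,0,0)$, and in both $(3,1,0)$ and $(2,1,1)$ some value $j^*$ appears exactly once, making $S_{j^*}$ a single nonzero $\beta_i$, again contradicting $S_{j^*}=0$.

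The only surviving case is $(2,2,0)$; this is the main obstacle and is precisely where the color-coding step earns its keep. After relabeling, $t_{i_1}=t_{i_2}=j_1$ and $t_{i_3}=t_{i_4}=j_2$ for distinct $j_1,j_2\in\{1,2,3\}$, so $\beta_{i_1}+\beta_{i_2}=\beta_{i_3}+\beta_{i_4}=0$. Multiplying the $x$-equation of \cref{cl:4ldt-claim1} by $\beta_{j_1}\beta_{j_2}$ produces
\[\beta_{i_1}\beta_{j_2}x_{i_1}+\beta_{i_2}\beta_{j_2}x_{i_2}+\beta_{i_3}\beta_{j_1}x_{i_3}+\beta_{i_4}\beta_{j_1}x_{i_4}=0,\]
a $4$-term integer relation whose coefficients are nonzero, bounded in magnitude by $\ell=(\max_i|\beta_i|)^2$, and sum to zero. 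Distinctness of $b_i$'s within each of $B_{j_1}$ and $B_{j_2}$ gives $x_{i_1}\ne x_{i_2}$ and $x_{i_3}\ne x_{i_4}$, while the color-coding disjointness $A_{j_1}\cap A_{j_2}=\emptyset$ (recall $x_{i_1},x_{i_2}\in A_{j_1}$ and $x_{i_3},x_{i_4}\in A_{j_2}$) delivers the cross-disjointness, so all four $x_i$ are pairwise distinct elements of $A$. This exhibits a $4$-term $\ell$-relation on distinct elements of $A$, contradicting the hypothesis that $A$ avoids nontrivial $4$-term $\ell$-relations. Without color-coding one could have $x_{i_1}=x_{i_3}$ and $x_{i_2}=x_{i_4}$, turning the relation trivial; defeating exactly this trap is the role of the $A_1,A_2,A_3$ partition. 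The same case split handles the ``nontrivial-solution'' variant of \cref{thm:4ldt}: coincidences $b_i=b_{i'}$ are then ruled out not by distinctness but by nontriviality (which forces the associated $\beta$-sum to be nonzero), contradicting $S_{j_1}=0$ or $S_4=0$ in exactly the same cases as above.
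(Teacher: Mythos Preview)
Your proof is correct and follows the same overall strategy as the paper: reduce to the $(2,2)$ multiplicity case and exhibit a forbidden $4$-term $\ell$-relation in $A$. The paper reaches $S_k=0$ for all $k$ in one stroke by observing that any three of the vectors $\vec y_i=(y_i^{(1)},y_i^{(2)},y_i^{(3)})$ are linearly independent, which immediately rules out every profile except all-equal and $(2,2)$; your route via $S_j/\beta_j=c$ is slightly weaker and therefore needs the manual case split on $|\{i:t_i=4\}|$ and on multiplicity profiles, but arrives at the same endpoint. In the $(2,2)$ case your multiplication by $\beta_{j_1}\beta_{j_2}$ (rather than the paper's $\sqrt{|\beta_1\beta_2\beta_3\beta_4|}$) cleanly yields integer coefficients, and your explicit appeal to color-coding for the cross-pair distinctness $\{x_{i_1},x_{i_2}\}\cap\{x_{i_3},x_{i_4}\}=\emptyset$ makes precise a step the paper leaves implicit.

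One caveat on your closing remark about the nontrivial-solution variant: nontriviality only guarantees that \emph{some} value has nonzero $\beta$-sum, not every one, so ``nontriviality forces the associated $\beta$-sum to be nonzero'' is not literally correct. The cases still close (e.g., in the $(2,2)$ sub-case, if $x_{i_1}=x_{i_2}$ then the $x$-equation forces $x_{i_3}=x_{i_4}$ as well, so both pairs collapse and the $b$-solution is trivial), but the one-line justification you give does not quite capture this; the paper's parenthetical handles it by asserting that nontriviality of the $b$-solution forces the $x_i$'s to be distinct.
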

    \begin{proof}
    Suppose they are not all distinct. Consider the vectors $\vec{y}_i := \left(y_i^{(1)}, y_i^{(2)}, y_i^{(3)}\right)$ for $i \in [4]$. Observe that any three of the vectors are independent. \cref{cl:4ldt-claim1} implies that,  $\sum_{1 \le j \le 4} \frac{\beta_{j}}{\beta_{t_{j}}} \vec{y}_{t_{j}} = 0$. Since  $t_1, t_2, t_3, t_4$ are not all distinct, at most three $\vec{y}_i$ are involved in the previous equation. By independence, the coefficient in front of every $\vec{y}_i$ must be $0$. In other words, for every $i \in [4]$, $\sum_{1 \le j \le 4, t_{j}=i} \frac{\beta_{j}}{\beta_{i}} = 0$. Since $\beta_{j} \ne 0$ for every $j$, there are only two possibilities: 1) all $t_{j}$ are equal, which is ruled out by \cref{cl:4ldt-not-all-equal}; 2) $t_1, t_2, t_3, t_4$ are taken from two distinct values, each twice. 
    
    Now we show that the second case is also impossible. First of all, in this case, $\beta_1, \beta_2, \beta_3, \beta_4$ are paired with each other so that the sum of each pair is $0$ ($\beta_{j_1}$ is paired with $\beta_{j_2}$ if $t_{j_1} = t_{j_2}$). 
    By \cref{cl:4ldt-claim1},     $\sum_{j=1}^4 \frac{\beta_j}{\beta_{t_j}} x_j = 0$. If $t_j = 4$ for any $j$, then $(b_1, b_2, b_3, b_4)$ are not all distinct since two of them are from $B_4$, which only contains one element, a contradiction (it is also a contradiction if the ``distinct'' condition is replaced with the ``nontrivial'' condition, as the remaining two elements must be equal as well and the solution will be trivial). Therefore, we can assume $x_j \in A$ for every $j$. Multiplying $\sum_{j=1}^4 \frac{\beta_j}{\beta_{t_j}} x_j = 0$ by $\sqrt{|\beta_1 \beta_2 \beta_3 \beta_4|}$ gives us a linear relation between $(x_1, x_2, x_3, x_4)$ with coefficients from $[-(\max_i{\beta_i})^2, (\max_i{\beta_i})^2] = [-\ell, \ell]$. Since $(x_1, x_2, x_3, x_4)$ are distinct, this is a contradiction as $A$ has no nontrivial 4-term $\ell$-relations (if the ``distinct'' condition is replaced with the ``nontrivial'' condition, then by noticing that in this case if $(b_1, b_2, b_3, b_4)$ is a nontrivial solution, then $(x_1, x_2, x_3, x_4)$ must be distinct and we can apply the same reasoning). 
    \end{proof}
    
    Finally, it suffices to show the following. 
    \begin{claim}
    \label{cl:4ldt-perm}
    If $t_1, t_2, t_3, t_4$ are all  distinct, then there exist $a_1 \in A_1, a_2 \in A_2, a_3 \in A_3$ such that $a_1 + a_2 + a_3 = 0$. 
    \end{claim}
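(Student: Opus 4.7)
The plan is to unpack the relations from \cref{cl:4ldt-claim1} under the hypothesis that $t_1,t_2,t_3,t_4$ is a permutation of $\{1,2,3,4\}$, and then read off the desired $3$SUM solution.

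First, since $(t_1,\dots,t_4)$ is a permutation, define $\pi\in S_4$ by $t_{\pi(k)}=k$. For $k\in\{1,2,3\}$ we have $b_{\pi(k)}\in B_k$ so $x_{\pi(k)}\in A_k$, while $b_{\pi(4)}\in B_4=\{-(M+M^2+M^3)U/\beta_4\}$ forces $x_{\pi(4)}=0$ (recall we assumed $0\notin A$ without loss of generality). Our goal is to conclude $x_{\pi(1)}+x_{\pi(2)}+x_{\pi(3)}=0$, for then we can take $a_k:=x_{\pi(k)}\in A_k$ ($k=1,2,3$).

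Next, I would extract the three $y^{(j)}$ relations from \cref{cl:4ldt-claim1}. For $j=1$ the vector $(y_1^{(1)},y_2^{(1)},y_3^{(1)},y_4^{(1)})=(1,0,0,-1)$ is nonzero only at indices $1$ and $4$, so $\sum_{i=1}^4\frac{\beta_i}{\beta_{t_i}}y_{t_i}^{(1)}=0$ collapses to $\frac{\beta_{\pi(1)}}{\beta_1}-\frac{\beta_{\pi(4)}}{\beta_4}=0$. The same argument applied to $j=2$ and $j=3$ yields
\[
\frac{\beta_{\pi(1)}}{\beta_1}=\frac{\beta_{\pi(2)}}{\beta_2}=\frac{\beta_{\pi(3)}}{\beta_3}=\frac{\beta_{\pi(4)}}{\beta_4}=:c,
\]
where $c\neq 0$ since all $\beta_i$ are nonzero.

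Finally, I plug this proportionality into the $x$-equation of \cref{cl:4ldt-claim1}:
\[
0=\sum_{i=1}^4\frac{\beta_i}{\beta_{t_i}}x_i=\sum_{k=1}^4\frac{\beta_{\pi(k)}}{\beta_k}x_{\pi(k)}=c\bigl(x_{\pi(1)}+x_{\pi(2)}+x_{\pi(3)}+x_{\pi(4)}\bigr)=c\bigl(x_{\pi(1)}+x_{\pi(2)}+x_{\pi(3)}\bigr),
\]
using $x_{\pi(4)}=0$ in the last step. Since $c\neq 0$, we obtain $x_{\pi(1)}+x_{\pi(2)}+x_{\pi(3)}=0$, completing the proof. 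The argument is essentially bookkeeping — the only conceptual step is recognizing that the three vectors $(y^{(j)}_1,\dots,y^{(j)}_4)_{j=1,2,3}$ are designed to force a single common ratio $c$ among the $\beta_{\pi(k)}/\beta_k$, which is exactly what lets us reduce the weighted equation back to an unweighted $3$SUM instance on $A_1\times A_2\times A_3$.
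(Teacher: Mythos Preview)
Your proof is correct and follows essentially the same approach as the paper: define the inverse permutation, use the three $y^{(j)}$-relations from \cref{cl:4ldt-claim1} to obtain a common ratio $c=\beta_{\pi(k)}/\beta_k$, and plug into the $x$-equation. Your version is in fact slightly cleaner than the paper's, which takes an extra (and somewhat dubious) step to argue $c=1$ via $\sum_i\beta_i=0$; you correctly observe that $c\neq 0$ already suffices to divide through and conclude $x_{\pi(1)}+x_{\pi(2)}+x_{\pi(3)}=0$.
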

    \begin{proof}
    Let $(r_1, r_2, r_3, r_4)$ be such that $t_{r_i} = i$ for every $i$. For every $i \in [3]$, by \cref{cl:4ldt-claim1} and the definition of $y^{(i)}$, we must have $\frac{\beta_{r_i}}{\beta_i} - \frac{\beta_{r_4}}{\beta_4} = 0$. Thus, $\beta_i = \beta_{r_i} \cdot \frac{\beta_4}{\beta_{r_4}}$. Therefore, $\sum_{i=1}^4 \beta_i = \sum_{i=1}^4 \beta_{r_i} \cdot \frac{\beta_4}{\beta_{r_4}}$. Thus, $\beta_4 = \beta_{r_4}$. This further implies that $\beta_i = \beta_{r_i}$ for every $i \in [4]$. 
    
    By \cref{cl:4ldt-claim1}, $\sum_{i=1}^4 \frac{\beta_i}{\beta_{t_i}} x_i = 0$. Since $\sum_{i=1}^4 \frac{\beta_i}{\beta_{t_i}} x_i = \sum_{i=1}^4 \frac{\beta_i}{\beta_{r_{t_i}}} x_i = \sum_{i=1}^4 x_i$, and we know one of the $x_i$ is $0$ and the rest three are from $A_1, A_2, A_3$ respectively. Therefore, there exist $a_1 \in A_1, a_2 \in A_2, a_3 \in A_3$ such that $a_1 + a_2 + a_3 = 0$. 
    \end{proof}
\end{proof}

In the following, we prove \cref{thm:$3$SUM-sidon}, by applying a careful self-reduction on the moderate-energy $3$SUM instance obtained from \cref{thm:moderate-energy-reduction}.

\subsection{Self-Reduction for \texorpdfstring{$3$}{3}SUM}

It is well-known that the $3$SUM problem has an efficient self-reduction \cite{BaranDP08} through almost-linear hash functions, such as modulo a random prime, or Dietzfelbinger's hash function (see e.g., \cite{Dietzfelbinger96,Dietzfelbinger18,ChanH20}). 
We will use the same self-reduction with a few modifications.

In the following, for integer parameters $m\le U$, we always consider hash families $\caH$ consisting of hash functions of the form  
\[ H\colon \Z \cap [-U,U] \to [m].\]
 We always assume a hash function $H\in \caH$ can be described by a seed of length $U^{o(1)}$, and evaluating $H(x)$ can be done in $U^{o(1)}$ time given $x$ and the description of $H$. First we define the almost-linearity property of a hash family.
\begin{definition}[Almost linearity]
For an integer set $\Delta$,  we say a hash family $\caH$ is \emph{$\Delta$-almost-linear}, if for all hash functions $H\in \caH$ and all $x,y\in \Z \cap [-U,U] $,
	\[ H(x)+H(y) +H(-x-y)\in \Delta.\]
	The set $\Delta$ should be computable in $\poly(|\Delta|\log U)$ time. Sometimes we also say $\caH$ is $|\Delta|$-almost-linear.
\end{definition}

The standard $3$SUM self-reduction proceeds as follows: sample $H\in \caH$, and place input integer $x$ in the bucket numbered $H(x)$. By almost-linearity, it suffices to solve ($3$-partite) $3$SUM on the three buckets numbered $i,j,-i-j-d$ respectively, over all $i\in [m],j\in [m], d\in \Delta$. There are $m^2 |\Delta|$ small instances, and we need to set $|\Delta| = U^{o(1)}$ for time-efficiency. \footnote{The reduction would also work if the set $\Delta$ may depend on the sampled hash function $H\in \caH$ (a common example is to modulo a random prime). Here we do not need this relaxation in our constructions.}

Similar to previous works, in order to bound the size of the instances generated by the self-reduction, we require the hash family $\caH$ to be almost 2-universal: for $x\neq y$, $\Pr_{H\in \caH}[H(x)=H(y)] \approx 1/m$.
However, in our scenario of removing distinct numbers $a, b, c, d$ with $a+b=c+d$, we need stronger independence guarantees in order to bound the probability that $a, b, c, d$ all receive the same hash value.
Unfortunately, given the almost-linearity requirement, it seems difficult to achieve 3-wise independence: for three integers $x,y,z$ with $z=x+y$, the hash value of $z$ is almost determined (up to $|\Delta|$ possibilities) by the hash values of $x$ and $y$.
Nevertheless, we can achieve the desired independence guarantee for three integers that avoid $\ell$-relations (for some small $\ell$). 
We formally state the properties of our hash family in the following lemma, which will be proved in \cref{sec:hash}.
\begin{lemma}[Hash family]
	\label{lem:hash}
	Let $\ell = \lceil \exp((\log U)^{1/3}) \rceil$.\footnote{$\log U$ denotes natural logarithm. }
	Given an integer $m\in [1,U]$, there is a hash family $\caH \subseteq \{H\colon \Z \cap [-U,U] \to [m] \}$ 
	such that:
	\begin{itemize}
		\item $\caH$ is $U^{o(1)}$-almost-linear.
		\item For every $x,y\in \Z\cap [-U,U], x\neq y$, we have 
		\begin{equation}
			\label{eqn:2uni}
			\Pr_{H\in \caH}[H(x)=H(y)] \le \frac{U^{o(1)}}{m}.
		\end{equation}
		\item For every $x,y,z\in \Z\cap [-U,U]$ that do not have any  3-term $\ell$-relations, we have
\begin{equation}
	\label{eqn:3ni}
	\Pr_{H\in \caH}[H(x)=H(y)=H(z)] \le \frac{U^{o(1)}}{m^2}.
\end{equation}
	\end{itemize}
\end{lemma}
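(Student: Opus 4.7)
The plan is to construct $\caH$ as a $k$-fold independent composition of a basic almost-linear hash, where $k = \lceil \log m / \log \ell \rceil \le O((\log U)^{2/3})$. The basic hash is
\[ h_{a, p, m'}(x) \;=\; \lfloor m' \cdot (ax \bmod p)/p \rfloor \;\in\; [m'], \]
parameterized by a random prime $p$ in $[U^{10}, 2U^{10}]$, a uniform $a \in \F_p^{\ast}$, and a target size $m'\le \ell$. To hit range $[m]$, pick positive integers $m_1,\ldots,m_k$ with each $m_j \le \ell$ and $\prod_j m_j = m$ (or $\Theta(m)$, overshooting by a $U^{o(1)}$ factor if $\ell \nmid m$), sample independent $(a_j, p_j)$, and let $H(x) = (h_{a_1, p_1, m_1}(x), \ldots, h_{a_k, p_k, m_k}(x))$, identified with an index in $[\prod_j m_j]$.

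For almost-linearity, note that $(ax \bmod p) + (ay \bmod p) + (a(-x-y) \bmod p) \in \{0, p, 2p\}$, and flooring by $p/m_j$ introduces an additive error of at most $2$ per term; hence each $h_j$ is $\Delta_j$-almost-linear with $|\Delta_j|=O(1)$, and the product $H$ has an explicit $\Delta$ of size $\prod_j |\Delta_j| = O(1)^k = U^{o(1)}$. For 2-universality, the standard linear-hashing calculation gives $\Pr[h_j(x) = h_j(y)] \le O(1/m_j)$ for $x \ne y$, and independence over $j$ yields $\Pr[H(x) = H(y)] \le O(1)^k/m = U^{o(1)}/m$.

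The main technical step is the 3-universality bound. Fix $(x, y, z)$ without any 3-term $\ell$-relation and set $u = x - y, v = x - z$. The event $h_j(x) = h_j(y) = h_j(z)$ is equivalent to $a_j u \bmod p_j$ and $a_j v \bmod p_j$ both lying in a common cyclic interval of length $p_j/m_j$ around $0$. As $a_j$ ranges over $\F_{p_j}$, the pair $(a_j u, a_j v) \bmod p_j$ traces the lattice $\Lambda_j = \{(s, t) \in \Z^2 : tv \equiv su \pmod{p_j}\}$ of determinant $p_j$, and the target set is a box of side $2 p_j/m_j$ centered at the origin. The hypothesis on $(x, y, z)$, together with $p_j \gg \ell U$, forces any nonzero lattice vector $(\alpha, \beta)$ with $\max(|\alpha|, |\beta|) \le \ell$ to vanish, since otherwise $\beta u = \alpha v$ as integers, giving the 3-term $\ell$-relation $(\beta-\alpha) x - \beta y + \alpha z = 0$ with coefficients summing to zero and not all zero. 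Thus the $\ell_\infty$-successive minimum $\lambda_1(\Lambda_j) > \ell$, and a Gauss-type lattice-point count gives $|\Lambda_j \cap [-p_j/m_j, p_j/m_j)^2| \le O(1 + (p_j/m_j)/\ell + (p_j/m_j)^2/p_j)$. Dividing by $|\F_{p_j}|$ yields $\Pr[h_j(x) = h_j(y) = h_j(z)] \le O(1/(m_j \ell) + 1/m_j^2) = O(1/m_j^2)$, using $m_j \le \ell$; independence across $j$ multiplies the bounds to $\Pr[H(x) = H(y) = H(z)] \le O(1)^k/m^2 = U^{o(1)}/m^2$.

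The main obstacle is that a single linear hash is inherently only 2-universal, not 3-universal: the triple $(h(x), h(y), h(z))$ lives on a one-parameter family in $\F_p^3$, so one cannot hope for the collision probability to factor below $1/(m\ell)$ when $m > \ell$. This is precisely why each factor must satisfy $m_j \le \ell$ and why we need $k = \log_\ell m$ independent factors, and why $\ell = \lceil e^{(\log U)^{1/3}}\rceil$ is tuned so that the $O(1)^k$ loss across all three guarantees remains $U^{o(1)}$. The remaining steps are routine bookkeeping: sampling primes $p_j$ of the required magnitude (e.g., by Bertrand-type arguments so that $p_j \gg \ell U$ rules out spurious small lattice vectors), explicitly listing $\Delta$ in $\poly(|\Delta|\log U)$ time from the $\Delta_j$'s, handling the $a_j=0$ and cyclic-boundary edge cases, and the corner case $m \le \ell$ (where $k=1$ suffices).
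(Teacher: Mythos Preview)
Your proposal is correct and follows the same high-level architecture as the paper: compose $k\approx\log_\ell m$ independent copies of a base almost-linear hash with range $\approx\ell$, then multiply the per-coordinate bounds.  The almost-linearity and $2$-universality arguments are essentially identical to the paper's.

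The genuine difference is in the $3$-universality argument for the base hash.  The paper takes a small prime $p\in[2U,4U]$ and argues directly over the integers: it bounds the size of the intersection $\big((y-z)\cdot A\big)\cap\big((x-z)\cdot A\big)$ for an interval $A$ via a $\gcd$/$\mathrm{lcm}$ count, showing that the absence of an $\ell$-relation forces $\max\{|y-z|,|x-z|\}/\gcd(y-z,x-z)>\ell/2$, and then \emph{averages over the random prime} to control the extra collisions introduced by reducing modulo $p$ (this is where the $\log U$ factor in the paper's $O(\log U/\ell^2)$ comes from).  Your route takes a very large prime $p\sim U^{10}$ so that the congruence $\beta u\equiv\alpha v\pmod p$ with $|\alpha|,|\beta|\le\ell$ is forced to hold over $\Z$, then invokes a Gauss-reduction lattice-point count $|\Lambda\cap[-R,R]^2|=O(1+R/\lambda_1+R^2/\det\Lambda)$.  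Your packaging is arguably cleaner---no averaging over primes---at the cost of a larger (but still $U^{O(1)}$) modulus and an appeal to lattice geometry rather than elementary $\gcd$ arithmetic; the paper's version is more elementary but pays the $\log U$ factor.  One small slip: the relation $(\beta-\alpha)x-\beta y+\alpha z=0$ has a coefficient of magnitude up to $|\alpha|+|\beta|\le 2\ell$, so your argument actually yields $\lambda_1(\Lambda_j)>\ell/2$ rather than $>\ell$; this only costs a constant in the lattice count and does not affect the $O(1/m_j^2)$ conclusion.
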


To deal with integers that do have 3-term $\ell$-relations, we borrow the proof idea from \cite{ldt} that uses Behrend's set \cite{behrend} to forbid these integers occurring simultaneously. 
The following adaptation of Behrend's construction will be proved in \cref{sec:behrend}.

\begin{lemma}[Behrend's construction]
	\label{lem:behrend}
	Let $\ell = \lceil \exp((\log U)^{1/3}) \rceil$.
Given  set $A = \{a_1,\dots,a_n\} \subset \Z \cap [-U,U]$, there is a deterministic $O(n\polylog U)$-time algorithm  that partitions $A$ into $b = \exp(O (\log U)^{2/3})$ disjoint subsets $B_1, B_2, \dots, B_b$,  such that  $B_i$ avoids nontrivial 3-term $\ell$-relations, for all $1\le i\le b$.
\end{lemma}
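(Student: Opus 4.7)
The plan is to extend the classical Behrend construction (which is the $\ell=2$ case, handling 3-APs) to avoid all nontrivial 3-term $\ell$-relations, by partitioning via a large-base digit expansion combined with an $\ell_2$-sphere in the digit space.

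First I shift $A$ into $[0,2U]$; because every coefficient vector $(\beta_1,\beta_2,\beta_3)$ in a 3-term $\ell$-relation satisfies $\sum_j \beta_j=0$, the shift preserves all solutions and their (non)triviality. I then pick parameters $s$ (a ``digit range''), $M=6\ell s$ (the base), and $d=\lceil\log_M(2U+1)\rceil$, and write each shifted element $y$ as $y=\sum_{i<d} v_i(y)\,M^i$ with $v_i(y)\in\{0,\dots,M-1\}$. For each coordinate $i$ I further decompose $v_i(y)=T_i(y)\cdot s+w_i(y)$ with $w_i(y)\in\{0,\dots,s-1\}$. The partition groups together all $y$ sharing the same block vector $T(y)\in\{0,\dots,\lceil M/s\rceil-1\}^d$ and the same squared norm $\|w(y)\|^2\in\{0,1,\dots,d(s-1)^2\}$.

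To verify the property, suppose $y_1,y_2,y_3$ lie in a common class and $\sum_j\beta_jy_j=0$ with $(\beta_1,\beta_2,\beta_3)\in[-\ell,\ell]^3\setminus\{\mathbf 0\}$ and $\sum_j\beta_j=0$. Expanding each $y_j$ in base $M$ and using $\sum_j\beta_j=0$, the common block vector $T$ cancels coordinate-wise, giving $\sum_i\bigl(\sum_j\beta_jw_i(y_j)\bigr)M^i=0$. Each coefficient has absolute value at most $3\ell(s-1)<M/2$, so by uniqueness of the signed base-$M$ representation the coordinate-wise equation $\sum_j\beta_jw(y_j)=0$ holds in $\Z^d$. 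On the common sphere $\|w(y_j)\|^2=r$, the standard convexity argument (express one $w(y_j)$ as an affine combination $\lambda w(y_1)+(1-\lambda)w(y_2)$ with $\lambda=\beta_1/(\beta_1+\beta_2)$, and use constancy of the norm to force $\langle w(y_1),w(y_2)\rangle=\|w(y_1)\|^2$) forces $w(y_1)=w(y_2)=w(y_3)$ whenever all $\beta_j\neq 0$; the degenerate cases where some $\beta_j=0$ directly force two of the $w(y_j)$ to coincide. Since $T(y_j)$ is also common, this lifts to $y_1,y_2,y_3$ coinciding in the same pattern, and a check of \cref{defn:nontri} shows the resulting solution is always trivial.

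To balance parameters I choose $s$ so that $\log s$ is of order $(\log U)^{2/3}$, and hence $d=O((\log U)^{1/3})$. The number of block classes is $(M/s)^d=(6\ell)^d=\exp(O((\log U)^{2/3}))$ and the number of sphere classes is $O(ds^2)=\exp(O((\log U)^{2/3}))$, yielding $b=\exp(O((\log U)^{2/3}))$ nonempty classes in total. Computing the digit vector, block vector, and squared norm for each element takes $O(d\polylog U)$ time, so the whole partition is produced deterministically in $O(n\polylog U)$ time. The main delicate point is ensuring the per-coordinate carry-free arithmetic by choosing $M\ge 6\ell s$; the case analysis of degenerate coefficient patterns (some $\beta_j=0$, or $\beta_1+\beta_2=0$) is then standard Behrend-style bookkeeping.
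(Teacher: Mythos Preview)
Your proposal is correct and follows essentially the same approach as the paper's proof: base-$M$ digit expansion, splitting each digit into a ``block'' part and a ``remainder'' part, and grouping by (block vector, squared norm of remainder vector). Your parameters $(M,s,6\ell)$ correspond directly to the paper's $(q,r,\ell'=2\ell+1)$. The only cosmetic difference is that the paper, after WLOG assuming two coefficients are positive, uses the triangle-inequality form of the equality case, whereas you use the equivalent inner-product/affine-combination computation; both yield colinearity of the remainder vectors on a common sphere and hence equality.
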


In our reduction, we start with a $3$SUM instance $A\subset \Z \cap [-U,U]$ (where $U=n^3$) of size $|A|\le n$ and moderate additive energy $E(A) \le |A|^{3-\eps}$  for some constant $\eps>0$.
Such an instance $A$ is generated (with $2/3$ success probability) by \cref{thm:moderate-energy-reduction} from an arbitrary $n$-size $3$SUM instance with input range $[-n^3,n^3]$ (see \cref{hypo3sum}).

In the following we assume $U =n^3$.
Let $\ell = \lceil \exp((\log U)^{1/3}) \rceil$ be the same parameter from \cref{lem:behrend} and \cref{lem:hash}.

The first step is to perform a self-reduction on $A$, which generates many small $3$SUM instances that have \emph{few} nontrivial 4-term $\ell$-relations.
\begin{definition}[Self-reduction]
	\label{defn:selfred}
  Given $A\subset \Z\cap [-U,U]$ (where $U=n^3$) of size $|A|\le n$,  and a small constant parameter $\gamma \in (0,\eps)$,
   generate smaller $3$SUM instances as follows. Let $m = \lceil n^{1-\eps/4}\rceil$.  
\begin{enumerate}[label={{(\arabic*)}}]
	\item 
	\label{item:stephash}
	Sample $H\colon \Z\cap [-U,U]\to [m]$ from the hash family $\caH$ in \cref{lem:hash},  and use $H$ to  partition $A$ into groups $G_i:= \{a\in A: H(a) = i\}$ for $i\in [m]$. 

	For every group $G_i$ of too large size $|G_i| > n^{1+\gamma}/{m}$, remove all its elements (i.e., redefine $G_i:= \emptyset$).  
	 Use brute-force to check for $3$SUM solutions in $A$ involving these removed elements.
	\item \label{item:stepgen}Use \cref{lem:behrend} to partition $A$ into $b= \exp(O(\log U)^{2/3})$ Behrend sets $B_1,\dots,B_b$.

For each $(x',y',z')\in [b]^3, (x,y,z)\in [m]^3$ with $x+y+z\in  \Delta$, generate the following $3$SUM instance:
	\begin{equation}
		\label{eqn:$3$SUMinstance}
		 \big (q_X U + (G_x\cap B_{x'})\big ) \, \cup\,  \big (q_Y U+(G_y\cap B_{y'})\big ) \,\cup\, \big (q_ZU + (G_{z}\cap B_{z'})\big ),
	\end{equation}
	where $q_X:=10\ell, q_Y:=100\ell^2, q_Z:= -q_X-q_Y$.
	\end{enumerate}
\end{definition}
We need the following property on the shifting coefficients $q_X,q_Y,q_Z$ defined in \cref{defn:selfred}.

\begin{claim}
	\label{claim:bound}
For all integers $\beta_X,\beta_Y,\beta_Z\in \Z \cap [-\ell,\ell]$, we have $|\beta_X q_X+\beta_Y q_Y+\beta_Z q_Z| >5\ell$ unless $\beta_X=\beta_Y=\beta_Z$.
\end{claim}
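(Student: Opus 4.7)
The plan is to exploit the key structural identity $q_Z = -q_X - q_Y$ to reduce the three-variable claim to a two-variable numerical inequality. Substituting this identity into the expression gives
\[
\beta_X q_X + \beta_Y q_Y + \beta_Z q_Z \;=\; (\beta_X - \beta_Z)\, q_X + (\beta_Y - \beta_Z)\, q_Y \;=\; 10\ell\, u + 100\ell^2\, v,
\]
where $u := \beta_X - \beta_Z$ and $v := \beta_Y - \beta_Z$ are integers with $|u|,|v| \le 2\ell$. The excluded case $\beta_X = \beta_Y = \beta_Z$ coincides precisely with $(u,v) = (0,0)$, so it suffices to lower bound $|10\ell\, u + 100\ell^2\, v|$ whenever $(u,v) \ne (0,0)$.

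I then split into two cases according to whether $v$ vanishes. If $v = 0$, then $u$ is a nonzero integer, so the expression equals $10\ell\, u$ and has magnitude at least $10\ell > 5\ell$. If $v \ne 0$, then $|v| \ge 1$ and the $q_Y$-term dominates: $|100\ell^2 v| \ge 100\ell^2$ while $|10\ell u| \le 10\ell \cdot 2\ell = 20\ell^2$, so the triangle inequality yields a magnitude of at least $80\ell^2 \ge 80\ell > 5\ell$ (using $\ell \ge 1$).

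There is essentially no obstacle here; the statement is a short numerical check. The only design point worth noting is why $q_Y$ is taken to grow quadratically (as $100\ell^2$) rather than linearly in $\ell$: because $u$ can be as large as $2\ell$ in magnitude, the contribution $|q_X u|$ is itself of order $\ell^2$, so the $q_Y$-coefficient must be large enough (relative to $q_X \cdot 2\ell$) to rule out cancellation when $v \ne 0$. The specific constants $10, 100$ are chosen with comfortable slack so that the final gap $80\ell^2$ dwarfs the required threshold $5\ell$ for every $\ell \ge 1$.
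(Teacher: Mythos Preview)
Your proof is correct and follows essentially the same approach as the paper: both use the identity $q_Z = -q_X - q_Y$ to rewrite the expression as $(\beta_X-\beta_Z)q_X + (\beta_Y-\beta_Z)q_Y$, then split into the cases $\beta_Y - \beta_Z = 0$ and $\beta_Y - \beta_Z \ne 0$, obtaining the same bounds $10\ell$ and $80\ell^2$ respectively. The only cosmetic difference is that the paper argues by contradiction while you give a direct lower bound.
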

\begin{proof}
Suppose $|\beta_X q_X+\beta_Y q_Y+\beta_Z q_Z| \le  5\ell$, or equivalently, 
\[\left \lvert (\beta_X-\beta_Z)q_X+(\beta_Y-\beta_Z)q_Y \right \rvert \le  5\ell.\]
  If $\beta_X-\beta_Z=0$ and $\beta_Y-\beta_Z=0$, then $\beta_X=\beta_Y=\beta_Z$. If $\beta_Y-\beta_Z= 0$ and $\beta_X-\beta_Z\neq 0$, then
$LHS = |\beta_X-\beta_Z|\cdot q_X \ge q_X=10\ell>RHS$, a contradiction.  Finally, if $\beta_Y-\beta_Z\neq 0$, then $LHS \ge  |\beta_Y-\beta_Z|\cdot q_Y\ - \ |\beta_X-\beta_Z|\cdot q_X\  \ge q_Y - 2\ell q_X = 80\ell^2>RHS$, a contradiction.
\end{proof}

It is easy to see that the self-reduction from \cref{defn:selfred} preserves the $3$SUM solutions of $A$. Indeed, a $3$SUM solution that involves any integers from $A\setminus (G_1\cup\cdots\cup G_m)$ must be found in step~\ref{item:stephash}.
	Among the remaining integers, a $3$SUM solution $a_1+a_2+a_3=0$ with $a_1\in G_x,a_2\in G_y,a_3\in G_z$ must satisfy $x+y+z\in \Delta$, due to the almost linearity of $H$. Since $B_1,\dots,B_b$ partition $A$, we have $a_1\in G_x\cap B_{x'},a_2\in G_y\cap B_{y'},a_3\in G_z\cap B_{z'}$ for some $x',y',z'\in [b]$, so the shifted version of this solution, $(q_X U + a_1) + (q_Y U + a_2) +(q_ZU + a_3) = 0$, must be included in one of the $3$SUM instances (\cref{eqn:$3$SUMinstance}) generated in step~\ref{item:stepgen}. Conversely, any $3$SUM solution in a generated instance must use exactly one integer from each of the three parts in \cref{eqn:$3$SUMinstance} due to 
\cref{claim:bound},
		 and hence can be shifted back to a $3$SUM solution in $A$.

\begin{obs}
	Step~\ref{item:stephash} in \cref{defn:selfred} runs in expected $n^{2-\gamma+o(1)}$ time.
	\label{obs:subqua}
\end{obs}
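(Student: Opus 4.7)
The plan is to bound the running time of step \ref{item:stephash} by separately analyzing (i) the cost of evaluating $H$ on $A$ and forming the groups $G_i$, and (ii) the cost of the brute-force $3$SUM check over the removed elements. Part (i) is immediate: since $H\in \caH$ can be evaluated in $U^{o(1)}=n^{o(1)}$ time per element (as $U=n^3$), forming all $m$ groups takes $n^{1+o(1)}$ time, which is negligible compared to the target $n^{2-\gamma+o(1)}$.

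The heart of the argument is to bound the expected total number of removed elements using the almost $2$-universality guarantee \eqref{eqn:2uni} of \cref{lem:hash}. By linearity of expectation,
\[ \Ex\!\left[\sum_{i=1}^m |G_i|^2\right] = \sum_{a,a'\in A}\Pr_{H\in \caH}[H(a)=H(a')] \,\le\, |A| + |A|^2\cdot \frac{U^{o(1)}}{m} \,\le\, n^{1+\eps/4+o(1)},\]
using $m=\lceil n^{1-\eps/4}\rceil$ and $U^{o(1)}=n^{o(1)}$. Setting the threshold $T:=n^{1+\gamma}/m = n^{\gamma+\eps/4}$, any group $G_i$ with $|G_i|>T$ contributes at least $T\cdot |G_i|$ to $\sum_i |G_i|^2$, so the total number of removed elements satisfies
\[ N \;:=\; \sum_{i:\,|G_i|>T}|G_i| \;\le\; \frac{1}{T}\sum_{i=1}^m |G_i|^2,\]
and hence $\Ex[N]\le n^{1+\eps/4+o(1)}/n^{\gamma+\eps/4} = n^{1-\gamma+o(1)}$.

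For the brute-force step, after sorting $A$ in $O(n\log n)$ time, checking whether a given removed element $a$ participates in some $3$SUM solution reduces to iterating over $b\in A$ and testing $-a-b\in A$, which takes $O(n\log n)$ deterministic time per removed element. Summing over all removed elements gives an expected brute-force cost of $O(n\log n)\cdot \Ex[N] = n^{2-\gamma+o(1)}$. Combined with (i), this yields the claimed expected running time.

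The only minor subtlety is pinning down the right threshold $T$ in terms of $m$ and $\gamma$ so that both the ``mass in heavy groups'' bound and the ``per-removed-element'' brute-force cost balance to $n^{2-\gamma+o(1)}$; the calculation above confirms that $T=n^{1+\gamma}/m$ is the correct choice and the almost $2$-universality of $\caH$ is exactly the property needed.
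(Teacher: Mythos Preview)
Your proof is correct and follows essentially the same approach as the paper. The only cosmetic difference is packaging: the paper bounds $\Ex[|G_{H(a)}|]$ for each $a$ and applies Markov's inequality per element, whereas you compute the global quantity $\Ex[\sum_i |G_i|^2]$ and use the deterministic inequality $N\le \tfrac{1}{T}\sum_i |G_i|^2$; since $\sum_i |G_i|^2=\sum_{a\in A}|G_{H(a)}|$, these are the same computation and yield the identical bound $\Ex[N]\le n^{1-\gamma+o(1)}$.
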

\begin{proof}
For each integer $a\in A$, by 2-universality of $H$ (\cref{eqn:2uni}), the expected size of $G_{H(a)}$ is at most $1+(n-1)\cdot \frac{U^{o(1)}}{m}$. Note that we remove $a$ only if $|G_{H(a)}|>n^{1+\gamma}/m$, which happens with probability at most $1/n^{\gamma - o(1)}$ by Markov's inequality, so the total number of removed elements is at most $n^{1-\gamma+o(1)}$ in expectation.

Sort $A$ at the beginning. For each removed integer $a\in A$, it takes an $O(n)$-time scan to check for $3$SUM solutions involving $a$.  Hence, the expected total time to check removed elements is $n^{2-\gamma+o(1)}$.
\end{proof}

The instances generated by the self-reduction (\cref{defn:selfred}) may still contain a few nontrivial 4-term $\ell$-relations. The next step is to remove the elements involved in such relations, so that the remaining elements in each instance are completely free of nontrivial 4-term $\ell$-relations.
To do this, we first need to analyze the expected total number of nontrivial 4-term $\ell$-relations across all the generated instances.
To better understand the following technical parts, readers are encouraged to think of the representative case $\beta_1=\beta_3=1, \beta_2=\beta_4=-1$, i.e., Sidon 4-tuples.

\begin{lemma}[Types of nontrivial $4$-term $\ell$-relations]
	\label{lem:type-relation}
	Denote the $3$SUM instance defined in \cref{eqn:$3$SUMinstance} by $P_X\cup P_Y\cup P_Z$ for short. Then, every nontrivial 4-term $\ell$-relation $\sum_{i=1}^4 \beta_i a'_i=0$ on integers $a'_i\in P_X\cup P_Y\cup P_Z$ must have one of the following two types (up to permuting indices $\{1,2,3,4\}$ and/or $\{X,Y,Z\}$):
	\begin{enumerate}
		\item $a'_1,a'_2,a'_3,a'_4\in P_X$, and all $\beta_i$ are non-zero.
		
		We say this relation is induced by the nontrivial $\ell$-relation $\sum_{i=1}^4\beta_i a_i=0$ in $A$, where $a_i = a'_i - q_XU \in G_x\cap B_{x'}$ ($i\in [4]$).
		\item $a'_1,a'_2\in P_X$, $a'_3,a'_4\in P_Y$, $\beta_1+\beta_2=\beta_3+\beta_4=0$, and all $\beta_i$ are non-zero.
	
		We say this relation is induced by the $\ell$-relation $\sum_{i=1}^4\beta_i a_i=0$ in $A$, where $a_1= a'_1 - q_XU, a_2=a'_2 -q_XU \in G_x\cap B_{x'}$ and $a_3= a'_3 - q_YU, a_4=a'_4 -q_YU \in G_y\cap B_{y'}$.
	\end{enumerate}
\end{lemma}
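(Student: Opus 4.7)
The plan is to record which part each $a'_i$ lies in, reduce the equation to an integer linear relation on the shift coefficients, and then case-split on how the four indices distribute among $\{X,Y,Z\}$, using the freeness of $B_{x'},B_{y'},B_{z'}$ from nontrivial $3$-term $\ell$-relations (\cref{lem:behrend}) to rule out the mixed cases.

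Let $t_i\in\{X,Y,Z\}$ satisfy $a'_i\in P_{t_i}$, write $a'_i = q_{t_i}U + a_i$ with $|a_i|\le U$, and set $\alpha_c := \sum_{i:t_i=c}\beta_i$. The relation $\sum_i \beta_i a'_i = 0$ then reads
\[
U\bigl(\alpha_X q_X + \alpha_Y q_Y + \alpha_Z q_Z\bigr) + \sum_i \beta_i a_i = 0,
\]
so $|\alpha_X q_X + \alpha_Y q_Y + \alpha_Z q_Z| \le 4\ell$, while $\alpha_X+\alpha_Y+\alpha_Z = 0$ and $|\alpha_c|\le 4\ell$. Substituting $\alpha_Z = -\alpha_X-\alpha_Y$ and expanding gives $\alpha_X q_X + \alpha_Y q_Y + \alpha_Z q_Z = 10\ell\bigl[(2+10\ell)\alpha_X + (1+20\ell)\alpha_Y\bigr]$, so the integer bracket has magnitude at most $2/5$ and must vanish. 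From $(2+10\ell)\alpha_X = -(1+20\ell)\alpha_Y$ together with $\gcd(2+10\ell,1+20\ell)\in\{1,3\}$, I deduce that $(1+20\ell)/3$ divides $\alpha_X$; since $|\alpha_X|\le 4\ell < (1+20\ell)/3$, this forces $\alpha_X=0$, and hence $\alpha_Y=\alpha_Z=0$.

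With $\alpha_X=\alpha_Y=\alpha_Z=0$ in hand, I case-split on the multiset $\{t_1,\dots,t_4\}$. In the 4-0-0 case (WLOG all $a'_i\in P_X$), the shifted equation $\sum_i \beta_i a_i = 0$ sits inside $B_{x'}$. If some $\beta_i=0$, the remaining three terms form a $3$-term $\ell$-relation in $B_{x'}$ which is trivial by \cref{lem:behrend}; reinstating the zero summand on $a'_i\in P_X$ keeps the $4$-tuple trivial, contradicting nontriviality. So all $\beta_i\ne 0$, giving Type~1. In the 2-2-0 case (WLOG $t_1=t_2=X,\ t_3=t_4=Y$), the identities $\beta_1+\beta_2=\alpha_X=0$ and $\beta_3+\beta_4=\alpha_Y=0$ are exactly the Type~2 conditions; vanishing of any $\beta_i$ forces its partner to vanish and collapses the relation to $\beta_j(a'_j-a'_{j'})=0$ on a two-element multiset, which is trivial, so all $\beta_i\ne 0$. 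In the 3-1-0 case (WLOG $t_1{=}t_2{=}t_3{=}X,\ t_4{=}Y$), the singleton coefficient $\beta_4=\alpha_Y=0$ and the triple satisfies $\beta_1+\beta_2+\beta_3=\alpha_X=0$, so it forms a $3$-term $\ell$-relation in $B_{x'}$ and is trivial by \cref{lem:behrend}; since $a'_4\in P_Y$ is disjoint from $P_X$, appending the zero summand keeps the full $4$-tuple trivial. The 2-1-1 case is analogous: both singletons have zero coefficient and $\beta_1+\beta_2=0$, so $\beta_1(a'_1-a'_2)=0$ yields either all-zero coefficients or $a'_1=a'_2$, both producing a trivial solution.

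The main obstacle I anticipate is upgrading \cref{claim:bound} from coefficients in $[-\ell,\ell]$ to sum-of-four coefficients in $[-4\ell,4\ell]$: a direct application of \cref{claim:bound} fails because the slack between $|\alpha_X q_X+\alpha_Y q_Y+\alpha_Z q_Z|\le 4\ell$ and the smallest nontrivial combination is no longer automatic. Factoring out the common $10\ell$ from $q_X,q_Y,q_Z$ and exploiting the near-coprimality of $2+10\ell$ and $1+20\ell$ closes this gap. The remaining enumeration over the four distribution patterns is a routine application of Behrend's property of the sets $B_{\cdot}$.
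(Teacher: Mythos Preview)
Your proof is correct and follows the same overall architecture as the paper: derive that the aggregated shift coefficients $\alpha_X,\alpha_Y,\alpha_Z$ must all vanish, then case-split on how the four $a'_i$ distribute among the three parts, using the Behrend property of the $B_{\cdot}$ sets to eliminate the $3$--$1$--$0$ and $2$--$1$--$1$ patterns.

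The one genuine technical difference is how you establish $\alpha_X=\alpha_Y=\alpha_Z=0$. The paper invokes \cref{claim:bound} separately in each case (formally for coefficients in $[-\ell,\ell]$, though the combined coefficients like $\beta_1+\beta_4$ can be larger; the slack in the proof of \cref{claim:bound} still absorbs this). You instead prove the constraint once, directly and uniformly for $|\alpha_c|\le 4\ell$, by factoring out $10\ell$ and using the near-coprimality of $2+10\ell$ and $1+20\ell$. This buys you a cleaner argument that does not rely on re-inspecting the proof of \cref{claim:bound} with enlarged coefficient ranges, and lets the subsequent case analysis be purely combinatorial. Your handling of the $3$--$1$--$0$ and $4$--$0$--$0$ cases is also slightly more explicit than the paper's: you spell out why a trivial $3$-term relation plus a zero coefficient on a (possibly new) fourth element yields a trivial $4$-term relation, whereas the paper asserts the $3$-term relation is nontrivial without detailing this step.
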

\begin{proof}
	First note that $\beta_1,\beta_2,\beta_3,\beta_4$ cannot contain more than one zero. Otherwise, without loss of generality suppose $\beta_1=\beta_2=0$. Then $\beta_3=-\beta_4\neq 0$, and from $\beta_3a'_3+\beta_4a'_4 = -\beta_1a'_1-\beta_2a'_2 = 0$ we know $a'_3=a'_4$, meaning that $\sum_{i=1}^4 \beta_i a'_i=0$ is not a nontrivial relation, a contradiction.

Since $P_X = q_X U + (G_x\cap B_{x'})$ and $G_x\subset [-U,U]$, we have $P_X \subset [(q_X -1)U, (q_X +1)U]$. Similarly,
\begin{equation}
	\label{eqn:bound}
	 P_Y\subset [(q_Y -1)U, (q_Y +1)U], P_Z\subset [(q_Z-1)U, (q_Z +1)U].
\end{equation}
Now we use \cref{claim:bound} to rule out other ways of partitioning $a_1',a_2',a_3',a_4'$ into $P_X,P_Y,P_Z$.
\begin{itemize}
	\item Suppose $a'_1,a'_4 \in P_X, a'_2\in P_Y$, and $a'_3\in P_Z$.

	Then, $a'_1/U,a'_4/U \in [q_X-1,q_X+1], a'_2/U \in [q_Y-1,q_Y+1], a'_3/U \in [q_Z-1,q_Z+1]$, and hence 
\[ \Big \lvert (\beta_1+\beta_4)q_X + \beta_2q_Y + \beta_3 q_Z\,-\,\sum_{i=1}^4\beta_i a'_i/U  \Big \rvert \le \sum_{i=1}^4|\beta_i|. \] 	
Combining with $\sum_{i=1}^4\beta_i a'_i=0$ and $|\beta_i|\le \ell$, we get
\[  \lvert (\beta_1+\beta_4)q_X + \beta_2q_Y + \beta_3 q_Z\rvert \le 4\ell, \] 	
which implies $\beta_1+\beta_4=\beta_2=\beta_3$ by \cref{claim:bound}. Since $\sum_{i=1}^4\beta_i=0$, we must have $\beta_2=\beta_3=0$, but we already showed that $\beta_1,\beta_2,\beta_3,\beta_4$ contain at most one zero, a contradiction. 
	\item Suppose $a'_1,a'_2,a'_3\in P_X, a'_4 \in P_Y$.
	
	Using a similar reasoning to the previous case, we obtain
\[  \lvert (\beta_1+\beta_2+\beta_3)q_X + \beta_4q_Y \rvert \le 4\ell, \] 	
which implies $\beta_1+\beta_2+\beta_3=\beta_4=0$ by \cref{claim:bound}.
For $i\in \{1,2,3\}$ let $a_i:= a'_i-q_X U \in P_X - q_XU \subseteq B_{x'}$. Then $\sum_{i=1}^3 \beta_i a_i = \sum_{i=1}^3 \beta_i a'_i -\sum_{i=1}^3\beta_i q_XU = 0-0=0$ is a nontrivial 3-term $\ell$-relation in $B_{x'}$, contradicting the fact that $B_{x'}$ avoids nontrivial 3-term $\ell$-relations.
\end{itemize}
Up to permuting indices $\{1,2,3,4\}$ and/or $\{X,Y,Z\}$, we are left with the two cases claimed in the statement.

In the second case (where $a'_1,a'_2\in P_X$, $a'_3,a'_4\in P_Y$), we can use a similar reasoning to obtain $|(\beta_1+\beta_2)q_X + (\beta_3+\beta_4)q_Y|\le 4\ell$, which implies $\beta_1+\beta_2=\beta_3+\beta_4=0$ by \cref{claim:bound}. Note that none of $\beta_1,\beta_2,\beta_3,\beta_4$ can be zero, since otherwise we would have two zeros $\beta_1=\beta_2=0$ (or $\beta_3=\beta_4=0$).

In the first case (where $a'_1,a'_2,a'_3,a'_4\in P_X$), if $\beta_4=0$, then we would have the same contradiction as in the case of $a'_1,a'_2,a'_3\in P_X, a'_4\in P_Y$. So $\beta_4\neq 0$ (and the same holds for $\beta_1,\beta_2,\beta_3$).
\end{proof}

\cref{lem:type-relation} shows that the nontrivial $4$-term $\ell$-relations $\sum_{i=1}^4 \beta_ia_i'=0$ in the generated instances (\cref{eqn:$3$SUMinstance}) are always induced by $4$-term $\ell$-relations $\sum_{i=1}^4 \beta_ia_i=0$ (where $\beta_i$ are non-zero) from the original input set $A$.
  For each of them, we can bound the expected number of nontrivial relations it induces in the generated instances, and by linearity of expectation this allows us to bound the total number of such relations in the generated instances.
 This is the key property of our reduction.
\begin{lemma}
	\label{lem:boundnumber}
	The expected total number of nontrivial $4$-term $\ell$-relations in all instances (\cref{eqn:$3$SUMinstance}) generated by the self-reduction (\cref{defn:selfred}) is at most $E(A) \cdot n^{o(1)}/m$.
\end{lemma}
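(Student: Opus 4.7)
The plan is to invoke \cref{lem:type-relation} to decompose every nontrivial $4$-term $\ell$-relation occurring in a generated instance into either \textbf{Type~1} (all four elements sit in one shifted part $P_X$) or \textbf{Type~2} (two elements in $P_X$, two in $P_Y$, with $\beta_1+\beta_2=\beta_3+\beta_4=0$), and to bound the expected count of each type by summing over ordered $4$-tuples $(a_1,\dots,a_4)\in A^4$ and coefficient vectors $(\beta_1,\dots,\beta_4)\in [-\ell,\ell]^4$ with the appropriate structure. Since $\ell = \exp((\log U)^{1/3}) = U^{o(1)}$, there are only $U^{o(1)}$ such coefficient vectors, and for each one with all $\beta_i\neq 0$, \cref{lem:count4} caps the number of $4$-tuples satisfying $\sum_i\beta_i a_i=0$ at $E(A)$.

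For Type~1, I would first verify by a small casework on the multiset structure of $\{a_1,a_2,a_3,a_4\}$ that any nontrivial solution with all $\beta_i\neq 0$ summing to zero has at least three distinct values (patterns with one or two distinct values force all coefficient-sums at each value to vanish, hence trivial). Pick any distinct triple $(u,v,w)$ from the four. If the $4$-tuple does not lie entirely in one Behrend set it cannot form a Type~1 relation, so we restrict to the case where all four elements share a Behrend set $B_{x'}$. Then $(u,v,w)$ inherits the Behrend property of avoiding nontrivial $3$-term $\ell$-relations, and distinctness promotes this to avoiding \emph{all} $3$-term $\ell$-relations, so \cref{lem:hash} gives
\[\Pr[H(a_1)=H(a_2)=H(a_3)=H(a_4)]\le \Pr[H(u)=H(v)=H(w)]\le U^{o(1)}/m^2.\]
Once $x'$ and the common hash value $x$ are determined, the $4$-tuple appears as Type~1 in at most $m\cdot b^2\cdot |\Delta|=m\cdot U^{o(1)}$ instances (free $y\in[m]$, $y',z'\in[b]$, and $z$ constrained by $x+y+z\in\Delta$), so the expected Type~1 contribution is at most $U^{o(1)}\cdot E(A)\cdot (U^{o(1)}/m^2)\cdot (m\cdot U^{o(1)}) = E(A)\cdot n^{o(1)}/m$.

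For Type~2, the equation $\beta_1(a_1-a_2)+\beta_3(a_3-a_4)=0$ with $\beta_1,\beta_3\neq 0$ shows that $a_1=a_2$ forces $a_3=a_4$ and yields only trivial solutions, so we may assume $a_1\neq a_2$ and $a_3\neq a_4$. The $2$-universality guarantee of \cref{lem:hash} then immediately yields $\Pr[H(a_1)=H(a_2), H(a_3)=H(a_4)]\le \Pr[H(a_1)=H(a_2)]\le U^{o(1)}/m$. After fixing the Behrend indices $x',y'$ and the hash values $x,y$, only $b\cdot|\Delta|=U^{o(1)}$ instances can contain this split (choice of $z'$ and constrained $z$), so the Type~2 contribution is bounded by $U^{o(1)}\cdot E(A)\cdot (U^{o(1)}/m)\cdot U^{o(1)}= E(A)\cdot n^{o(1)}/m$. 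Adding the two contributions and recalling $U=n^3$ yields the claim. The main delicate point is the Type~1 argument: plain $2$-universality would only give $U^{o(1)}/m$ for the hash collision probability, which combined with the $m\cdot U^{o(1)}$ instance count would be too weak by a factor of $m$; one must leverage the Behrend structure to upgrade the bound to $U^{o(1)}/m^2$, and for that the reduction to a distinct triple in a Behrend set is essential.
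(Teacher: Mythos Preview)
The proposal is correct and follows essentially the same argument as the paper: split via \cref{lem:type-relation} into Type~1 and Type~2, use \cref{lem:count4} and the $\ell^{O(1)}=U^{o(1)}$ coefficient count to bound the relevant $4$-tuples by $E(A)\cdot U^{o(1)}$, then for Type~1 exploit the Behrend partition to obtain a distinct triple without $3$-term $\ell$-relations and apply the almost $3$-universality of \cref{lem:hash} (combined with the $m\cdot|\Delta|\cdot b^2$ instance count), and for Type~2 use almost $2$-universality on $a_1\neq a_2$ (combined with the $|\Delta|\cdot b$ instance count). Your handling of the Behrend-set restriction in Type~1 is slightly more explicit than the paper's, but the logic is identical.
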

\begin{proof}
	For each $4$-term $\ell$-relation $\sum_{i=1}^4\beta_i a_i = 0$ where $a_i\in A, \beta_i\neq 0$ for all $i\in [4]$, we separately analyze the expected number of nontrivial $\ell$-relations $\sum_{i=1}^4\beta_i a_i' = 0$ it induces for each of the two types defined in \cref{lem:type-relation}.
	\begin{itemize}
		\item Type 1: $a'_1,a'_2,a'_3,a'_4\in P_X$, and $a_i = a'_i - q_XU$. Note that $\sum_{i=1}^4\beta_i a_i = 0$ is also a nontrivial relation.
		
		This can happen only if $a_1,a_2,a_3,a_4\in G_x\cap B_{x'}$ for some $x\in [m]$ and $x'\in [b]$. We show that $a_1,a_2,a_3,a_4$ contain at least $3$ distinct integers. Otherwise, $\{a_1,a_2,a_3,a_4\} \subseteq \{u,v\}$ for some $u\neq v$, and combining $\sum_{i=1}^4\beta_i = 0$ and $u\sum_{i: a_i= u}\beta_i  + v\sum_{i: a_i= v}\beta_i = 0$ would imply $\sum_{i: a_i= u}\beta_i  = \sum_{i: a_i= v}\beta_i = 0$, contradicting to the fact that $\sum_{i=1}^4\beta_i a_i = 0$ is a nontrivial relation.

		Without loss of generality assume $a_1,a_2,a_3$ are distinct. Since $a_1,a_2,a_3\in B_{x'}$, they do not have any $3$-term $\ell$-relation. Then, by the 3-universal property (\cref{eqn:3ni}) of the hash family (\cref{lem:hash}), we have 
\[ \Pr_{H\in \caH}[a_1,a_2,a_3,a_4 \in G_x \text{ for some $x\in [m]$}] \le \Pr_{H\in \caH}[H(a_1)=H(a_2)=H(a_3)] \le \frac{n^{o(1)}}{m^2}.\]

If $a_1,a_2,a_3,a_4\in G_x\cap B_{x'}$ happens for some $x\in [m]$ and $x'\in [b]$, then it may induce a nontrivial $4$-term $\ell$-relation in every instance that involve $G_x\cap B_{x'}$. Such instances (indexed by $(x,y,z,x',y',z')\in [m]^3\times [b]^3$ in \cref{defn:selfred}) should satisfy $y+z\in \Delta-x$, so there are only $m\cdot |\Delta|\cdot b^2 $ such instances.
So the expected total number of nontrivial $4$-term $\ell$-relations induced by
$\sum_{i=1}^4\beta_i a_i = 0$
is  at most $\frac{n^{o(1)}}{m^2} \cdot (m\cdot |\Delta|\cdot b^2) \le n^{o(1)}/m$ in expectation.
\item Type 2: $a'_1,a'_2\in P_X$, $a'_3,a'_4\in P_Y$, $\beta_1+\beta_2=\beta_3+\beta_4=0$,
and $a'_1-a_1 = a'_2-a_2 = q_XU, a'_3-a_3=a'_4-a_4=q_YU$.

This can happen only if $a_1,a_2\in G_x\cap B_{x'},a_3,a_4\in G_y\cap B_{y'}$ for some $x,y\in [m]$ and $x',y'\in [b]$. 
Observe that $a_1 \neq a_2$, since otherwise we must have $a_3=a_4$ as well, which would imply $a'_1=a'_2$ and $a'_3=a'_4$, contradicting the assumption that $\sum_{i=1}^4\beta_i a_i' = 0$ is a nontrivial relation.

Then, by the 2-universal property (\cref{eqn:2uni}) of the hash family (\cref{lem:hash}), we have 
\[ \Pr_{H\in \caH}[a_1,a_2 \in G_x \text{ for some $x\in [m]$}] \le \Pr_{H\in \caH}[H(a_1)=H(a_2)] \le \frac{n^{o(1)}}{m}.\]
If $a_1,a_2\in G_x\cap B_{x'},a_3,a_4\in G_y\cap B_{y'}$ happen for some $x,y\in [m]$ and $x',y'\in [b]$, then it may induce a nontrivial $4$-term $\ell$-relation in every instance that involve both $G_x\cap B_{x'}$ and $G_y\cap B_{y'}$. There are only $|\Delta|\cdot b$ such instances, so the expected total number of nontrivial $4$-term $\ell$-relations induced by
$\sum_{i=1}^4\beta_i a_i = 0$
is  at most $\frac{n^{o(1)}}{m} \cdot (|\Delta|\cdot b) \le n^{o(1)}/m$ in expectation.
	\end{itemize}
There are at most $O(\ell^3)$ equations $\sum_{i=1}^4\beta_i a_i = 0$ with integer coefficients $\beta_i \in [-\ell,\ell] \setminus\{0\}$ and $\sum_{i=1}^4\beta_i=0$, and each of them has at most $E(A)$ solutions in $A$ by \cref{lem:count4}, so there are at most $O(\ell^3 E(A))$ such  $4$-term $\ell$-relations in $A$.  
Summing over all of them (and accounting for all possible ways of permuting $\{X,Y,Z\}$ and/or $\{1,2,3,4\}$ in the types), by linearity of expectation, the expected total number of induced nontrivial $4$-term $\ell$-relations over all generated instances is 
$ O(\ell^3 E(A)) \cdot n^{o(1)}/m \le E(A)\cdot n^{o(1)}/m$.
\end{proof}

Now, we describe how to efficiently \emph{report} all the nontrivial $4$-term $\ell$-relations in the generated instances.
\begin{lemma}
	\label{lem:report}
	We can report all the nontrivial $4$-term $\ell$-relations in all instances (\cref{eqn:$3$SUMinstance}) generated by \cref{defn:selfred} in time linear in their number, plus $n^{2+2\gamma + o(1) }/m $ additional time.
\end{lemma}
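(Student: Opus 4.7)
The plan is to handle Type 1 and Type 2 nontrivial $4$-term $\ell$-relations separately, following the case analysis of \cref{lem:type-relation}, using hash-based matching of pairs of elements. For Type 1 (all four elements in a single sub-bucket $G_x\cap B_{x'}$, with all $\beta_i\ne 0$), I would iterate over the $O(\ell^3)=n^{o(1)}$ coefficient tuples $(\beta_1,\beta_2,\beta_3,\beta_4)$ with $\sum_i\beta_i=0$. For each sub-bucket, enumerate ordered pairs $(a_1,a_2)$ and $(a_3,a_4)$, hash the former by the key $\beta_1 a_1+\beta_2 a_2$ and the latter by $-\beta_3 a_3-\beta_4 a_4$, and report each matching $4$-tuple after discarding trivial solutions (those where $\sum_{i:a_i=a}\beta_i=0$ for every value $a$). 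Each surviving relation is emitted in all $m\cdot|\Delta|\cdot b^2 = m\cdot n^{o(1)}$ instances that involve the sub-bucket $(x,x')$, namely those indexed by $(x,y,z,x',y',z')\in[m]^3\times[b]^3$ with $y+z\in\Delta-x$.

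For Type 2 (pairs $(a_1,a_2)\in(G_x\cap B_{x'})^2$ and $(a_3,a_4)\in(G_y\cap B_{y'})^2$ with $\beta_1+\beta_2=\beta_3+\beta_4=0$ and all $\beta_i\ne 0$), I exploit the equivalent reformulation $\beta_1(a_1-a_2)=\beta_3(a_3-a_4)$. For each of the $O(\ell^2)=n^{o(1)}$ coefficient pairs $(\beta_1,\beta_3)$, I would build a global hash table indexing all ordered pairs $(a_3,a_4)$ across all sub-buckets, keyed by $\beta_3(a_3-a_4)$ and carrying a pointer to the enclosing sub-bucket $(G_y\cap B_{y'})$. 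Then for each ordered pair $(a_1,a_2)$ in each sub-bucket $G_x\cap B_{x'}$, I look up matches, filter out the (easily detected) trivial relations, and emit each surviving relation in all $|\Delta|\cdot b=n^{o(1)}$ instances containing both sub-buckets.

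For the additive overhead, the total ordered-pair count summed over sub-buckets is
\[\sum_{x,x'}|G_x\cap B_{x'}|^2\le \max_{x,x'}|G_x\cap B_{x'}|\cdot \sum_{x,x'}|G_x\cap B_{x'}|\le (n^{1+\gamma}/m)\cdot n=n^{2+\gamma}/m,\]
using the cap on bucket sizes from step~\ref{item:stephash} of \cref{defn:selfred}. Multiplying by $n^{o(1)}$ factors accounting for the coefficient tuples, hash operations, Behrend partition indexing, and $|\Delta|$, gives enumeration time $n^{2+\gamma+o(1)}/m\le n^{2+2\gamma+o(1)}/m$. Each surviving relation is then charged $O(1)$ per instance it is reported in, matching the ``linear in their number'' term.

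The main obstacle I expect is the bookkeeping rather than the matching itself: ensuring every nontrivial $\ell$-relation in a generated instance is reported exactly once per instance it appears in, without double-counting across the symmetries arising from permuting $\{X,Y,Z\}$ and $\{1,2,3,4\}$ in \cref{lem:type-relation}. I would resolve this by fixing a canonical orientation (e.g., lexicographic) of $(\beta_1,\beta_2,\beta_3,\beta_4)$ before matching, iterating over ordered sub-bucket pairs $((x,x'),(y,y'))$ in a fixed order for Type 2, and relying on the nontriviality filter (constant-time per candidate since $4$-tuples have constant size) to discard degenerate matches introduced by the hashing scheme.
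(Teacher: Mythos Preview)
Your proposal is correct and follows essentially the same approach as the paper: precompute, for each sub-bucket $G_x\cap B_{x'}$ and each nonzero coefficient pair $(\beta_1,\beta_2)$, the weighted sums $\beta_1 a_1+\beta_2 a_2$ over ordered pairs; then match within a single sub-bucket for Type~1 and across a global index for Type~2, charging the matches to the ``linear in their number'' term. Your overhead bound $\sum_{x,x'}|G_x\cap B_{x'}|^2\le (n^{1+\gamma}/m)\cdot n = n^{2+\gamma}/m$ is in fact slightly sharper than the paper's (which bounds the sum by $mb\cdot(n^{1+\gamma}/m)^2$), but both fit under the stated $n^{2+2\gamma+o(1)}/m$.
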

\begin{proof}
	We first do the following pre-processing step.
For every  $x\in [m], x'\in [b]$, and every $\beta_1,\beta_2 \in \Z \cap [-\ell,\ell] \setminus \{0\}$, compute the set of tuples 
	\begin{equation}
		D_{x,x'}^{\beta_1,\beta_2} = \{(\beta_1 a_1 +\beta_2 a_2,\,  a_1, a_2): a_1,a_2\in G_x\cap B_{x'}, a_1\neq a_2\}.
	\end{equation}
	Then for every $t\in \Z$ and $\beta \in \Z \cap [-\ell,\ell] \setminus \{0\}$, compute a bucket $L_t^\beta$ that contains all tuples $(t,a_1,a_2)$ appearing in any set $D_{x,x'}^{\beta,-\beta}$. Technically, every tuple in the bucket $L_t$ also records which set $D_{x,x'}^{\beta,-\beta}$ it comes from.
This pre-processing step can be implemented in time 
\[  \tilde O\Big (\ell^2 \cdot \sum_{x\in [m],x'\in [b]} |G_x\cap B_{x'}|^2 \Big ) \le \tilde O(\ell^2 mb\cdot (n^{1+\gamma}/m)^2 ) \le n^{2+2\gamma + o(1) }/m. \]

To report all the nontrivial $4$-term $\ell$-relations $\sum_{i=1}^4\beta_i a_i' = 0$ in the generated instances, again we separately consider type 1 and type 2 as defined in \cref{lem:type-relation}.
	\begin{itemize}
		\item Type 1: induced by nontrivial $\ell$-relation $\sum_{i=1}^4\beta_i a_i=0$, where $a_i \in G_x\cap B_{x'}$ ($i\in [4]$).

		For every $x\in [m], x'\in [b]$, to find all nontrivial $4$-term $\ell$-relations in $G_x\cap B_{x'}$, simply enumerate $\beta_1,\beta_2,\beta_3,\beta_4$  and  compare $D_{x,x'}^{\beta_1,\beta_2}, D_{x,x'}^{-\beta_3,-\beta_4}$ to find common sums $\beta_1 a_1 +\beta_2 a_2=-\beta_3 a_3 -\beta_4 a_4$.\footnote{By definition of $D_{x,x'}^{\beta_1,\beta_2}$, here we only find the relations with $a_1\neq a_2$ and $a_3\neq a_4$. This already covers all the possibilities, since we argued in the proof of \cref{lem:boundnumber} that $a_1,a_2,a_3,a_4$ must contain at least three distinct integers.}
		Then we immediately find the induced relations $\sum_{i=1}^4\beta_i a_i' = 0$ in all instances that involve $G_x\cap B_{x'}$.
		\item Type 2: induced by $\ell$-relation $\sum_{i=1}^4\beta_i a_i=0$, where $\beta_1+\beta_2=\beta_3+\beta_4=0$ and $a_1,a_2\in G_x \cap B_{x'}, a_3,a_4\in G_y\cap B_{y'}$.
		
		For every $t\in \Z$ and $\beta_1,\beta_3\in \Z\cap [-\ell,\ell]\setminus \{0\}$, enumerate every pair of $(t,a_1,a_2)\in L_{t}^{\beta_1}, (t,a_3,a_4)\in L_{t}^{-\beta_3}$, which gives the an $\ell$-relation $\beta_1a_1-\beta_1a_2 =t =-\beta_3 a_3 + \beta_3 a_4$ in $A$. If $(t,a_1,a_2)\in D_{x,x'}^{\beta_1,-\beta_1}$ and $(t,a_3,a_4)\in D_{y,y'}^{-\beta_3,\beta_3}$, then it induces a nontrivial $\ell$-relation $\sum_{i=1}^4 \beta a'_i=0$ in every instance involving both $(G_x\cap B_{x'})$ and $(G_y\cap B_{y'})$, where $a'_1, a'_2 \in q_XU + (G_x\cap B_{x'}), a'_3, a'_4 \in q_YU + (G_y\cap B_{y'})$.
	\end{itemize}

	 In both cases, after the pre-processing is finished, reporting the $\ell$-relations does not incur any extra overhead in time complexity. 
\end{proof}

After finding all the nontrivial $4$-term $\ell$-relations in the generated instances, the final step is to remove these involved integers, so that the remaining integers in each instance completely avoid all nontrivial $4$-term $\ell$-relations. The reduction is summarized in the following theorem.
\begin{theorem}
	\label{thm:main-largeinput}
Suppose for some constant $\delta>0$ there is an $O(n_0^{2-\delta})$-time algorithm $\caA$ that solves $3$SUM on input set $A_0\subset \Z \cap [-n_0^{12000},n_0^{12000}]$ of size $|A_0|\le n_0$ that avoids nontrivial $4$-term $\exp((\log n_0)^{1/3})$-relations.

Then, there is an algorithm that solves $3$SUM on size-$n$ input set $A\subset \Z \cap [-n^3,n^3]$ in $O(n^{2-\delta'})$ time for some constant $\delta'>0$ depending on $\delta$. Moreover, this reduction is Las Vegas randomized.
\end{theorem}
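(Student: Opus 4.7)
The plan is to chain together four ingredients prepared above: Theorem~\ref{thm:moderate-energy-reduction} for energy reduction, the self-reduction of Definition~\ref{defn:selfred}, Lemma~\ref{lem:report} to enumerate the residual nontrivial $4$-term $\ell$-relations, and finally $\caA$ itself on the cleaned small instances.

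Given input $A\subset\Z\cap[-n^3,n^3]$ of size $n$, I first apply Theorem~\ref{thm:moderate-energy-reduction} in $O(n^{2-\Omega(1)})$ time; with constant probability this either finds a $3$SUM solution or replaces $A$ by a subset of moderate energy $E(A)\le|A|^{3-\eps_0}$ for a universal constant $\eps_0>0$, so I may assume this bound on the input after retrying as needed. I then run the self-reduction of Definition~\ref{defn:selfred} with $m=\lceil n^{1-\eps_0/4}\rceil$ and a small parameter $\gamma\in(0,\eps_0)$ to be fixed. Observation~\ref{obs:subqua} handles oversize buckets in $n^{2-\gamma+o(1)}$ expected time, and the surviving elements produce $m^2 b^3|\Delta|=n^{2-\eps_0/2+o(1)}$ small $3$SUM instances, each of size at most $n_0:=\lceil 3n^{\gamma+\eps_0/4}\rceil$ and with integers in $[-n^{3+o(1)},n^{3+o(1)}]$.

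Next I invoke Lemma~\ref{lem:report} with $\ell=\lceil\exp((\log n^3)^{1/3})\rceil$ to enumerate every nontrivial $4$-term $\ell$-relation appearing in any generated instance. Lemma~\ref{lem:boundnumber} bounds the expected total count by $E(A)\cdot n^{o(1)}/m\le n^{2-3\eps_0/4+o(1)}$, and the enumeration runs in that time plus $n^{1+2\gamma+\eps_0/4+o(1)}$ preprocessing. For each reported relation I delete its (at most four) offending integers from the host instance; before each deletion I brute-force scan the host instance, backed by a hash table, for any $3$SUM solution of $A$ routed through this instance that uses the integer being deleted. Since by \cref{lem:type-relation} every $3$SUM solution of $A$ is preserved in a unique small instance of the self-reduction, these local scans suffice, and the total cost is $n^{2-3\eps_0/4+o(1)}\cdot O(n_0)=n^{2-\eps_0/2+\gamma+o(1)}$.

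After the cleanup every remaining instance avoids all nontrivial $4$-term $\ell$-relations, hence also nontrivial $4$-term $\exp((\log n_0)^{1/3})$-relations since $n^3\ge n_0$; they are valid inputs to $\caA$ as long as $n^{3+o(1)}\le n_0^{12000}$, which reduces to $\gamma+\eps_0/4\ge 1/4000$ and is comfortably accommodated by the enormous exponent $12000$ in the hypothesis. Running $\caA$ on all cleaned instances costs
\[
n^{2-\eps_0/2+o(1)}\cdot n_0^{2-\delta}\;=\;n^{2+\gamma(2-\delta)-\delta\eps_0/4+o(1)}.
\]
Any $\gamma$ in the nonempty interval $\bigl(\max\{0,\ 1/4000-\eps_0/4\},\ \delta\eps_0/(4(2-\delta))\bigr)$ makes every stage run in $n^{2-\delta'}$ expected time for some constant $\delta'>0$ depending only on $\delta$ and $\eps_0$. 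The main technical care I foresee is in Step~3: the local brute-force scans must correctly account for the shifts $q_X,q_Y,q_Z$, and no $3$SUM solution may be lost because its three summands span across a partially cleaned instance; this is exactly what the uniqueness of the hosting instance from \cref{lem:type-relation} gives me. Las Vegas randomness follows because every randomized subroutine (energy reduction, hashing, bucket partition in Lemma~\ref{lem:compute-decomp}) admits cheap a posteriori verification.
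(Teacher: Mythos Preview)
Your proposal is correct and follows essentially the same approach as the paper: apply the moderate-energy reduction, run the self-reduction with $m=\lceil n^{1-\eps_0/4}\rceil$, report and strip all residual nontrivial $4$-term $\ell$-relations while brute-forcing $3$SUM solutions through each removed element, and then invoke $\caA$ on the cleaned instances. Two small points worth cleaning up: the fact that $3$SUM solutions survive the self-reduction is argued right after \cref{claim:bound} (not in \cref{lem:type-relation}, which classifies $4$-term relations), and the nonemptiness of your interval for $\gamma$ tacitly relies on the paper's assertion that the universal constant satisfies $\eps_0>1/1000$, so you should state that explicitly rather than appeal to the exponent $12000$ alone.
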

\begin{proof}
Given input set $A\subset \Z \cap [-n^3,n^3]$ of size $|A|=n$, first run the sub-quadratic time reduction in \cref{thm:moderate-energy-reduction} to obtain an equivalent input set $\hat A \subseteq A$, which has moderate additive energy $E(\hat A) \le |\hat A|^{3-\eps}$ with at least $2/3$ probability, for some constant $\eps>0$.

Set $\gamma =\delta \eps/200$. Then, apply the self-reduction from \cref{defn:selfred} on $\hat A$, and obtain 
$b^3\cdot m^2\cdot |\Delta| \le m^2\cdot n^{o(1)}$
small $3$SUM instances (\cref{eqn:$3$SUMinstance}) each of size at most $n_0:= 3n^{1+\gamma}/m \ge n^{\gamma+\eps/4}$. This reduction takes sub-quadratic
$n^{2-\gamma+o(1)}$
time by \cref{obs:subqua}.  Recall $m = \lceil n^{1-\eps/4}\rceil $, where $\eps >1/1000$ is the universal constant from \cref{thm:moderate-energy-reduction}.

By \cref{lem:boundnumber}, the total number of nontrivial $4$-term $\ell$-relations $\sum_{i=1}^4\beta_i a_i' = 0$ in these generated instances is at most $E(\hat A)\cdot n^{o(1)}/m$ in expectation.
Use \cref{lem:report} to report all of them in sub-quadratic 
$n^{2+2\gamma + o(1) }/m $ time. Then, remove all the involved integers from the instances, so that the remaining integers in each instance avoid nontrivial $4$-term $\ell$-relations.
For each removed integer $a$ in an instance, use brute-force to check for $3$SUM solutions involving $a$ in that instance, with linear time complexity in the instance size. In total this takes 
\begin{align}
(E(\hat A)\cdot n^{o(1)}/m) \cdot n_0 & \le E(\hat A)\cdot n^{1+\gamma+o(1)}/m^2 \label{eqn:oldbound}\\
&\le n^{(3-\eps) + (1+\gamma)-2(1-\eps/4)+o(1)}\nonumber \\ 
&\le n^{2-0.4\eps+o(1)}\nonumber
\end{align}
 expected time.

Finally, use algorithm $\caA$ to solve these instances.  Each instance has size at most $n_0$, and contains integers from the range $[-n^3,n^3] \subset [-n_0^{12000},n_0^{12000}]$, which avoids nontrivial $4$-term $\ell$-relations for $\ell = \lceil \exp((\log n^3)^{1/3}) \rceil \ge \exp((\log n_0)^{1/3})$, so the input conditions of $\caA$ are satisfied.  Summing over all $m^2\cdot n^{o(1)}$ instances, the total running time of $\caA$ is 
\begin{equation}
\label{eqn:abound}
 m^2\cdot n^{o(1)} \cdot n_0^{2-\delta} \le  n^{2(1-\eps/4)+(\gamma + \eps / 4)(2-\delta) + o(1)}\le n^{2-6\eps\delta/25 +o(1)}.
\end{equation}

Hence, the overall expected time complexity is sub-quadratic. 
\end{proof}

\cref{thm:main-largeinput} is almost as good as the main theorem we claimed, except that the algorithm $\caA$ is assumed to work over very large input range $\Z\cap [-n_0^{12000},n_0^{12000}]$. 
This assumption can be weakened using a few additional standard techniques. Then we obtain the following theorem, which clearly implies \cref{thm:$3$SUM-sidon}.
\begin{theorem}
	\label{thm:main-smallinput}
Suppose for some constant $\delta>0$ there is an $O(n_1^{2-\delta})$-time algorithm $\caA$ that solves $3$SUM on input set $A_1\subset \Z \cap [-n_1^{3+\delta},n_1^{3+\delta}]$ of size $|A_1|= n_1$ that avoids nontrivial $4$-term $\exp((\log n_1)^{1/3})$-relations.

Then, there is an algorithm that solves $3$SUM on size-$n$ input set $A\subset \Z \cap [-n^3,n^3]$ in $O(n^{2-\delta'})$ time for some constant $\delta'>0$ depending on $\delta$. Moreover, this reduction is Las Vegas randomized.
\end{theorem}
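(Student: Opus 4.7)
The plan is to bootstrap the given algorithm $\caA$ (which handles the restricted range $[-n_1^{3+\delta}, n_1^{3+\delta}]$) into an algorithm $\caA'$ that handles the much larger range $[-n_0^{12000}, n_0^{12000}]$ required by \cref{thm:main-largeinput}, with only a mild degradation in the exponent, and then invoke \cref{thm:main-largeinput} with $\caA'$ in place of $\caA$. To construct $\caA'$, on input $A_0 \subset \Z \cap [-n_0^{12000},n_0^{12000}]$ of size $n_0$ avoiding nontrivial $4$-term $\ell$-relations (where $\ell := \lceil \exp((\log n_0)^{1/3}) \rceil$), I would sample a random prime $p \sim n_0^{3+\delta/2}$ and form $A_0' := A_0 \bmod p \subset [0,p)$. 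A union bound via the prime number theorem gives $O(n_0^2 \log n_0 / p) = o(1)$ expected pairwise collisions, so $A_0 \to A_0'$ is injective with probability $1-o(1)$. Since any $3$SUM solution $a+b+c = 0$ in $A_0$ satisfies $a'+b'+c' \in \{0,p,2p\}$ in $A_0'$, I would set up three $3$-partite $3$SUM instances using the shifted copies $A_0', A_0'-p, A_0'-2p$, and convert each into a $1$-partite instance $A''$ of size $3n_0$ via the shifts $q_X V, q_Y V, q_Z V$ of \cref{defn:selfred} with $V := p$. The range of $A''$ is $O(\ell^2 p) \le n_1^{3+\delta}$ with $n_1 := 3n_0$ (since $\ell^2 = n_0^{o(1)}$), meeting the input range assumption of $\caA$.

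The crux is verifying that $A''$ avoids nontrivial $4$-term $\ell$-relations w.h.p. A direct adaptation of \cref{lem:type-relation}, using \cref{claim:bound} with the same $q_X, q_Y, q_Z$ (the extra $-jp$ term from $A_0' - jp$ contributes a $(\beta_3+\beta_4)p$ summand that vanishes under the $\beta_i$-sum constraints produced by \cref{claim:bound}), shows that any such relation in $A''$ descends to either a nontrivial $4$-term $\ell$-relation in $A_0'$ or, in the degenerate subcases where some $\beta_i = 0$, a nontrivial $3$-term $\ell$-relation in $A_0'$. Since $A_0$ avoiding nontrivial $4$-term $\ell$-relations implies it also avoids nontrivial $3$-term ones (padding by a zero coefficient preserves nontriviality), it suffices to propagate both properties from $A_0$ to $A_0'$. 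Any such relation in $A_0'$ lifts, via the injection, to a tuple in $A_0$ whose $\beta$-weighted sum is a multiple of $p$; the $0$ case is ruled out by $A_0$'s hypothesis, and for the nonzero case, the prime number theorem gives $\Pr_p[p \mid s] = O(\log n_0 / p)$ for fixed nonzero $|s| \le O(\ell n_0^{12000})$. Summing over at most $\ell^3 n_0^4$ coefficient/tuple pairs yields $O(n_0^{1-\delta/2+o(1)})$ expected ``spurious'' relations; I would enumerate them via a variant of \cref{lem:report} and remove the involved elements from $A''$, incurring $O(n_0) \cdot O(n_0^{1-\delta/2+o(1)}) = O(n_0^{2-\delta/2+o(1)})$ brute-force verification time.

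A symmetric calculation bounds the expected number of ``false positive'' $3$SUM solutions (triples in $A_0$ whose sum is a nonzero multiple of $p$) by $O(n_0^3 \log n_0 / p) = o(1)$, so a ``yes'' answer from $\caA$ on $A''$ corresponds to a real $3$SUM solution in $A_0$ w.h.p. Combining everything, $\caA'$ runs in $O(n_0^{2-\delta''})$ expected time for some $\delta'' > 0$ depending on $\delta$, and applying \cref{thm:main-largeinput} yields the claimed $O(n^{2-\delta'})$-time algorithm for size-$n$ $3$SUM on $[-n^3, n^3]$. The main obstacle is cleanly propagating the $4$-term relation avoidance property through both the modular reduction and the $3$-partite-to-$1$-partite conversion, especially ruling out the degenerate subcases where a ``$4$-term'' relation in $A''$ collapses into a $3$-term one via zero coefficients.
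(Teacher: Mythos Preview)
Your high-level strategy—compress the input range via a random prime $p$, convert $3$-partite to $1$-partite using the shifts $q_X,q_Y,q_Z$, and clean up the spurious $\ell$-relations introduced by the mod-$p$ step—matches the paper's. The architectural difference is that you package these steps as a black-box $\caA'$ on a single size-$n_0$ input and then invoke \cref{thm:main-largeinput}, whereas the paper applies the mod-$p$ compression \emph{inside} the self-reduction of \cref{defn:selfred}, at the level of the small groups $G_x\cap B_{x'}$.

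That architectural choice creates a real gap in your enumeration step. To make $\caA'$ run in $O(n_0^{2-\delta''})$ time, you must find all spurious $4$-term $\ell$-relations in the single set $A_0'$ of size $n_0$ in subquadratic time. You write ``enumerate them via a variant of \cref{lem:report},'' but the efficiency of \cref{lem:report} hinges on the pre-processing cost $\tilde O\big(\ell^2\sum_{x,x'}|G_x\cap B_{x'}|^2\big)$, which is subquadratic only because each group has size at most $n^{1+\gamma}/m$. Applied to a single ``group'' of size $n_0$, this becomes $\tilde O(\ell^2 n_0^2)=n_0^{2+o(1)}$. More fundamentally, listing all solutions to $\sum_i\beta_i a_i'=0$ in $A_0'$ is a $4$-LDT listing problem with $\Theta(n_0^{1-\delta/2})$ solutions and no further structure; there is no known subquadratic algorithm for this, and sparse convolution does not help since $|\beta_1 A_0'+\beta_2 A_0'|$ can be $\Theta(n_0^2)$. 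Taking $p$ large enough to make the expected count $o(1)$ would require $p\gtrsim n_0^4$, which blows the range past $n_1^{3+\delta}$.

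The paper avoids this precisely by doing mod $p$ \emph{after} the hash-into-groups step, so that the quadratic-in-group-size enumeration cost remains $n^{2+2\gamma+o(1)}/m$, subquadratic in $n$. Your black-box packaging discards the small-group structure and with it the only leverage \cref{lem:report} provides. To repair the argument you would need to move the mod-$p$ compression inside the self-reduction rather than outside it—which is exactly what the paper does.
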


\begin{proof}[Proof Sketch]
We assume the fast $3$SUM algorithm $\caA$ as stated in \cref{thm:main-largeinput}, except that now $\caA$ is only required to work on input set $A_1 \subset \Z \cap [-n_1^{3+\delta}, n_1^{3+\delta}]$ where $n_1=|A_1|$. In the following we describe the modifications we make in our reduction.

Recall that in the self-reduction defined in \cref{defn:selfred}, each generated small $3$SUM instance has size at most $n_0 = 3n^{1+\gamma}/m$ (where $n_0\ge n^{\gamma+\eps/4}$).   We first modify this self-reduction, by using a hash function to compress the input range of these instances down to barely super-cubic in their sizes.  
Specifically, let $V = n_0^{3+\delta/2}$, and sample a random prime $p\in [V/2,V]$. Then,  change the definition of the small instances from \cref{eqn:$3$SUMinstance} to the following three $3$SUM instances:
	\begin{align*}
		 \big (q_X V + (G_x\cap B_{x'})\bmod p\big ) \, \cup\,  \big (q_Y V+(G_y\cap B_{y'})\bmod p\big ) \,\cup\, \big (q_ZV + (G_{z}\cap B_{z'})\bmod p\big ),\\
		 \big (q_X V + (G_x\cap B_{x'})\bmod p - p\big ) \, \cup\,  \big (q_Y V+(G_y\cap B_{y'})\bmod p\big ) \,\cup\, \big (q_ZV + (G_{z}\cap B_{z'})\bmod p\big ),\\
		 \big (q_X V + (G_x\cap B_{x'})\bmod p - p\big ) \, \cup\,  \big (q_Y V+(G_y\cap B_{y'})\bmod p - p\big ) \,\cup\, \big (q_ZV + (G_{z}\cap B_{z'})\bmod p\big ),
	\end{align*}
	which consist of integers of magnitude at most \begin{equation}
	    \label{eqn:inputrange}
	    (\max\{|q_X|,|q_Y|,|q_Z|\} + 1)\cdot V\le n_0^{3+\delta/2+o(1)}.
	\end{equation}
	 This compression does not lose any original $3$SUM solutions, since for every original $3$SUM solution $a+b+c=0$ ($a\in G_x\cap B_{x'}, b\in G_y\cap B_{y'},c\in G_z\cap B_{z'}$), it holds that $(a\bmod p)+(b\bmod p)+(c\bmod p)\in \{0,p,2p\}$, and hence one of the three instances defined above captures this solution. 
	However, there are two potential issues introduced by this $\mathrm{mod }$ $p$ compression:
	\begin{itemize}
	    \item Original non-solutions $(a,b,c)\in A^3$ ($a+b+c\neq 0$) may correspond to $3$SUM solutions in these instances, if $(a+b+c)\bmod p = 0$.
	   This would lead to false positives if the instance fed to algorithm $\caA$ contains such a fake $3$SUM solution. 
	   
	   To fix this issue, we first use a simple binary search to have algorithm $\caA$ report a solution $a'+b'+c'=0$ rather than just outputting ``YES'', with only logarithmic overhead in time complexity. Then, we look up the original integers in $A$ that got mapped to $a',b',c'$, and check if they form an actual $3$SUM solution in $A$.\footnote{Multiple original integers from $A$ may be mapped to the same $a'$ if they have the same remainder modulo $p$. In such case we check all of them. }
	   If so, we return ``YES''; if not, we check for $3$SUM solutions involving $a',b'$, or $c'$ in time linear in the size of the instance, and then remove $a',b',c'$ from the instance, and run $\caA$ on the remaining numbers, and so on. The number of iterations here is bounded by the number of fake solutions in this instance.
	    
	    Over random prime $p\in [V/2,V]$, by the prime number theorem, the expected number of such fake $3$SUM solutions in an instance is at most $O(n_0^3 (\log U)/V) \le O(1/n_0^{\delta/2 - o(1)}) = O(1)$. So the total expected time complexity only increases by a constant factor.
	    
	    \item Similarly, this $\mathrm{mod }$ $p$ compression may introduce additional nontrivial 4-term $\ell$-relations in the instances.
	   Note that the proof of \cref{lem:type-relation} with $V$ in place of $U$ still applies to the new definition of instances here,  so the nontrivial 4-term $\ell$-relations in these instances can be divided into two parts: (1) those that would appear as well per original definition \cref{eqn:$3$SUMinstance}, and (2) the additional ones induced by $a_i\in A$ with $\sum_{i=1}^4\beta_ia_i$ being an \emph{non-zero} integer multiple of $p$.  Part (1) satisfies the same bound in \cref{lem:boundnumber}. For part (2),  similarly by the prime number theorem, the expected number of additional 4-term $\ell$-relations in each instance is at most $O(\ell^3 \cdot n_0^4 (\log U)/V) \le n_0^{1-\delta/2 + o(1)}$. Summing over all $m^2\cdot n^{o(1)}$ instances, the total expected count of additional 4-term $\ell$-relations is at most $m^2 \cdot n_0^{1-\delta/2 + o(1)}$.
	    To report these nontrivial 4-term $\ell$-relations (from both part (1) and part (2)),  the strategy of \cref{lem:report} still works with almost no changes.
	    
	    Accounting for the additional part (2) $\ell$-relations causes the time bound \cref{eqn:oldbound} of running brute-force on these involved elements to increase to
	     $ (E(\hat A)\cdot n^{o(1)}/m + 
	     m^2 \cdot n_0^{1-\delta/2 + o(1)} ) \cdot n_0 $, where the extra term is bounded by
	     \begin{align*}
	         m^2 \cdot n_0^{1-\delta/2 + o(1)} \cdot n_0 &\le m^2 \cdot (n^{1+\gamma}/m)^{2-\delta/2 +o(1)}\\
	         &\le n^{2(1-\eps/4) + (\gamma+\eps/4)(2-\delta/2)+o(1)}\\
	         & \le  n^{2 -23\eps\delta/200+o(1)},
	     \end{align*} 
	     which is still subquadratic.
	\end{itemize}
	Note that the two fixes to these two issues are compatible, and the overall reduction is still Las Vegas.
	
	Finally we note that if algorithm $\caA$ is fed with an input instance of size smaller than $(n_0)^{1-\delta/10}$, then we can instead directly run brute-force algorithm on it in $O(n_0^{2-\delta/5})$ time. This only worsens the time bound of \cref{eqn:abound} to 
	\[ 
 m^2\cdot n^{o(1)} \cdot n_0^{2-\delta/5} \le  n^{2(1-\eps/4)+(\gamma + \eps / 4)(2-\delta/5) + o(1)}\le n^{2-6\eps\delta/25 +o(1)}.\]
 If $\caA$ is fed with an input instance of size $n_1\ge (n_0)^{1-\delta/10}$, then the input range \cref{eqn:inputrange} is at most $n_0^{3+\delta/2+o(1)} \le n_1^{3+\delta}$, which satisfies the input condition of  $\caA$.
\end{proof}

\subsection{Construction of the Hash Family}
\label{sec:hash}

In this section we construct the almost linear hash family
$\caH \subseteq \{H\colon \Z \cap [-U,U] \to [m] \}$
claimed in \cref{lem:hash}. 
The building block is the following base case hash family.

\begin{definition}[Base case hash family]
	\label{defn:hashbase}
Given integer parameters $\ell\le U$, sample hash function \[h\colon \Z \cap [-U,U] \to  \{0,1,\dots,\ell-1\}\] as follows: pick prime $p\in [2U,4U]$ uniformly at random, and then pick $r\in \F_p^*$ uniformly at random. Let \[h(x) = \left \lfloor \frac{(r\cdot x) \bmod p}{p} \cdot \ell \right \rfloor .\]
\end{definition}
\begin{lemma}[Almost linearity]
	\label{lem:linear}
For $x,y,z\in \Z \cap [-U,U]$ with $x+y+z=0$, we always have $h(x)+h(y)+h(z) \in \{0, \ell,\ell-1,\ell-2,2\ell,2\ell-1,2\ell-2\}$.
\end{lemma}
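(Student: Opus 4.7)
The plan is to reduce everything to a statement about fractional parts. Let $a=(rx)\bmod p$, $b=(ry)\bmod p$, $c=(rz)\bmod p$. Since $x+y+z=0$, we have $r(x+y+z)=0$, so $a+b+c\equiv 0\pmod p$. Each of $a,b,c$ lies in $[0,p)$, so the integer $a+b+c$ must take one of the three values in $\{0,p,2p\}$. Equivalently, writing $\alpha_w:=(rw\bmod p)/p\in[0,1)$, this says $\alpha_x+\alpha_y+\alpha_z\in\{0,1,2\}$.

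Next, I multiply each $\alpha$ by $\ell$ and invoke the floor identity: for any real numbers $A,B,C$ with $A+B+C\in\Z$, we have $\lfloor A\rfloor+\lfloor B\rfloor+\lfloor C\rfloor=A+B+C-k$ for some $k\in\{0,1,2\}$, because the three fractional parts each lie in $[0,1)$ and together sum to an integer. Applying this with $A=\alpha_x\ell$, $B=\alpha_y\ell$, $C=\alpha_z\ell$, so that $A+B+C\in\{0,\ell,2\ell\}$, the three branches a priori yield $h(x)+h(y)+h(z)\in\{-2,-1,0\}\cup\{\ell-2,\ell-1,\ell\}\cup\{2\ell-2,2\ell-1,2\ell\}$.

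The final step is to rule out the spurious values $-1,-2$ that could otherwise appear in the $A+B+C=0$ branch. This is immediate: when $\alpha_x+\alpha_y+\alpha_z=0$, non-negativity of each summand forces $\alpha_w=0$ for all three, hence $h(w)=0$ for all three, and the sum is exactly $0$. After this cleanup the sum lies in $\{0,\ell-2,\ell-1,\ell,2\ell-2,2\ell-1,2\ell\}$, matching the statement. I anticipate no substantive obstacles; the only care needed is tracking the three $k\in\{0,1,2\}$ branches and remembering that $h$ values are non-negative, so the negative candidates in the $s=0$ branch cannot occur.
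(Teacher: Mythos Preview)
Your proof is correct and follows essentially the same approach as the paper's: both observe that $(rx\bmod p)+(ry\bmod p)+(rz\bmod p)=ip$ for some $i\in\{0,1,2\}$, then bound the floor loss by at most one per term to get $h(x)+h(y)+h(z)\in(i\ell-3,i\ell]$, and finally use non-negativity to discard the spurious negative values in the $i=0$ case. The only cosmetic difference is that the paper rules out $-1,-2$ simply by noting the sum of hash values is a non-negative integer, whereas you argue that $\alpha_x+\alpha_y+\alpha_z=0$ forces all three to vanish; both work.
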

\begin{proof}
Denote $x' = (r x) \bmod p,y' = (r y) \bmod p,z' = (r z) \bmod p$. From $x+y+z =0$, we know $x'+y'+z'= i p$ for some $i\in \{0,1,2\}$. 
By definition of $h(\cdot)$, 
	\begin{align*}
		 0 \le \Big( \frac{x'\ell}{p}-h(x) \Big)+\Big(\frac{y'\ell}{p}-h(y)\Big) +\Big (\frac{z'\ell}{p}-h(z)\Big) < 3,	\end{align*}
	or equivalently, $i\ell -3<h(x)+h(y)+h(z) \le i\ell$. The proof follows since $h(x)+h(y)+h(z)$ is a non-negative integer.
\end{proof}
For $a\in \Z$, let \[\|a\|_p:= \min \{ a\bmod p, (-a)\bmod p\}\] be the distance from $a$ to the closest multiple of $p$. 
\begin{lemma}[Almost 2-universality]
	\label{lem:2uni}
	For $x,y\in \Z\cap  [-U,U],x\neq y$, we have
	\[ \Pr_h[h(x)=h(y)] \le \frac{2}{\ell}.\]
\end{lemma}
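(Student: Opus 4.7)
The plan is to reduce the event $h(x) = h(y)$ to a statement about the quantity $r(x-y) \bmod p$ lying close to a multiple of $p$. First I would set $a := (rx) \bmod p$ and $b := (ry) \bmod p$, both lying in $\{0, 1, \dots, p-1\}$. By the definition of $h$, the event $h(x) = h(y)$ is exactly the event that $a$ and $b$ lie in the same half-open bucket $[kp/\ell, (k+1)p/\ell)$ for some $k \in \{0, 1, \dots, \ell-1\}$. This forces $|a - b| < p/\ell$, and since $a, b \in [0, p)$, this is equivalent to $\|a - b\|_p < p/\ell$.

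Next I would introduce the shift $s := r(x - y) \bmod p$. Since $x \neq y$ and $|x - y| \le 2U < p$ (because $p \ge 2U$ is actually required to be a prime in $[2U, 4U]$, making $p > 2U$), we have $(x-y) \not\equiv 0 \pmod p$, so $x-y$ is invertible in $\F_p$. Hence as $r$ ranges uniformly over $\F_p^*$, the shift $s$ also ranges uniformly over $\F_p^*$. Observing that $a - b \equiv s \pmod p$ and $a \neq b$ (again since $s \neq 0$), the condition $\|a - b\|_p < p/\ell$ translates exactly to $\|s\|_p < p/\ell$.

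The last step is simply to count: the number of $s \in \{1, \dots, p-1\}$ with $\|s\|_p < p/\ell$ is at most $2 \lfloor p/\ell \rfloor \le 2p/\ell$, coming from the set $\{1, \dots, \lfloor p/\ell \rfloor\} \cup \{p - \lfloor p/\ell \rfloor, \dots, p-1\}$. Hence
\[
\Pr_{r \in \F_p^*}[h(x) = h(y)] \le \frac{2 \lfloor p/\ell \rfloor}{p-1} \le \frac{2}{\ell},
\]
where the final inequality uses $\lfloor p/\ell \rfloor \le (p-1)/\ell$ which holds whenever $\ell \ge 2$ (the case $\ell = 1$ being trivial since $h$ is identically zero and the stated bound $2/\ell = 2$ trivially holds). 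Averaging over the choice of $p$ does not worsen the bound. The only mildly delicate point is the boundary counting to obtain the clean $2/\ell$ constant rather than $2/\ell + O(1/p)$; this is handled by observing that $s = 0$ is excluded and using the floor rather than ceiling in the count.
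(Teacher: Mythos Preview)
Your proof is correct and follows essentially the same route as the paper's: reduce $h(x)=h(y)$ to $\|r(x-y)\|_p < p/\ell$, observe that $r(x-y)\bmod p$ is uniform on $\F_p^*$, and count residues of small norm to obtain $2\lfloor p/\ell\rfloor/(p-1)\le 2/\ell$. One tiny verbal slip: the step ``$|a-b|<p/\ell$ is \emph{equivalent} to $\|a-b\|_p<p/\ell$'' is only an implication (e.g.\ $a=0,\,b=p-1$), but you only use that direction, so the argument is unaffected.
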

\begin{proof}
By definition of $h(\cdot)$, observe that $h(x)=h(y)$ implies 
\[ \Bigg\lvert \, \frac{(r x) \bmod p}{p} \cdot \ell \, - \, \frac{(r y) \bmod p}{p} \cdot \ell\,\Bigg \rvert < 1,\]
which then implies
\begin{equation}
 \| rx - ry\|_p < \frac{p}{\ell}.\label{eqn:rxry}
\end{equation}
Since $p>2U\ge |x-y|$, $x-y\not \equiv 0\pmod{p}$. Then, for $r\in \F_p^*$ uniformly chosen at random, $r(x-y)\bmod p$ is uniformly distributed over $\F_p^*$. Hence, 
\[\Pr\Big [\,\| r(x - y)\|_p < \frac{p}{\ell} \,\Big ] = 2\Big \lfloor \frac{p}{\ell} \Big \rfloor\cdot \frac{1}{p-1}  \le \frac{2}{\ell}, \]
which finishes the proof.
\end{proof}
\begin{lemma}[``Almost'' almost 3-universality]
	\label{lem:3uni}
	Suppose $x,y,z\in \Z \cap [-U,U]$ do not have any 3-term $\ell$-relation (in particular, $x,y,z$ are distinct). 
	Then, 
	\[ \Pr_h[h(x)=h(y)=h(z)] \le O\left (\frac{\log U}{\ell^2}\right ).\]
\end{lemma}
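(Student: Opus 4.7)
My plan is to follow the setup of \cref{lem:2uni} and reduce the three-way collision to counting lattice points in a box modulo $p$; the bound is then split into two regimes, one handled by the no-$\ell$-relation hypothesis and the other by averaging over the random prime $p$. Set $a = x-y$ and $b = y-z$, both nonzero since $x, y, z$ are distinct. As in the proof of \cref{lem:2uni}, the event $h(x) = h(y) = h(z)$ forces both $\|ra\|_p < p/\ell$ and $\|rb\|_p < p/\ell$. Let $u_0, v_0 \in (-p/2, p/2]$ denote the centered representatives of $ra, rb$ modulo $p$. Because $p > 2U \ge \max(|a|, |b|)$, both $a$ and $b$ are invertible modulo $p$, so $r \mapsto (u_0, v_0)$ is a bijection from $\F_p^*$ onto the set of nonzero $(u_0, v_0) \in \Z^2$ with $bu_0 \equiv av_0 \pmod p$. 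Hence it suffices to bound, on average over the random prime $p$, the quantity
\[
N(p) := \big|\{(u_0, v_0) \in \Z^2 \setminus \{0\} : |u_0|, |v_0| < p/\ell,\ bu_0 \equiv av_0 \pmod p\}\big|.
\]

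I would split the pairs counted by $N(p)$ into two types. Writing $d = \gcd(a, b)$ and $(a', b') = (a/d, b/d)$, Type A consists of pairs with the genuine identity $bu_0 = av_0$, which forces $(u_0, v_0) = t(a', b')$ for some $t \in \Z \setminus \{0\}$; Type B consists of the remaining pairs, for which $bu_0 - av_0$ is a nonzero multiple of $p$. The key use of the no-3-term-$\ell$-relation hypothesis is to show $\max(|a'|, |b'|) \ge \ell/2$. Indeed, otherwise the triple $(\alpha, \beta, \gamma) := (b', -(a'+b'), a')$ would be nonzero with zero sum, each coordinate bounded by $2\max(|a'|, |b'|) < \ell$, and satisfying $\alpha x + \beta y + \gamma z = b'(x-y) - a'(y-z) = b'a - a'b = 0$, contradicting the hypothesis. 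Consequently the Type A contribution to $N(p)$ is at most $2p/(\ell \max(|a'|, |b'|)) = O(p/\ell^2)$, yielding $O(1/\ell^2)$ to $\Pr_r$.

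The main obstacle is Type B: for individual primes $p$ the associated lattice may harbor short vectors not of the form $t(a', b')$, so a worst-case pointwise bound is too weak. My plan is to sum the Type B contribution over all primes in $P := \{\text{primes in }[2U, 4U]\}$ and swap the order of summation. For each fixed pair $(u_0, v_0)$ with $|u_0|, |v_0| < 4U/\ell$ and $bu_0 - av_0 \ne 0$, the integer $|bu_0 - av_0|$ is at most $(|a|+|b|)\cdot 4U/\ell \le O(U^2/\ell)$, and hence has at most $O(1)$ prime divisors exceeding $2U$. There are only $O((U/\ell)^2)$ such candidate pairs, so $\sum_{p \in P}|\text{Type B}(p)|/(p-1) \le O((U/\ell)^2) \cdot O(1/U) = O(U/\ell^2)$. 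Dividing by $|P| = \Theta(U/\log U)$ gives $\Ex_p[|\text{Type B}(p)|/(p-1)] = O(\log U/\ell^2)$. Combined with the Type A bound, this yields the claimed $O(\log U/\ell^2)$.
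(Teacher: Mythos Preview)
Your proof is correct and takes essentially the same approach as the paper's. Both arguments reduce the collision event to counting pairs $(u_0,v_0)$ in a small box satisfying a linear congruence modulo $p$, then split into the case of exact equality $bu_0=av_0$ (your Type~A; the paper's term $|(y-z)\cdot A \cap (x-z)\cdot A|$) handled via the no-$\ell$-relation hypothesis, and the case of a nonzero multiple of $p$ (your Type~B; the paper's $|X||Y|O(\log U)/U$ term) handled by averaging over the random prime. The only cosmetic differences are your choice of $(x-y,y-z)$ versus the paper's $(x-z,y-z)$, and your phrasing in terms of lattice pairs rather than set intersections.
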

\begin{proof}
Since $p>2U$, $x-z\not \equiv 0\pmod{p}$ and $y-z\not \equiv 0\pmod{p}$.
If $h(x)=h(z)$, then from \cref{eqn:rxry} we have
	\begin{equation}
 \|rx-rz\|_p < p/\ell < 4U/\ell.
	 \label{eqn:temphxy}
	\end{equation}
We can assume $\ell\ge 4$, and define integer set
\begin{equation}
A:= [-4U/\ell,4U/\ell] \cap \Z \setminus \{0\}, \label{eqn:defnA}
\end{equation}
whose elements are non-zero and distinct modulo $p$. Then, \cref{eqn:temphxy} implies
\[ r \in \big ((x-z)^{-1}\cdot A\big )\bmod p.\]
Similarly, $h(y)=h(z)$ implies $ r \in \big ((y-z)^{-1}\cdot A\big )\bmod p$. Hence, for fixed $p$, we have
\begin{align}
	 & \Pr_{r\in \F_p^*}[h(x)=h(y)=h(z)]\nonumber\\
	\le \ &    \Pr_{r\in \F_p^*}\big [ r \in \big ((x-z)^{-1}\cdot A\big )\bmod p \text{ and }  r \in \big ((y-z)^{-1}\cdot A\big )\bmod p \big ]\nonumber\\
	= \ & \frac{1}{p-1}   \Big \lvert \big ((x-z)^{-1}\cdot A\big )\bmod p \ \; \cap \  \big ((y-z)^{-1}\cdot A\big )\bmod p \  \Big \rvert\nonumber\\
	= \ & \frac{1}{p-1}   \Big \lvert \big ((y-z)\cdot A\big )\bmod p \ \; \cap \  \big ((x-z)\cdot A\big )\bmod p \  \Big \rvert. \label{eqn:temp233}
\end{align}
Note that for any two integer sets $X,Y \in \Z \cap [- U^{O(1)},+U^{O(1)}]$ we have 
\begin{align*}
	&\Ex_{\text{prime } p \in [2U,4U]} \big [\, \big \lvert  (X\bmod p) \cap (Y\bmod p)\big \rvert\, \big ] \\
	\le \ &  \sum_{x\in X,y\in Y} \Pr_{\text{prime } p \in [2U,4U]} [x-y\equiv 0 \pmod{p}]\\
	  = \ & |X\cap Y|  + \sum_{x\in X,y\in Y,x\neq y} \Pr_{\text{prime } p \in [2U,4U]} [x-y\equiv 0 \pmod{p}]\\
	  \le \ & |X\cap Y| + \frac{|X|\cdot |Y| \cdot O(\log U)}{ U },
\end{align*}
where the last step follows from the prime number theorem.  Applying to \cref{eqn:temp233} gives
	\begin{align}
	 & \Pr_{\text{prime }p\in [2U,4U], r\in \F_p^*}[h(x)=h(y)=h(z)]\nonumber \\
 \le \ & \frac{1}{2U}\Ex_{\text{prime }p\in [2U,4U]} \Big [\    \Big \lvert \big ((y-z)\cdot A\big )\bmod p \ \; \cap \  \big ((x-z)\cdot A\big )\bmod p \  \Big \rvert\ \Big  ] \nonumber \\
 \le \ &  \frac{1}{2U} \Big ( \big \lvert (y-z)\cdot A \cap (x-z)\cdot A \big \rvert  + \frac{|A|^2 \cdot O(\log U)}{U}\Big ). \label{eqn:1}
	\end{align}

	By definition of $A$ in \cref{eqn:defnA}, 
	for any positive integers $a,b$,
	\[ (a \cdot A) \cap (b \cdot A) \subseteq \big ([-4aU/\ell,4aU/\ell]\setminus\{0\}\big )  \cap (a\cdot \Z) \cap (b\cdot \Z),\]
	and hence
	\begin{align}
	 |(a \cdot A) \cap (b \cdot A)|  & \le     \frac{ 8a U/\ell}{\mathrm{lcm}(a,b)}\nonumber \\
	 & =  \frac{\gcd(a,b) }{b}\cdot \frac{8U}{\ell}. \label{temp:23333}
	\end{align}

	Letting $d = \gcd(y-z,x-z)$,  the following linear relation on $x,y,z$,  
	\begin{equation}
		\frac{y-z}{d} \cdot x + \frac{z-x}{d}\cdot y - \frac{(y-z) + (z-x)}{d}\cdot z = 0 \label{eqn:relation}
	\end{equation}
	has integer coefficients summing to zero. If $\max \{ |y-z|/d, |z-x|/d\} \le \ell/2$, then \cref{eqn:relation} would be an $\ell$-relation, contradicting the assumption on $x,y,z$. Hence, \begin{equation}
\label{eqn:temp234}
		\max \{ |y-z|/d, |z-x|/d\} > \ell/2.
	\end{equation} 
	
	Now the first term in \cref{eqn:1} can be bounded as
	\begin{align*}
		\Big \lvert ( (y-z) \cdot A) \cap ( (x-z) \cdot A)\Big \rvert  &= \Big \lvert ( |y-z| \cdot A) \cap ( |x-z| \cdot A)\Big \rvert  \\
		& \le \frac{d}{\max\{|y-z|,|x-z|\} } \cdot \frac{8U}{\ell} \tag{by \cref{temp:23333}}\\
		& < \frac{16U}{\ell^2}. \tag{by \cref{eqn:temp234}}
	\end{align*}
	Plugging into \eqref{eqn:1}, we obtain
	\begin{align*}
	  \Pr_{\text{prime }p\in [2U,4U], r\in \F_p^*}[h(x)=h(y)=h(z)] & \le  
 \frac{1}{2U} \Big ( \frac{16U}{\ell^2}  + \frac{(8U/\ell)^2 \cdot O(\log U)}{U}\Big )\\
 & \le O\left (\frac{\log U}{\ell^2}\right ). \qedhere
	\end{align*}
\end{proof}

The final hash family is constructed by composing the base case hash family.
\begin{proof}[Proof of \cref{lem:hash}]
Recall that $1\le m\le U$, and 
$\ell = \lceil \exp((\log U)^{1/3}) \rceil$.
Let $d$ be the maximum integer $d$ such that $\ell^d \le m$.
 Then $d \le (\log U)^{2/3}$, and $\ell^d > m/\ell$. 

To sample a hash function $H\colon \Z \cap [-U,U] \to [m]$, independently sample hash functions $h_0,\dots,h_{d-1}$ from the base case hash family $h\colon \Z \cap [-U,U] \to \{0,1,\dots,\ell-1\}$ defined in \cref{defn:hashbase}.
 Then the hash function $H$ is defined as 
\[ H(x) := 1 + \sum_{i=0}^{d-1} h_i(x) \cdot \ell^i.\]
It is clear that $1\le H(x)\le \ell^d \le m$, and $H(x)= H(y)$ if and only if $h_i(x)= h_i(y)$ for all $i\in [d]$.
For distinct integers $x,y\in [-U,U]$, by \cref{lem:2uni},
	\[ \Pr_H[H(x)=H(y)] \le \left (\frac{2}{\ell}\right )^d \le \frac{2^d \ell}{m}\le \frac{U^{o(1)}}{m}.\]
For three integers $x,y,z\in [-U,U]$ without any $\ell$-relation, by \cref{lem:3uni},
	\[ \Pr_H[H(x)=H(y)=H(z)] \le \frac{O(\log U)^d}{\ell^{2d}} \le \frac{O(\log U)^d \ell^2}{m^2}\le \frac{U^{o(1)}}{m^2} .\]

For three integers $x,y,z\in [-U,U]$ with $x+y+z=0$,  by \cref{lem:linear},
\[ H(x)+H(y)+H(z) = 3 + \sum_{i=0}^{d-1} c_i \cdot \ell^i\]
where $c_i \in \{0, \ell,\ell-1,\ell-2,2\ell,2\ell-1,2\ell-2\}$. The number of possibilities is $7^d \le U^{o(1)}$, so the hash family is $U^{o(1)}$-almost-linear.
\end{proof}

\subsection{Behrend's Construction}
\label{sec:behrend}
We include a slight adaptation of Behrend's proof \cite{behrend} (see also \cite{ruzsaequation}) here for completeness, and observe that it proves \cref{lem:behrend}.
This lemma can probably simplify the derandomization steps in \cite{ldt}.
\begin{proof}[Proof of \cref{lem:behrend}]
We can separately deal with non-negative integers and negative integers in the input set $A$. Now we can assume $A  \subseteq \Z \cap [0,U]$.

Recall $\ell = \lceil \exp((\log U)^{1/3}) \rceil$. Let $\ell' := 2\ell+1$, and $q := \lceil \exp((\log U)^{2/3}) \rceil$.
For each input integer $x\in A$, express $x$ in $q$-ary:
\[ x = \sum_{j=0}^{d-1} x_j \cdot q^j,\,\, x_j \in \{0,1,\dots,q-1\},\]
where $d = \lceil  \log_{q}(U+1) \rceil \le 1 + (\log U)^{1/3}$.
 Let $r:= \lceil q/\ell'\rceil$, and let $\tilde x_j := \lfloor x_j /r\rfloor \in \{0,1,\dots,\ell'-1\}, x'_j := x_j - \tilde x_j \cdot r \in \{0,1,\dots,r-1\}$.  We insert $x$ into the group indexed by the tuple
\[ (\tilde x_0,\tilde x_1,\dots,\tilde x_{d-1}; \|x'\|_2^2),\]
where $\|x'\|_2^2 = \sum_{j=0}^{d-1} (x'_j)^2 < d r^2 $. The total number of groups is $b \le (\ell')^d \cdot (d r^2)\le \exp(O(\log U)^{2/3})$.

Suppose there exist three integers $x,y,z$ from the same group that have a nontrivial $\ell$-relation. Without loss of generality we can assume the relation is $ax+by - (a+b)z = 0$ where $a,b\in \Z\cap [1, \ell]$.\footnote{All the three coefficients must be non-zero for the relation to be nontrivial. Then two of them have the same sign, which can be assumed to be positive.} Then from $\tilde x = \tilde y = \tilde z$ we obtain $ax'+by'=(a+b)z'$. Then, using triangle inequality and $\|x'\|_2=\|y'\|_2=\|z'\|_2$, we have
\[ (a+b)\|z'\|_2 = \|ax'+by'\|_2 \le \|ax'\|_2+\|by'\|_2 = (a+b)\|z'\|_2,\]
where the equality holds only if $x'$ and $y'$ are colinear. As $\|x'\|=\|y'\|$, we have $x'=y'$, and hence $x=y=z$, which makes the relation $ax+by - (a+b)z = 0$ trivial, a contradiction.
\end{proof}

\section{All-Edges Sparse Triangle on Quasirandom Graphs}
\label{sec:quasirandom}
Recall the All-Edges Sparse Triangle problem.
\begin{definition}[All-Edges Sparse Triangle]
Given an $n$-node $m$-edge undirected graph $G = (V, E)$, determine for every edge $e \in  E$ whether $e$ is in a triangle.
\end{definition}

In this section, we carefully analyze known fine-grained reductions from $3$SUM to the All-Edges Sparse Triangle problem, and show that $3$SUM instances on Sidon sets are reduced to All-Edges Sparse Triangle instances with certain quasirandomness property.
This chain of reduction goes through restricted versions of $3$SUM Convolution and Exact Triangle \cite{Patrascu10,williams2013finding,williams2020monochromatic}.

We first reduce $3$SUM on Sidon sets to $3$SUM Convolution on Sidon sets. 
Here, we follow (a slight modification of) a reduction by Chan and He \cite{ChanH20} for its simplicity. 
For technical reasons, we use a slight variant of $3$SUM Convolution:
\begin{definition}[$3$SUM Convolution']
Given three arrays $A,B,C$ indexed by $\{-n, \ldots, n\}$ whose values are either integers or $\perp$, determine whether there exist $i, j, k \in \{-n, \ldots, n\}$ such that $i+j+k = 0$, $A_i, B_j, C_k \ne \perp$ and $A_i+B_j+C_k = 0$. 
\end{definition}

\begin{lemma}
\label{lem:3sumsidon23sumconv}
If $3$SUM on Sidon sets requires $n^{2-o(1)}$ time, then $3$SUM Convolution' on arrays $A, B, C$ where all integer entries are distinct and form a Sidon set
requires $n^{2-o(1)}$ time. 
\end{lemma}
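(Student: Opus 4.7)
The plan is to adapt the standard P{\u{a}}tra{\c{s}}cu-style reduction from $3$SUM to $3$SUM Convolution (as presented in cleaner form by Chan and He~\cite{ChanH20}), while verifying that the Sidon structure of the input is preserved in every generated Conv' instance. I will start with a Sidon $3$SUM instance $S\subset [-U,U]$ of size $n$ from \cref{thm:main}, partition it into three colors $(S_A,S_B,S_C)$ via random color-coding (each $S_i\subseteq S$ is automatically Sidon), and handle the 3SUM solutions with two or three equal summands in a preliminary near-linear scan so that we may focus on the tripartite problem.

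For the core reduction, I will use an almost-linear hash family $h\colon [-U,U]\to \{-n',\ldots,n'\}$ with $n'=\Theta(n)$ buckets and collision set $\Delta$ of size $n^{o(1)}$, such that $h(a)+h(b)+h(c)\in \Delta$ whenever $a+b+c=0$. For every $d\in \Delta$, I will build one Conv' instance $(A_d,B_d,C_d)$ by setting $A_d[h(a)]:=a$ for $a\in S_A$, $B_d[h(b)]:=b$ for $b\in S_B$, and $C_d[h(c)-d]:=c$ for $c\in S_C$, with all other slots equal to $\perp$. By design, a Conv' solution $i+j+k=0$ with $A_d[i]+B_d[j]+C_d[k]=0$ corresponds exactly to a tripartite $3$SUM triple $(a,b,c)$ with $a+b+c=0$ and $h(a)+h(b)+h(c)=d$, and every such triple is captured for the appropriate $d\in \Delta$. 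Since every integer entry of $A_d\cup B_d\cup C_d$ lies in $S_A\sqcup S_B\sqcup S_C\subseteq S$, the entries across the three arrays are distinct (by disjointness of the $S_i$'s) and form a Sidon set (as a subset of the Sidon set $S$), which is exactly the property required by the target problem.

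The main obstacle will be hash collisions: the encoding only works when each index receives at most one element from each of $S_A, S_B, S_C$, yet with $n'=\Theta(n)$ buckets and $\Theta(n)$ elements an almost $2$-universal hash already produces a constant expected number of collisions per element. I plan to resolve this by a standard layering trick: randomly split each $S_i$ into $k=n^{o(1)}$ sub-parts so that the expected bucket size inside each sub-part is $o(1)$, and then enumerate the $k^3$ triples of sub-parts, producing a separate Conv' instance for each. By Markov's inequality and a union bound, within a chosen triple of sub-parts each bucket is a singleton with high probability; the few elements that still collide can be peeled off and checked for $3$SUM solutions by direct $O(n)$-time scans in $o(n^2)$ total time. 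The total number of Conv' instances generated is $k^3 \cdot |\Delta| = n^{o(1)}$, each of size $O(n)$ and inheriting the Sidon-entry property from $S$.

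Putting it all together, any $O(n^{2-\varepsilon})$-time algorithm for $3$SUM Convolution' on Sidon entries would yield an $n^{o(1)}\cdot O(n^{2-\varepsilon})=O(n^{2-\varepsilon+o(1)})$-time algorithm for $3$SUM on Sidon sets, contradicting the $n^{2-o(1)}$ lower bound provided by \cref{thm:main}.
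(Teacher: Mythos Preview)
Your overall architecture is fine and, once patched, is essentially the Chan--He style argument the paper also uses. The gap is quantitative: taking $k=n^{o(1)}$ sub-parts is not enough. With $\Theta(n/k)$ elements per sub-part hashed into $\Theta(n)$ buckets, the expected number of colliding pairs inside a sub-part is $\Theta\big((n/k)^2/n\big)=\Theta(n/k^2)$, so across the $k$ sub-parts of one color you still have $\Theta(n/k)=n^{1-o(1)}$ elements that land in non-singleton buckets. Your ``Markov plus union bound'' step therefore cannot conclude that every bucket is a singleton with high probability, and the peel-off step costs $\Theta(n/k)\cdot O(n)=n^{2-o(1)}$ time, which does not beat the assumed $n^{2-o(1)}$ lower bound. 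In other words, the reduction yields only $o(n^2)$ time, not $O(n^{2-\varepsilon'})$ for any constant $\varepsilon'>0$, so it does not contradict the hypothesis.

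The fix is simply to let $k$ be a small \emph{polynomial}, $k=n^{\delta}$ with $\delta<\varepsilon/3$: then the peel-off cost is $O(n^{2-\delta})$ while the $k^3\cdot|\Delta|=n^{3\delta+o(1)}$ Conv$'$ instances cost $n^{2-\varepsilon+3\delta+o(1)}$, both truly subquadratic. This is exactly what the paper does, phrased slightly differently: it hashes modulo a random prime $p\in[n,2n]$, brute-forces elements in buckets of size $>n^{t}$ in $\tilde O(n^{2-t})$ time, and for the remaining buckets pads to size $\Theta(n^{t})$, randomly permutes, and enumerates position triples $(i,j,k)\in[n^{t}]^3$, yielding $O(n^{3t})$ Conv$'$ instances. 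Their threshold $t$ plays the same role as your $\delta$; your ``sub-part index'' corresponds to their ``random position within a bucket.'' With the polynomial parameter in place your proof goes through, and the Sidon-entry property is preserved for the same reason you gave: every integer entry of every generated instance is an element of the original Sidon set $S$.
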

\begin{proof}
Suppose we are given a $3$SUM instance on a Sidon set $S \subseteq [-U,U]$ where $U = n^{O(1)}$.
Let $p$ be a random prime from $[n, 2n]$. We map each $a \in S$ to bucket $a \bmod{p}$. Let $t$ be a constant to be fixed later. If a bucket has more than $n^t$ elements, we compute whether each number in the bucket is in a $3$SUM solution in a brute-force way, i.e., $\tO(n)$ time per number. For each pair of $a, b \in S$, the probability that they are in the same bucket is $\tO(1/n)$, so the expected size of the bucket of $a$ is $\tO(1)$. Therefore, by Markov's inequality, we handle $a$ in this brute-force way with probability $\tO(1/n^t)$. Overall, the expected runtime of this step is $\tO(n^{2-t})$. 

For all remaining small buckets, we first fill each bucket with $\perp$ so that each bucket has $\Theta(n^t)$ elements, and then randomly permute all the elements inside each bucket. 
Then, we enumerate $i < j < k \in [n^t]$, and find $3$SUM solutions where the first number is the $i$-th number in a bucket, the second number is the $j$-th number in a bucket, and the third number is the $k$-th number in a bucket. Since we randomly permuted each bucket, if there is a $3$SUM solution consisting of numbers in the remaining small buckets, a solution will be found this way with probability at least $1-O(1/n^{t})$ (i.e., as long as the index of the three numbers in their buckets are distinct). Note that each triple $(i, j, k)$ corresponds to $O(1)$ instances of $3$SUM Convolution'. Also, all integer entries of each $3$SUM Convolution' instance is a subset  of the Sidon set $S$, so they are distinct and form a Sidon set. 

If $3$SUM Convolution' on such inputs can be solved in $O(n^{2-\eps})$ time for some $\eps > 0$, we can solve the $3$SUM instance on $S$ in $\tO(n^{2-t} + n^{3t} \cdot n^{2-\eps})$ time in expectation, which is truly subquadratic by setting $t$ appropriately. 
\end{proof}

Next, we reduce $3$SUM Convolution' to the Exact Triangle problem following the standard approach given by Vassilevska Williams and Williams~\cite{williams2013finding}.  

\begin{definition}[Exact Triangle]
Given a directed weighted graph $G=(V,E)$ with weight function $w\colon E\to \Z$, determine whether it contains a directed triangle $(i, j, k)$ with total edge weight  $w_{ij}+w_{jk}+w_{ki} = 0$. 
\end{definition}

We consider Exact Triangle on directed graphs with the following special property.
\begin{property} In a directed weighted graph $G=(V,E)$ with weight function $w\colon E\to \Z$,
\label{property:weighted-graphs}
\begin{itemize}
    \item\textbf{Antisymmetry:} For every $(i,j)\in E$, it holds that $(j,i)\in E$ and $w_{ji}=-w_{ij}$;
    \item\textbf{Few zero-weight $4$-cycles:} The number of directed labeled $4$-cycles in $G$ that have zero weight sum is at most $n^3$.
\end{itemize}
\end{property}

\begin{lemma}
\label{lem:3sumconv2zwt}
If $3$SUM Convolution' on length-$N$ arrays $A, B, C$ where all integer entries are distinct and form a Sidon set
requires $N^{2-o(1)}$ time, then Exact Triangle on $n$-vertex tripartite weighted graphs satisfying \cref{property:weighted-graphs} requires $n^{3-o(1)}$ time.
\end{lemma}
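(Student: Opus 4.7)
The plan is to implement a direct reduction from $3$SUM Convolution' to Exact Triangle in the style of~\cite{williams2013finding,williams2020monochromatic}, and then to verify that the constructed instance satisfies \cref{property:weighted-graphs} using the Sidon structure of the input. Given a $3$SUM Convolution' instance on arrays $A,B,C$ indexed by $\{-N,\dots,N\}$, I will build a tripartite directed weighted graph $G=(X\cup Y\cup Z,E)$ whose three parts are each indexed by $\{-N,\dots,N\}$ (so $n=\Theta(N)$). Insert directed edges $x_a\to y_b$ with weight $A_{b-a}$ whenever $A_{b-a}\neq \perp$, $y_b\to z_c$ with weight $B_{c-b}$ whenever defined, and $z_c\to x_a$ with weight $C_{a-c}$ whenever defined, together with the reverse edges of negated weight to enforce antisymmetry. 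A direct check shows that the directed triangle $x_a\to y_b\to z_c\to x_a$ has weight $A_i+B_j+C_k$ where $(i,j,k):=(b-a,c-b,a-c)$ automatically satisfies $i+j+k=0$, so zero-weight triangles correspond bijectively to $3$SUM-Conv' solutions.

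The main technical step is to bound the number of zero-weight directed labeled $4$-cycles in $G$. Because $G$ is tripartite, each such $4$-cycle either alternates between two parts (patterns $(X,Y,X,Y)$, $(Y,Z,Y,Z)$, $(X,Z,X,Z)$) or visits all three parts (patterns of the form $(P,Q,P,R)$ with $\{P,Q,R\}=\{X,Y,Z\}$). In every case, expanding the zero-weight condition by antisymmetry yields a linear relation $v_1+v_2=v_3+v_4$ whose entries $v_\ell$ are drawn from $A$, $B$, or $C$. For instance, a $(X,Y,X,Y)$-cycle $x_{a_1}\to y_{b_1}\to x_{a_2}\to y_{b_2}\to x_{a_1}$ gives $A_{b_1-a_1}+A_{b_2-a_2}=A_{b_1-a_2}+A_{b_2-a_1}$, and a $(X,Y,X,Z)$-cycle $x_{a_1}\to y_{b}\to x_{a_2}\to z_{c}\to x_{a_1}$ gives $A_{b-a_1}+C_{a_1-c}=A_{b-a_2}+C_{a_2-c}$. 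Since the hypothesis guarantees that the integer entries of $A\cup B\cup C$ are distinct and together form a Sidon set, Sidon-ness forces $\{v_1,v_2\}=\{v_3,v_4\}$ as multisets; the cross-array matching case is impossible because entries from different arrays are distinct; and the parallel matching case forces two of the $4$-cycle's vertex indices to coincide, contradicting the four vertices being distinct. Hence $G$ contains \emph{zero} zero-weight directed labeled $4$-cycles, comfortably within the $n^3$ allowance.

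The main obstacle is not the $4$-cycle analysis itself but matching the time parameters in the conclusion. The basic construction above has $n=\Theta(N)$, which by itself would only translate the assumed $N^{2-o(1)}$ hardness of $3$SUM-Conv' into an $n^{2-o(1)}$ rather than an $n^{3-o(1)}$ lower bound for Exact Triangle. I would close this gap by applying the multi-coordinate packing / batching strategy from~\cite{williams2013finding,williams2020monochromatic}, which encodes the $3$SUM-Conv' instance of length $N$ into an Exact Triangle instance on only $n=\Theta(N^{2/3})$ vertices (so that $n^3=\Theta(N^2)$ matches the number of index pairs $(i,j)$ to check), so that a hypothetical $O(n^{3-\varepsilon})$ Exact Triangle algorithm would give an $O(N^{2-\varepsilon'})$ algorithm for $3$SUM-Conv', contradicting the hardness hypothesis. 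Crucially, every generated graph inherits its edge weights from the original Sidon set $A\cup B\cup C$, so the $4$-cycle vanishing argument above applies uniformly to each instance produced by the batching, and \cref{property:weighted-graphs} is satisfied throughout.
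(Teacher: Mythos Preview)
Your basic construction on $n=\Theta(N)$ vertices and the accompanying Sidon analysis of zero-weight $4$-cycles are both correct, but as you note this only yields an $n^{2-o(1)}$ lower bound. The genuine gap is in the batching step, which you defer to \cite{williams2013finding,williams2020monochromatic} without carrying it out. First, the parameter is off: that reduction produces $\Theta(\sqrt{N})$ Exact Triangle instances, each on $n=\Theta(\sqrt{N})$ vertices (not a single instance on $\Theta(N^{2/3})$ vertices), and this is exactly what the paper does. More importantly, your assertion that ``the $4$-cycle vanishing argument applies uniformly to each instance produced by the batching'' does not follow from what you wrote. In the paper's batched construction the $Y$--$Z$ edge weight is $B_{iT+z-y-N}$, so two distinct edges $(y_1,z_1)$ and $(y_2,z_2)$ with $z_1-y_1=z_2-y_2$ carry the \emph{same} entry of $B$; your conclusion that Sidon equality ``forces two of the $4$-cycle's vertex indices to coincide'' relied on distinct edges having distinct weights, and this premise fails here. (By contrast, the $X$--$Y$ and $Z$--$X$ weights depend on $xT+y$ and $z+xT$ respectively, so the base-$T$ encoding does make their indices injective on vertex pairs.)

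The paper resolves this wrinkle not by extending the Sidon argument but by counting: any zero-weight $4$-cycle not entirely inside $Y\cup Z$ has four distinct forward weights and is ruled out by Sidon-ness, while the $4$-cycles inside $Y\cup Z$ with $z_1-y_1=z_2-y_2$ or $z_1-y_2=z_2-y_1$ number at most $8T^3<n^3$ and are simply absorbed into the $n^3$ allowance of \cref{property:weighted-graphs}. So the missing idea is precisely the explicit $\sqrt{N}$-batching together with a case analysis that separates the ``injective-index'' bipartitions from the single exceptional one.
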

\begin{proof}
    Let $T = \lceil \sqrt{2N+1} \rceil$, and we create $T+1$ instances of Exact Triangle as follows. For each $i = \{0, \ldots, T\}$, we create a tripartite graph $G^i$ on vertex sets $X, Y, Z$, each indexed by $\{0, \ldots, T-1\}$, and add the following edges:
    \begin{itemize}
        \item For every $x \in X$ and $y \in Y$ where $xT+y - N \in [-N,  N]$ and $A_{xT+y - N} \ne \perp$, we add an edge $(x, y)$ with $w_{xy} = A_{xT+y - N}$;
        \item For every $y \in Y$ and $z \in Z$ where $iT+z-y-N \in [-N, N]$ and $B_{iT+z-y-N} \ne \perp$, we add an edge $(y, z)$ with $w_{yz} = B_{iT+z-y-N}$. \item For every $z \in Z$ and $x \in X$ where $2N-z-(x+i)T \in [-N, N]$ and $C_{2N-z-(x+i)T} \ne \perp$, we add an edge $(z, x)$ with $w_{zx} = C_{2N-z-(x+i)T}$. 
    \end{itemize}
    
    We call edges added above \emph{forward direction} edges (i.e., edges directing from $X$ to $Y$, $Y$ to $Z$, or $Z$ to $X$). Then, for every forward direction edge $(u, v)$ with weight $w_{uv}$, we also add a corresponding \emph{backward direction} edge $(v,u)$ with weight $w_{vu}=-w_{uv}$.
    
    It is not difficult to verify that the $3$SUM Convolution' instance has a solution if and only if at least one of the Exact Triangle instances has a solution. First, if there is a zero-weight triangle in some instance (without loss of generality assume its edges all have forward directions), then there exist $i, x, y, z$ such that $A_{xT+y-N} + B_{iT+z-y-N}+C_{2N-z-(x+i)T} = 0$, which is a $3$SUM Convolution' solution. For the other direction, suppose there exists a $3$SUM Convolution' solution $A_{p}+B_{q}+C_{-p-q}=0$. Then it is not difficult to verify that $(x, y, z)$ is a zero-weight triangle in $G^i$ for $x=\lfloor \frac{p+N}{T}\rfloor, y = (p+N) \bmod{T}, z = (q+y+N) \bmod{T}$ and $i = \lfloor \frac{q+y+N}{T}\rfloor$.
    
    Let $n = 3T$ be the number of vertices of each Exact Triangle instance. Clearly, if the initial $3$SUM Convolution'  instance requires $N^{2-o(1)}$ time, Exact Triangle requires $n^{3-o(1)}$ time.
    Since the graphs $G_i$ we constructed clearly have antisymmetric edge weights, it remains to show that they also satisfy the second requirement in  \cref{property:weighted-graphs}. 
    
    Fix any of the graph $G_i$ and fix any directed zero-weight $4$-cycle in it. 
    It is not difficult to verify that two of the edges must have forward direction and the other two edges must have backward direction. Therefore, a zero-weight $4$-cycle would imply a solution $d_1 + d_2 = d_3 + d_4$ where $d_1, d_2, d_3, d_4$ are the edge weights of the $4$-cycle  in the forward direction, which in turn are numbers in the initial $3$SUM Convolution'  instance. 
    
    If the zero-weight $4$-cycle is not completely inside $Y \cup Z$, one can verify that all edge weights of the $4$-cycle in the forward direction correspond to distinct numbers in the initial $3$SUM Convolution'  instance, so that $d_1 + d_2 = d_3 + d_4$ is impossible since the numbers form a Sidon set. For example, if the $4$-cycle is $(y,z_1,x,z_2)\in Y\times Z\times X\times Z$, then its four edge weights in the forward direction are $B_{iT+z_1-y-N} \neq B_{iT+z_2-y-N}$ and $C_{2N - z_1-(x+i)T} \neq  C_{2N - z_2-(x+i)T}$, which are distinct (recall that $B$ and $C$ contain disjoint integers). The other cases can be verified similarly.
    
   The only case where the zero-weight $4$-cycle may have repeated edge weights in the forward direction is when it is completely inside $Y \cup Z$, say it is $(y_1, z_1, y_2, z_2)$. If $z_1-y_1, z_2-y_1, z_1-y_2, z_2-y_2$ are distinct, then the four edge weights in the forward direction still correspond to distinct numbers in the initial $3$SUM Convolution'  instance, which is impossible. Otherwise, we must have $z_1+y_2 = y_1+z_2$ or $z_1+y_1 = z_2+y_2$. There are at most $8T^3 < n^3$ such $4$-cycles.
\end{proof}

Recall the definition of quasirandom graphs:
\Quasirandom*

\begin{lemma}
\label{lem:zwt2ae}
If Exact Triangle on $n$-vertex tripartite weighted graphs satisfying \cref{property:weighted-graphs} requires $n^{3-o(1)}$ time, then All-Edges Sparse Triangle on $n$-vertex quasirandom graphs requires $n^{2-o(1)}$ time. 
\end{lemma}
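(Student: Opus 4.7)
The plan is a hashing-based reduction from Exact Triangle to All-Edges Sparse Triangle in the style of~\cite{Patrascu10,williams2013finding,williams2020monochromatic}. Sample a random prime $p$ in an appropriately tuned range, and build the graph $G'$ on vertex set $V(G) \times \mathbb{Z}/p\mathbb{Z}$: for each edge $(u,v) \in E(G)$ with weight $w_{uv}$, oriented canonically around $X \to Y \to Z \to X$ (using antisymmetry to set the weights in the reverse direction), place an edge $\{(u,i), (v,j)\}$ in $G'$ whenever $j \equiv i + w_{uv} \pmod{p}$. Triangles in $G'$ correspond to triangles in $G$ whose weight sum is $\equiv 0 \pmod{p}$; via $U^{o(1)}$ shifted constructions (equivalently, an almost-linear-hash variant) this lifts to exactly zero-weight triangles of $G$, so a fast All-Edges Sparse Triangle algorithm on $G'$ yields a fast Exact Triangle algorithm on $G$.

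The quasirandom property of $G'$ is the crux. For the degree bound, each $(v,i)$ in $G'$ has degree $\deg_G(v)$, so a suitable choice of $p$ (possibly combined with a preprocessing step that splits high-degree vertices of $G$ into copies, redistributing their incident edges) yields $\max \deg(G') \le \sqrt{|V(G')|}$. For the $4$-cycle count, the key observation is that a $4$-cycle $((v_l, i_l))_{l=1}^4$ in $G'$ imposes four hash constraints around the cycle whose consistency, using the antisymmetry of $w$, is equivalent to the directed weight sum of the closed walk $v_1 \to v_2 \to v_3 \to v_4 \to v_1$ in $G$ being $\equiv 0 \pmod{p}$. A short case analysis shows that degenerate walks (those with repeated $v_l$'s) are automatically ruled out by the hash constraints, since they force equal hash indices and hence coinciding vertices in $G'$; thus the $v_l$'s form a genuine directed labeled $4$-cycle in $G$, and each such cycle with compatible weight sum contributes exactly $p$ undirected $4$-cycles to $G'$. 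By \cref{property:weighted-graphs} the zero-weight contribution is $\le p n^3$, and a standard prime-number-theorem argument over the random prime $p$ bounds the expected contribution from $4$-cycles whose weight sum is non-zero but happens to vanish modulo $p$, giving a total $4$-cycle count of $\tilde O(p n^3 + n^4)$ in $G'$.

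The main obstacle is balancing these three competing constraints simultaneously: the hardness translation demands that $|V(G')|^{2-o(1)}$ matches the source hypothesis $n^{3-o(1)}$ (aiming for $|V(G')| \asymp n^{3/2}$), while the quasirandom conditions require max degree $\le \sqrt{|V(G')|}$ and at most $|V(G')|^2$ many $4$-cycles. Since a naive single-prime hashing appears to leave a polynomial gap in the $4$-cycle bound, the final construction likely combines vertex splitting of $G$ (to control its max degree, allowing a smaller $p$) with the structural constraints imposed on $G$'s unweighted $4$-cycle count by the Sidon-like reductions of the preceding lemmas. Polylogarithmic slack from the prime-number-theorem estimate can be absorbed by padding $G'$ with isolated vertices, which neither changes the $4$-cycle count nor tightens the max-degree requirement. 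Antisymmetry in \cref{property:weighted-graphs} is essential throughout: it enables a consistent orientation of edges in the reduction, and it is precisely what lets $4$-cycles in $G'$ correspond to directed labeled $4$-cycles in $G$, whose zero-weight-sum count is controlled by the Property.
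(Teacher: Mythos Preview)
Your reduction is a P\u{a}tra\c{s}cu-style single-graph hashing construction: $G'$ on $V(G)\times \Z/p\Z$, so $|V(G')|=np$. The paper instead follows the Vassilevska Williams--Xu framework: randomly re-weight via $w'_{ij}=x\cdot w_{ij}-y_i+y_j$ over $\F_p$, bucket edges into $\sqrt{n}$ weight segments, and produce $O(n)$ instances of All-Edges Sparse Triangle each on $n$ vertices with max degree $O(\sqrt{n})$. The $4$-cycle analysis is then done \emph{conditionally} on a fixed zero-weight triangle $(a,b,c)$ landing in the instance, with a case split on how many vertices a putative $4$-cycle shares with $(a,b,c)$; the antisymmetry and zero-weight-$4$-cycle bound of \cref{property:weighted-graphs} enter only in two of those cases.

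Your approach has a genuine parameter gap that the suggested fixes do not close. For the reduction to yield a contradiction you need $(np)^{2}\le n^{3+o(1)}$, hence $p\le n^{1/2+o(1)}$ and $|V(G')|\le n^{1.5+o(1)}$. But then (i) the degree of $(v,i)$ in $G'$ equals $\deg_G(v)$, which can be $\Theta(n)\gg \sqrt{|V(G')|}\le n^{3/4}$; and (ii) the zero-weight directed $4$-cycles of $G$ alone contribute $p\cdot n^3$ undirected $4$-cycles to $G'$, which already exceeds the budget $|V(G')|^2\le n^{3}$ once $p\gg 1$. Vertex splitting does not rescue this: $G$ can have $\Theta(n^2)$ edges, so splitting to degree $\le n^{3/4}$ multiplies the vertex count, and \cref{property:weighted-graphs} gives no control on the \emph{unweighted} $4$-cycle count of $G$ (only on zero-weight directed $4$-cycles), so you cannot appeal to ``Sidon-like'' structure of $G$ itself. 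The crucial idea you are missing is to spread the work across $O(n)$ smaller instances via weight-bucketing rather than building one large hashed graph; this is exactly what decouples the degree and $4$-cycle constraints from the hardness-translation constraint.
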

\begin{proof}
We will reduce Exact Triangle on an $n$-vertex tripartite weighted graph $G=(V , E)$ with $V= A \cup B \cup C$ satisfying \cref{property:weighted-graphs} to All-Edges Sparse Triangle on quasirandom graphs, by following Vassilevska Williams and Xu \cite{williams2020monochromatic}'s reduction from Exact Triangle to All-Edges Sparse Triangle. 

Their reduction works as follows in this setting of parameters. Let $p$ be some sufficiently large prime (here, we need it to be larger than the absolute weight of any triangle or $4$-cycle). Then we can regard the weights $w$ as in $\F_p$, and the set of zero-weight triangle and the set of zero-weight $4$-cycles do not change. Let $x, \{y_v\}_{v \in V} \sim \F_p$ be independent uniform random variables. For any edge $(i, j)$ with weight $w_{ij}$, we define its new weight to be $w'_{ij} = x \cdot w_{ij} - y_i + y_j$. Clearly, we still have $w'_{ij}=-w'_{ji}$ and the set of zero-weight $k$-cycles does not change for any $3 \le k \le 4$ as long as $x \ne 0$ (which happens with high probability).
Then we split $\F_p$ to up to $\sqrt{n}+1$ contiguous segments $L_1, L_2, \ldots, L_{\sqrt{n}+1}$, each of length $\le \lfloor p/\sqrt{n}\rfloor$. We create an instance of All-Edges $O(1)$-Triangle Listing (listing $O(1)$ triangles per edge) for every triple $(i, j, k)$ as long as $0 \in L_i + L_j + L_k$. It is easy to see that there are $O(n)$ instances in total. For each instance, we initially create an empty graph $H$, and add the following edges to it:
\begin{itemize}
    \item every edge $(a, b) \in E \cap (A \times B)$ where $w'_{ab} \in L_i$;
    \item every edge $(b, c) \in E \cap (B \times C)$ where $w'_{bc} \in L_j$;
    \item and every edge $(c, a) \in E \cap (C \times A)$ where $w'_{ca} \in L_k$.
\end{itemize}
Finally, we remove all vertices in $H$ whose degree is greater than $M\sqrt{n}$ for some sufficiently large constant $M$. 

Vassilevska Williams and Xu \cite{williams2020monochromatic} showed the followings about the reduction, and we omit their proofs for conciseness. 
\begin{claim}[Claim 3.5 and 3.6 in \cite{williams2020monochromatic}]
Suppose $G$ has a zero-weight triangle and fix any zero-weight triangle  $(a, b, c) \in A \times B \times C$ in $G$. Up to $0.01$ error probability, there exists an instance $H$ that contains it as a triangle. Also, up to $0.01$ error probability, listing $O(1)$ triangles per edge in $H$ finds at least one zero-weight triangle in $G$. 
\end{claim}

We then add enough isolated vertices to each $H$, so that the maximum degree of the graph becomes $\sqrt{n}$, instead of $M\sqrt{n}$. 
We then show that, the instance $H$ that contains the zero-weight triangle $(a, b, c)$, is quasirandom, up to $0.1$ error probability. 

We consider the expected number of labeled $4$-cycles $H$ contains. 

Let $\caC=(v_1, v_2, v_3, v_4)$ be any labeled $4$-cycle in $G$. For simplicity, let $v_{5} = v_1$. 
For each $i$, let $u_i = a$ if $v_i \in A$, $u_i = b$ if $v_i \in B$ and $u_i = c$ if $v_i \in C$. In order for $\caC$ to lie in $H$, it is necessary that for each $i \in [4]$, $$w'_{v_i, v_{i+1}}-w'_{u_i, u_{i+1}}=x \cdot (w_{v_i, v_{i+1}}-w_{u_i, u_{i+1}}) - y_{v_i}+y_{v_{i+1}} - y_{u_i}+y_{u_{i+1}}\in (- \lfloor p/\sqrt{n}\rfloor, \lfloor p/\sqrt{n}\rfloor).$$
Consider the following cases:
\begin{enumerate}
    \item $\caC$ shares exactly one vertex with $(a, b, c)$. Without loss of generality, assume $v_1 \in \{a, b, c\}$. Consider the list of random variables $\left(w'_{v_i, v_{i+1}}-w'_{u_i, u_{i+1}}\right)_{i=1}^{3}$. Each random variable is uniformly at random and independent to all previous variables, as $y_{v_{i+1}}$ is a fresh random variable that is added to the $i$-th random variable in the list. Thus, the probability that all of them are in $(- \lfloor p/\sqrt{n}\rfloor, \lfloor p/\sqrt{n}\rfloor)$ is $\le (2/\sqrt{n})^{3}$. The number of such labeled $4$-cycles $\caC$ is at most $\binom{4}{1}\cdot n^{3}$, so the expected number of them falling in $H$ is at most $2^{3}\cdot \binom{4}{1} \cdot n^{1.5} \le 32n^{1.5}$.
    \item $\caC$ shares exactly two or three vertices with $(a, b, c)$. First, assume $\caC$ shares two vertices with $(a, b, c)$. Without loss of generality, assume $v_1, v_t \in \{a, b, c\}$ for some $2 \le t \le 4$. Similar to the previous case, the random variables $\left(w'_{v_i, v_{i+1}}-w'_{u_i, u_{i+1}}\right)_{i \in [3] \setminus \{t-1\}}$ are independent and uniformly at random. Thus, the probability that $H$ contains $\caC$ is at most $(2/\sqrt{n})^{2}$, and the expected number of such labeled $4$-cycles in $H$ is $\left(\binom{4}{2} \cdot 2! \cdot n^2\right) (2/\sqrt{n})^2 \le 48 n$. Similarly, the expected number of labeled $4$-cycles in $H$ that share three vertices with $(a, b, c)$ is $\left(\binom{4}{3} \cdot 3! \cdot n\right)(2/\sqrt{n}) \le 48 n^{1/2}$. 
    \item $\caC$ does not contain any vertex in $(a, b, c)$, and $\caC$ is a zero-weight labeled $4$-cycle. Since $G$ is a weighted graph satisfying \cref{property:weighted-graphs},  the number of zero-weight labeled $4$-cycles is at most $n^3$. Using the same reason as the first case, each $4$-cycle is in $H$ with probability $(2/\sqrt{n})^{3}$, so the expected number of such labeled $4$-cycle in $H$ is $8n^{3/2}$.
    \item $\caC$ does not contain any vertex in $(a, b, c)$, and $\caC$ is not a zero-weight labeled $4$-cycle. In this case, we aim to show the random variables $\left(w'_{v_i, v_{i+1}}-w'_{u_i, u_{i+1}}\right)_{i=1}^{4}$ are independent. Equivalently, we could show $$\sum_{i=1}^4 \left(w'_{v_i, v_{i+1}}-w'_{u_i, u_{i+1}} \right),\left(w'_{v_i, v_{i+1}}-w'_{u_i, u_{i+1}}\right)_{i=1}^{3}$$
    are independent. The following two claims further simplifies the first term. 
    \begin{claim}
    $\sum_{i=1}^4 w'_{u_i, u_{i+1}} = 0$. 
    \end{claim}
    \begin{proof}
    We inductively show $\sum_{i=1}^t w'_{u_i, u_{i+1}} = w'_{u_1, u_{t+1}}$ for $0 \le t \le 4$, where $w'_{v, v}$ is defined to be $0$. The base case $t = 0$ is clearly true. Now suppose the equation is true for $t-1$, it suffices to show $w'_{u_1, u_t}+w'_{u_t, u_{t+1}} = w'_{u_1, u_{t+1}}$. Consider the following cases:
    \begin{itemize}
        \item $u_1 = u_t$. Then $w'_{u_1, u_t}+w'_{u_t, u_{t+1}} = 0 + w'_{u_1, u_{t+1}}=w'_{u_1, u_{t+1}}$.
        \item $u_1 \ne u_t$. If $u_{t+1} = u_1$, then $w'_{u_1, u_t}+w'_{u_t, u_{t+1}} = w'_{u_1, u_t}+w'_{u_t, u_1} = 0 = w'_{u_1, u_{t+1}}$. Otherwise, $\{u_1, u_t, u_{t+1}\} = \{a, b, c\}$. Thus, $w'_{u_1, u_t}+w'_{u_t, u_{t+1}} + w'_{u_{t+1}, u_1} = 0$ as $(a, b, c)$ is a zero-weight triangle. This implies $w'_{u_1, u_t}+w'_{u_t, u_{t+1}} = w'_{u_1, u_{t+1}}$.
    \end{itemize}
    Therefore, $\sum_{i=1}^4 w'_{u_i, u_{i+1}} = w'_{u_1, u_{5}} = w'_{u_1, u_1} = 0$. 
    \end{proof}
    \begin{claim}
    $\sum_{i=1}^4 w'_{v_i, v_{i+1}} = x \cdot \sum_{i=1}^4 w_{v_i, v_{i+1}}$. 
    \end{claim}
    \begin{proof}
    \begin{align*}
    \sum_{i=1}^4 w'_{v_i, v_{i+1}} &= \sum_{i=1}^4 \left(x \cdot w_{v_i, v_{i+1}} - y_{v_i} + y_{v_{i+1}}\right)\\
    &= -y_{v_1} + y_{v_{5}} + x \cdot \sum_{i=1}^4 w_{v_i, v_{i+1}}\\
    &= x \cdot \sum_{i=1}^4 w_{v_i, v_{i+1}}. \qedhere
    \end{align*}
    \end{proof}
    Let $W = \sum_{i=1}^4 w_{v_i, v_{i+1}}$. 
    After the simplification, we only need to show $\left(x \cdot W \right), \left(w'_{v_i, v_{i+1}}-w'_{u_i, u_{i+1}}\right)_{i=1}^{3}$ are independent. Since $\caC$ is not a zero-weight labeled $4$-cycle, $W \ne 0$, so $x \cdot W$ is uniformly at random. Each following variable $w'_{v_i, v_{i+1}}-w'_{u_i, u_{i+1}}$ for $i$ from $1$ to $3$ contains a fresh random variable $y_{v_{i+1}}$, so it is independent to all previous random variables. Thus, $\left(x \cdot W \right), \left(w'_{v_i, v_{i+1}}-w'_{u_i, u_{i+1}}\right)_{i=1}^{3}$ are independent, and so does $\left(w'_{v_i, v_{i+1}}-w'_{u_i, u_{i+1}}\right)_{i=1}^{4}$. Therefore, the probability that $C$ lies in $H$ is $\le (2/\sqrt{n})^4$. The number of labeled $4$-cycles in $G$ is bounded by $n^4$, so the expected number of such labeled $4$ cycles in $H$ is at most $16n^{2}$. 
\end{enumerate}
Overall, we have shown that the expected number of labeled $4$-cycles in $H$ is at most $(32+48+48+8+16)n^{2} \le 200 n^{2}$. By Markov's inequality, with error probability $0.1$, the graph has at most $2000n^{2}$ labeled $4$-cycles. 
By padding the graph with $O(n)$ isolated vertices, we obtain a graph with maximum degree at most $\sqrt{n}$ and at most $n^2$ $4$-cycles, as desired.

Suppose All-Edges $O(1)$-Triangle Listing can be solved in $T(n)$ time on $n$-vertex quasirandom graphs. Given an Exact Triangle instance on an $n$-vertex tripartite weighted graph satisfying \cref{property:weighted-graphs}, we run the above reduction to produce $O(n)$ All-Edges $O(1)$-Triangle Listing instances, and run the $T(n)$ time algorithm on each of the instances. It is possible that some instances are not on quasirandom graphs, so we need to stop the algorithm after $T(n)$ time even if it is still running. For each triangle the algorithm lists, we verify whether it is a zero-weight triangle in the original graph. By the above analysis, we have constant probability to find a zero-weight triangle if there is one. We can improve the success probability by repeating $O(\log n)$ times. Thus, if Exact Triangle  on  $n$-vertex tripartite  weighted graphs satisfying \cref{property:weighted-graphs} requires $n^{3-o(1)}$ time, All-Edges $O(1)$-Triangle Listing on $n$-vertex quasirandom graphs requires $n^{2-o(1)}$ time. 

Finally, it is known that All-Edges $O(1)$-Triangle Listing reduces to $\tO(1)$ instances of All-Edges Sparse Triangle, and each All-Edges Sparse Triangle instance is on a subgraph of the All-Edges $O(1)$-Triangle Listing instance \cite{williams2020monochromatic}. If the All-Edges $O(1)$-Triangle Listing is on a quasirandom graph, then so are the All-Edges Sparse Triangle instances. 
\end{proof}

Now we can immediately prove \cref{thm:ae-lb}:
\AllEdgesLowerBound*

\begin{proof}
Follows by combining \cref{thm:main}, \cref{lem:3sumsidon23sumconv}, \cref{lem:3sumconv2zwt}, and \cref{lem:zwt2ae}. 
\end{proof}

We  need the following standard lemma (e.g., \cite{sparsegraph}) before we prove \cref{cor:ae-lb}. 
\begin{lemma}
\label{lem:spectral}
Consider an undirected unweighted graph on $n$ vertices with maximum degree $d$ and let $C_k$ be the number of closed $k$-step walks. 
Then $C_{k} \le C_4 \cdot d^{k-4}$ for every $k \ge 4$. 
\end{lemma}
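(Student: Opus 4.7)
The plan is to use the standard spectral-graph-theory interpretation of closed walks as traces of powers of the adjacency matrix. Let $A$ be the $n\times n$ symmetric adjacency matrix of the graph, and let $\lambda_1,\dots,\lambda_n\in\R$ be its (real) eigenvalues. The number of closed $k$-step walks equals
\[
    C_k \;=\; \tr(A^k) \;=\; \sum_{i=1}^n \lambda_i^k.
\]
In particular, $C_4=\sum_i \lambda_i^4$, and this sum is term-wise non-negative since each $\lambda_i^4\ge 0$.

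First I would invoke the standard fact that, for the adjacency matrix of an undirected graph, the spectral radius is at most the maximum degree, i.e.\ $|\lambda_i|\le d$ for every $i$. (This follows from a one-line Perron-style argument: if $Av=\lambda v$ and $|v_j|$ is the largest entry in absolute value, then $|\lambda|\cdot|v_j|=|(Av)_j|\le \sum_{\ell\sim j}|v_\ell|\le d\cdot|v_j|$.)

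Next, I would bound $C_k$ by pulling out four copies of the eigenvalue and using $|\lambda_i|^{k-4}\le d^{k-4}$ for each $i$:
\[
    C_k \;=\; \sum_{i=1}^n \lambda_i^k \;\le\; \sum_{i=1}^n |\lambda_i|^k \;=\; \sum_{i=1}^n \lambda_i^{4}\cdot|\lambda_i|^{k-4} \;\le\; d^{k-4}\sum_{i=1}^n \lambda_i^{4} \;=\; C_4\cdot d^{k-4}.
\]
The only subtle point (and the only thing resembling an ``obstacle'') is the case of odd $k$, where individual terms $\lambda_i^k$ can be negative; but the first inequality $C_k\le\sum_i|\lambda_i|^k$ handles that uniformly, and then $|\lambda_i|^k=\lambda_i^4\cdot|\lambda_i|^{k-4}$ works regardless of parity since $\lambda_i^4=|\lambda_i|^4$. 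This completes the proof in a few lines, and no further ingredients from the paper are needed.
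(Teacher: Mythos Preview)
Your proof is correct and essentially identical to the paper's: both express $C_k=\tr(A^k)=\sum_i\lambda_i^k$, bound the spectral radius by the maximum degree (the paper cites Gershgorin, you give the equivalent one-line Perron argument), and then use $\lambda_i^k\le \lambda_i^4\cdot d^{k-4}$ termwise. If anything, your handling of the odd-$k$ case via $\sum_i\lambda_i^k\le\sum_i|\lambda_i|^k$ is slightly more explicit than the paper's one-line inequality.
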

\begin{proof}
Let $A$ be the adjacency matrix of the graph. Then \[ C_{k} = \tr(A^k).\]

    Note that $A$ is a real-symmetric matrix, and let
  $\{\lambda_i\}$ be  the (real) eigenvalues of $A$.
 By Gershgorin disc theorem, all eigenvalues $\lambda_i$ satisfy \[|\lambda_i|\le \max_{i'} \sum_{j'}|A_{i',j'}| = \max_{i'}\deg(i')\le d.\] 
 Then,
    \[ \tr (A^{k}) = \sum_{i}  \lambda_i^{k} \le \max_i |\lambda_i|^{k-4} \sum_{i}  \lambda_i^{4} \le d^{k-4}\tr(A^{4}),\]
    and 
\[ C_{k} = \tr (A^{k})\le  d^{k-4}\tr (A^{4})  = C_4\cdot d^{k-4}. \qedhere\]
\end{proof}

Recall \cref{cor:ae-lb}:
\AllEdgesLowerBoundCor*
\begin{proof}
Let $G$ be an All-Edges Sparse Triangle instance on $n$-vertex quasirandom graphs. By \cref{thm:ae-lb}, solving All-Edges Sparse Triangle on $G$ requires $n^{2-o(1)}$ time under the $3$SUM hypothesis.  

By \cref{def:quasirandom}, the number of closed $4$-walks in $G$ is at most $n^2 + 2n^2 \le 3n^2$. 
Applying \cref{lem:spectral} with maximum degree $d\le \sqrt{n}$, we get that the number of closed $k$-walks in $G$ is at most $3 n^{k/2}$ for every $k \ge 4$, and so does the number of $k$-cycles. 
We can reduce the constant $3$ to $1$ by padding enough isolated vertices.

We then reduce the number of triangles to $n^{1.5}$, by adapting common techniques for witness listing~\cite{alon1992witnesses, seidel1995all}. 

By random color-coding \cite{alon1995color}, we can assume $G$ is tripartite on vertices $A \cup B \cup C$, and we are only required to report whether each edge between $A$ and $B$ are in a triangle. For $i$ from $\log n$ to $0$, we create a graph $G^i$ by randomly keeping each vertex in $C$ with probability $\frac{1}{2^i}$. For each $G^i$, we run an All-Edges Sparse Triangle algorithm to find if each $E(G^i) \cap (A \times B)$ is in a triangle. If an edge $(a, b)$ is found to be in a triangle, we delete it from $G$ (so it will not exist in $G^{i'}$ for any $i' \le i$ either). 

This algorithm is correct because in the final stage $i = 0$, $G^0 = G$, so will run an All-Edges Sparse Triangle algorithm on $G$, without only those edges between $A$ and $B$ that have been found in a triangle. 

Furthermore, each $G^i$ is a subgraph of $G$, so it contains at most $n^{k/2}$ $k$-cycles for $k \ge 4$. If an edge $(a, b) \in A \times B$ is in at least $D 2^i \log n$ triangles for some sufficiently large constant $D$, it is in a triangle in $G^{i'}$ for some $i' > i$ with high probability, so this edge is already deleted before we sample $G^i$. Therefore, we can assume all edges  $(a, b) \in A \times B$ before we sample $G^i$ are in at most $D 2^i \log n$ triangles. Thus, with high probability, each edge in $(A \times B) \cap E(G^i)$ is in at most $O(\log n)$ triangles, so $G^i$ contains $O(n^{1.5} \log n)$ triangles in total. 

By padding each $G^i$ with $O(n \log n)$ vertices, we can assume the number of triangles in each $G^i$ is at most $n^{1.5}$. 
\end{proof}
\section{Applications to Fine-Grained Complexity of Graph Problems}
\label{sec:graph}

In this section, we show our lower bounds for $4$-Cycle Enumeration, Approximate Distance Oracles, Approximate Dynamic Shortest Paths and Approximate All-Nodes Shortest Cycles, as applications of \cref{cor:ae-lb}. All the reductions start by following a random sampling step in \cite{AbboudBKZ22}, which we outline below.

\begin{lemma}
\label{lem:application-lemma}
Fix any constant $\sigma \in (0, 0.5)$, and any integer $k \ge 3$. Under the $3$SUM hypothesis, it requires $n^{2-o(1)}$ time to solve $n^{3\sigma}$ instances of All-Edges Sparse Triangle on tripartite graphs with $O(n^{1-\sigma})$ vertices and maximum degree $O(n^{0.5-\sigma})$, such that the total number of cycles of length at most $k$ over all instances is $O(n^{k/2-(k-3)\sigma})$ .
\end{lemma}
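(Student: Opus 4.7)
The plan is to derive this from \cref{cor:ae-lb} via the random subsampling framework of \cite{AbboudBKZ22}. Start with an instance of All-Edges Sparse Triangle on an $n$-vertex graph $G$ with maximum degree $\sqrt n$ and at most $n^{k'/2}$ many $k'$-cycles for every $3\le k'\le k$. By standard color-coding (assign each vertex to one of three classes $A,B,C$ uniformly at random, which costs only a constant-factor blow-up), reduce to reporting, for every edge in $A\times B$, whether it lies in a triangle with third vertex in $C$. Then, independently for $i=1,\dots, N$ where $N=\Theta(n^{3\sigma}\log n)$, form subsets $A_i\subseteq A,\, B_i\subseteq B,\, C_i\subseteq C$ by keeping each vertex independently with probability $p=n^{-\sigma}$, and let $G_i$ be the induced subgraph on $A_i\cup B_i\cup C_i$. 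The $\log n$ factor is absorbed into $n^{o(1)}$ in the final lower bound.

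The next step is to verify the structural guarantees. Each $|A_i|,|B_i|,|C_i|$ has expectation $\Theta(n^{1-\sigma})$, so a Chernoff bound gives $|A_i\cup B_i\cup C_i|=O(n^{1-\sigma})$ with probability $1-1/\mathrm{poly}(n)$. For any fixed vertex $v$, its degree in $G_i$ is a sum of at most $\sqrt n$ independent Bernoullis with mean $p$, hence is $O(n^{0.5-\sigma}+\log n)=O(n^{0.5-\sigma})$ with probability $1-1/\mathrm{poly}(n)$ (using that $\sigma<1/2$ is a constant). A union bound over all $v$ and all $i\le N$ shows both size and degree bounds hold simultaneously with high probability.

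For the cycle count, fix $k'\in\{3,\dots,k\}$. A fixed $k'$-cycle of $G$ survives in $G_i$ exactly when all of its $k'$ vertices are sampled, which happens with probability $p^{k'}=n^{-\sigma k'}$. Since $G$ has at most $n^{k'/2}$ such cycles, the expected total number of $k'$-cycles across all $N$ instances is at most
\[
 N\cdot n^{k'/2}\cdot n^{-\sigma k'}=\tilde O\!\left(n^{k'/2-(k'-3)\sigma}\right).
\]
Because $\sigma<1/2$, the exponent $k'/2-(k'-3)\sigma=k'(\tfrac12-\sigma)+3\sigma$ is increasing in $k'$, so summing over the constantly many values $k'\in\{3,\dots,k\}$ yields an expected total of $\tilde O(n^{k/2-(k-3)\sigma})$ cycles of length $\le k$ across all instances. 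By Markov's inequality, this total is $O(n^{k/2-(k-3)\sigma+o(1)})$ with probability at least, say, $3/4$.

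Finally, for correctness: for any edge $(a,b)\in A\times B$ lying in a triangle $(a,b,c)$ of $G$ (with the intended coloring), the probability that all three of $a,b,c$ get sampled in some round $i$ is $1-(1-p^3)^N\ge 1-e^{-Np^3}=1-1/\mathrm{poly}(n)$ for appropriately chosen constant in $N$. Union-bounding over all $O(n^2)$ such edges, every triangle-incident edge is witnessed in at least one $G_i$ with high probability. Combining these events, with constant probability all structural and correctness guarantees hold at once; conditioning on this event, a hypothetical total-time $O(n^{2-\eps})$ algorithm for the $N$ sampled instances would solve the original All-Edges Sparse Triangle problem on $G$ in total time $n^{2-\eps+o(1)}$, contradicting \cref{cor:ae-lb}. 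The main subtlety, and essentially the only nontrivial step, is ensuring the three guarantees (bounded sizes/degrees, bounded total cycle count, detection of every triangle) all hold simultaneously, which is handled by combining Chernoff (w.h.p.\ events) with Markov on the cycle sum (a constant-probability event).
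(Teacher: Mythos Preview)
Your proposal is correct, but it implements the random subsampling slightly differently from the paper. The paper \emph{partitions} each part into $t=n^{\sigma}$ groups (so each vertex of $A$ lands in exactly one $A_i$, etc.) and then takes all $t^3=n^{3\sigma}$ triple products $G_{ij\ell}$; this guarantees deterministically that every triangle of $G$ lands in exactly one instance, so no coverage argument is needed, and the instance count is exactly $n^{3\sigma}$. Your approach instead draws $N=\Theta(n^{3\sigma}\log n)$ \emph{independent} subsamples with vertex-survival probability $p=n^{-\sigma}$, and then argues coverage of each triangle via a union bound; this costs an extra $\log n$ factor in the number of instances (and hence also in the total cycle bound), which, as you note, is harmless against the $n^{2-o(1)}$ lower bound. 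Both routes are standard; the partition version is a bit cleaner because it hits the stated instance count $n^{3\sigma}$ and the stated cycle bound $O(n^{k/2-(k-3)\sigma})$ on the nose, and because the only probabilistic event that needs boosting is the Markov bound on the cycle sum (which the paper handles by repeating the whole reduction $O(\log n)$ times), whereas your version needs to combine coverage, concentration, and Markov simultaneously.
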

\begin{proof}

Fix an All-Edges Sparse Triangle instance on an $n$-node quasirandom graph $G = (V, E)$. By \cref{cor:ae-lb}, it requires $n^{2-o(1)}$ time under the $3$SUM hypothesis. 
By the standard color-coding technique~\cite{alon1995color}, we can assume $G$ is tripartite with three parts $A, B, C$. 

We partition the vertices of $G$ into $t=n^\sigma$ groups $A_1,\dots,A_t, B_1,\dots,B_t, C_1,\dots,C_t$, by  independently putting each vertex in $A$ into a uniformly random $A_i$, each vertex in $B$ into a uniformly random $B_i$, and each vertex in $C$ into a uniformly random  $C_i$. 
Then, it suffices to solve All-Edges Sparse Triangle on smaller instances induced by $A_i\cup B_j\cup C_\ell$ for all $(i,j,\ell)\in [t]\times [t]\times [t]$. Denote the instance by $G_{ij\ell}$.
By standard Chernoff bound, each $G_{ij\ell}$ has, with high probability, $\Theta(n/t)$ vertices and maximum degree $O(n^{0.5}/t)$. Also, each $G_{ij\ell}$ has at most $O(n^{k'/2}/t^{k'})$ $k'$-cycles in expectation for any $3 \le k' \le k$.
Then, the expected total number of $k'$-cycles across all instances is at most $O(n^{k'/2}/t^{k'-3})$. As $\sigma < 0.5$, $O(n^{k'/2}/t^{k'-3})$ is maximized when $k' = k$, so the expected number of cycles of length between $3$ and $k$ is $O(n^{k/2}/t^{k-3})$. Thus, with constant probability, the total number of cycles of length between $3$ and $k$ is $O(n^{k/2}/t^{k-3})$, and repeating the whole reduction $O(\log n)$ times boosts the success probability to $1/\poly(n)$. 
\end{proof}

In some applications, we are able to get a more refined bound by using an unbalanced version of \cref{lem:application-lemma}, as stated below. 
\begin{lemma}
\label{lem:application-lemma-unbalan}
Fix any constants $\sigma_a,\sigma_b \in (0, 0.5)$, and any integer $k \ge 3$. Under the $3$SUM hypothesis, it requires $n^{2-o(1)}$ time to solve $n^{2\sigma_a+\sigma_b}$ instances of All-Edges Sparse Triangle on tripartite graphs $G_i=(A_i\cup B_i\cup C_i,E_i)$ such that
\begin{itemize}
    \item $|A_i|=|C_i| = O(n^{1-\sigma_a}), |B_i|=O(n^{1-\sigma_b})$.
    \item Every vertex in $A_i\cup B_i\cup C_i$ has $O(n^{0.5-\sigma_a})$ neighbors in $A_i$ or $C_i$, and $O(n^{0.5-\sigma_b})$ neighbors in $B_i$.
    \item Summing over all instances $G_i$, the total number of cycles of length at most $k$ that use exactly one edge from $A_i\times C_i$ is $O(n^{k/2-(k-3)(\sigma_a+\sigma_b)/2})$.
    \item We only need to report whether each edge from $E_i \cap (A_i \times C_i)$ is in a triangle. 
\end{itemize}
\end{lemma}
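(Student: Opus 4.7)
The plan is to adapt the proof of \cref{lem:application-lemma} with an asymmetric partitioning. I would start from an $n$-node quasirandom graph $G$ whose All-Edges Sparse Triangle problem requires $n^{2-o(1)}$ time under the $3$SUM hypothesis (\cref{cor:ae-lb}), apply random color-coding to reduce to detecting triangles of the form $(a,b,c)\in A\times B\times C$, and then partition $A$ and $C$ into $t_a:=n^{\sigma_a}$ groups each and $B$ into $t_b:=n^{\sigma_b}$ groups, placing each vertex independently and uniformly. This produces $t_a^2 t_b=n^{2\sigma_a+\sigma_b}$ instances $G_{ij\ell}$ induced by $A_i\cup B_j\cup C_\ell$. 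Checking $A_i\times C_\ell$ edges in each $G_{ij\ell}$ suffices to determine which original $A$-$C$ edges lie in a triangle; the other two edge types can be handled by two symmetric invocations of the same reduction with permuted role assignments among $A,B,C$, incurring only a constant-factor overhead.

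The size and degree bounds then follow from standard Chernoff inequalities: with high probability $|A_i|,|C_\ell|=O(n^{1-\sigma_a})$ and $|B_j|=O(n^{1-\sigma_b})$, and since every vertex in $G$ has at most $\sqrt{n}$ neighbors, each vertex has $O(n^{0.5-\sigma_a})$ neighbors in each group of $A\cup C$ and $O(n^{0.5-\sigma_b})$ neighbors in each group of $B$.

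The main step is bounding the total number of surviving cycles of length at most $k$ that use exactly one $A_i\times C_\ell$ edge. The key structural observation is that, since $G$ is tripartite, any such $k'$-cycle must alternate between $B$ and $A\cup C$ along its $k'-1$ non-$AC$ edges; so $k'$ is forced to be odd, with exactly $b=(k'-1)/2$ vertices in $B$ and $a+c=(k'+1)/2$ vertices in $A\cup C$ (where $a,c\ge 1$ account for the endpoints of the unique $AC$ edge). The probability that such a cycle survives into some instance then equals
\[
n^{-\sigma_a(a-1)-\sigma_b(b-1)-\sigma_a(c-1)} \;=\; n^{-(k'-3)(\sigma_a+\sigma_b)/2}.
\]
Combining with the bound of $n^{k'/2}$ on the total number of $k'$-cycles in $G$ from \cref{cor:ae-lb} and summing over $k'\in\{3,\dots,k\}$ (which is maximized at $k'=k$ since $\sigma_a+\sigma_b<1$) gives the claimed $O(n^{k/2-(k-3)(\sigma_a+\sigma_b)/2})$ bound in expectation. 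A Markov inequality followed by $O(\log n)$ independent repetitions then promotes this to a high-probability guarantee, as in \cref{lem:application-lemma}.

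The main obstacle will be obtaining the symmetric exponent $(\sigma_a+\sigma_b)/2$ rather than a weaker one-sided bound such as $\max\{\sigma_a,\sigma_b\}$. This is precisely where the ``exactly one $A_i\times C_\ell$ edge'' restriction becomes indispensable: the tripartite parity observation above forces the counts of $B$-vertices and of $A\cup C$-vertices in the cycle to differ by exactly one, so the survival exponent splits evenly between $\sigma_a$ and $\sigma_b$. Without this restriction, cycles of profile $ABAB\cdots$ or $ACAC\cdots$ would dominate and only yield a lopsided bound depending on whichever of $\sigma_a,\sigma_b$ is smaller.
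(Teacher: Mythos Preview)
Your proposal is correct and follows essentially the same approach the paper intends (the paper omits the proof, saying it is ``almost the same as \cref{lem:application-lemma}''). Your parity argument that a cycle with exactly one $A$--$C$ edge must have $(k'-1)/2$ vertices in $B$ and $(k'+1)/2$ in $A\cup C$ is exactly the key observation that makes the unbalanced exponent come out as $(\sigma_a+\sigma_b)/2$; the remaining steps (Chernoff, Markov, $O(\log n)$ repetitions) are identical to \cref{lem:application-lemma}. One small simplification: the ``two symmetric invocations'' you mention are not needed, because the hard starting problem from \cref{cor:ae-lb} can already be taken (via the color-coding step in its proof) to require reporting only the $A$--$C$ edges, so a single asymmetric partitioning suffices.
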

When $\sigma_a,\sigma_b=\sigma$, \cref{lem:application-lemma-unbalan} is the roughly same as \cref{lem:application-lemma}. The proof of \cref{lem:application-lemma-unbalan} is almost the same as \cref{lem:application-lemma}, and we omit it for simplicity.

\subsection{\texorpdfstring{$4$}{4}-Cycle Enumeration}
Following \cite{AbboudBKZ22} we show tight $3$SUM hardness for the $4$-Cycle Enumeration problem, improving the bounds obtained by \cite{AbboudBKZ22}.

Recall \cref{thm:4cycle_lower}:

\FourCycleLower*
By a straightforward modification of the $O(\min\{n^2,m^{4/3}\})$-time $4$-cycle detection algorithm in \cite{AlonYZ97}, we obtain a $4$-cycle enumeration algorithm with the same pre-processing time $O(\min\{n^2,m^{4/3}\})$ and $O(1)$ delay. 
This algorithm is described in \cref{sec:4-cycle-enum-ub}. \cref{thm:4cycle_lower} shows that this running time is tight under $3$SUM hypothesis: the pre-processing time cannot be improved to $O(m^{4/3-\eps})$ or $O(n^{2-\eps})$, for any $\eps>0$.

\begin{proof}[Proof of \cref{thm:4cycle_lower}]
  The arguments follow \cite{AbboudBKZ22}, with only two differences: (1) we start from the stronger lower bound \cref{lem:application-lemma} (which in turn was implied by \cref{cor:ae-lb}), and (2) to avoid changing the graph density, we do not subdivide edges as \cite{AbboudBKZ22} did.

Suppose for the sake of contradiction that there is a $4$-Cycle Enumeration algorithm $\caA$ with $O(n^{2-\eps})$ pre-processing time and $n^{o(1)}$ delay on $n$-node graphs with $\lfloor 0.49 n^{1.5}\rfloor$ edges. 

We first apply \cref{lem:application-lemma} with $k=4$. 
For each small instance in \cref{lem:application-lemma}, which is a $3$-partite graph $G_3$ with vertex set $A \cup B\cup C$, create a $4$-partite graph $G_4$ with vertex set $A\cup B\cup C\cup C'$, in which $C'$ is a copy of $C$, $E(A,B),E(B,C),E(A,C')$ are copies of the edge sets $E(A,B),E(B,C),E(A,C)$ in graph $G_3$, and we add a perfect matching between $C,C'$ so that $c\in C, c'\in C'$ corresponding to the same vertex of $G_3$ are connected by an edge in $G_4$.   Note that a triangle in $G_3$ becomes a $4$-cycle in $G_4$, and observe that all the newly introduced $4$-cycles in $G_4$ must come from triangles in $G_3$.

We run the $4$-Cycle Enumeration algorithm $\caA$ on $G_4$, which has $n_0 = \Theta(n^{1-\sigma})$ vertices and $m_0 \le O(n^{1.5-2\sigma}) = O(n_0^{1.5}/n^{0.5\sigma})$ edges.  The pre-processing time is $O(n_0^{2-\eps})$.  (The edge density here is much smaller than the assumed density in the statement of \cref{thm:4cycle_lower}, but this can be easily fixed by padding a dense $4$-cycle free graph on $O(n_0)$ vertices, constructed in \cite{4cyclefree1,4cyclefree2})

The total pre-processing time across all $n^{3\sigma}$ instances is $O(n_0^{2-\eps}\cdot n^{3\sigma})$. The total time spent on outputting $4$-cycles is upper bounded by the total number of $4$-cycles and triangles across all instances, $O(n^{2-\sigma})$. Choosing $\sigma=\eps/4$, we get a subquadratic time algorithm for solving all the All-Edges Sparse Triangle instances produced by \cref{lem:application-lemma}, contradicting to the $3$SUM hypothesis.
\end{proof}

\subsection{Distance Oracles}

The following theorem follows by combining the approach in \cite{AbboudBKZ22} and \cref{lem:application-lemma}.

\begin{theorem}
\label{thm:approx-DO-lb-m}
Assuming the $3$SUM hypothesis, 
    for any constant integer $k \ge 3$ and $\eps, \delta>0$, there is no $O(n^{1+\frac{2}{k - 1} - \eps})$ time algorithm that can $(\lceil k / 2 \rceil - \delta)$-approximate the distances between $m$ given pairs of vertices in a given $n$-vertex $m$-edge undirected unweighted graph, where $m = \Theta(n^{1+\frac{1}{k-1}})$.  
\end{theorem}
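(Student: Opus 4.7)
The plan is to combine the All-Edges Sparse Triangle hardness of \cref{lem:application-lemma} with a bipartite gadget reduction in the spirit of \cite{AbboudBKZ22}. Throughout I use $n$ for the 3SUM input size. Let $\ell = \lceil k/2 \rceil$, set $\sigma_0 = (k-3)/(2(k-2))$, and pick $\sigma = \sigma_0 + \eta$ for a small constant $\eta > 0$ of order $\eps \cdot (k-1)^2/(k-2)^2$. Applying \cref{lem:application-lemma} yields $n^{3\sigma}$ tripartite All-Edges Sparse Triangle subinstances $G_i = (A_i \cup B_i \cup C_i, E_i)$ on $N := \Theta(n^{1-\sigma})$ vertices with maximum degree $O(n^{0.5-\sigma})$, whose combined count of cycles of length at most $k$ is $O(n^{k/2-(k-3)\sigma})$; it suffices to solve all of them in $n^{2-\Omega(1)}$ time to contradict the 3SUM hypothesis.

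For each subinstance I would build a distance-oracle input $H_i$ as follows: keep the edges of $E_i$ in $(A_i \times B_i) \cup (B_i \times C_i)$, discard those in $A_i \times C_i$ (these form the query set $Q_i$), and adjoin a disjoint auxiliary graph on $\Theta(N)$ fresh vertices with $\Theta(N^{1+1/(k-1)})$ extra edges so that $H_i$ has the density $m = \Theta(N^{1+1/(k-1)})$ required by the theorem. Since the auxiliary component is disconnected from the original vertices, it leaves every query distance untouched, and the retained edges make $H_i$ bipartite with parts $A_i \cup C_i$ and $B_i$, so every $d_{H_i}(a,c)$ is even. The key dichotomy is that $d_{H_i}(a,c) = 2$ iff $(a,c)$ lies in a triangle of $G_i$, while any shorter-than-$2\ell$ distance $2j$ ($j < \ell$) together with $(a,c) \in E_i$ yields a cycle of length $2j+1 \le 2\ell-1 \le k$ in $G_i$; hence the number of such ``bad'' no-triangle pairs over all subinstances is at most $O(n^{k/2-(k-3)\sigma})$.

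Assuming for contradiction an $O(N^{1+2/(k-1)-\eps})$-time $(\ell - \delta)$-approximate distance oracle, I would invoke it on each $(H_i, Q_i)$ and declare $(a,c)$ a NO pair whenever the returned value exceeds $2(\ell - \delta)$ (a correct verdict, since $d=2$ would give a smaller approximation); for all remaining ``flagged'' pairs I verify directly whether $N_{G_i}(a) \cap N_{G_i}(c) \neq \emptyset$ in $O(n^{0.5-\sigma})$ time. Correctness is immediate, and the crucial counting point is that the flagged pairs are exactly (true YES) $\cup$ (bad pairs); both classes are absorbed into the short-cycle budget $O(n^{k/2-(k-3)\sigma})$ -- not the much larger query total $|E(A_i,C_i)|$ -- since every triangle is a $3$-cycle and every bad pair sits on a $(\le 2\ell - 1)$-cycle.

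The running time is then $T_{\text{oracle}} + T_{\text{verify}}$, where $T_{\text{oracle}} = n^{3\sigma+(1-\sigma)(1+2/(k-1)-\eps)}$ and $T_{\text{verify}} = O(n^{k/2 + 0.5 - (k-2)\sigma})$. A short computation gives that the base exponent $h(\sigma) := 3\sigma + (1-\sigma)(1+2/(k-1))$ equals exactly $2$ at $\sigma = \sigma_0$ -- the same identity that makes the subinstance edge count $N^{1+1/(k-1)}$ match the theorem's density -- while $h'(\sigma) = 2(k-2)/(k-1) > 0$, and $T_{\text{verify}}$ equals $n^2$ at $\sigma_0$ with slope $-(k-2)$. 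The main obstacle is therefore a tight balancing of $\sigma$: verification requires $\sigma > \sigma_0$ while the oracle cost requires $\sigma < \sigma_0 + c_k \eps$ for an explicit $c_k > 0$ depending only on $k$. The key check is that this interval is nonempty for every $\eps > 0$; choosing $\eta$ in its interior makes both $T_{\text{oracle}}$ and $T_{\text{verify}}$ equal to $n^{2 - \Omega(\eps)}$, giving the required contradiction with \cref{lem:application-lemma}.
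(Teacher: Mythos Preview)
Your proposal is correct and follows essentially the same approach as the paper: apply \cref{lem:application-lemma}, delete the $A_i\times C_i$ edges to obtain a bipartite graph, query the approximate oracle on those pairs, brute-force verify the flagged ones, and bound the verification cost by the short-cycle budget $O(n^{k/2-(k-3)\sigma})$; the paper simply plugs in $\sigma=\frac{k-3}{2(k-2)}+\frac{\eps}{4}$ rather than arguing via derivatives, and notes $\sigma<0.5$ since $\eps\le\frac{2}{k-1}$ without loss of generality.
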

\begin{proof}
Assume that such an algorithm $\caA$ exists for the sake of contradiction.

 We apply \cref{lem:application-lemma}. 
 For any instance $G=(A \cup B \cup C, E)$ produced by \cref{lem:application-lemma}, it suffices to test whether every edge in $(A \times C) \cap E$ is in a triangle. We first remove all edges between $A$ and $C$ from $G$ and call the new graph $G'$. Then we use $\caA$ to approximate the distances between $(a, c)$ on $G'$ if $(a, c)$ is an removed edge. The number of vertices $n_0$ in each instance is $O(n^{1-\sigma})$ and the number of edges $m_0$ in each instance is $O(n^{1.5-2\sigma})$. We will set $\sigma$ so that $m_0 = O(n_0^{1+\frac{1}{k-1}})$, and we can guarantee $m_0 = \Theta(n_0^{1+\frac{1}{k-1}})$ by padding a dense graph. Thus, this takes  $O((n^{1-\sigma})^{1+\frac{2}{k-1}-\eps})$ time per instance. 
 Finally, if the outputted distance between $a$ and $c$ is at most $2\lceil k/2\rceil - 1$, we check if $(a, c)$ is in a triangle in $G$ in time asymptotically bounded by the maximum degree of $G$, $O(n^{0.5-\sigma})$.
 
 \paragraph{Correctness. } If some edge $(a, c)$ is in a triangle $(a, b, c)$, and $(a, b, c) \in A \times B \times C$, running algorithm $\caA$ on graph $G'$ for query $(a, c)$ would return a distance at most $\lfloor 2 \cdot (\lceil k/2\rceil-\delta) \rfloor \le 2\lceil k/2\rceil - 1$, so we will check if $(a, c)$ is in a triangle in $G$. Therefore, the algorithm will find at least one triangle for each edge $(a, c)$ that is in a triangle. 
 
 \paragraph{Running time. } Note that every time $
 \caA$ outputs a distance at most $2\lceil k/2\rceil - 1$ for $(a, c)$ in $G'$, there must be a path between $a$ and $c$ of length at most $2\lceil k/2\rceil - 1$. Furthermore, since $G'$ is bipartite, every path between  $a$ and $c$ must have even length. Thus, the must be a path between $a$ and $c$ of length at most $2\lceil k/2\rceil-2 \le k-1$, 
 so the edge $(a, c)$ is in a cycle of length at most $k$ in $G$. Thus, the total number of checks is asymptotically bounded by the total number of cycles of length between $3$ and $k$, which is $O(n^{k/2-(k-3)\sigma})$. Therefore, the running time of the algorithm for handling all instances produced by \cref{lem:application-lemma} is 
 $$O\left(n^{3\sigma} \cdot (n^{1-\sigma})^{1+\frac{2}{k-1}-\eps} + n^{\frac{k}{2}-(k-3)\sigma} \cdot n^{0.5-\sigma}\right).$$
 Setting 
 $\sigma = \frac{k-3}{2(k-2)}+\frac{\eps}{4} < 0.5$ (as $\eps \le \frac{2}{k - 1}$) 
 gives a truly subquadratic  running time, which is impossible under the $3$SUM hypothesis by \cref{lem:application-lemma}. Also, we can verify $m_0 = O\left(n_0^{1+\frac{1}{k-1}}\right)$ as $1.5 -2\sigma \le (1-\sigma) (1 + \frac{1}{k-1})$. 
\end{proof}

This immediately implies \cref{thm:DO}, which we recall below:
\DOLower*

Note that the above lower bound even applies to distance oracles with $O(m^{\frac{1}{2k-1}-\eps})$ query time, similar to~\cite{AbboudBKZ22}.
Now, we use the unbalanced \cref{lem:application-lemma-unbalan} to get a better lower bound for offline distance oracles with subpolynomial query time.

\begin{theorem}
\label{thm:do-temp}
Assuming the $3$SUM hypothesis, 
    for any constant integer $k \ge 5$ and $\eps, \delta>0$, there is no
    $(\lceil k / 2 \rceil - \delta)$-approximate distance oracle with 
    $O(n^{1+\frac{2}{k - 1} - \eps})$ pre-processing time  and $n^{o(1)}$ query time for an $n$-vertex $O(n)$-edge undirected unweighted graph.
\end{theorem}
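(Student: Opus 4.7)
The plan is to follow the argument of \cref{thm:approx-DO-lb-m} almost verbatim, but replace the balanced reduction \cref{lem:application-lemma} with the unbalanced \cref{lem:application-lemma-unbalan}. The switch is forced on us because the balanced version produces instances with edge-to-vertex ratio $n^{0.5-\sigma}$, which cannot be pushed down to $\Theta(1)$ while keeping $\sigma < 1/2$; the two-parameter unbalanced version provides precisely the extra degree of freedom needed to land in the sparse density regime $m = \Theta(n)$ after padding with isolated vertices.

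Assume for contradiction that a $(\lceil k/2\rceil - \delta)$-approximate distance oracle $\caA$ with the stated preprocessing and query times exists. For each tripartite instance $G_i = (A_i \cup B_i \cup C_i, E_i)$ produced by \cref{lem:application-lemma-unbalan} with parameters $\sigma_a, \sigma_b$ to be fixed later, delete the edges in $A_i \times C_i$ to form a bipartite graph $G_i'$ with parts $A_i \cup C_i$ and $B_i$, pad with isolated vertices so that $|V(G_i')| = \Theta(|E(G_i')|) = \Theta(n^{1.5 - \sigma_a - \sigma_b})$, and preprocess $G_i'$ with $\caA$. For each removed edge $(a, c)$, query the oracle on $(a, c)$; if the returned value is at most $2\lceil k/2\rceil - 1$, brute-force check whether $(a, c)$ lies on a triangle of $G_i$ by scanning the $O(n^{0.5 - \sigma_b})$ neighbors of $a$ in $B_i$.

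Correctness is identical to \cref{thm:approx-DO-lb-m}: a triangle through $(a, c)$ forces $\mathrm{dist}_{G_i'}(a, c) = 2$, and since distances are integers the oracle outputs at most $\lfloor 2(\lceil k/2\rceil - \delta)\rfloor \le 2\lceil k/2\rceil - 1$, so the check is always triggered when a triangle exists. Bipartiteness of $G_i'$ guarantees that whenever a check fires, the true distance is even and at most $2\lceil k/2\rceil - 2$, so $(a, c)$ lies on a cycle of length at most $k$ using exactly one edge of $A_i \times C_i$. Hence the total check count across all instances is controlled by the cycle bound in \cref{lem:application-lemma-unbalan}, namely $O(n^{k/2 - (k-3)(\sigma_a + \sigma_b)/2})$.

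The total running time decomposes into three contributions: preprocessing $n^{2\sigma_a + \sigma_b + (1.5 - \sigma_a - \sigma_b)(1 + 2/(k-1) - \eps)}$, queries $n^{1.5 + \sigma_b + o(1)}$, and checking $n^{k/2 - (k-3)(\sigma_a + \sigma_b)/2 + 0.5 - \sigma_b}$. The main obstacle is choosing $\sigma_a, \sigma_b \in (0, 1/2)$ so that all three exponents drop strictly below $2$. The checking term is the binding constraint, forcing $\sigma_b > \tfrac{k-3}{k-1}(1 - \sigma_a)$; substituting the near-equality into the preprocessing exponent shows that the critical boundary is $k = 5$. At $k = 5$, one takes $\sigma_a = \Theta(\eps)$ and $\sigma_b$ just above $\tfrac{1}{2}(1 - \sigma_a)$, relying on the explicit $\eps > 0$ slack in the hypothesized preprocessing bound; for $k \ge 6$ the constraints are feasible already at $\eps = 0$, so any sufficiently small $\sigma_a > 0$ with $\sigma_b$ slightly exceeding $\tfrac{k-3}{k-1}$ suffices. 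Verifying $\sigma_a, \sigma_b \in (0, 1/2)$ in each case yields a subquadratic algorithm for the All-Edges Sparse Triangle instances supplied by \cref{lem:application-lemma-unbalan}, contradicting the $3$SUM hypothesis.
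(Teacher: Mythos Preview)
Your overall approach is identical to the paper's: invoke the unbalanced \cref{lem:application-lemma-unbalan}, strip the $A_i\times C_i$ edges, pad to $\Theta(m_0)$ vertices so the instance is $O(n)$-edge, preprocess with the oracle, query each deleted edge, and brute-force triangles whenever the oracle returns at most $2\lceil k/2\rceil-1$. The three cost terms you write down and the correctness argument match the paper exactly.

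The gap is in your parameter analysis for $k\ge 6$. You assert that ``any sufficiently small $\sigma_a>0$ with $\sigma_b$ slightly exceeding $\tfrac{k-3}{k-1}$ suffices,'' but for $k\ge 6$ one has $\tfrac{k-3}{k-1}\ge \tfrac{3}{5}>\tfrac{1}{2}$. This simultaneously violates the hypothesis $\sigma_b\in(0,\tfrac12)$ of \cref{lem:application-lemma-unbalan} and makes your own query term $n^{1.5+\sigma_b+o(1)}$ super-quadratic. The checking constraint $\sigma_b>\tfrac{k-3}{k-1}(1-\sigma_a)$ together with $\sigma_b<\tfrac12$ actually forces $\sigma_a>\tfrac{k-5}{2(k-3)}$, which tends to $\tfrac12$ as $k\to\infty$; so $\sigma_a$ must be \emph{large}, not small. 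The paper resolves this by taking $\sigma_a=\tfrac{k-5+5\eps}{2(k-3)}$ and $\sigma_b=\tfrac12-\tfrac{\eps}{k-3}$ and verifying all three exponents directly. Your $k=5$ case is fine (there $\tfrac{k-5}{2(k-3)}=0$, so small $\sigma_a$ is indeed allowed), but the $k\ge 6$ claim needs to be reworked with $\sigma_a$ chosen near this threshold.
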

\begin{proof}
Assume that such an algorithm $\caA$ exists for the sake of contradiction. 
 
 We apply \cref{lem:application-lemma-unbalan} with some $0<\sigma_a<\sigma_b<0.5$ to be determined. 
 For any instance $G=(A \cup B \cup C, E)$ produced by \cref{lem:application-lemma-unbalan}, it suffices to test whether every edge in $(A \times C) \cap E$ is in a triangle. We first remove all edges between $A$ and $C$ from $G$ and call the new graph $G'$. Then we use $\caA$ to approximate the distances between $(a, c)$ on $G'$ if $(a, c)$ is an removed edge. The number of vertices $n_0$ in each instance is $O(2n^{1-\sigma_a} + n^{1-\sigma_b}) = O(n^{1-\sigma_a})$ and the number of edges $m_0$ in each instance is $O(n^{1.5-\sigma_a-\sigma_b})$, and for each instance we make $O(n^{1.5-2\sigma_a})$ approximate distance queries.
 We pad isolated vertices in each instance so that each instance has $\Theta(m_0)$ vertices, and run the pre-processing phase of $\caA$ on it in $O(m_0^{1+\frac{2}{k-1}-\eps}) = O(n^{(1.5-\sigma_a-\sigma_b)(1+\frac{2}{k-1}-\eps)})$ time per instance. Then we make $O(n^{1.5-2\sigma_a})$ queries to this distance oracle in $O(n^{1.5-2\sigma_a}\cdot n^{o(1)})$ time per instance.
 Finally, if the outputted distance between $a$ and $c$ is at most $2\lceil k/2\rceil - 1$, we check if $(a, c)$ is in a triangle in $G$ in time asymptotically bounded by the maximum number of neighbors in part $B$, which is $O(n^{0.5-\sigma_b})$.
 
 \paragraph{Correctness. } If some edge $(a, c)$ is in a triangle $(a, b, c)$, and $(a, b, c) \in A \times B \times C$, running algorithm $\caA$ on graph $G'$ for query $(a, c)$ would return a distance at most $\lfloor 2 \cdot (\lceil k/2\rceil-\delta) \rfloor \le 2\lceil k/2\rceil - 1$, so we will check if $(a, c)$ is in a triangle in $G$. Therefore, the algorithm will find at least one triangle for each edge $(a, c)$ that is in a triangle. 
 
 \paragraph{Running time. } Note that every time $
 \caA$ outputs a distance at most $2\lceil k/2\rceil - 1$ for $(a, c)$ in $G'$, there must be a path between $a$ and $c$ of length at most $2\lceil k/2\rceil - 1$. Furthermore, since $G'$ is bipartite, every path between  $a$ and $c$ must have even length. Thus, the must be a path between $a$ and $c$ of length at most $2\lceil k/2\rceil-2 \le k-1$, 
 so the edge $(a, c)$ is in a cycle of length at most $k$ in $G$ that uses exactly one edge from $A\times C$. Thus, the total number of checks is asymptotically bounded by the total number of cycles of length between $3$ and $k$ that use exactly one edge from $A\times C$, which is $O(n^{k/2-(k-3)(\sigma_a+\sigma_b)/2})$. Therefore, the running time of the algorithm for handling all instances produced by \cref{lem:application-lemma-unbalan} is 
 \begin{align*}
 &  O\left(n^{2\sigma_a+\sigma_b} \cdot \left ( (n^{1.5-\sigma_a-\sigma_b})^{1+\frac{2}{k-1}-\eps}  + n^{1.5-2\sigma_a} \cdot n^{o(1)}\right ) + n^{\frac{k}{2}-(k-3)(\sigma_a+\sigma_b)/2} \cdot n^{0.5-\sigma_b}\right)\\
  \le \ & O\left(n^{2\sigma_a+\sigma_b + (1.5-\sigma_a-\sigma_b)(1+\frac{2}{k-1}-\eps) } + n^{\sigma_b + 1.5 + o(1)} +  n^{\frac{k+1}{2}-(k-3)(\sigma_a+\sigma_b)/2-\sigma_b}\right).
 \end{align*}
 We set $\sigma_a = \frac{k-5+5\eps}{2(k-3)}  $ and $\sigma_b = 0.5 - \frac{\eps}{k-3} $ (assuming $\eps\in (0,0.1)$), and one can verify that $0<\sigma_a<\sigma_b<0.5$ and the above time complexity is truly subquadratic for all integers $k\ge 5$. This is impossible under the $3$SUM hypothesis by \cref{lem:application-lemma-unbalan}.
\end{proof}
\cref{thm:do-temp} immediately implies the following theorem.
\thmdo*
\subsection{Dynamic Shortest Paths}

\begin{theorem}
\label{thm:approx-sp-lb-m}
Assuming the $3$SUM hypothesis,  for any constant integer $k \ge 5$ and $\eps, \delta>0$, no algorithm can support insertion and deletion of edges and support querying $(\lceil k / 2 \rceil - \delta)$-approximate distance between two vertices in $O(n^{\frac{1}{k - 1} - \eps})$ time per update/query, after an $O(n^3)$ pre-processing, in $n$-vertex $m$-edge undirected unweighted graphs, where $m = \Theta(n^{1+\frac{1}{k-1}})$.  
\end{theorem}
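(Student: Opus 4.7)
The plan is to adapt the reduction in the proof of \cref{thm:approx-DO-lb-m} by simulating the offline distance oracle via a single instance of the fully dynamic data structure. Starting from \cref{lem:application-lemma} with a parameter $\sigma\in (0, 1/2)$ to be chosen, I obtain $n^{3\sigma}$ tripartite All-Edges Sparse Triangle instances $G_i = (A_i \cup B_i \cup C_i, E_i)$ on $n_0 = \Theta(n^{1-\sigma})$ vertices with maximum degree $O(n^{0.5-\sigma})$. Assume for contradiction that $\caA$ is the claimed dynamic algorithm. The plan is to run the $O(n_0^3)$ pre-processing of $\caA$ once on an empty $n_0$-vertex graph, and then, for each $i$ in turn, insert the edges of $G_i \setminus (A_i \times C_i)$, query the $(\lceil k/2 \rceil - \delta)$-approximate distance between the endpoints of every edge in $E_i \cap (A_i \times C_i)$, check any query whose returned distance is at most $2\lceil k/2 \rceil - 1$ by brute-force enumeration of common neighbors in $B_i$ in $O(n^{0.5-\sigma})$ time, and finally delete all the inserted edges so the data structure returns to the empty state. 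Correctness is identical to \cref{thm:approx-DO-lb-m}: bipartiteness of $G_i \setminus (A_i \times C_i)$ forces every short approximate distance to come from an actual path of length at most $k-1$, and every true triangle on an $A$--$C$ edge is exposed in this way.

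The running-time bookkeeping is essentially the same as in \cref{thm:approx-DO-lb-m}. Letting $m_0 = O(n^{1.5-2\sigma})$, the total cost is $O(n_0^3)$ for the one-time pre-processing, plus $n^{3\sigma} \cdot O(m_0) \cdot O(n_0^{1/(k-1)-\eps})$ for all updates and queries, plus $O(n^{k/2 - (k-3)\sigma} \cdot n^{0.5-\sigma})$ for verifications (using the short-cycle budget from \cref{lem:application-lemma}). Choosing $\sigma = \frac{k-3}{2(k-2)} + \tau$ for a sufficiently small constant $\tau = \tau(\eps) > 0$, the update/query and verification terms evaluate to $n^{2-\Omega(\eps)}$ by the same exponent computation that closes the proof of \cref{thm:approx-DO-lb-m}, and the condition $\tau > 0$ ensures $3(1-\sigma) < 2$ so that the pre-processing term $n^{3(1-\sigma)}$ is also subquadratic (at $k=5$, exactly this slack is required, since $\tau = 0$ would give pre-processing $n^2$). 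Summing, this yields a subquadratic-time algorithm for the batch of All-Edges Sparse Triangle instances, contradicting \cref{lem:application-lemma} under the $3$SUM hypothesis.

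The main technical wrinkle, and the step I expect to need the most care, is that $\caA$ is only guaranteed to work when the live dynamic graph currently has $m = \Theta(N^{1+1/(k-1)})$ edges, whereas the raw $G_i$ we insert has $m_0 \ll n_0^{1+1/(k-1)}$ once $\sigma > \frac{k-3}{2(k-2)}$. I plan to handle this by permanently attaching to the dynamic graph an auxiliary padding on $\Theta(n_0)$ fresh vertices disjoint from all $A_i \cup B_i \cup C_i$, consisting of a known dense $C_4$-free graph (e.g.\ from the extremal constructions already used in this paper) with $\Theta(n_0^{1+1/(k-1)})$ edges; such a graph exists since $1+1/(k-1) \le 5/4 < 3/2$ for $k\ge 5$. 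Because the padding is vertex-disjoint from query endpoints and itself contains no $4$-cycles, it neither shortens any queried distance nor contributes to the short-cycle budget, while keeping the live edge count in the required window $\Theta(N^{1+1/(k-1)})$ throughout the simulation, where $N = \Theta(n_0)$ is the total vertex count seen by $\caA$. Verifying these invariants (and confirming the exponent arithmetic above for all $k\ge 5$, including the delicate boundary $k=5$) is the main bookkeeping task, after which the theorem follows.
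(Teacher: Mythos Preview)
Your proposal is correct and follows essentially the same approach as the paper: reuse the reduction from \cref{thm:approx-DO-lb-m}, but instead of invoking an offline oracle on each instance separately, run the dynamic data structure once and insert/delete the edges of each instance in turn, incurring an extra one-time $O(n_0^3)$ pre-processing term that forces the restriction $k\ge 5$. Your choice $\sigma = \frac{k-3}{2(k-2)} + \tau$ matches the paper's $\sigma = \frac{k-3}{2(k-2)} + \frac{\eps}{4}$, and your explicit treatment of the density-matching padding (which the paper leaves implicit by pointing back to \cref{thm:approx-DO-lb-m}) is fine, though note the padding need not be $C_4$-free: since it lives on vertices disjoint from all $A_i\cup B_i\cup C_i$, it contributes no cycles through any queried $A$--$C$ edge and hence cannot inflate the verification count.
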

\begin{proof}
Suppose such an algorithm $\caA$ exists. 
Then the proof is essentially the same as the proof of \cref{thm:approx-DO-lb-m}. The only difference is that, between two different instances of generated by \cref{lem:application-lemma}, we delete all edges of the old instance, and then add all edges of the new instance. 

The running time then becomes
 $$O\left((n^{1-\sigma})^3+ n^{3\sigma} \cdot n^{1.5-2\sigma} \cdot (n^{1-\sigma})^{\frac{1}{k-1}-\eps}+ n^{\frac{k}{2}-(k-3)\sigma} \cdot n^{0.5-\sigma}\right).$$
 Set 
 $\sigma = \frac{k-3}{2(k-2)}+\frac{\eps}{4} < 0.5$ (as $\eps \le \frac{2}{k - 1}$)
 gives a truly subquadratic  running time, which is impossible under the $3$SUM hypothesis by \cref{lem:application-lemma}. Note that in this theorem, we require $k \ge 5$ while in \cref{thm:approx-DO-lb-m} we do not, because we need $(n^{1-\sigma})^3$ to be subquadratic. 
\end{proof}

This immediately implies \cref{thm:DynamicSP}, which we recall below:
\DynamicSP*

\subsection{All-Nodes Shortest Cycles}

Finally, we show our lower bound for the All-Nodes Shortest Cycles problem. 

\ANSC*

\begin{proof}
Assume that such an algorithm $\caA$ exists for the sake of contradiction. Equivalently, say the running time of $\caA$ is $O(n^{1+\frac{2}{k-1}-\eps'})$ for some $\eps' > 0$.  
Similar as before, we apply \cref{lem:application-lemma}, but with parameter $k-1$.

 For any instance $G$ with vertex parts $A, B, C$ produced by \cref{lem:application-lemma}, we only need to report whether each edge between $A$ and $C$ are in a triangle, by symmetry. The number of vertices $n_0$ in each instance is $O(n^{1-\sigma})$ and the number of edges $m_0$ in each instance is $O(n^{1.5-2\sigma})$. We will set $\sigma$ so that $m_0 = O(n_0^{1+\frac{1}{k-1}})$, and we can guarantee $m_0 = \Theta(n_0^{1+\frac{1}{k-1}})$ by padding a dense graph. 
 
 We use $\caA$ to approximate the shortest cycle through every node. 
 If the outputted cycle length through $a \in A$ is at most $k-1$, 
 we check every pair of neighbors $b \in B, c \in C$ of $a$ to to find all triangles $(a, b, c)$ containing $a$ in $G$ and mark all edge $(a, c)$ in such a triangle. 
 This can be done in time asymptotically bounded by the square of the maximum degree of $G$, $O((n^{0.5-\sigma})^2)$ for each $a$. Finally, we report all marked edges as in triangles, and other edges as not in triangles. 
 
 \paragraph{Correctness. } If some edge $(a,c)$ is in a triangle $(a, b, c)$ for  $(a, b, c) \in A \times B \times C$, $\caA$ on graph $G$ on node $a$ would return a cycle of length at most $\lfloor 3 \cdot (k/3-\delta) \rfloor \le k-1$, so we will find all triangles containing $a$  in $G$. Therefore, the algorithm will report $(a, c)$ as in a triangle. Also, the algorithm clearly does not have false positives. 
 
 \paragraph{Running time. } Note that every time $
 \caA$ outputs a cycle of length at most $k-1$ through $a$ in $G$, there must be a cycle through $a$ of length at most $k-1$. Thus, the total number of checks is asymptotically bounded by the total number of cycles of length between $3$ and $k-1$. Therefore, the running time of the algorithm over all instances produced by \cref{lem:application-lemma} is 
 $$O\left(n^{3\sigma} \cdot (n^{1-\sigma})^{1+\frac{2}{k-1}-\eps'} + n^{\frac{k-1}{2}-(k-4)\sigma} \cdot \left (n^{0.5-\sigma}\right)^2\right).$$

 Setting $\sigma = \frac{k-3}{2(k-2)} + \frac{\eps}{4} < 0.5$ gives a truly subquadratic  running time, which is impossible under the $3$SUM hypothesis by \cref{lem:application-lemma}. Similar as before, we indeed have $m_0 = O(n_0^{1+\frac{1}{k-1}})$.   
\end{proof}
\section{\texorpdfstring{$4$}{4}-Cycle Enumeration Algorithms}
\label{sec:4-cycle-enum-ub}
In this section, we present algorithms for the  $4$-Cycle Enumeration problem, and prove \cref{thm:4cycle_upper}.
Recall that in the $4$-Cycle Enumeration problem, we need to first pre-process a given simple undirected graph, and then enumerate all the $4$-cycles in this graph with subpolynomial time delay for every $4$-cycle enumerated. 

We adapt the known algorithms for $4$-cycle detection  \cite{yuster1997finding,AlonYZ97}, so that they actually find all the $4$-cycles in the graph.
 We show that if the algorithm has been running for $t$ time, where $t\ge T(n,m)$, then it must have found at least $ct$ $4$-cycles so far, for some constant $0<c<1$. 
 Then, we can use a standard trick to convert it into an enumeration algorithm with pre-processing time $T(n,m)$ and worst-case $O(1)$ delay: use a buffer to store the found $4$-cycles that have not been outputted. After $T(n,m)$ time, we start to output the $4$-cycles from the buffer with a delay of $1/c$ time steps each, which ensures that the buffer does not  become empty until all $4$-cycles have been outputted.

We start with the simpler $O(n^2)$ time algorithm.

\begin{theorem}
\label{thm:4-cycle-enum-dense}
After $O(n^2)$ time pre-processing, we can enumerate $4$-cycles with $O(1)$ delay in an $n$-vertex undirected graph. The algorithm is deterministic.
\end{theorem}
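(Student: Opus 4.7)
The plan is to adapt the classical $O(n^2)$-time $4$-cycle detection algorithm of Yuster--Zwick so that it enumerates all $4$-cycles rather than halting at the first, and then to apply the buffering trick that is explained in the introduction to this section.

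First I would maintain, for each unordered pair $\{u,w\}$ of vertices, a list $L[u,w]$ of all vertices already processed that are adjacent to both $u$ and $w$. The algorithm processes the vertices $v$ in an arbitrary fixed order; for each $v$ it iterates over all unordered pairs $\{u,w\}$ of distinct neighbors of $v$, first reports a $4$-cycle $(v,u,v',w)$ for every $v' \in L[u,w]$, and then appends $v$ to $L[u,w]$. Each $4$-cycle $(v_1,u,v_2,w)$, oriented so that $v_2$ is the last of its vertices to be processed, is reported exactly once, so the enumeration is correct.

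Next I would bound the running time. Writing $\alpha_{uw}$ for the number of common neighbors of $u$ and $w$, one has $\sum_v \binom{\deg(v)}{2} = \sum_{\{u,w\}}\alpha_{uw}$, and splitting the sum according to whether $\alpha_{uw}\le 1$ or $\alpha_{uw}\ge 2$ (using $\alpha \le 2\binom{\alpha}{2}$ in the latter case) gives
\[ \sum_v \binom{\deg(v)}{2} \;\le\; \binom{n}{2} + 2\sum_{\{u,w\}}\binom{\alpha_{uw}}{2} \;=\; \binom{n}{2}+2C_4,\]
where $C_4$ denotes the total number of $4$-cycles of $G$. Since the algorithm does $O(1)$ work per triple $(v,u,w)$ and $O(1)$ extra work per reported cycle, its total running time is $O(n^2 + C_4)$.

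The key structural fact for the delay analysis is that, among the ``read'' operations performed (one per triple $(v,u,w)$), at most $\binom{n}{2}$ of them are \emph{unproductive}, because a read finds $L[u,w]$ empty only on the very first visit to the pair $\{u,w\}$. Consequently, once the algorithm has executed a total of $t$ elementary operations the number of $4$-cycles already placed in the output buffer is at least $(t - n^2/2)/2$; in particular, for every $t \ge n^2$ it is at least $t/4$. Feeding this into the buffering conversion described at the start of this section, with $T(n,m)=n^2$ and $c = 1/4$, turns the algorithm into one with $O(n^2)$ pre-processing time and worst-case $O(1)$ delay. The whole procedure is deterministic, and I do not expect any real obstacle: the only implementation point needing care is that each lookup and append on the lists $L[u,w]$ is carried out in worst-case $O(1)$ time (e.g.\ by indexing an $n\times n$ array of pointers to linked lists, initialized lazily during the $O(n^2)$ pre-processing), which is standard.
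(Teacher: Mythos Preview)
Your algorithm is the same as the paper's, but your claim that each $4$-cycle is reported exactly once is not right. In the $4$-cycle $v_1$--$u$--$v_2$--$w$--$v_1$, both $\{v_1,v_2\}$ and $\{u,w\}$ are diagonals: when you process the later of $\{v_1,v_2\}$ you find the earlier one in $L[u,w]$, and when you process the later of $\{u,w\}$ you find the earlier one in $L[v_1,v_2]$. So each cycle is reported exactly twice, and without a deduplication step your enumeration outputs every $4$-cycle twice. The paper handles this with the check $v<\max\{u,w\}$ (processing vertices in increasing index order); since the \emph{earlier} of the two discoveries always satisfies this check, adding it only halves your constants and your argument still goes through with, say, $c=1/8$ in place of $1/4$.

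Your delay analysis, on the other hand, genuinely differs from the paper's and is arguably cleaner. The paper lower-bounds the number of $4$-cycles found after $x$ triples by $\tfrac{1}{2}\sum_{u,w}\binom{|L[u,w]|}{2}$ and then applies Jensen's inequality to get $\tfrac{x}{4}(x/n^2-1)$, which becomes $\ge x$ only once $x\ge 5n^2$. You instead observe directly that at most $\binom{n}{2}$ reads can be unproductive (one per first visit to each pair), so every remaining read yields at least one reported cycle; this gives a tighter and more transparent linear-in-$t$ bound after $O(n^2)$ steps.
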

\begin{proof}
Let  the input graph be $G=(V, E)$. 
    Consider the following algorithm. First, we initialize an $n\times n$ table, indexed by $V \times V$. Each entry of the table $L[u, w]$ is a list of vertices, initially empty. Then, for every $v \in V$ in increasing order, we enumerate all pairs of distinct neighbors $u, w$ of $v$ where $u < w$. For every $(v, u, w)$, we enumerate all $v' \in L[u, w]$, and output $(v, u, w, v')$ as a $4$-cycle (if it has not been outputted before, which can be verified by checking $v < \max\{u, w\}$). Then, we add $v$ to $L[u, w]$. 
    
    It is not difficult to verify that this algorithm eventually outputs all $4$-cycles of the graph and each $4$-cycle will be found $2$ times. 
    
    After enumerating the first $x$ triples $(v, u, w)$, the number of $4$-cycles the algorithm has outputted is at least 
    \begin{align*}
        \frac{1}{2}\sum_{u, w} \binom{|L[u, w]|}{2} &\ge  \frac{1}{4}\sum_{u, w} |L[u, w]|^2 - \frac{1}{4}\sum_{u, w} |L[u, w]|\\
        &\ge \frac{1}{4}\left( \left(\frac{x}{n^2}\right)^2 \cdot n^2 - x\right) \tag{Jensen's inequality}\\
        &= \frac{x}{4}\left(\frac{x}{n^2}-1\right).
    \end{align*}
    Therefore, after enumerating the first $5n^2$ triples $(v, u, w)$, the number of $4$-cycles the algorithm finds is at least the number of triples $(v, u, w)$ enumerated. 
    
    Therefore, after an $O(n^2)$ pre-processing, the algorithm outputs $4$-cycles in $O(1)$ amortized delay. This amortized delay can be easily turned into worst-case by standard tricks described earlier.
\end{proof}

Now we describe the algorithm with $O(m^{4/3})$ pre-processing time, which is better for sparse graphs.
\begin{theorem}
\label{thm:4-cycle-enum-sparse}
After $O(m^{4/3})$ time pre-processing, we can enumerate $4$-cycles with $O(1)$ delay in an $m$-edge undirected graph. The algorithm is deterministic.
\end{theorem}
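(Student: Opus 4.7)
My plan is to adapt the classical $O(m^{4/3})$-time $4$-cycle detection algorithm (attributed in the paper to \cite{AlonYZ97,yuster1997finding}) into an enumeration algorithm by reporting every cycle it discovers, and then invoke the buffering trick described at the start of this section to convert amortized delay into worst-case $O(1)$ delay. Since \cref{thm:4-cycle-enum-dense} already matches the $O(n^2)$ bound, the remaining task is to handle the sparse regime within an $O(m^{4/3})$ pre-processing budget.

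The detection algorithm combines a degree-based split (threshold $\Delta = \lfloor m^{1/3}\rfloor$, with at most $O(m^{2/3})$ \emph{high-degree} vertices of degree $>\Delta$) with $2$-path enumeration. In the low-middle pass, for each low-degree vertex $v$ I iterate over every unordered pair $\{u,w\}\subseteq N(v)$, maintain a list $L[u,w]$ of already-seen middles, output $(v,u,v',w)$ as a $4$-cycle for each prior $v'\in L[u,w]$, and then append $v$; the total number of triples processed is $\sum_{v\text{ low}}\binom{d(v)}{2}\le m\Delta/2 = O(m^{4/3})$. A parallel high-endpoint pass iterates, for each vertex $u$, over pairs $\{h_1,h_2\}\subseteq N(u)\cap H$ of high-degree neighbors, with an analogous list $L'[h_1,h_2]$ accumulating common neighbors of $h_1$ and $h_2$; since the pair-universe here has size at most $\binom{|H|}{2}=O(m^{4/3})$, the total work of this pass is $O(m^{4/3})$ plus $O(1)$ per output cycle.

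The amortization argument mirrors the proof of \cref{thm:4-cycle-enum-dense}. In each pass, the number of output cycles is $\sum_{\text{pair}}\binom{k_{\text{pair}}}{2}\ge T(T/N-1)/2$ by Jensen, where $T$ counts the triples enumerated so far and $N=O(m^{4/3})$ bounds the pair-universe size. Hence once $T$ exceeds a sufficiently large constant multiple of $m^{4/3}$, the number of cycles output already exceeds $T$, so the amortized delay drops below a constant thereafter; the standard buffering trick then converts this amortized guarantee into a worst-case $O(1)$ delay after the $O(m^{4/3})$ pre-processing.

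The main obstacle is coverage: a $4$-cycle in which the two high-degree vertices are adjacent in the cycle and the two low-degree vertices are also adjacent is not captured by either pass above, because no non-adjacent pair in such a cycle is entirely low or entirely high. I will handle this residual case by a third pass that, for each edge $\{h_1,h_2\}$ between two high-degree vertices, enumerates pairs of low-degree vertices $(\ell_1,\ell_2)$ with $\ell_1\in N(h_2)$, $\ell_2\in N(h_1)$, and $\ell_1\ell_2\in E$, reporting the $4$-cycle $h_1 h_2 \ell_1 \ell_2$. The key step is a double-counting argument that bounds the total number of useful triples in this pass by $O(m^{4/3})$, exploiting both the degree cap $\Delta = m^{1/3}$ on low-degree vertices and the $O(m^{2/3})$ bound on the number of high-degree vertices; fitting this final pass within the pre-processing budget, while also fitting its newly discovered cycles into the same Jensen-style amortized-to-worst-case conversion, is the most delicate step of the argument.
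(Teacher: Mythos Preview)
Your approach departs from the paper's in a fundamental way: the paper does \emph{not} use a high/low degree split. Instead it orders vertices by degeneracy (repeatedly peel the minimum-degree vertex), adds vertices back one at a time while enumerating only those $2$-paths whose smallest-index vertex is the one being added, and uses the \emph{supersaturation} bound (an $n$-vertex $m$-edge graph has at least $c(m/n)^4 - m^{4/3}$ four-cycles) together with an AM--GM step to show that at any point the number of $4$-cycles already reported is at least a constant times the work done, minus $O(m^{4/3})$. This sidesteps any case analysis on which vertices of a cycle are ``high'' or ``low''.

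Your plan has a genuine gap precisely at the point you flag as ``the most delicate step'': the third pass. For a $4$-cycle $h_1\text{--}h_2\text{--}\ell_1\text{--}\ell_2\text{--}h_1$ with $h_1,h_2$ high and adjacent, $\ell_1,\ell_2$ low and adjacent, the two diagonals are $(h_1,\ell_1)$ and $(h_2,\ell_2)$, so neither your low-middle pass nor your high-endpoint pass sees it. Every natural way to enumerate these cycles seems to cost $\Theta(m^{5/3})$ non-output work: enumerating low-middle $2$-paths $h_2\text{--}\ell_1\text{--}\ell_2$ costs $O(m^{4/3})$, but then for each such path you must scan $N_H(\ell_2)$ (up to $\Delta=m^{1/3}$ candidates) to test adjacency with $h_2$, giving $O(m^{4/3}\cdot\Delta)=O(m^{5/3})$; symmetric variants give the same bound. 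Unlike pass~2, there is no $O(m^{4/3})$-size ``pair universe'' here to drive a Jensen amortization. Without a concrete bound of the form $(\text{non-output work in pass 3})\le O(m^{4/3})+O(\text{outputs of pass 3})$, the buffering trick cannot close.

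Two smaller issues: your Jensen step for pass~1 asserts a pair-universe bound $N=O(m^{4/3})$, but the pairs $(u,w)$ there range over $\binom{n}{2}$, which need not be $O(m^{4/3})$ (pass~1 is still fine because its total number of triples is $O(m^{4/3})$ regardless, but not for the reason you give). Also, you do not say how to access $L[u,w]$ deterministically; an $n\times n$ table costs $O(n^2)$ and a hash table is randomized. The paper handles this by an offline radix-sort scheme with $O(\log n)$ geometrically growing rounds.
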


\begin{proof}
We first describe an algorithm that requires a hash table.	In the end we remove this requirement.

We use the well-known supersaturation property of $4$-cycles: in an $n$-vertex $m$-edge graph, the number of $4$-cycles is at least $\max\{0, c_0 (m/n)^4 - m^{4/3}\}$, for some constant $c_0>0$ (e.g., see \cite[Lemma 2.4]{bringmann2019truly} for a proof of the bipartite graph case, which can be adapted to general graphs).

Given a graph $G$ with $n$ vertices and $m$ edges, we repeatedly peel off the vertex with minimum degree, and obtain an ordering of the vertices, $v_1,\dots,v_n$. 
Let $G_i$ denote the induced subgraph $G[\{v_i,\dots,v_n\}]$. Then, by definition, $v_i$ has the minimum degree in $G_i$, and we denote this degree by $d_i$. Observe that $d_i - 1\le d_{i+1}$. 

 The number of edges in $G_i$ is $m_i = d_i+d_{i+1}+\dots+d_n$, and satisfies $m_i\ge d_i (n-i+1)/2 $.
 Thus, the number of $4$-cycles in $G_i$ is at least \[c_0\big (m_i/(n-i+1)\big )^4 - m_i^{4/3} \ge  cd_i^4 - m_i^{4/3},\] where $c=c_0/16$.

Define \[d_i^+ := \max_{i\le i' \le n} d_{i'},\]
and 
\[ i^+ := \argmax_{i\le i' \le n} d_{i'}. \]
 Then, the number of $4$-cycles in $G_i$ is at least the number of $4$-cycles in $G_{i^+}$, which is at least 
 \begin{equation}
    \label{eqn:4cycnumbound}
    c(d_i^+)^4 - m_i^{4/3}.
 \end{equation}

Starting from the empty graph $G_n$, we proceed in $(n-1)$ rounds:  in the $(n-i)$-th round, we add a new vertex $v_i$, and by the end of this round we will have reported all $4$-cycles in $G_i$. When adding $v_i$, we do the following: 
\begin{itemize}
   \item  Enumerate all $2$-paths $(v_i,  v_j, v_k)$  such that $i<j<k$, and add them to the bucket indexed by $(i,k)$.  There are $\sum_{j\in N(i),j>i} d_j \le d_i\cdot d_{i+1}^+$ such $2$-paths.
   \item Enumerate all $2$-paths  $(v_j,v_i,v_k)$ such that $i<j<k$, and add them to the bucket indexed by $(j,k)$. There are $\binom{d_i}{2}$ such $2$-paths.
\end{itemize}
Here, a bucket indexed by $(x,y)$ ($x<y$) is  a linked list containing several $2$-paths with $x,y$ as endpoints. The head pointer of this list is stored in the $(x,y)$ entry of a hash table.

    Every time we insert a $2$-path $(x,z,y)$ into bucket $(x,y)$, we find new $4$-cycles by combining it with other $2$-paths $(x,z',y)$ in this bucket.  As noticed by \cite{AlonYZ97}, this actually allows us to find all the $4$-cycles: any $4$-cycle with vertex set $\{u_1,u_2,u_3,u_4\}$ (sorted so that $u_1,u_2,u_3,u_4$ appear as a subsequence in $v_1,\dots,v_n$ from left to right) has three possible cases:
	\begin{itemize}
		\item $(u_1,u_2,u_4,u_3,u_1)$: found by combining $2$-paths $(u_1,u_2,u_4)$ and $(u_1,u_3,u_4)$.  
		\item $(u_1,u_2,u_3,u_4,u_1)$: 
        found by combining $2$-paths $(u_2,u_1,u_4)$ and $(u_2,u_3,u_4)$.  
		\item $(u_1,u_3,u_2,u_4,u_1)$:
        found by combining $2$-paths $(u_3,u_1,u_4)$ and $(u_3,u_2,u_4)$.  
	\end{itemize}

The total number of 2-paths found (which bounds the time spent so far) in the first $(n-i)$ rounds is at most 
\begin{align}
 & \sum_{j=i}^n \left( d_j\cdot d_{j+1}^+ +\binom{d_j}{2} \right) \nonumber \\
  \le \ &  \sum_{j=i}^n\left(  d_j\cdot d_{j+1}^+ + d_j \cdot d_{j+1}^+/2\right) \tag{since $d_j-1\le d_{j+1} \le d_{j+1}^+$}\nonumber \\
\le \ &  \frac{3}{2}(d_i+\dots +d_n)\cdot d_{i+1}^+ \nonumber  \\
= \ &  \frac{3}{2}m_i\cdot d_{i+1}^+.  
    \label{eqn:enum-bound}
\end{align}

By the end of the first $(n-i-1)$ rounds, we must have found all the $4$-cycles in $G_{i+1}$. Recall that $G_{i+1}$ contains at least $c(d_{i+1}^+)^4 - m_i^{4/3}$ $4$-cycles (\cref{eqn:4cycnumbound}).
Hence, at any point during the execution of the algorithm (say, during the $(n-i)$-th round), the number of $4$-cycles found so far is at least
\begin{align*}
	  c (d_{i+1}^+)^4 - m_i^{4/3}   & \ge    3m_i^{4/3} + c(d_{i+1}^+)^4-4m^{4/3}\\
	  & \ge  c^{1/4} m_i\cdot d_{i+1}^+-4m^{4/3} \tag{AM-GM}\\
	  & \ge \Omega(\text{\cref{eqn:enum-bound}})-4m^{4/3}.
\end{align*}
Therefore, after $O(m^{4/3})$ time, the number of $4$-cycles found is at least a constant fraction of the time spent. This implies the desired enumeration algorithm by previous discussions.
\\

To avoid using a hash table (which requires randomization, or has worse lookup/insertion time deterministically), we can use the following off-line strategy:  instead of doing $(n-1)$ rounds, we do $k=O(\log n)$ rounds only. Pick $1=i_1<i_2<\dots < i_k = n'$, where $n' = \max \{1\le n'\le n-1 : d_{n'+1}^+ \neq 0 \}$, such that
    \begin{equation}
        \label{eqn:lrbound}
        2\le \frac{ m_{i_j}\cdot d_{i_j+1}^+  } { m_{i_{j+1}}\cdot d_{i_{j+1}+1}^+} < 12
    \end{equation}
    for all $1\le j\le k-2$.
To see why this is possible, note that for all $1\le i <n'$,
\begin{align*}
    1 \le \frac{m_{i}\cdot d_{i+1}^+ }{ m_{i+1}\cdot d_{i+2}^+} & =\big  (1 + \frac{d_i}{m_{i+1}}\big )\cdot \frac{d_{i+1}^+}{d_{i+2}^+}\\
    & \le  \big  (1 + \frac{d_{i+1}+1}{m_{i+1}}\big )\cdot \frac{d_{i+2}^+ + 1}{d_{i+2}^+}\\
    & \le 6,
\end{align*}
so we can iteratively pick $i_{j+1}$ to be the maximum $i \in (i_j ,n']$ such that $\frac{ m_{i_j}\cdot d_{i_j+1}^+  } { m_{i}\cdot d_{i+1}^+}<12$, and it must satisfy \cref{eqn:lrbound}, unless the final step $i_{j+1}=n'$ is reached.

Then, in the $(k-j+1)$-th round $(1\le j\le k)$, we find all $4$-cycles in $G_{i_{j}}$ from scratch, as follows: run the previous algorithm up to $v_{i_j}$, but instead of inserting each $2$-path into the bucket in real time, we wait until the end, and then use linear-time radix sort to group them into buckets.
Then we use these buckets to find all the $4$-cycles in $G_{i_{j}}$ as before (excluding those already appearing in $G_{i_{j+1}}$).
  Similar to previous analysis, at any point during the $(k-j+1)$-th round, we must have found at least $c^{1/4} m_{i_{j+1}}\cdot d_{i_{j+1}+1}^+-4m^{4/3}$ $4$-cycles, and the time spent so far on finding $2$-paths (including previous rounds) is at most
\begin{align*}
     \sum_{j'=j}^k \frac{3}{2} m_{i_{j'}} \cdot d_{i_{j'}+1}^+ & \le  O(m_{i_{j}} \cdot d_{i_{j}+1}^+) \tag{by left part of \cref{eqn:lrbound} and a geometric sum}\\
      & \le O(m_{i_{j+1}} \cdot d_{i_{j+1}+1}^+). \tag{by right part of \cref{eqn:lrbound}}
\end{align*}
The extra time for radix sort is $O(n)$ time per round, which is dominated by the $m^{4/3}$ term. 
Hence, similar to previous  arguments, this implies a deterministic $4$-Cycle Enumeration algorithm with $O(m^{4/3})$ pre-processing time and $O(1)$ delay.
\end{proof}

\cref{thm:4-cycle-enum-dense} and \cref{thm:4-cycle-enum-sparse} together imply \cref{thm:4cycle_upper}.
In particular, for $m=\Theta(n^{1.5})$, the pre-processing times of the two algorithms match. In fact, $m=\Theta(n^{1.5})$ is the hardest sparsity for $4$-Cycle Enumeration on $n$-vertex graphs (besides $m = \Theta(n^2)$).
For $m \ll n^{1.5}$, the pre-processing time is $O(m^{4/3}) \ll n^2$.
For $m = n^{1.5+\delta}$,  we have the following theorem.
\begin{theorem}
   On an $n$-vertex $m$-edge simple undirected graph with $m \ge n^{1.5+\delta}$ (for some constant $\delta >0$), there is a randomized algorithm that enumerates $4$-cycles in this graph with $O(1)$ delay, after $O(m+n^{2-2\delta})$ time pre-processing.
\end{theorem}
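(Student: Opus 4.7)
The plan is to adapt the $O(n^2)$-preprocessing algorithm of \cref{thm:4-cycle-enum-dense} by combining randomization with the supersaturation phenomenon for $4$-cycles in dense graphs. Since $m \ge n^{1.5+\delta}$, the graph must contain at least $\Omega((m/n)^4 \cdot n^2) = \Omega(n^{2+4\delta})$ labeled $4$-cycles, far more than in a sparse graph on the same vertex set. The target bound $O(m+n^{2-2\delta})$ reflects the intuition that these extra cycles should make the bucket-based algorithm reach its $O(1)$-amortized-delay regime earlier.

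Concretely, I would spend $O(m)$ time reading and sorting the adjacency lists and initializing a hash table, thereby avoiding the $\Theta(n^2)$ array initialization used in \cref{thm:4-cycle-enum-dense}. Then I process vertices in a uniformly random order, maintaining a bucket $L[u,w]$ for each pair $(u,w)$ of already-seen common neighbors in the hash table: whenever a new vertex $v$ is inserted into $L[u,w]$ alongside an existing $v'$, a $4$-cycle $(v,u,v',w)$ is pushed into an output queue, and the queue is drained at a fixed rate so that $O(1)$ worst-case delay is achieved by the standard buffer trick.

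The core analytic step is to show that after $O(m+n^{2-2\delta})$ operations the cumulative number of $4$-cycles already queued is at least a constant fraction of the cumulative work, so that the rate of emission matches the rate of insertion from that point on. To avoid trouble from skewed degree distributions I would split the work into two phases. In a \emph{heavy phase} I deal with the $O(m/\Delta)$ vertices of degree exceeding a threshold $\Delta \asymp n^{1/2+\delta}$ by computing, via a direct sorted-list scan assisted by the hash table, the common neighborhoods of pairs of heavy vertices in $\tilde O(n^{2-2\delta})$ total time, which immediately yields a large batch of $4$-cycles. In a \emph{light phase} the random-order bucket algorithm is run on the light vertices, where a second-moment calculation (refining the Jensen step in the proof of \cref{thm:4-cycle-enum-dense} to use the supersaturation lower bound $T = \Omega(n^{2+4\delta})$) should show that the cumulative number of $4$-cycles produced grows at least as the square of the triples inserted, so that the emission rate overtakes $1$ after $O(n^{2-2\delta})$ triple insertions.

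The main obstacle will be formalizing the concentration in the light phase: the deterministic Jensen argument in \cref{thm:4-cycle-enum-dense} ignores the absolute count of $4$-cycles, while the quadratic bound I need crucially uses supersaturation together with the randomness of the vertex order. I expect the cleanest route is to couple the random ordering with a Chebyshev-type variance bound on the random variable $\sum_{(u,w)} \binom{|L_t[u,w]|}{2}$ at time $t$; the variance will be controlled by pairwise correlations between $4$-cycles, which can be tamed by the degree threshold introduced in the heavy/light split. Matching the heavy-phase cost of $\tilde O(n^{2-2\delta})$ with the light-phase threshold $\Delta \asymp n^{1/2+\delta}$ should yield the claimed $O(m+n^{2-2\delta})$ bound.
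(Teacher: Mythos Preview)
Your approach differs substantially from the paper's, and the heavy phase as written does not achieve the claimed bound. The paper's proof is a clean bootstrap: subsample a uniformly random set of $n' = n^{1-\delta}$ vertices to get an induced subgraph $G'$. By supersaturation (plus a Chebyshev bound on the edge count of $G'$), $G'$ has $\Omega(n^2)$ $4$-cycles with constant probability. Run the $O(n'^2)=O(n^{2-2\delta})$-preprocessing algorithm of \cref{thm:4-cycle-enum-dense} on $G'$; while emitting its $\Omega(n^2)$ $4$-cycles with $O(1)$ delay, run the full $O(n^2)$ preprocessing of $G$ in the background, buffering its output and discarding anything already in $G'$. The $\Omega(n^2)$ cycles from $G'$ pay for the $O(n^2)$ background work, so once $G'$ is exhausted we switch to the buffer and continue with $O(1)$ delay. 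There is no degree thresholding, no random vertex ordering, and no concentration analysis on the bucket process.

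The concrete gap in your plan is the heavy-phase time bound. Even taking $m=\Theta(n^{1.5+\delta})$, the number of vertices of degree exceeding $\Delta=n^{1/2+\delta}$ can be $\Theta(m/\Delta)=\Theta(n)$ (e.g.\ a near-regular graph), so there are $\Theta(n^2)$ heavy--heavy pairs; you cannot even enumerate them, let alone compute their common neighborhoods, in $\tilde O(n^{2-2\delta})$ time. The light phase is also shaky: the $\Omega(n^{2+4\delta})$ $4$-cycles guaranteed by supersaturation may largely involve heavy vertices, so there is no a priori lower bound on the number of $4$-cycles in the light-vertex subgraph, which is what your second-moment argument would need to kick in after only $O(n^{2-2\delta})$ triple insertions. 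The paper's subsampling trick sidesteps both issues by shrinking the \emph{vertex set} so that the existing $O(n'^2)$ black box applies directly, and letting the abundance of $4$-cycles in the sample buy the time to process the whole graph.
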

\begin{proof}[Proof Sketch]
The supersaturation property states that the input graph $G$ has at least $\Omega( (m/n)^4 ) = \Omega(n^{2+4\delta})$ $4$-cycles. If we subsample a set of $n' = n\cdot n^{-\delta}$ vertices, the induced subgraph $G'$ still contains $\Omega(n^2)$ $4$-cycles in expectation. In fact, by a standard variance bound and Chebyshev's inequality, $G'$ has $m'=\Omega(n^{1.5-\delta}) \gg (n')^{1.5}$ edges with at least $0.99$ probability, so it contains at least $\Omega((m'/n')^4) \ge \Omega(n^2)$ $4$-cycles. 
This buys us enough time to pre-process the whole graph $G$. 

More specifically, we use the algorithm in \cref{thm:4-cycle-enum-dense} to pre-process the subgraph $G'$ in $O(n'^2)=O(n^{2-2\delta})$ time.
Then, while enumerating $4$-cycles in the $G'$, we can start running the pre-processing step of \cref{thm:4-cycle-enum-dense} on the whole graph $G$, and adding the enumerated $4$-cycles to a buffer (instead of outputting them immediately). 
 Moreover, for each $4$-cycle enumerated from $G$, we check whether it is in $G'$; if so, we throw it away. Once we have enumerated all the $4$-cycles in $G'$ (say there are $T\ge \Omega(n^2)$ of them), the algorithm on $G$ should have enumerated $(T+1)$ $4$-cycles. Then we switch to outputting $4$-cycles from the buffer, and enumerate an additional one from $G$ after each output. 
This way, the buffer is always nonempty until we enumerate all $4$-cycles in the graph, because we throw away at most $T$ $4$-cycles.  Thus, the algorithm runs in $O(n^{2-2\delta})$ pre-processing time and enumerates $4$-cycles with $O(1)$ delay. 
\end{proof}
\section{Consequences of the Strong \texorpdfstring{$3$}{3}SUM Hypothesis}
\label{sec:strong-3sum}

Finally, we show our conditional lower bound of Triangle Detection under the Strong $3$SUM hypothesis. Recall the theorem:
\TriangleDet*
\begin{proof}
Given a $3$SUM instance $A$ with input range $U = n^2$, we construct a Triangle Detection instance as follows. 

Let $p, q, r$ be three distinct primes in $[2n^{2/3} \log^2 n, 4n^{2/3} \log^2 n]$, sampled uniformly at random. Let $A'$ be a copy of $A$ but we take the modulo of every number by $pqr$. 
Let $U$ (resp. $V, W$) be a vertex set identified by all numbers in $[0, pqr)$ that are congruent to $0$ mod $p$ (resp. $q, r$). We add an edge between $u \in U$ and $v \in V$ if $(v - u) \bmod{pqr} \in A'$. Note that these edges can be added efficiently: for each $a \in A'$, if $v - u \equiv a \pmod{pqr}$, then $v - u \equiv a \pmod{q}$. Also, by construction, $u \equiv 0 \pmod{p}$ and $v \equiv 0 \pmod{q}$, which imply $u \equiv -a \pmod{q}$. Therefore, given $a$, the values of $u \bmod{p}$ and $u \bmod{q}$ are fixed. We can enumerate all possible values of $u \bmod{r}$, each (together with $u \bmod{p}$ and $u \bmod{q}$) determines a unique value of $u$ by the Chinese remainder theorem. For each such $u$, we add an edge between it and $(a + u) \bmod{pqr} \in V$. It thus takes $\tO(n \cdot r) = \tO(n^{5/3})$ time to add all the edges. Similarly, we  add
an edge between $v \in V$ and $w \in W$ if $(w - v) \bmod{pqr} \in A'$, and an edge between $w \in W$ and $u \in U$ if $(u-w)  \bmod{pqr} \in A'$. Finally, we remove all vertices in the graph whose degree is larger than $C n^{1/3}$ for some sufficiently large constant $C$.  

If this graph has a triangle $(u, v, w)$, then $(v-u) \bmod{pqr}, (w-v) \bmod{pqr}, (u-w) \bmod{pqr} \in A'$ forms a $3$SUM solution in $\mathbb{Z}_{pqr}$, so $A$ must have a $3$SUM solution as well since $pqr > 3U$. For the other direction, suppose $A$ has a $3$SUM solution $x+y+z \equiv 0 \pmod{pqr}$. Consider vertices $u \in U, v \in V, w \in W$ such that 
\[
\begin{array}{ccc}
	u \equiv 0 \pmod{p}, & u \equiv -x \pmod{q}, & u \equiv z \pmod{r};\\
	v \equiv x \pmod{p}, & v \equiv 0 \pmod{q}, & v \equiv -y \pmod{r};\\
	w \equiv -z \pmod{p}, & w \equiv y \pmod{q}, & w \equiv 0 \pmod{r}.
\end{array}
\]
Note that $v-u \equiv x \pmod{pqr}, w-v \equiv y \pmod{pqr}, u - w \equiv z \pmod{pqr}$, so $(u, v, w)$ forms a triangle before we remove high-degree vertices in the graph. Thus, it suffices to bound the probability that this triangle is removed. By previous discussions, the number of neighbors of $u$ in $V$ is the number of $a \in A$ where $u \equiv -a \pmod{q}$, which implies $a - x\equiv 0 \pmod{q}$. Note that $q$ is a uniformly random prime from $[2n^{2/3} \log^2 n, 4n^{2/3} \log^2 n]$, which contains $\Theta(n^{2/3} \log n)$ primes by the prime number theorem. Also, $a-x$ has $\Theta(\log n)$ distinct prime factors, so the probability $a - x\equiv 0 \pmod{q}$ is $O(n^{-2/3})$. Therefore, if we use $\deg(s, T)$ to denote the number of neighbors of $s$ in set $T$, then 
$\Ex[\deg(u, V)] \le O(n^{1/3})$. We can similarly bound $\Ex[\deg(u, W)]$, so $\Ex[\deg(u)] \le O(n^{1/3}) \le \frac{C}{10} \cdot n^{1/3}$ for some sufficiently large constant $C$. By Markov's inequality, $\Pr[\deg(u) > C n^{1/3}] \le \frac{1}{10}$, so we remove vertex $u$ with probability at most $\frac{1}{10}$. Similarly, we remove vertex $v$ and $w$ with probability at most $\frac{1}{10}$. By union bound, the triangle $(u, v, w)$ will remain in the final graph with probability at least $\frac{7}{10}$. 

We can repeat this reduction $O(\log n)$ times to boost the success probability. 
The number of vertices $N$ in the graph is $\tilde{\Theta}(n^{4/3})$, and the maximum degree is $O(n^{1/3}) = O(N^{1/4})$, so the triangle detection instance requires $n^{2-o(1)} = N^{1.5-o(1)}$ time under the Strong $3$SUM hypothesis. 
\end{proof}

\section{Open Questions}
\label{sec:open}
We conclude with several related open questions.
\begin{enumerate}
    \item To show the $3$SUM-hardness of Sidon Set Verification,  our reduction crucially relies on the efficient self-reduction of $3$SUM, which is not known to exist for $k$SUM with $k\ge 4$.
    Does there exist a more powerful reduction that can show $4$SUM-hardness for Sidon Set Verification? 

\item On the upper bound side, can we solve Sidon Set Verification faster than $4$SUM? 
Observe that Sidon Set Verification on $n$ integers can be easily reduced to the Element Distinctness problem on $\binom{n}{2}$ integers, but the same strategy only reduces $4$SUM to the \emph{bichromatic} version of Element Distinctness (or, List Disjointness problem). In the low-space setting, Element Distinctness appears much easier than its bichromatic version \cite{BCM13,BansalGN018,ChenJWW22,xinlyu}, and known results imply
$\polylog(n)$-space algorithms solving Sidon Set Verification in $\tilde O(n^3)$ time, and  $4$SUM  in $\tilde O(n^{3.5})$ time (assuming distinct input integers). 
 Can we obtain a speedup for Sidon Set Verification  in the standard setting as well?
    \item Our $\Omega(m^{4/3-\eps})$ lower bound for $4$-Cycle Enumeration also applies to $2k$-Cycle Enumeration with larger $k$, by a reduction described in \cite{AbboudBKZ22}.
    However, it is natural to conjecture that the correct bound should increase with $k$, analogous to the situation with $2k$-Cycle \emph{Detection} algorithms \cite{AlonYZ97,DahlgaardKS17}.
    Can we show better lower bounds for  $2k$-Cycle Enumeration when $k>2$?
    \item  Our techniques improve previous lower bounds \cite{AbboudBKZ22} on the  pre-processing time of $k$-approximate Distance Oracles, but the dependency on $k$ still does not match the best known upper bounds. 
    Specifically for $3$-approximation, the Thorup-Zwick distance oracle \cite{thorup2005approximate} has $O(m\sqrt{n})$ preprocessing time and $O(1)$ query time.
    Can we prove a tight lower bound?
    \item 
Can we show conditional hardness for Sidon Set Verification on $n$ input integers from the input range $[n^{2+\delta}]$, for some $0< \delta < 1$? 

Note that one of the steps in our reduction for input range $[n^{3+\delta}]$ resembles the random construction of $\Omega(N^{1/3})$-size Sidon sets from $[N]$ via the probabilistic method.\footnote{The construction is as follows: uniformly independently sample $m$ integers from $[N]$, which produce $s = \Theta(m^4/N)$ Sidon 4-tuples in expectation. Remove one integer from each Sidon 4-tuple, and the remaining integers form a Sidon set of expected size $m-s \ge m/2$, by setting $m = c\cdot N^{1/3}$ for a small enough constant $c$.} In the proof of \cref{thm:main-smallinput} (more spefically, the second bullet point), when we reduce the input range of $3$SUM instances by hashing modulo a random prime $p$, we introduce additional Sidon 4-tuples in the instances, which will be removed later. By choosing $p$ to be slightly super-cubic in the size of the instance, the number of removed integers is small, and the remaining integers in the instance indeed form a large Sidon set.

If we want to improve the input range, the reduction would have to produce Sidon sets of size $\gg N^{1/3}$. 
Such density is usually achieved using algebraic constructions rather than simple probabilistic method, so such a reduction would have to use drastically different strategies.
\end{enumerate}
 \section*{Acknowledgement}
We would like to thank Virginia Vassilevska Williams for many helpful discussions and suggestions.  We also thank Ryan Williams for helpful discussions.
	\bibliographystyle{alphaurl} 
	\bibliography{main}
\end{document}